\providecommand{\U}[1]{\protect\rule{.1in}{.1in}}
\newtheorem{theorem}{Theorem}
\newtheorem{corollary}[theorem]{Corollary}
\newtheorem{lemma}[theorem]{Lemma}
\newtheorem{proposition}[theorem]{Proposition}
\newtheorem{remark}[theorem]{Remark}
\newenvironment{proof}[1][Proof]{\noindent\textbf{#1.} }{\ \rule{0.5em}{0.5em}}
\begin{document}

\title{\textbf{Differential curvature invariants and event horizon detection for accelerating Kerr-Newman black holes in (anti-)de Sitter spacetime}}
\author{G. V. Kraniotis \footnote{email: \textcolor{blue}{gvkraniotis@gmail.com}}\\
University of Ioannina, Physics Department \\ Section of
Theoretical Physics, GR- 451 10, Greece \\
}

 \maketitle
 \begin{abstract}
We compute analytically differential invariants for accelerating, rotating and charged black holes with a cosmological constant $\Lambda$. In particular, we compute in closed form novel explicit algebraic expressions for curvature invariants constructed from covariant derivatives of the Riemann and Weyl tensors, such as the Karlhede and the Abdelqader-Lake invariants,  for the Kerr-Newman-(anti-)de Sitter and accelerating Kerr-Newman-(anti-)de Sitter black hole spacetimes. We explicitly show that some of the computed curvature invariants are vanishing at the event and Cauchy horizons or the ergosurface of the accelerating, charged and  rotating black holes with a non-zero cosmological constant. Using a particular generalised null-tetrad and the Bianchi identities we compute in the Newman-Penrose formalism in closed-analytic form the Page-Shoom curvature invariant for the accelerating Kerr-Newman black hole in (anti-)de Sitter spacetime and prove that is vanishing at the black hole event and Cauchy horizon radii.  Therefore such invariants can serve as possible detectors of the event horizon and ergosurface for such black hole metrics which belong to the most general type D solution of the Einstein-Maxwell equations with a cosmological constant. Also the norms associated with the gradients of the first two Weyl invariants in the Zakhary-McIntosh classification, were studied in detail.
Although both locally single out the horizons, their global behaviour is also intriguing. Both reflect the background angular momentum and electric charge as the volume of space allowing a timelike gradient decreases with increasing angular momentum and charge.

\end{abstract}

\section{Introduction}
It is well known that in a semi-Riemannian manifold there are three causal types of submanifolds: spacelike (Riemannian), timelike (Lorentzian) and lightlike (degenerate), depending on the character of the induced metric on the tangent space \cite{Duggal},\cite{Katsuno}.
Coordinate singularities can sometimes be interpreted as various kinds of \textit{horizons}. Lightlike submanifolds (in particular, lightlike hypersurfaces) are interesting in general relativity since they produce models of different types of horizons .  These include the:  \textit{event horizon}, \textit{Killing horizons},\textit{Cauchy horizons},\textit{Cosmological} and \textit{acceleration horizons}.
The event horizon of a black hole is a codimension one null hypersurface, which constitutes the boundary of the black hole region from which causal geodesics cannot reach future null infinity. As a consequence,
the event horizon is highly nonlocal  and \textit{a priori} we need the full knowledge of spacetime to locate it \footnote{In this respect, the event horizon  constitutes essentially a global (\textit{teleological}) object \cite{KrishamAshtekar}. }. In relation to this, the images obtained recently by the Event Horizon Telescope (EHT) and analysed by computer simulations \cite{EHT}, contain the environment of the black hole as well as the codimension two cross section of the event horizon but not the event horizon which is a 2+1 dimensional hypersurface.
Thus the analytical study of black hole horizon, and the localisation of it becomes an issue of crucial importance.

The Killing horizons, are surfaces where the norm of some Killing vector vanishes. On the other hand, Cauchy horizons are future boundaries of regions that can be uniquely determined by initial data on some appropriate spacelike hypersurface.

The curvature scalar invariants of the Riemann tensor are important in General Relativity because they allow a manifestly coordinate invariant characterisation of certain geometrical properties of spacetimes such as, among others, curvature singularities, gravitomagnetism, anomalies \cite{Zahkary}-\cite{frolov}. Recently, we calculated  explicit analytic expressions for the set of Zakhary-McIntosh (ZM) curvature invariants for accelerating Kerr-Newman black holes in (anti-)de Sitter spacetime as well as for the Kerr-Newman-(anti-)de Sitter black hole \cite{KraniotisCurvature}. We also calculated in \cite{KraniotisCurvature}, explicit algebraic expressions  for the Euler-Poincare density invariant and the Kretschmann scalar  for both types of black hole spacetimes. We also highlighted that for accelerating rotating and charged black holes with $\Lambda\not =0$, the integrated Chern-Pontryagin-Hirzebruch invariant gives a non-zero result for the quantum photon chiral anomaly \cite{KraniotisCurvature}.

We also note that curvature invariants have been proposed as measures of gravitational entropy in an attempt to explain the arrow of time by the low entropy initial conditions. For instance,  the first Weyl curvature invariant in the ZM classification has been proposed as the entropy density for a 5-dimensional Schwarzschild black hole in \cite{nanli}. This is related to the Weyl curvature hypothesis by Penrose who argued that some scalar invariant of the Weyl tensor is a monotonically growing function of time and is thus somehow related to the gravitational entropy in the universe. Thus, the low entropy in the gravitational field is tied to constraints on the Weyl curvature \cite{RogerWeylCHyp}.

On the other hand differential curvature invariants,such as covariant derivatives of the curvature tensor \cite{KarlhedeA} and gradients of non-differential invariants \cite{Lakeein} are necessary for a complete description of the local geometry. They have been suggested as possible detectors of the event horizon and ergosurface of the Kerr black holes \cite{LakeZwei}.

A closely related matter is the equivalence problem \cite{karlhedeequiv}. A well-known theorem of E. Cartan (see \cite{Berger}, p.97) states that the Riemann curvature tensor and its covariant derivatives are a complete set of local invariants of the (analytic) Riemannian metrics. Therefore the knowledge of the Riemann tensor
$R^{\alpha}_{\;\beta\gamma\delta}$ and its covariant derivatives such as $D^1{R^{\alpha}_{\;\beta\gamma\delta}}\equiv R^{\alpha}_{\;\beta\gamma\delta;\epsilon_1},\cdots, D^s {R^{\alpha}_{\;\beta\gamma\delta}}\equiv R^{\alpha}_{\;\beta\gamma\delta;\epsilon_1\ldots, \epsilon_s}$, for all $s\geq0$, at a point $p$ determines  (up to local isometries) the germ of the metric at $p$.
Inspired by the work of E. Cartan, Karlhede developed an algorithm for addressing the problem of equivalence \cite{karlhedeequiv} \footnote{Let $R^s$ denotes the set $\{R^{\alpha}_{\;\beta\gamma\delta}, R^{\alpha}_{\;\beta\gamma\delta;\epsilon_1},\ldots, R^{\alpha}_{\;\beta\gamma\delta;\epsilon_1\ldots \epsilon_s}\}$. If $s$ is the lowest value for
which the elements of $R^{s+1}$ are functionally dependent as functions over the bundle of Lorentz frames $F(M)$ on
those in $R^s$, then according to Cartan the set $R^{s+1}$ gives a complete coordinate invariant description of
the local geometry. Two manifolds $(M,g)$ and $(\tilde{M},\tilde{g})$ are equivalent if and only if the sets $R^{s+1}$ and $\tilde{R}^{s+1}$ are equal.}. In addition, investigations on the isometry and isotropy groups of Riemannian spaces were reported in \cite{karlmacc}.

It is the purpose of this paper to apply the formalism  of Karlhede \textit{et al} \cite{KarlhedeA} and Abdelqader-Lake \cite{LakeZwei}  (see also \cite{PageD}), to the
case of accelerating and rotating charged black holes with non-zero cosmological constant $\Lambda$ and compute for the \textit{first time} analytic algebraic expressions for the corresponding curvature differential  invariants. Specifically, we compute novel closed-form algebraic expressions for these local curvature invariants for the
accelerating Kerr-Newman-(anti-)de Sitter black hole. Furthermore, we compute novel explicit algebraic expressions for the Karlhede  and Abdelqader-Lake  local curvature invariants also for the non-accelerating Kerr-Newman-(anti-)de Sitter (KN(a)dS) black hole.  These black hole metrics belong to the most general type D solution of the Einstein-Maxwell equations with a cosmological constant and constitute the physically most important case \cite{GrifPod},\cite{PlebanskiDemianski}.
Indeed, besides the intrinsic theoretical interest of accelerating or non-accelerating KN(a)dS black holes , a variety of observations support and single out their physical relevance in Nature.

An expansive range  of astronomical and cosmological observations in the last two decades, including high-redshift type Ia supernovae, cosmic microwave background radiation and large scale structure indicate convincingly an accelerating expansion of the Universe \cite{Supern},\cite{Jones},\cite{Aubourg},\cite{AbbottTMC}. Such observational data  can be explained by a positive cosmological constant $\Lambda$ ($\Lambda>0$) with a magnitude $\Lambda\sim 10^{-56}{\rm cm}^{-2}$ \cite{GVKSWB}.

 Recent observations of structures near the galactic centre region SgrA*  by the GRAVITY experiment, indicate possible presence of a small electric charge of central supermassive black hole \cite{Zajacek},\cite{Britzen}. Accretion disk physics around magnetised Kerr black holes under the influence of cosmic repulsion  is extensively discussed in the review \cite{universemdpi} \footnote{We also mention that supermassive black holes as possible sources of ultahigh-energy cosmic rays have been suggested in \cite{waldcharge}, where it has been shown that large values of the Lorentz $\gamma$ factor of an escaping ultrahigh-energy particle from the inner regions of the black hole accretion disk may occur only in the presence of the induced charge of the black hole.}.

Furthermore, observations of the galactic centre supermassive black hole indicate that it is rotating. With regard to the spin $a$ (Kerr rotation parameter) we note that observations of near-infrared periodic flares have revealed that the central black
hole $SgrA^{*}$ is rotating with a reported spin parameter: $a=0.52 (\pm 0.1,\pm 0.08,\pm 0,08)$ \cite{genzelr}. The error estimates here the uncertainties in the
period, black hole mass and distance to the galactic centre, respectively. Observation of X-ray flares confirmed that the spin of the supermassive black hole is
indeed substantial and values of the Kerr parameter as high as: $a=0.9939^{+0.0026}_{-0.0074}$ have been obtained \cite{aschenbach}. For our plots we choose values for the Kerr parameter consistent with these observations.
On the other hand, the Kerr parameter $a$ of SgrA* can be measured by precise observations of the predicted by theory Lense-Thirring and periastron precessions from the observed orbits of S-stars in the central arcsecond of Milky Way \cite{KraniotisStars1},\cite{KraniotisSstars2}.
Therefore, it is quite interesting to study the combined effect of the cosmological constant,rotation parameter and electromagnetic fields on the geometry of spacetime surrounding the black hole singularity through the explicit algebraic computation and plotting of the  Karlhede  and Abdelqader-Lake differential   curvature invariants,   taking into account also the acceleration parameter. In addition, determining the role of such local invariants as detectors of event and ergosurface horizons for the most general black hole solution of Einstein-Maxwell equations, is a very important step in black hole theory and phenomenology.

The material of this paper is organised as follows: In sections \ref{AbdelLimni} and
\ref{andersKarlhede1} we present the definitions of the  Abdelqader-Lake and Karlhede local scalar curvature  invariants that we shall use in computing explicit algebraic expressions for these differential invariants for the accelerating and non-accelerating Kerr-Newman black holes in (anti-)de Sitter spacetime.
In section \ref{karlhedeKNLambda} we derive a novel closed-form analytic expression for the norm of the covariant derivative of the Riemann tensor, i.e. the Karlhede invariant, for the Kerr-Newman-(anti-)de Sitter black hole, see Theorem \ref{unserekarlhede} and eqn.(\ref{KarlHKNdS}). In section \ref{KNlambdaAbdeLake} we derive explicit closed form expressions for the Abdelqader-Lake differential invariants for non-accelerating Kerr-Newman black holes with cosmological constant, see Theorems: Theorem \ref{synalParWeyl} (eqn.(\ref{normcovderweyl}))-Theorem \ref{mixedsynparweyl} (eqn.(\ref{covdermixedweyli4})), and Theorem \ref{difanalQ1}-Theorem \ref{kerrdeviation}.
Two of the Abdelqader-Lake invariants are norms associated with the gradients of the first two non-differential Weyl invariants in the ZM classification. We investigate in detail these two invariants for the Kerr-(anti-)de Sitter and Kerr-Newman-(anti-)de Sitter black holes. From the explicit \textit{novel} algebraic expressions we derive in this work, we find that both invariants locally single out the horizons. Even more so their global behaviour is very interesting. Both reflect the background angular momentum and electric charge as the volume of space allowing a timelike gradient decreases with increasing angular momentum and electric charge becoming zero for highly spinning and charged black holes. See Corollaries \ref{normdifinv1pi2}-\ref{norm2gradientflowI6eqax} and Figs.\ref{Regionsnegativenormkmu},\ref{RegionsnegativenormkmuKN},\ref{ContourPlotsgradI6curva},
\ref{RegionplotsofsignI6}.
In section \ref{epitaxydiaforikesanal}, we derive \textit{new} explicit analytic expressions for the Karlhede and Abdelqader-Lake differential invariants for accelerating Kerr-Newman black holes in the presence of the cosmological constant.
In Theorem \ref{KarlEpitaKerrdS}, Eqn.(\ref{marvelKarlhedeaccelKerrdS}), we derive for the first time an explicit algebraic expression for the Karlhede invariant for an accelerating Kerr black hole in the presence of the cosmological constant $\Lambda$.
Subsequently, we present novel explicit expressions for the Abdelqader-Lake curvature invariants for  accelerating Kerr black holes, see Theorems: Theorem \ref{weylcovI3epi} (eqn.(\ref{idreiepitkerr}))-Theorem \ref{sebenepitaxkerr}. In Theorems: Theorem \ref{Q2invAccelKerr}-Theorem \ref{q2detecKerrdSaccel} we derived closed-form analytic expressions for the Abdelqader-Lake  differential invariant $Q_2$ for accelerating Kerr black holes in (anti-)de Sitter spacetime.
Interestingly enough, we derived a very compact explicit formula for the differential invariant $Q_2$ for the accelerating Kerr-Newman black hole in (anti-)de Sitter spacetime: Theorem \ref{exactQ2accelKNdS} and eqn.(\ref{totalQ2accelKN(a)dS}).
From eqn.(\ref{totalQ2accelKN(a)dS}) we conclude that the local invariant $Q_2$ can serve as an event horizon detector, since it vanishes at the horizon radii and acceleration horizon radii.
For zero acceleration and zero electric charge the invariant $Q_2$ is given by expression (\ref{Q2KerrdeSitter1}) in Corollary \ref{porismaQ2kerrdeSitter}. In this case we provide a rigorous proof that the invariant can serve as a horizon detector: it vanishes at the stationary horizons and is nonzero everywhere else. For the proof we use Descarte's rule of signs Theorem \ref{rene},  and Bolzano's theorem \ref{bernardbolzano}.
In section \ref{RogerPenN} we apply the Newman-Penrose (NP) formalism to calculate the Abdelqader-Lake  differential invariants.
Using the Bianchi identities in NP formalism, eqns.(\ref{BianchiEin})-(\ref{BianchiVier}) and a specific null-tetrad we derive an explicit expression for the Page-Shoom invariant $W$, eqn.(\ref{pageshoominvt}), Theorem \ref{cohompageshoom}, for an accelerating Kerr-Newman black hole in (anti-)de Sitter spacetime. We then prove that it is vanishing at the stationary horizons, Corollary \ref{anixneftisorizontagegonotwn}.

In Appendix \ref{callosRT} we define the $\mathcal{R},\mathcal{T}$ regions relevant for the analysis of the norms of the gradient flows of the first two Weyl invariants in ZM classification. In Appendix \ref{Curbastro} we compute the explicit algebraic expression for the curvature invariant constructed from the covariant derivative of the Ricci tensor for accelerating, rotating, and charged black holes with $\Lambda\not =0$, see Theorem \ref{RiccisynalparagogosN} and Eqn.(\ref{CovRicciInvaccelKNdS}).

\section{Preliminaries on differential curvature invariants}
Taking into account the contribution from the cosmological
constant $\Lambda,$ the generalisation of the Kerr-Newman solution \cite{Newman},\cite{KerrR},
is described by the Kerr-Newman de Sitter $($KNdS$)$ metric
element which in Boyer-Lindquist (BL) coordinates is given by \cite{BCAR},\cite{Stuchlik1},\cite{GrifPod},\cite{ZdeStu} (in units where $G=1$ and $c=1$):
\begin{align}
\mathrm{d}s^{2}  & =-\frac{\Delta_{r}^{KN}}{\Xi^{2}\rho^{2}}(\mathrm{d}%
t-a\sin^{2}\theta\mathrm{d}\phi)^{2}+\frac{\rho^{2}}{\Delta_{r}^{KN}%
}\mathrm{d}r^{2}+\frac{\rho^{2}}{\Delta_{\theta}}\mathrm{d}\theta
^{2}\nonumber \\ &+\frac{\Delta_{\theta}\sin^{2}\theta}{\Xi^{2}\rho^{2}}(a\mathrm{d}%
t-(r^{2}+a^{2})\mathrm{d}\phi)^{2}%
\label{KNADSelement}
\end{align}%
\begin{equation}
\Delta_{\theta}:=1+\frac{a^{2}\Lambda}{3}\cos^{2}\theta,
\;\Xi:=1+\frac {a^{2}\Lambda}{3},
\end{equation}

\begin{equation}
\Delta_{r}^{KN}:=\left(  1-\frac{\Lambda}{3}r^{2}\right)  \left(  r^{2}
+a^{2}\right)  -2mr+q^{2},
\label{DiscrimiL}
\end{equation}

\begin{equation}
\rho^{2}=r^{2}+a^{2}\cos^{2}\theta,
\end{equation}
where $a,m,q,$ denote the Kerr parameter, mass and electric charge
of the black hole, respectively.
The KN(a)dS metric is the most general exact stationary  solution of the Einstein-Maxwell system of differential equations, that represents a non-accelerating, rotating, charged black hole with $\Lambda\not =0$.
This
is accompanied by a non-zero electromagnetic field
$F=\mathrm{d}A,$ where the vector potential is
\cite{GrifPod},\cite{ZST}:
\begin{equation}
A=-\frac{qr}{\Xi(r^{2}+a^{2}\cos^{2}\theta)}(\mathrm{d}t-a\sin^{2}\theta
\mathrm{d}\phi).
\end{equation}

The Christoffel symbols of the second kind are expressed in the coordinate basis in the form:
\begin{equation}
\Gamma^{\lambda}_{\;\mu\nu}=\frac{1}{2}g^{\lambda\alpha}(g_{\mu\alpha,\nu}+g_{\nu\alpha,\mu}-g_{\mu\nu,\alpha}),
\end{equation}
where the summation convention is adopted and a comma denotes a partial derivative.
The Riemann curvature tensor is given by:
\begin{align}
R^{\kappa}_{\;\;\lambda\mu\nu}=\Gamma^{\kappa}_{\;\lambda\nu,\mu}-\Gamma^{\kappa}_{\;\lambda\mu,\nu}+
\Gamma^{\alpha}_{\lambda\nu}\Gamma^{\kappa}_{\;\alpha\mu}-\Gamma^{\alpha}_{\lambda\mu}\Gamma^{\kappa}_{\alpha\nu}.
\end{align}
The symmetric Ricci tensor and the Ricci scalar are defined by:
\begin{equation}
R_{\mu\nu}=R^{\alpha}_{\;\mu\alpha\nu},\;\;\;R=g^{\alpha\beta}R_{\alpha\beta},
\end{equation}
while the Weyl tensor $C_{\kappa\lambda\mu\nu}$ (the trace-free part of the curvature tensor) is given explicitly in terms of the curvature tensor and the metric from the expression:
\begin{align}
C_{\kappa\lambda\mu\nu}=R_{\kappa\lambda\mu\nu}&+\frac{1}{2}(R_{\lambda\mu}g_{\kappa\nu}+R_{\kappa\nu}g_{\lambda\mu}-R_{\lambda\nu}g_{\kappa\mu}-R_{\kappa\mu}g_{\lambda\nu})\nonumber \\&+\frac{1}{6}R(g_{\kappa\mu}g_{\lambda\nu}-g_{\kappa\nu}g_{\lambda\mu}).
\label{WeylH}
\end{align}
The Weyl tensor has in general, ten independent components which at any point are completely independent of the Ricci components. It corresponds to the {\em free gravitational field} \footnote{Globally, however, the Weyl tensor and Ricci tensor are not independent, as they are connected by the differential Bianchi identities. These identities determine the interaction between the free gravitational field and the field sources.} \cite{szekeres}.

The dual of the Weyl tensor, $C_{\alpha\beta\gamma\delta}^{*}$,  is defined by:
 \begin{equation}
 C_{\alpha\beta\gamma\delta}^{*}=\frac{1}{2}E_{\alpha\beta\kappa\lambda}C^{\kappa\lambda}_{\;\;\gamma\delta},
 \end{equation}
 where $E_{\alpha\beta\kappa\lambda}$ is the Levi-Civita pseudotensor.

 \subsection{The Bianchi identities and the Karlhede invariant}\label{andersKarlhede1}

Using its covariant form, the classical Bianchi identities read \cite{lbianchi}:
\begin{equation}
R_{\lambda\mu\nu\kappa;\eta}+R_{\lambda\mu\eta\nu;\kappa}+R_{\lambda\mu\kappa\eta;\nu}=0
\label{Bianchi}
\end{equation}
In eqn(\ref{Bianchi}), the symbol \lq;\rq denotes covariant differentiation.
Karlhede and collaborators introduced the following coordinate-invariant and Lorentz invariant object the so called Karlhede invariant \cite{KarlhedeA}:
\begin{equation}
\mathfrak{K}=R^{\lambda\mu\nu\kappa;\eta}R_{\lambda\mu\nu\kappa;\eta}.
\label{andersKarlhede}
\end{equation}
Karlhede {\it et al} computed the invariant $\mathfrak{K}$ for the Schwarzschild black hole and remarkably showed that it is zero and changes sign  on the Schwarzschild event horizon.
Indeed, their result for the differential invariant in Eqn.(\ref{andersKarlhede})for the Schwarzschild black hole is \cite{KarlhedeA}:
\begin{equation}
\mathfrak{K}=-\frac{240 \left(6 m^{3} r^{3}-3 m^{2} r^{4}\right)}{r^{12}}.
\end{equation}
Unfortunately, this intriguing result does not generalises to the case of the Kerr black hole. The analytic computation of the Karlhede invariant for the Kerr solution reads:
\begin{align}
&\mathfrak{K}^{Kerr}=\frac{720 m^{2}}{\left(r^{2}+a^{2} \cos \! \left(\theta \right)^{2}\right)^{9}} \left(\cos \! \left(\theta \right)^{4} a^{4}-4 \cos \! \left(\theta \right)^{3} a^{3} r -6 \cos \! \left(\theta \right)^{2} a^{2} r^{2}+4 \cos \! \left(\theta \right) a \,r^{3}+r^{4}\right)\nonumber \\
& \left(\cos \! \left(\theta \right)^{4} a^{4}+4 \cos \! \left(\theta \right)^{3} a^{3} r -6 \cos \! \left(\theta \right)^{2} a^{2} r^{2}-4 \cos \! \left(\theta \right) a \,r^{3}+r^{4}\right) \left(a^{2} \cos \! \left(\theta \right)^{2}-2 m r +r^{2}\right).
\label{AKarlHKerr}
\end{align}
It is evident from Eqn.(\ref{AKarlHKerr}) that the Karlhede curvature invariant does not vanish on the Kerr event and Cauchy horizons.
In Boyer and Lindquist coordinates the event and Cauchy horizons are located on the surface defined by $\Delta(r)\equiv r^2+a^2-2mr=0$, and are given by the expressions:
\begin{equation}
r_{\pm}=m\pm \sqrt{m^2-a^{ 2}}.
\end{equation}
The outer horizon $r_+$ is referred to as the event horizon, while $r_-$ is known as the Cauchy horizon.

However, we note that it vanishes on the infinite-redshift surfaces, where $g_{tt}=0$. Equivalently at the roots of  the  quadratic equation:
\begin{equation}
r^2+a^2\cos \! \left(\theta \right)^{2}-2mr=0,
\end{equation}
or
\begin{equation}
r_E^{\pm}=m\pm\sqrt{m^2-a^2\cos \! \left(\theta \right)^{2}}.
\label{radiiergosurfaces}
\end{equation}
Indeed, shortly after the discovery of the Kerr black hole, it was realised that a region existed outside of the black hole's event horizon where no time-like observer could remain stationary. Six years after the discovery of the Kerr metric, Penrose showed that particles within this ergosphere region could possess negative energy, as measured by an observer at infinity \cite{PenroseRogerErgoRegion}.
Let us consider a coordinate-stationary observer with a four-velocity $u^{\mu}=(u^t,0,0,0)$. For the observer to have a time-like trajectory, we require $g_{tt}u^tu^t<0$, or alternatively:
\begin{align}
g_{tt}&=-\frac{a^{2}-2 m r +r^{2}}{r^{2}+a^{2} \cos \! \left(\theta \right)^{2}}+\frac{\sin \! \left(\theta \right)^{2} a^{2}}{r^{2}+a^{2} \cos \! \left(\theta \right)^{2}}\nonumber \\
&=-\left(1-\frac{2 m r}{r^{2}+a^{2} \cos \! \left(\theta \right)^{2}}\right)<0.
\end{align}
This inequality implies:
\begin{equation}
r^{2}-2 m r + a^{2} \cos \! \left(\theta \right)^{2}>0
\label{ergosphera}
\end{equation}
The two roots of the quadratic expression (with positive leading coefficient) in inequality (\ref{ergosphera}) are given by Eqn.(\ref{radiiergosurfaces}).
Thus the inequality for an observer with a physical trajectory is satisfied for
\begin{equation}
r>r_E^+
\end{equation}
or $r<r_E^-$.
Between the two roots $r_E^{\pm}$ the quadratic is negative (opposite sign of the sign of its leading coefficient).
These arguments imply that there is a region outside of $r_+$ where no stationary observer can exist. This space is called the \textit{ergosphere} and is bounded from above by the surface defined by $r_E^+$ \footnote{For the Schwarzschild solution the surface of infinite redshift $g_{tt}=0(r=2m)$ and the event horizon coincide. In \cite{Israel} the imbedding expression for the Kretschmann scalar was used  to prove the uniqueness theorems
for the Schwarzschild and Reissener-Nordstr\"{o}m black hole
solutions.}.

\subsection{Abdelqader-Lake differential  curvature invariants}\label{AbdelLimni}
In the work \cite{LakeZwei}, Abdelqader and Lake, introduced  the following curvature invariants and studied them for the Kerr metric:
\begin{align}
  I_1&\equiv C_{\alpha\beta\gamma\delta}C^{\alpha\beta\gamma\delta},\\
  I_2&\equiv C^{*}_{\alpha\beta\gamma\delta}C^{\alpha\beta\gamma\delta},\\
  I_3&\equiv \nabla_{\mu}C_{\alpha\beta\gamma\delta}\nabla^{\mu}C^{\alpha\beta\gamma\delta},\\
  I_4&\equiv \nabla_{\mu}C_{\alpha\beta\gamma\delta}\nabla^{\mu}C^{*\alpha\beta\gamma\delta},\\
  I_5&\equiv k_{\mu}k^{\mu},\\
  I_6&\equiv l_{\mu}l^{\mu},\;\;I_7\equiv k_{\mu}l^{\mu},
 \end{align}
  where $k_{\mu}=-\nabla_{\mu}I_1$ and $l_{\mu}=-\nabla_{\mu}I_2$.
  They also defined the following invariants:
\begin{align}
Q_1&\equiv\frac{1}{3\sqrt{3}}\frac{(I_1^2-I_2^2)(I_5-I_6)+4I_1I_2I_7}{(I_1^2+I_2^2)^{9/4}}\label{Qeins},\\
Q_2&\equiv\frac{1}{27}\frac{I_5I_6-I_7^2}{(I_1^2+I_2^2)^{5/2}}\label{qzwei},\\
Q_3&\equiv\frac{1}{6\sqrt{3}}\frac{I_5+I_6}{(I_1^2+I_2^2)^{5/4}\label{Qdrei}}.
\end{align}

Page and Shoom observed that the Abdelqader-Lake invariant $Q_2$ can be rewritten as follows \cite{PageD}:
\begin{equation}
Q_2=\frac{(I_6+I_5)^2-\left(\frac{12}{5}\right)^2(I_1^2+I_2^2)(I_3^2+I_4^2)}{108(I_1^2+I_2^2)^{5/2}}.
\end{equation}
Their crucial observation was that the curvature invariant $Q_2$ can be expressed as the norm of the wedge product of two differential forms, namely:
\begin{equation}
27 (I_1^2+I_2^2)^{5/2} Q_2=2\left\lVert dI_1\wedge dI_2 \right\rVert^2,
\label{donandrey}
\end{equation}
where \cite{PageD}:
\begin{equation}
\left\lVert dI_1\wedge dI_2 \right\rVert^2=\frac{1}{2}((k_{\mu}k^{\mu})(l_{\nu}l^{\nu})-(k_{\mu}l^{\mu})(k_{\nu}l^{\nu})).
\end{equation}
Under the light of this observation, the invariant $Q_2$ vanishes when the two gradient fields $k_{\mu}$ and $l_{\mu}$ are parallel \footnote{This led Page and Shoom \cite{PageD} to propose a generalisation of $Q_2$, and introduce an $n-$form differential invariant $W$ that vanishes on Killing horizons (i.e. $ \left\lVert W \right\rVert^2 \overset{\circ}{=}0$), in stationary spacetimes of local cohomogeneity $n$ . For $n=2$ we compute the invariant  $W$ for an accelerating Kerr-Newman black hole in (anti-)de Sitter spacetime in section \ref{winvps}. }.

To calculate the above curvature invariants for the metric (\ref{KNADSelement}) we used Maple\textsuperscript{TM}2021.

\section{Computation of the norm of the covariant derivative of Riemann tensor in the Kerr-Newman-(anti-)de Sitter spacetime}\label{karlhedeKNLambda}

We start our computations with the analytic calculation of the Karlhede curvature invariant for the Kerr-Newman-(anti-)de Sitter black hole.

\begin{theorem}\label{unserekarlhede}
We calculated in closed analytic form the  Karlhede invariant for the Kerr-Newman-(anti-)de Sitter black hole. Our result is:
\begin{align}
&\mathfrak{K}\equiv\nabla_{\mu}R_{\alpha\beta\gamma\delta}\nabla^{\mu}R^{\alpha\beta\gamma\delta}
\nonumber \\
&=\frac{1}{3 \left(r^{2}+a^{2} \cos \! \left(\theta \right)^{2}\right)^{9}}\Biggl[720 \Lambda  \cos \! \left(\theta \right)^{12} a^{12} m^{2}-720 a^{10} \Biggl(\Lambda  a^{2} m^{2}+28 \Lambda  m^{2} r^{2}-16 \Lambda  m \,q^{2} r \nonumber \\
&+\frac{76}{45} \Lambda  q^{4}-3 m^{2}\Biggr) \cos \! \left(\theta \right)^{10}+19440 \Biggl(\frac{23 \Lambda  m^{2} r^{4}}{9}-\frac{8 \Lambda  m \,q^{2} r^{3}}{3}+\left(\Lambda  a^{2} m^{2}+\frac{248}{405} \Lambda  q^{4}-3 m^{2}\right) r^{2}\nonumber \\
&-\frac{16 m \left(a^{2} q^{2} \Lambda +\frac{3}{8} m^{2}-3 q^{2}\right) r}{27}+\frac{76 q^{2} \left(a^{2} q^{2} \Lambda +\frac{135}{76} m^{2}-3 q^{2}\right)}{1215}\Biggr) a^{8} \cos \! \left(\theta \right)^{8}-30240 a^{6} \Biggl(-\frac{22 \Lambda  m \,q^{2} r^{5}}{35}\nonumber \\
&+\left(\Lambda  a^{2} m^{2}+\frac{278}{945} \Lambda  q^{4}-3 m^{2}\right) r^{4}\nonumber \\
&-\frac{10 m \left(a^{2} q^{2} \Lambda +\frac{14}{5} m^{2}-3 q^{2}\right) r^{3}}{7}+\frac{356 q^{2} \left(a^{2} q^{2} \Lambda +\frac{1755}{178} m^{2}-3 q^{2}\right) r^{2}}{945}\nonumber \\
&+\frac{2 q^{2} \left(a^{2}-\frac{151 q^{2}}{45}\right) m r}{7}-\frac{22 a^{2} q^{4}}{315}+\frac{16 q^{6}}{315}\Biggr) \cos \! \left(\theta \right)^{6}-30240 a^{4} r^{2} \Biggl(\frac{23 \Lambda  m^{2} r^{6}}{14}-\frac{166 \Lambda  m \,q^{2} r^{5}}{105}+\Biggl(\Lambda  a^{2} m^{2}\nonumber \\
&+\frac{278}{945} \Lambda  q^{4}-3 m^{2}\Biggr) r^{4}-\frac{74 \left(a^{2} q^{2} \Lambda -\frac{525}{37} m^{2}-3 q^{2}\right) m \,r^{3}}{105}-\frac{103 m^{2} q^{2} r^{2}}{7}-\frac{2 q^{2} \left(a^{2}-\frac{353 q^{2}}{15}\right) m r}{7}\nonumber\\
&+\frac{26 a^{2} q^{4}}{105}-\frac{14 q^{6}}{15}\Biggr) \cos \! \left(\theta \right)^{4}+19440 a^{2} r^{4} \Biggl(\frac{28 \Lambda  m^{2} r^{6}}{27}-\frac{76 \Lambda  m \,q^{2} r^{5}}{45}+\left(\Lambda  a^{2} m^{2}+\frac{248}{405} \Lambda  q^{4}-3 m^{2}\right) r^{4}\nonumber \\
&-\frac{44 \left(a^{2} q^{2} \Lambda -\frac{42}{11} m^{2}-3 q^{2}\right) m \,r^{3}}{27}+\frac{712 q^{2} \left(a^{2} q^{2} \Lambda -\frac{4023}{178} m^{2}-3 q^{2}\right) r^{2}}{1215}+\frac{4 q^{2} \left(a^{2}+\frac{295 q^{2}}{27}\right) m r}{5}-\frac{52 a^{2} q^{4}}{135}\nonumber \\
&-\frac{248 q^{6}}{135}\Biggr) \cos \! \left(\theta \right)^{2}-720 \Biggl(\Lambda  m^{2} r^{6}-\frac{12 \Lambda  m \,q^{2} r^{5}}{5}+\left(\Lambda  a^{2} m^{2}+\frac{76}{45} \Lambda  q^{4}-3 m^{2}\right) r^{4}\nonumber \\
&-\frac{12 \left(a^{2} q^{2} \Lambda -\frac{5}{2} m^{2}-3 q^{2}\right) m \,r^{3}}{5}+\frac{76 q^{2} \left(a^{2} q^{2} \Lambda -\frac{783}{76} m^{2}-3 q^{2}\right) r^{2}}{45}+\frac{12 q^{2} \left(a^{2}+\frac{65 q^{2}}{9}\right) m r}{5}\nonumber \\
&-\frac{44 a^{2} q^{4}}{15}-\frac{76 q^{6}}{15}\Biggr) r^{6}\Biggr].
\label{KarlHKNdS}
\end{align}
\end{theorem}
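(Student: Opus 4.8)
The plan is to establish (\ref{KarlHKNdS}) by a direct covariant computation in the Boyer--Lindquist chart of (\ref{KNADSelement}), organised so that the intermediate expressions stay as small as the symmetries permit. First I would read off the components $g_{\mu\nu}$, which depend only on $r$ and $\theta$, and compute the Christoffel symbols $\Gamma^{\lambda}_{\;\mu\nu}$ from their defining formula; stationarity and axisymmetry mean that every partial derivative is taken in the two variables $r,\theta$ only, while the $t,\phi$ directions enter purely algebraically through $g_{tt},g_{t\phi},g_{\phi\phi}$. From the $\Gamma^{\lambda}_{\;\mu\nu}$ I would assemble $R^{\kappa}_{\;\lambda\mu\nu}$ and lower the first index to obtain $R_{\lambda\mu\nu\kappa}$, retaining only the independent components fixed by the pair antisymmetries, the pair-exchange symmetry, and the first Bianchi identity.

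Next I would form the once-differentiated tensor $R_{\alpha\beta\gamma\delta;\mu}$ by the covariant-derivative rule, subtracting one Christoffel contraction per tensor index; the differential Bianchi identities (\ref{Bianchi}) furnish linear relations among these components and further trim the list that must be stored. A useful structural reduction comes from the decomposition (\ref{WeylH}) of the Riemann tensor into Weyl, Ricci, and scalar-curvature parts. For the Einstein--Maxwell system with a cosmological constant, the trace of the field equations gives the constant scalar curvature $R=4\Lambda$, so that $\nabla_{\mu}R=0$; together with metric compatibility $\nabla_{\mu}g_{\nu\rho}=0$ this makes the scalar-curvature piece of $R_{\alpha\beta\gamma\delta;\mu}$ drop out identically. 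Consequently $\mathfrak{K}$ separates into a pure-Weyl norm, a pure-Ricci (electromagnetic) norm, and a cross term, and the type-D character of the spacetime collapses the algebraic seed of the Weyl contribution to the single complex scalar $\Psi_{2}$, which is what keeps that block compact and explains the clean denominator structure of the final answer.

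\emph{The main obstacle} is the final algebraic simplification rather than the tensor bookkeeping. After contracting the two copies $R_{\alpha\beta\gamma\delta;\mu}R^{\alpha\beta\gamma\delta;\mu}$, the raw rational function in $r$ and $\cos\theta$ carries an enormous numerator; the task is to extract the common denominator $(r^{2}+a^{2}\cos^{2}\theta)^{9}$ and to collect the numerator into the displayed polynomial in even powers of $a\cos\theta$ with coefficients polynomial in $m,q,\Lambda$. This step I would delegate to computer algebra (Maple\textsuperscript{TM}2021), exploiting the parity in $\cos\theta$ and the grading by powers of $\Lambda$ and $q$ to steer the collection.

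Finally I would validate (\ref{KarlHKNdS}) through its limits, which double as the proof's consistency checks: setting $q\to 0$ and $\Lambda\to 0$ must reproduce the Kerr expression (\ref{AKarlHKerr}), and the further limit $a\to 0$ must return the Schwarzschild value $-240\,(6m^{3}r^{3}-3m^{2}r^{4})/r^{12}$. Agreement across these specialisations, together with the internal parity and divisibility structure of the collected coefficients, confirms the closed form.
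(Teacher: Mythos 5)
Your proposal is correct and follows essentially the same route as the paper, which states only that the invariants for the metric (\ref{KNADSelement}) were obtained by direct symbolic computation in Maple\textsuperscript{TM}2021; your outline of the Christoffel/Riemann/covariant-derivative pipeline, the use of $R=4\Lambda$ and metric compatibility to drop the scalar-curvature piece, and the validation against the Reissner--Nordstr\"{o}m--(anti-)de Sitter, Kerr, and Schwarzschild limits matches the paper's (largely implicit) procedure and its stated consistency checks. The extra structural observations you add (type-D collapse to $\Psi_{2}$, the Weyl/Ricci/cross-term split) are sound and consistent with the paper's own remark that $C_{\alpha\beta\gamma\delta;\epsilon}=R_{\alpha\beta\gamma\delta;\epsilon}$ in the vacuum-with-$\Lambda$ case, but they do not constitute a different method.
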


 \begin{theorem}
 We computed analytically the Karlhede invariant for the Kerr-(anti-)de Sitter black hole. Our result is:
 \begin{align}
 &\nabla_{\mu}R_{\alpha\beta\gamma\delta}\nabla^{\mu}R^{\alpha\beta\gamma\delta}=\frac{1}{3 \left(r^{2}+a^{2} \cos \! \left(\theta \right)^{2}\right)^{9}}\Biggl(720 \Lambda  \cos \! \left(\theta \right)^{12} a^{12} m^{2}\nonumber\\
 &-720 a^{10} \left(\Lambda  a^{2} m^{2}+28 \Lambda  m^{2} r^{2}-3 m^{2}\right) \cos \! \left(\theta \right)^{10}\nonumber \\
 &+19440 \left(\frac{23 \Lambda  m^{2} r^{4}}{9}+\left(\Lambda  a^{2} m^{2}-3 m^{2}\right) r^{2}-\frac{2 m^{3} r}{9}\right) a^{8} \cos \! \left(\theta \right)^{8}\nonumber \\
 &-30240 a^{6} \left(\left(\Lambda  a^{2} m^{2}-3 m^{2}\right) r^{4}-4 m^{3} r^{3}\right) \cos \! \left(\theta \right)^{6}\nonumber \\
 &-30240 a^{4} r^{2} \left(\frac{23 \Lambda  m^{2} r^{6}}{14}+\left(\Lambda  a^{2} m^{2}-3 m^{2}\right) r^{4}+10 m^{3} r^{3}\right) \cos \! \left(\theta \right)^{4}\nonumber \\
 &+19440 a^{2} r^{4} \left(\frac{28 \Lambda  m^{2} r^{6}}{27}+\left(\Lambda  a^{2} m^{2}-3 m^{2}\right) r^{4}+\frac{56 m^{3} r^{3}}{9}\right) \cos \! \left(\theta \right)^{2}\nonumber \\
 &-720 \left(\Lambda  m^{2} r^{6}+\left(\Lambda  a^{2} m^{2}-3 m^{2}\right) r^{4}+6 m^{3} r^{3}\right) r^{6}\Biggr)\nonumber\\
 &=\frac{240 m^{2}}{\left(r^{2}+a^{2} \cos \! \left(\theta \right)^{2}\right)^{9}}\left(\cos \! \left(\theta \right)^{4} a^{4}-4 \cos \! \left(\theta \right)^{3} a^{3} r -6 \cos \! \left(\theta \right)^{2} a^{2} r^{2}+4 \cos \! \left(\theta \right) a \,r^{3}+r^{4}\right) \nonumber \\
  &\times \Biggl(\cos \! \left(\theta \right)^{4} a^{4}+4 \cos \! \left(\theta \right)^{3} a^{3} r
   -6 \cos \! \left(\theta \right)^{2} a^{2} r^{2}-4 \cos \! \left(\theta \right) a \,r^{3}+r^{4}\Biggr)\nonumber \\
    &\times \left(\Lambda  \cos \! \left(\theta \right)^{4} a^{4}-\Lambda  \cos \! \left(\theta \right)^{2} a^{4}-\Lambda  a^{2} r^{2}-\Lambda  r^{4}+3 a^{2} \cos \! \left(\theta \right)^{2}-6 m r +3 r^{2}\right)\label{KarlhedeKerr(a)dS}.
 \end{align}
 \end{theorem}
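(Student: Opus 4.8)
The plan is to obtain the Kerr-(anti-)de Sitter Karlhede invariant as the uncharged limit of the Kerr-Newman-(anti-)de Sitter result already established in Theorem \ref{unserekarlhede}. The Kerr-(a)dS geometry is precisely the $q\to 0$ specialisation of the metric (\ref{KNADSelement}): the horizon function $\Delta_r^{KN}$ in (\ref{DiscrimiL}) depends on the charge only through the additive term $q^2$, so every Christoffel symbol, every component of the Riemann tensor, and hence the fully contracted scalar $\nabla_\mu R_{\alpha\beta\gamma\delta}\nabla^\mu R^{\alpha\beta\gamma\delta}$ is a rational function of $q^2$ that is regular at $q=0$. Consequently the uncharged invariant is obtained simply by setting $q=0$ in the closed form (\ref{KarlHKNdS}). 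First I would carry out this substitution bracket by bracket: in each of the seven $\cos^{2k}\theta$ groupings of (\ref{KarlHKNdS}) the pieces proportional to $q^2$, $q^4$ and $q^6$ drop out, and what survives is exactly the polynomial displayed in the first equality of (\ref{KarlhedeKerr(a)dS}). For instance the $a^{12}\cos^{12}\theta$ coefficient $720\Lambda m^2$ is charge-independent and is reproduced verbatim, while the $a^{10}\cos^{10}\theta$ bracket collapses from $\Lambda a^2 m^2+28\Lambda m^2 r^2-16\Lambda m q^2 r+\tfrac{76}{45}\Lambda q^4-3m^2$ to $\Lambda a^2 m^2+28\Lambda m^2 r^2-3m^2$.

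The substantive step is then to recast this expanded trigonometric polynomial, of degree twelve in $\cos\theta$ with only even powers present, into the factored product appearing in the second equality of (\ref{KarlhedeKerr(a)dS}). I would proceed by pattern recognition guided by the pure-Kerr formula (\ref{AKarlHKerr}): writing $A:=a^4\cos^4\theta-6a^2r^2\cos^2\theta+r^4$ and $B:=4a^3 r\cos^3\theta-4ar^3\cos\theta$, the two Kerr factors are $A-B$ and $A+B$, whose product $A^2-B^2=\operatorname{Re}\big[(r+ia\cos\theta)^{8}\big]$ is the same biquadratic structure that multiplies $m^2$ in the vacuum Kerr case. I would then posit that the remaining $\Lambda$-dependent factor is $P_3:=\Lambda a^4\cos^4\theta-\Lambda a^4\cos^2\theta-\Lambda a^2 r^2-\Lambda r^4+3a^2\cos^2\theta-6mr+3r^2$ and verify that the bracket in the first equality of (\ref{KarlhedeKerr(a)dS}) equals $720\,m^2\,(A-B)(A+B)\,P_3$ --- equivalently, after dividing out the common prefactor $1/\big(3(r^2+a^2\cos^2\theta)^9\big)$, that the invariant itself is $\tfrac{240\,m^2}{(r^2+a^2\cos^2\theta)^9}(A-B)(A+B)P_3$ --- by expanding the right-hand side and matching the coefficients of $\cos^{2k}\theta$ for $k=0,\dots,6$.

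Two consistency checks make this factorisation credible and double as independent verification. Setting $\Lambda=0$ reduces $P_3$ to $3\,(r^2-2mr+a^2\cos^2\theta)$, and since $240\times 3=720$ this recovers the pure-Kerr invariant (\ref{AKarlHKerr}) exactly, including both its overall constant $720m^2$ and its ergosurface factor $r^2-2mr+a^2\cos^2\theta$, whose roots (\ref{radiiergosurfaces}) were discussed above. Likewise, the manifest symmetry of the whole expression under $\cos\theta\to-\cos\theta$ --- which fixes $A$ and $P_3$, sends $B\to-B$, and hence merely interchanges the factors $A-B$ and $A+B$ --- is respected by both equalities, giving a quick partial check on the coefficient matching.

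The main obstacle is the factorisation rather than the limit: a generic degree-twelve polynomial in $\cos\theta$ whose coefficients are themselves polynomials in $r,a,m,\Lambda$ need not split into such low-degree factors, and confirming that the single guessed quartic $P_3$ is the correct cofactor requires the complete coefficient comparison, which is the most error-prone bookkeeping. For this reason I would also perform the computation \emph{ab initio} in a computer-algebra system --- building the inverse metric (nontrivial because of the off-diagonal $g_{t\phi}$ component of (\ref{KNADSelement})), the Christoffel symbols, the covariant derivative $R_{\alpha\beta\gamma\delta;\mu}$, and finally the contracted scalar --- and let the system factor the raw output, thereby confirming (\ref{KarlhedeKerr(a)dS}) independently of the specialisation argument.
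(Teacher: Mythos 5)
Your proposal is correct, and its primary route differs from what the paper actually does. The paper offers no written derivation for this theorem: like all the invariants in Sections \ref{karlhedeKNLambda}--\ref{KNlambdaAbdeLake}, the result is obtained by direct symbolic computation of $\Gamma^{\lambda}_{\;\mu\nu}$, $R^{\kappa}_{\;\;\lambda\mu\nu}$, $R_{\alpha\beta\gamma\delta;\mu}$ and the full contraction in Maple for the metric (\ref{KNADSelement}) --- i.e.\ exactly your fallback plan. Your main argument instead deduces the result from Theorem \ref{unserekarlhede} by the $q\to 0$ specialisation, which is sound because the metric depends on the charge only through the additive $q^{2}$ in $\Delta_r^{KN}$, so the invariant is polynomial in $q^{2}$ and evaluation at $q=0$ commutes with all the algebra; your bracket-by-bracket check (e.g.\ the $a^{8}\cos^{8}\theta$ term collapsing to $-\tfrac{16m\cdot\tfrac{3}{8}m^{2}r}{27}=-\tfrac{2m^{3}r}{9}$) reproduces the first equality of (\ref{KarlhedeKerr(a)dS}) verbatim. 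What your route buys is economy and an internal consistency check between the two theorems; what it costs is that the correctness of this theorem then rests entirely on the unproved Theorem \ref{unserekarlhede}, which is why your independent \emph{ab initio} computation is not optional padding but the step that puts you on the same footing as the paper. The factorisation step is handled adequately: your identification of the two Kerr factors as $A\mp B$ with $A^{2}-B^{2}=\Re\bigl[(r+ia\cos\theta)^{8}\bigr]$, the guessed cofactor $P_{3}$, the $\Lambda=0$ reduction to (\ref{AKarlHKerr}) via $240\times 3=720$, and the $\cos\theta\to-\cos\theta$ symmetry are all correct, and a complete coefficient comparison (or CAS factoring) closes it. One further cheap check you could have invoked, which the paper itself uses as a Remark: for an Einstein space $R_{\alpha\beta}=\Lambda g_{\alpha\beta}$ one has $R_{\alpha\beta\gamma\delta;\epsilon}=C_{\alpha\beta\gamma\delta;\epsilon}$, so the Karlhede invariant must coincide with $I_{3}$ of Eqn.(\ref{LAI3kdS}), which it does.
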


 \begin{corollary}
 For zero rotation (i.e $a=0$), Eqn.(\ref{KarlHKNdS}) reduces to the analytic exact expression of the Karlhede invariant for the Reissner-Nordstr\"{o}m-(anti-)de Sitter black hole:
 \begin{align}
 &\nabla_{\mu}R_{\alpha\beta\gamma\delta}\nabla^{\mu}R^{\alpha\beta\gamma\delta}\nonumber \\
 &=-\frac{240 \left(\Lambda  m^{2} r^{6}-\frac{12 \Lambda  m \,q^{2} r^{5}}{5}+\left(\frac{76 \Lambda  q^{4}}{45}-3 m^{2}\right) r^{4}-\frac{12 \left(-\frac{5 m^{2}}{2}-3 q^{2}\right) m \,r^{3}}{5}+\frac{76 q^{2} \left(-\frac{783 m^{2}}{76}-3 q^{2}\right) r^{2}}{45}+\frac{52 q^{4} m r}{3}-\frac{76 q^{6}}{15}\right)}{r^{12}}\nonumber \\
 &=-\left(-\frac{720 m^{2}}{r^{8}}+\frac{1728 q^{2} m}{r^{9}}-\frac{1216 q^{4}}{r^{10}}\right) \left(1+\frac{q^{2}}{r^{2}}-\frac{2 m}{r}-\frac{\Lambda  r^{2}}{3}\right).\label{KarlRNadS}
 \end{align}
 \end{corollary}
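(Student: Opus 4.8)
The statement is a limiting case of Theorem \ref{unserekarlhede}, so the plan is to specialise the closed-form expression (\ref{KarlHKNdS}) to $a=0$ and then exhibit the resulting polynomial in factored form. First I would set $a=0$ directly in (\ref{KarlHKNdS}). Every summand except the last carries an explicit overall factor $a^{2k}$ with $k\in\{1,\dots,6\}$ (the powers $a^{12},a^{10},\dots,a^{2}$ standing in front of the corresponding $\cos^{2k}\theta$ blocks), so each of these vanishes identically; simultaneously the prefactor $\left(r^{2}+a^{2}\cos^{2}\theta\right)^{-9}$ collapses to $r^{-18}$. Only the final $r^{6}$-block survives, and within it the pieces carrying $a^{2}$ (namely $\Lambda a^{2}m^{2}r^{4}$, the $a^{2}q^{2}\Lambda$ appearing inside the $r^{3}$- and $r^{2}$-coefficients, the $a^{2}$ inside the $r$-coefficient, and $-\tfrac{44a^{2}q^{4}}{15}$) drop out as well. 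Combining the surviving numerator with the factor $\tfrac{1}{3}$ and with $r^{6}/r^{18}=r^{-12}$ yields $-240/r^{12}$ times a degree-six polynomial $P(r)$ in $r$, which is precisely the first equality in (\ref{KarlRNadS}).

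The second equality is the genuine content, and here the plan is to recognise the horizon polynomial as a factor. Writing $B:=1+\tfrac{q^{2}}{r^{2}}-\tfrac{2m}{r}-\tfrac{\Lambda r^{2}}{3}$, one has $r^{2}B=r^{2}-2mr+q^{2}-\tfrac{\Lambda}{3}r^{4}=\Delta_{r}^{KN}\big|_{a=0}$ by (\ref{DiscrimiL}), a quartic in $r$. I would then check that $240\,P(r)$ equals $\left(-720m^{2}r^{2}+1728q^{2}mr-1216q^{4}\right)\cdot r^{2}B$; equivalently, that polynomial division of $240\,P(r)$ by $\Delta_{r}^{KN}\big|_{a=0}$ leaves the quadratic quotient $-720m^{2}r^{2}+1728q^{2}mr-1216q^{4}$ with zero remainder. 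The extreme coefficients already check out, since $240\Lambda m^{2}=720m^{2}\cdot\tfrac{\Lambda}{3}$ matches the $r^{6}$-term and $-1216q^{6}=(-1216q^{4})\cdot q^{2}$ matches the constant term; the intermediate coefficients match in the same mechanical way. Restoring the powers of $r$ then reassembles the product $-A\cdot B$ with $A=-\tfrac{720m^{2}}{r^{8}}+\tfrac{1728q^{2}m}{r^{9}}-\tfrac{1216q^{4}}{r^{10}}$, which is the second line of (\ref{KarlRNadS}).

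The only potentially delicate point is confirming that this division is exact, but its existence is dictated on structural grounds rather than by brute force. The Karlhede invariant is known to vanish on the Schwarzschild event horizon, and by continuity in the parameters $(q,\Lambda)$ one expects $\mathfrak{K}$ for the spherically symmetric Reissner-Nordstr\"{o}m-(anti-)de Sitter family to vanish exactly on the surfaces $\Delta_{r}^{KN}\big|_{a=0}=0$, i.e. on the stationary horizons; hence $\Delta_{r}^{KN}\big|_{a=0}$ (equivalently $B$) must divide $P(r)$, which is what the factored form (\ref{KarlRNadS}) displays. As a consistency check I would verify the Schwarzschild reduction: setting $q=\Lambda=0$ gives $-A\cdot B=\tfrac{720m^{2}}{r^{8}}\bigl(1-\tfrac{2m}{r}\bigr)=-\tfrac{240\left(6m^{3}r^{3}-3m^{2}r^{4}\right)}{r^{12}}$, reproducing the result of Karlhede \textit{et al.}\ quoted above. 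No individual step is a genuine obstacle; the proof is a controlled specialisation together with a single polynomial factorisation whose existence is forced by the horizon-detection property.
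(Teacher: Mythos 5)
Your proposal is correct and matches what the paper does (implicitly, since the corollary is stated without a written-out proof): set $a=0$ in Eqn.~(\ref{KarlHKNdS}), observe that only the final $r^{6}$-block survives, and verify that the resulting sextic numerator factors as the quadratic $-720m^{2}r^{2}+1728q^{2}mr-1216q^{4}$ times $\Delta_{r}^{KN}\big|_{a=0}$. The coefficient checks you indicate all go through, so nothing further is needed.
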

\begin{remark}
We observe  from eqn.(\ref{KarlRNadS}), that the  Karlhede invariant for the Reissner-Nordstr\"{o}m-(anti-)de Sitter black hole vanishes on the black hole horizon.
\end{remark}

In Fig.\ref{ContourPlotsKarlhede} we plot level curves for the Karlhede invariant we computed in Eqn.( \ref{KarlhedeKerr(a)dS}) , in the $r-\theta$ space, for different sets of values of the physical black hole parameters $a,\Lambda,m$. In the context of setting $m=1$, a dimensionless form of $\Lambda$ corresponds to the dimensionless combination $m^2\Lambda$, unless otherwise stipulated. This means that for supermassive black holes such as at the centre of Galaxy M87 with mass $M^{M87}_{\rm BH}=6.7\times 10^9$ solar masses \cite{EHT} the value of dimensionless $\Lambda=3\times 10^{-4}$ corresponds to the value for the cosmological constant: $\Lambda=3.06\times 10^{-34}{\rm cm}^{-2}$. For the galactic centre supermassive SgrA* black hole with mass $M_{\rm BH}^{SgrA*}=4.06\times 10^6$ solar masses \cite{Eisenhauer} a value for dimensionless $\Lambda=3.6\times 10^{-33}$ (which is the value we use in our graphs) corresponds to the value for cosmological constant $10^{-56}{\rm cm}^{-2}$ consistent with observations.

\begin{figure}[ptbh]
\centering
  \begin{subfigure}[b]{.60\linewidth}
    \centering
    \includegraphics[width=.99\textwidth]{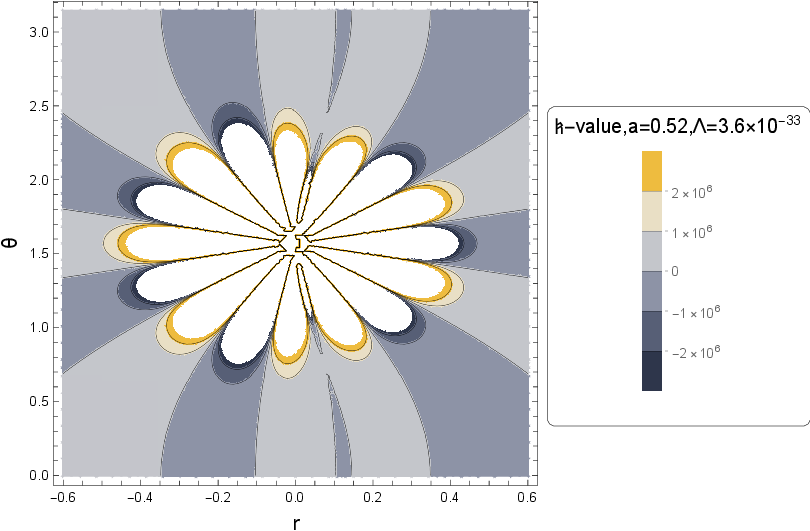}
    \caption{ Contour plot of Karlhede invariant $\mathfrak{K}$.}\label{ContourKarhedea052}
  \end{subfigure}%
  \begin{subfigure}[b]{.60\linewidth}
    \centering
    \includegraphics[width=.99\textwidth]{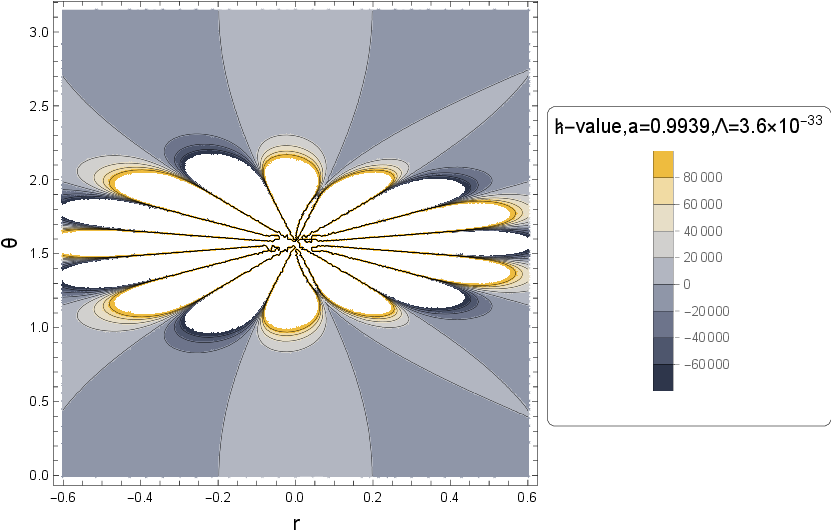}
    \caption{Contour plot of Karlhede invariant $\mathfrak{K}$.}\label{ContourKarhedea09939}
  \end{subfigure}\\
  \begin{subfigure}[b]{.60\linewidth}
    \centering
    \includegraphics[width=.99\textwidth]{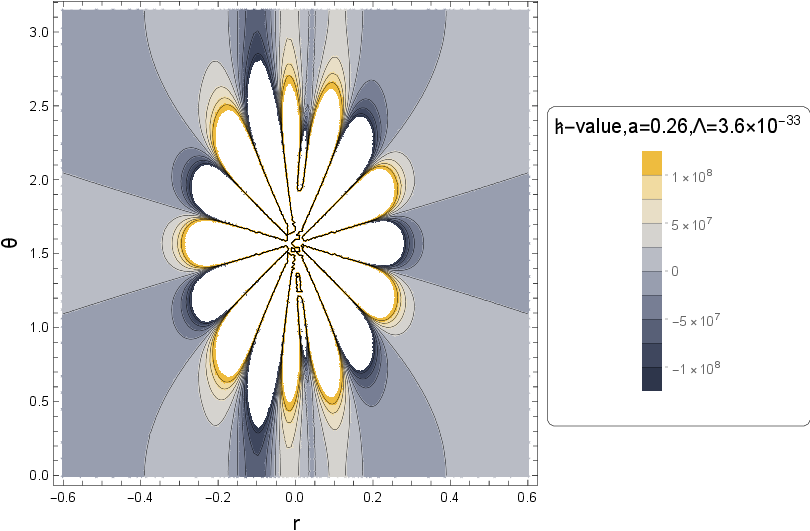}
    \caption{Contour Plot of Karlhede invariant $\mathfrak{K}$.}\label{ContourKarhedea026}
  \end{subfigure}%
  \caption{Contour plots of the Karlhede invariant $\mathfrak{K}$, eqn.(\ref{KarlhedeKerr(a)dS}), for the Kerr-(anti-)de Sitter black hole. (a) for spin parameter $a=0.52$, charge $q=0$,dimensionless cosmological parameter $\Lambda=3.6\times 10^{-33},m=1$. (b) For spin  $a=0.9939$, charge $q=0$,dimensionless cosmological parameter $\Lambda=3.6\times 10^{-33},m=1$. (c) For low spin  $a=0.26$, electric charge $q=0$,dimensionless cosmological parameter $\Lambda=3.6\times 10^{-33}$ and mass $m=1$.}\label{ContourPlotsKarlhede}
\end{figure}

\begin{theorem}
We computed an explicit algebraic expression for the differential curvature invariant $\nabla_{\mu}C^{*}_{\alpha\beta\gamma\delta}\nabla^{\mu}C^{*\alpha\beta\gamma\delta}$
for the Kerr-(anti-)de Sitter black hole spacetime:
\begin{align}
&\nabla_{\mu}C^{*}_{\alpha\beta\gamma\delta}\nabla^{\mu}C^{*\alpha\beta\gamma\delta}=-\frac{240 m^{2}}{\left(r^{2}+a^{2} \cos \! \left(\theta \right)^{2}\right)^{9}} \left(a^{4} \cos \! \left(\theta \right)^{4}-4 \cos \! \left(\theta \right)^{3} a^{3} r -6 a^{2} \cos \! \left(\theta \right)^{2} r^{2}+4 \cos \! \left(\theta \right) a \,r^{3}+r^{4}\right)\nonumber \\
&\times  \left(a^{4} \cos \! \left(\theta \right)^{4}+4 \cos \! \left(\theta \right)^{3} a^{3} r -6 a^{2} \cos \! \left(\theta \right)^{2} r^{2}-4 \cos \! \left(\theta \right) a \,r^{3}+r^{4}\right)\nonumber \\
& \times \left(\Lambda  \cos \! \left(\theta \right)^{4} a^{4}-\Lambda  \cos \! \left(\theta \right)^{2} a^{4}-\Lambda  a^{2} r^{2}-\Lambda  r^{4}+3 a^{2} \cos \! \left(\theta \right)^{2}-6 m r +3 r^{2}\right).
\label{KerrLambdanormnablaweyldual}
\end{align}
\end{theorem}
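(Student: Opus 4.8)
The plan is to avoid a brute-force tensor computation by reducing the invariant $\nabla_{\mu}C^{*}_{\alpha\beta\gamma\delta}\nabla^{\mu}C^{*\alpha\beta\gamma\delta}$ to the Karlhede invariant already obtained in eqn.(\ref{KarlhedeKerr(a)dS}). The two ingredients are: the fact that the Kerr-(anti-)de Sitter spacetime is an Einstein space, and a purely algebraic duality identity valid in any Lorentzian four-manifold.

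First I would record that for $q=0$ the metric (\ref{KNADSelement}) solves the vacuum Einstein equations with cosmological constant, so that $R_{\mu\nu}=\Lambda g_{\mu\nu}$ and $R=4\Lambda$ is constant. Consequently $\nabla_{\sigma}R_{\mu\nu}=\Lambda\nabla_{\sigma}g_{\mu\nu}=0$ and $\nabla_{\sigma}R=0$. Differentiating the decomposition (\ref{WeylH}) covariantly, every Ricci- and scalar-curvature correction term is annihilated, so that $\nabla_{\sigma}C_{\kappa\lambda\mu\nu}=\nabla_{\sigma}R_{\kappa\lambda\mu\nu}$ on this background. In particular $\nabla_{\mu}C_{\alpha\beta\gamma\delta}\nabla^{\mu}C^{\alpha\beta\gamma\delta}=\nabla_{\mu}R_{\alpha\beta\gamma\delta}\nabla^{\mu}R^{\alpha\beta\gamma\delta}=\mathfrak{K}$, the Karlhede invariant of eqn.(\ref{KarlhedeKerr(a)dS}).

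Second I would pass from the Weyl tensor to its dual. Because the Levi-Civita pseudotensor is covariantly constant, the dualization commutes with $\nabla_{\mu}$, so $\nabla_{\sigma}C^{*}_{\alpha\beta\gamma\delta}=\tfrac{1}{2}E_{\alpha\beta\kappa\lambda}\nabla_{\sigma}C^{\kappa\lambda}{}_{\gamma\delta}$. Forming the square, contracting the two Levi-Civita tensors by means of the Lorentzian identity $E_{\alpha\beta\kappa\lambda}E^{\alpha\beta\rho\sigma}=-2(\delta_{\kappa}^{\rho}\delta_{\lambda}^{\sigma}-\delta_{\kappa}^{\sigma}\delta_{\lambda}^{\rho})$, and then using the antisymmetry $C_{\kappa\lambda\gamma\delta}=-C_{\lambda\kappa\gamma\delta}$, I expect the two resulting terms to combine into
\[
\nabla_{\mu}C^{*}_{\alpha\beta\gamma\delta}\nabla^{\mu}C^{*\alpha\beta\gamma\delta}=-\nabla_{\mu}C_{\alpha\beta\gamma\delta}\nabla^{\mu}C^{\alpha\beta\gamma\delta},
\]
the differential analogue of the familiar double-dual relation $C^{**}=-C$ in Lorentzian signature. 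Combining this with the Einstein-space identity of the previous step yields $\nabla_{\mu}C^{*}_{\alpha\beta\gamma\delta}\nabla^{\mu}C^{*\alpha\beta\gamma\delta}=-\mathfrak{K}$, which is precisely eqn.(\ref{KarlhedeKerr(a)dS}) with its overall sign reversed; this reproduces (\ref{KerrLambdanormnablaweyldual}), whose three polynomial factors are indeed identical to those of the Karlhede invariant.

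The main obstacle is purely one of bookkeeping: fixing the overall sign correctly. The factor $-2$ in the contraction identity is specific to Lorentzian signature ($\det g<0$), and it must be tracked together with the two factors of $\tfrac{1}{2}$ in the dual definition and the index swap produced by the antisymmetry of $C$; an error in any one of these would flip the final sign. As an independent check --- and the route presumably taken with Maple\textsuperscript{TM} --- one can verify the identity numerically against the closed-form Karlhede expression (\ref{KarlhedeKerr(a)dS}) at a few sample points $(r,\theta)$, confirming both the relative minus sign and the coincidence of the polynomial factors.
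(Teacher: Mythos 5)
Your proposal is correct, and it takes a genuinely different route from the paper. The paper obtains eqn.(\ref{KerrLambdanormnablaweyldual}) by direct symbolic computation of the tensor contraction in Maple, presenting only the closed-form output; you instead reduce the invariant to the already-computed Karlhede invariant via two general facts: (i) for $q=0$ the metric is an Einstein space, $R_{\mu\nu}=\Lambda g_{\mu\nu}$ with constant $R$, so differentiating (\ref{WeylH}) gives $\nabla_{\sigma}C_{\kappa\lambda\mu\nu}=\nabla_{\sigma}R_{\kappa\lambda\mu\nu}$ and hence $\nabla_{\mu}C_{\alpha\beta\gamma\delta}\nabla^{\mu}C^{\alpha\beta\gamma\delta}=\mathfrak{K}$ (a fact the paper itself records in its remark comparing eqns.(\ref{KarlhedeKerr(a)dS}) and (\ref{LAI3kdS})); and (ii) since $E_{\alpha\beta\kappa\lambda}$ is covariantly constant, the contraction $E_{\alpha\beta\kappa\lambda}E^{\alpha\beta}{}_{\rho\sigma}=-2(g_{\kappa\rho}g_{\lambda\sigma}-g_{\kappa\sigma}g_{\lambda\rho})$ together with the first-pair antisymmetry of $C$ yields $\nabla_{\mu}C^{*}_{\alpha\beta\gamma\delta}\nabla^{\mu}C^{*\alpha\beta\gamma\delta}=-\nabla_{\mu}C_{\alpha\beta\gamma\delta}\nabla^{\mu}C^{\alpha\beta\gamma\delta}$ (your sign bookkeeping $\tfrac{1}{4}\cdot(-2)\cdot 2=-1$ is right, exactly parallel to $F^{*}_{\mu\nu}F^{*\mu\nu}=-F_{\mu\nu}F^{\mu\nu}$). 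What your approach buys is an explanation of \emph{why} the three polynomial factors in (\ref{KerrLambdanormnablaweyldual}) coincide with those of (\ref{KarlhedeKerr(a)dS}) with only an overall sign flip, plus an independent consistency check on the machine computation; what the paper's brute-force route buys is uniform applicability to the charged and accelerating cases, where your step (i) fails (note the Reissner--Nordstr\"{o}m result in the paper is manifestly not $-\mathfrak{K}$), although your duality identity (ii) remains valid there.
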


\begin{remark}
It is evident from Eqn.(\ref{KerrLambdanormnablaweyldual}) that the differential curvature invariant $\nabla_{\mu}C^{*}_{\alpha\beta\gamma\delta}\nabla^{\mu}C^{*\alpha\beta\gamma\delta}$ vanishes at the ergosurfaces, i.e the roots of the equation: $g_{tt}=0.$ It serves therefore as a detector of ergosurfaces for the Kerr-(anti-) de Sitter black hole.
\end{remark}

 \begin{corollary}
 The invariant $\nabla_{\mu}C^{*}_{\alpha\beta\gamma\delta}\nabla^{\mu}C^{*\alpha\beta\gamma\delta}$
for the Kerr spacetime is calculated in closed analytic form with the result:
\begin{align}
&\nabla_{\mu}C^{*}_{\alpha\beta\gamma\delta}\nabla^{\mu}C^{*\alpha\beta\gamma\delta}=
-\frac{720 m^{2}}{\left(r^{2}+a^{2} \cos \! \left(\theta \right)^{2}\right)^{9}} \left(\cos \! \left(\theta \right)^{4} a^{4}-4 \cos \! \left(\theta \right)^{3} a^{3} r -6 \cos \! \left(\theta \right)^{2} a^{2} r^{2}+4 \cos \! \left(\theta \right) a \,r^{3}+r^{4}\right)\nonumber \\
&\times \left(\cos \! \left(\theta \right)^{4} a^{4}+4 \cos \! \left(\theta \right)^{3} a^{3} r -6 \cos \! \left(\theta \right)^{2} a^{2} r^{2}-4 \cos \! \left(\theta \right) a \,r^{3}+r^{4}\right) \left(a^{2} \cos \! \left(\theta \right)^{2}-2 m r +r^{2}\right).
\end{align}
  \end{corollary}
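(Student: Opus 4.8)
The plan is to obtain this result not by an independent tensor computation but as the $\Lambda\to 0$ specialization of the Kerr--(anti-)de Sitter formula already established in the preceding Theorem, eqn.(\ref{KerrLambdanormnablaweyldual}). The Kerr metric is precisely the $q=0$, $\Lambda=0$ member of the KN(a)dS family (\ref{KNADSelement}), and since the metric functions $\Delta_r^{KN}$, $\Delta_\theta$, $\Xi$ and hence every component of the Weyl tensor, its Hodge dual, and their covariant derivatives depend analytically on $\Lambda$, the Kerr value of the invariant is simply the value of the already-computed closed-form expression evaluated at $\Lambda=0$. So the whole argument reduces to inspecting the $\Lambda$-dependence of eqn.(\ref{KerrLambdanormnablaweyldual}).

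First I would observe that the prefactor $-240 m^{2}/\left(r^{2}+a^{2}\cos^{2}\theta\right)^{9}$ and the two quartic factors
\[
a^{4}\cos^{4}\theta \mp 4 a^{3}\cos^{3}\theta\, r -6 a^{2}\cos^{2}\theta\, r^{2} \pm 4 a\cos\theta\, r^{3}+r^{4}
\]
are manifestly independent of $\Lambda$, so they survive the limit verbatim and already coincide with the two quartic factors in the claimed Kerr expression. All of the cosmological-constant dependence is confined to the third factor,
\[
\Lambda\cos^{4}\theta\, a^{4}-\Lambda\cos^{2}\theta\, a^{4}-\Lambda a^{2} r^{2}-\Lambda r^{4}+3 a^{2}\cos^{2}\theta-6mr+3r^{2}.
\]

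Setting $\Lambda=0$ annihilates the four $\Lambda$-proportional terms and leaves $3a^{2}\cos^{2}\theta-6mr+3r^{2}=3\left(a^{2}\cos^{2}\theta-2mr+r^{2}\right)$. Absorbing the overall factor of $3$ into the prefactor, via $240\times 3=720$, then reproduces exactly the stated expression, with the final cubic reduced to the single quadratic factor $a^{2}\cos^{2}\theta-2mr+r^{2}$ (the $g_{tt}=0$ ergosurface polynomial of eqn.(\ref{radiiergosurfaces})). This completes the derivation.

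There is no substantive obstacle here; the only point requiring a word of justification is that the limit is genuinely regular and term-by-term, i.e.\ that no hidden $\Lambda$-dependence is concealed in the normalization or in the quartic factors. One reassures oneself of this by noting that $\Delta_r^{KN}\big|_{q=0}\to r^{2}+a^{2}-2mr$ smoothly as $\Lambda\to 0$ and that $\Delta_\theta,\Xi\to 1$, so every metric function and every curvature quantity built from them passes continuously (indeed analytically) to its Kerr counterpart. As an alternative one could, of course, recompute the invariant directly from the Kerr metric by forming $C_{\alpha\beta\gamma\delta}$, its dual $C^{*}_{\alpha\beta\gamma\delta}$, the covariant derivative $\nabla_{\mu}C^{*}_{\alpha\beta\gamma\delta}$ and the contraction $\nabla_{\mu}C^{*}_{\alpha\beta\gamma\delta}\nabla^{\mu}C^{*\alpha\beta\gamma\delta}$; but this merely regenerates the $\Lambda=0$ instance of the theorem already in hand, so the limit argument is the economical route.
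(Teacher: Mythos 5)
Your proposal is correct and is essentially the paper's own route: the statement is presented as a corollary of the Kerr--(anti-)de Sitter theorem, obtained by setting $\Lambda=0$ in eqn.(\ref{KerrLambdanormnablaweyldual}), whereupon the last factor reduces to $3\left(a^{2}\cos^{2}\theta-2mr+r^{2}\right)$ and the factor of $3$ converts the prefactor $-240m^{2}$ into $-720m^{2}$. Your added remark that the metric functions, and hence all curvature quantities, depend polynomially on $\Lambda$ correctly justifies that this specialization is legitimate.
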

  \begin{theorem}
  We calculated in closed analytic form  the invariant $\nabla_{\mu}C^{*}_{\alpha\beta\gamma\delta}\nabla^{\mu}C^{*\alpha\beta\gamma\delta}$
for the Reissner-Nordstr\"{o}m black hole. Indeed, our computation yields:
  \begin{align}
  \nabla_{\mu}C^{*}_{\alpha\beta\gamma\delta}\nabla^{\mu}C^{*\alpha\beta\gamma\delta}=\frac{48 \left(15 m^{2} r^{2}-36 m \,q^{2} r +22 q^{4}\right) \left(2 m r -q^{2}-r^{2}\right)}{r^{12}}
  \end{align}
  \end{theorem}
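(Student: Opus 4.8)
The plan is to obtain the Reissner-Nordström metric as the $a\to 0$, $\Lambda\to 0$ specialisation of (\ref{KNADSelement}), which collapses to the static spherically symmetric form $\mathrm{d}s^2 = -f(r)\,\mathrm{d}t^2 + f(r)^{-1}\mathrm{d}r^2 + r^2(\mathrm{d}\theta^2 + \sin^2\theta\,\mathrm{d}\phi^2)$ with $f(r) = 1 - \frac{2m}{r} + \frac{q^2}{r^2}$. From here I would compute the Christoffel symbols, then the Riemann and Ricci tensors, and assemble the Weyl tensor $C_{\kappa\lambda\mu\nu}$ from (\ref{WeylH}). Because the spacetime is of Petrov type D with a single real Weyl scalar $\Psi_2 = -(mr - q^2)/r^4$, every Weyl component is fixed by $\Psi_2(r)$ in a frame adapted to the static spherical symmetry, and the dual $C^{*}_{\alpha\beta\gamma\delta}$ follows by one contraction with the Levi-Civita pseudotensor $E_{\alpha\beta\kappa\lambda}$.

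A convenient shortcut removes half of the labour. Since $E_{\alpha\beta\kappa\lambda}$ is covariantly constant, $\nabla_\mu C^{*}_{\alpha\beta\gamma\delta}$ is the left dual of $\nabla_\mu C_{\alpha\beta\gamma\delta}$, and for any tensor $T_{\alpha\beta\gamma\delta}$ antisymmetric in its first pair the Lorentzian double-dual contraction $E_{\alpha\beta}{}^{\kappa\lambda}E^{\alpha\beta}{}_{\mu\nu} = -2(\delta^\kappa_\mu\delta^\lambda_\nu - \delta^\kappa_\nu\delta^\lambda_\mu)$ yields $T^{*}_{\alpha\beta\gamma\delta}T^{*\alpha\beta\gamma\delta} = -T_{\alpha\beta\gamma\delta}T^{\alpha\beta\gamma\delta}$. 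Applying this with $T_{\alpha\beta\gamma\delta} = \nabla_\mu C_{\alpha\beta\gamma\delta}$ gives the identity $\nabla_\mu C^{*}_{\alpha\beta\gamma\delta}\nabla^\mu C^{*\alpha\beta\gamma\delta} = -\nabla_\mu C_{\alpha\beta\gamma\delta}\nabla^\mu C^{\alpha\beta\gamma\delta} = -I_3$, so it suffices to evaluate the Weyl invariant $I_3$ for RN and flip its sign. This is consistent with the preceding Corollary for Kerr, whose coefficient $-720$ is exactly minus that of the vacuum Kerr Karlhede invariant (\ref{AKarlHKerr}). I stress that one cannot simply reuse the RN Karlhede invariant (\ref{KarlRNadS}) here: RN is not a vacuum solution, so $R_{\kappa\lambda\mu\nu}\neq C_{\kappa\lambda\mu\nu}$ and $\mathfrak{K}^{RN}$ differs from $I_3^{RN}$ by covariant-derivative-of-Ricci terms.

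The main computational step is then to form $\nabla_\mu C_{\alpha\beta\gamma\delta}$ (equivalently $\nabla_\mu C^{*}_{\alpha\beta\gamma\delta}$) and contract. I would carry the $r$- and $\theta$-dependence through the connection, noting that the angular sector enters only through the round-sphere connection coefficients and the $1/r$ terms in $\Gamma^r_{\theta\theta},\,\Gamma^r_{\phi\phi}$; spherical symmetry guarantees that the final scalar depends on $r$ alone, which is a strong internal check. After collecting terms one expects the structure $\nabla_\mu C^{*}_{\alpha\beta\gamma\delta}\nabla^\mu C^{*\alpha\beta\gamma\delta} = \frac{S(r)\,(2mr - q^2 - r^2)}{r^{12}}$, where $2mr - q^2 - r^2 = -(r^2 - 2mr + q^2) = -r^2 f(r)$ is minus the horizon polynomial and $S(r)$ is a polynomial in $r$ with coefficients depending on $m$ and $q$. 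Identifying $S(r) = 48(15 m^2 r^2 - 36 m q^2 r + 22 q^4)$ completes the closed form.

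The hardest part will be the bookkeeping in the contraction and the final factorisation: the unsimplified expression is a large sum of rational terms in $r$, and extracting the clean product requires recognising $r^2 - 2mr + q^2$ as a common factor. Two consistency checks then close the argument. Setting $q=0$ must reproduce $720\,m^2(2m - r)/r^9$, which is exactly the negative of the Schwarzschild Karlhede invariant quoted above (legitimate since for vacuum $R_{\kappa\lambda\mu\nu}=C_{\kappa\lambda\mu\nu}$, so $\mathfrak{K}^{\mathrm{Schw}}=I_3^{\mathrm{Schw}}$). Moreover the factor $(2mr - q^2 - r^2)$ must vanish precisely at $r_{\pm} = m \pm \sqrt{m^2 - q^2}$, which establishes the event- and Cauchy-horizon detection property of this differential invariant for the Reissner-Nordström black hole.
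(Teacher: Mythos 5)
Your proposal is correct, and the computational core (metric $\to$ connection $\to$ Weyl tensor $\to$ covariant derivative $\to$ contraction, with the clean factor $r^{2}-2mr+q^{2}$ extracted at the end) is the same direct calculation the paper performs by computer algebra; the paper states only the output and offers no further derivation. What you add is a genuine structural shortcut the paper never makes explicit: since $E_{\alpha\beta\kappa\lambda}$ is covariantly constant and $\nabla_{\mu}C_{\alpha\beta\gamma\delta}$ is antisymmetric in its first pair, the Lorentzian contraction $E^{\alpha\beta\kappa\lambda}E_{\alpha\beta\mu\nu}=-2(\delta^{\kappa}_{\mu}\delta^{\lambda}_{\nu}-\delta^{\kappa}_{\nu}\delta^{\lambda}_{\mu})$ gives $\nabla_{\mu}C^{*}_{\alpha\beta\gamma\delta}\nabla^{\mu}C^{*\alpha\beta\gamma\delta}=-I_{3}$ identically, so one never needs to form the dual at all. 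This identity is silently confirmed by the paper's own tables: the Kerr and Kerr--(anti-)de Sitter expressions for $\nabla_{\mu}C^{*}_{\alpha\beta\gamma\delta}\nabla^{\mu}C^{*\alpha\beta\gamma\delta}$ are exactly $-I_{3}$ term by term, and for Reissner--Nordstr\"{o}m one can check that $-$(the stated dual invariant)$\,+\,2\nabla_{\mu}R_{\alpha\beta}\nabla^{\mu}R^{\alpha\beta}$, with the Ricci invariant taken from Corollary \ref{Lwitten2}, reproduces the Karlhede invariant of Eqn.~(\ref{KarlRNadS}) at $\Lambda=0$, i.e. $720m^{2}r^{2}-1728mq^{2}r+1056q^{4}+160q^{4}=720m^{2}r^{2}-1728mq^{2}r+1216q^{4}$. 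Your caution that $\mathfrak{K}^{RN}\neq I_{3}^{RN}$ because RN is non-vacuum is exactly right and is the one trap a reader might fall into here; your $q\to0$ check against the Schwarzschild Karlhede invariant also matches. The only cost of your route is that the sign conventions for the Levi-Civita contraction must be fixed consistently with the paper's definition of $C^{*}$, but the agreement with the Kerr corollary settles that.
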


  \section{Analytic computation of the Abdelqade-Lake local invariants for the Kerr-Newman-(anti) de Sitter black hole and  their role in detecting black hole horizons and/or ergosurfaces}\label{KNlambdaAbdeLake}
  \subsection{Rotating and charged black holes with $\Lambda\not=0$}

The curvature invariants $I_1,I_2$ for the case of Kerr-Newman black holes in (anti-)de Sitter spacetime have been calculated-see Eqn(39) and Eqn.(33) in \cite{KraniotisCurvature}.
We now derive a novel explicit algebraic expression for the norm of the covariant derivative of the Weyl tensor:

   \begin{theorem}\label{synalParWeyl}
  Our analytic computation for the invariant $I_3$ for a Kerr-Newman-(anti-)de Sitter black hole yields the result:
  \begin{align}
  &I_3=\frac{1}{\left(r^{2}+a^{2} \cos \! \left(\theta \right)^{2}\right)^{9}}\Biggl(240 \Lambda  a^{12} \cos \! \left(\theta \right)^{12} m^{2}-240 \Biggl[\Lambda  a^{2} m^{2}+28 m^{2} r^{2} \Lambda -16 m \,q^{2} r \Lambda +\frac{22}{15} q^{4} \Lambda \nonumber \\
  &-3 m^{2}\Biggr] a^{10} \cos \! \left(\theta \right)^{10}+6480 \Biggl\{\frac{23 m^{2} r^{4} \Lambda}{9}-\frac{8 m \,q^{2} r^{3} \Lambda}{3}+\left(\Lambda  a^{2} m^{2}+\frac{254}{405} q^{4} \Lambda -3 m^{2}\right) r^{2}\nonumber \\
  &-\frac{16 \left(a^{2} q^{2} \Lambda +\frac{3}{8} m^{2}-3 q^{2}\right) m r}{27}+\frac{22 q^{2} \left(a^{2} q^{2} \Lambda +\frac{45}{22} m^{2}-3 q^{2}\right)}{405}\Biggr\} a^{8} \cos \! \left(\theta \right)^{8}-10080 \Biggl[-\frac{22 m \,q^{2} r^{5} \Lambda}{35}\nonumber \\
  &+\left(\Lambda  a^{2} m^{2}+\frac{92}{315} q^{4} \Lambda -3 m^{2}\right) r^{4}-\frac{10 \left(a^{2} q^{2} \Lambda +\frac{14}{5} m^{2}-3 q^{2}\right) m \,r^{3}}{7}+\frac{122 \left(a^{2} q^{2} \Lambda +\frac{585}{61} m^{2}-3 q^{2}\right) q^{2} r^{2}}{315}\nonumber \\
  &+\frac{2 q^{2} \left(a^{2}-\frac{10 q^{2}}{3}\right) m r}{7}-\frac{2 a^{2} q^{4}}{35}+\frac{q^{6}}{21}\Biggr] a^{6} \cos \! \left(\theta \right)^{6}-10080 \Biggl\{\frac{23 m^{2} r^{6} \Lambda}{14}-\frac{166 m \,q^{2} r^{5} \Lambda}{105}+\Biggl(\Lambda  a^{2} m^{2}+\frac{92}{315} q^{4} \Lambda\nonumber \\
   &-3 m^{2}\Biggr) r^{4}-\frac{74 m \left(a^{2} q^{2} \Lambda -\frac{525}{37} m^{2}-3 q^{2}\right) r^{3}}{105}-\frac{103 m^{2} q^{2} r^{2}}{7}-\frac{2 \left(a^{2}-\frac{352 q^{2}}{15}\right) q^{2} m r}{7}+\frac{2 a^{2} q^{4}}{7}\nonumber \\
   &-\frac{97 q^{6}}{105}\Biggr\} a^{4} r^{2} \cos \! \left(\theta \right)^{4}+6480 \Biggl[\frac{28 m^{2} r^{6} \Lambda}{27}-\frac{76 m \,q^{2} r^{5} \Lambda}{45}+\left(\Lambda  a^{2} m^{2}+\frac{254}{405} q^{4} \Lambda -3 m^{2}\right) r^{4}\nonumber \\
   &-\frac{44 m \left(a^{2} q^{2} \Lambda -\frac{42}{11} m^{2}-3 q^{2}\right) r^{3}}{27}+\frac{244 q^{2} \left(a^{2} q^{2} \Lambda -\frac{1341}{61} m^{2}-3 q^{2}\right) r^{2}}{405}+\frac{4 \left(a^{2}+\frac{298 q^{2}}{27}\right) q^{2} m r}{5}-\frac{4 a^{2} q^{4}}{9}\nonumber \\
   &-\frac{254 q^{6}}{135}\Biggr] a^{2} r^{4} \cos \! \left(\theta \right)^{2}-240 \Biggl[m^{2} r^{6} \Lambda -\frac{12 m \,q^{2} r^{5} \Lambda}{5}+\left(\Lambda  a^{2} m^{2}+\frac{22}{15} q^{4} \Lambda -3 m^{2}\right) r^{4}\nonumber \\
   &-\frac{12 \left(a^{2} q^{2} \Lambda -\frac{5}{2} m^{2}-3 q^{2}\right) m \,r^{3}}{5}+\frac{22 \left(a^{2} q^{2} \Lambda -\frac{261}{22} m^{2}-3 q^{2}\right) q^{2} r^{2}}{15}+\left(\frac{12}{5} a^{2} m \,q^{2}+16 q^{4} m \right) r -\frac{12 a^{2} q^{4}}{5}\nonumber \\
   &-\frac{22 q^{6}}{5}\Biggr] r^{6}\Biggr).
   \label{normcovderweyl}
  \end{align}
  \end{theorem}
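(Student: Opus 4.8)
The plan is to obtain $I_3\equiv\nabla_\mu C_{\alpha\beta\gamma\delta}\nabla^\mu C^{\alpha\beta\gamma\delta}$ by a direct symbolic evaluation starting from the Boyer--Lindquist metric (\ref{KNADSelement}) and following the chain of definitions recalled in the Preliminaries; since $I_3$ is a scalar it may be computed in these coordinates. First I would invert $g_{\mu\nu}$, which is block diagonal in the $(t,\phi)$ and $(r,\theta)$ sectors, so that $g^{rr}=\Delta_r^{KN}/\rho^2$ and $g^{\theta\theta}=\Delta_\theta/\rho^2$ are immediate and the $(t,\phi)$ block needs only a $2\times2$ inversion. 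From $g^{\mu\nu}$ I would build the Christoffel symbols $\Gamma^\lambda{}_{\mu\nu}$, then $R^\kappa{}_{\lambda\mu\nu}$, the Ricci tensor and scalar, and finally the Weyl tensor $C_{\kappa\lambda\mu\nu}$ through (\ref{WeylH}). Every object is a rational function of $r$, $\cos\theta$ and the parameters $a,m,q,\Lambda$.

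Next I would form the covariant derivative
\[
\nabla_\mu C_{\alpha\beta\gamma\delta}=\partial_\mu C_{\alpha\beta\gamma\delta}-\Gamma^\sigma{}_{\mu\alpha}C_{\sigma\beta\gamma\delta}-\Gamma^\sigma{}_{\mu\beta}C_{\alpha\sigma\gamma\delta}-\Gamma^\sigma{}_{\mu\gamma}C_{\alpha\beta\sigma\delta}-\Gamma^\sigma{}_{\mu\delta}C_{\alpha\beta\gamma\sigma},
\]
raise all five indices with the inverse metric and contract. The Weyl symmetries (antisymmetry within each index pair, symmetry under pair exchange, tracelessness and the first Bianchi identity) cut down the components that must be carried, but the contraction is still far beyond hand computation; this is exactly the Maple evaluation announced at the end of section \ref{AbdelLimni}. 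I would pull out the overall factor $\left(r^2+a^2\cos^2\theta\right)^{-9}$---its power being fixed by the $\rho^{-3}$ fall-off of the single nonvanishing Weyl scalar of this type~D geometry together with the two derivatives and the coordinate-basis index raising---and then collect the polynomial numerator in even powers of $\cos\theta$, as forced by the reflection symmetry $\theta\mapsto\pi-\theta$ of the metric.

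The genuine difficulty is not conceptual but the swell of the intermediate expressions and their reduction to the compact form (\ref{normcovderweyl}); I would therefore build several structural checks into the derivation. The cosmological constant enters the metric only through $\Delta_r^{KN}$ and $\Delta_\theta$, both linear in $\Lambda$ by (\ref{DiscrimiL}), so $\Lambda$ propagates into the curvature only through these functions and the absence of any $\Lambda^2$ term in (\ref{normcovderweyl}) is a nontrivial consistency test. This linearity is natural because, for this type~D family, the single Weyl scalar $\Psi_2=-\bigl[m-q^2/(r+ia\cos\theta)\bigr](r-ia\cos\theta)^{-3}$ carries no explicit $\Lambda$, which can then enter $I_3$ only through the connection and the factors $g^{rr}\propto\Delta_r^{KN}$, $g^{\theta\theta}\propto\Delta_\theta$. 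Further limits sharpen the check: $\Lambda=0$ must return the Kerr--Newman invariant, $q=0$ must annihilate every charge-dependent coefficient and leave the Kerr--(anti-)de Sitter value, and $a=0$ must collapse to the Reissner--Nordstr\"om--(anti-)de Sitter expression, mirroring the limits recorded after Theorem \ref{unserekarlhede} for the closely related Karlhede invariant.

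As an independent cross-check I would re-derive $I_3$ in the principal null tetrad of this type~D metric, expressing the frame components of $\nabla C$ through directional derivatives of $\Psi_2$ by means of the Newman--Penrose Bianchi identities employed later in section \ref{RogerPenN}. This route makes the analytic structure of the answer transparent and, on agreement with the coordinate computation, would confirm both the prefactor and the individual numerator coefficients; the main cost is the careful bookkeeping of the spin coefficients of the frame. I expect the simplification and reconciliation of these two computations, rather than any single conceptual step, to be the principal obstacle.
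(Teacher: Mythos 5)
Your proposal follows essentially the same route as the paper: a direct tensorial computation of $C_{\kappa\lambda\mu\nu}$, its covariant derivative and the full contraction in Boyer--Lindquist coordinates carried out symbolically in Maple, with the result organised by the overall $(r^2+a^2\cos^2\theta)^{-9}$ prefactor and checked against the $q=0$, $a=0$ and $\Lambda=0$ limits. Your proposed Newman--Penrose cross-check via derivatives of $\Psi_2$ is likewise exactly the independent confirmation the paper performs in its NP section, so the approach is correct and matches the paper's.
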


\begin{theorem}
We calculated an exact algebraic expression for the invariant $I_3$ in the case of the Kerr-(anti-)de Sitter black hole. The result is:
\begin{align}
&I_3=\frac{240 m^{2}}{\left(r^{2}+a^{2} \cos \! \left(\theta \right)^{2}\right)^{9}} \left(a^{4} \cos \! \left(\theta \right)^{4}-4 \cos \! \left(\theta \right)^{3} a^{3} r -6 a^{2} \cos \! \left(\theta \right)^{2} r^{2}+4 \cos \! \left(\theta \right) a \,r^{3}+r^{4}\right)\nonumber \\
&\times \left(a^{4} \cos \! \left(\theta \right)^{4}+4 \cos \! \left(\theta \right)^{3} a^{3} r -6 a^{2} \cos \! \left(\theta \right)^{2} r^{2}-4 \cos \! \left(\theta \right) a \,r^{3}+r^{4}\right)\nonumber \\
&\times \left(\Lambda  a^{4} \cos \! \left(\theta \right)^{4}-\Lambda  \cos \! \left(\theta \right)^{2} a^{4}-\Lambda  a^{2} r^{2}-\Lambda  r^{4}+3 a^{2} \cos \! \left(\theta \right)^{2}-6 m r +3 r^{2}\right).
\label{LAI3kdS}
\end{align}
\end{theorem}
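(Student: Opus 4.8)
The plan is to reduce the computation to the Karlhede invariant already obtained for the Kerr-(anti-)de Sitter black hole in eqn.(\ref{KarlhedeKerr(a)dS}), exploiting the fact that the charge-free metric is an Einstein space. The key observation is that setting $q=0$ in (\ref{KNADSelement}) produces a source-free solution of the Einstein equations with cosmological constant, so that $R_{\mu\nu}=\Lambda g_{\mu\nu}$ and $R=4\Lambda$ identically on this spacetime. I expect this single structural fact to do essentially all the work, collapsing $I_3$ onto the previously computed $\mathfrak{K}$ and thereby explaining, rather than merely re-deriving, the term-by-term coincidence of the right-hand sides of (\ref{KarlhedeKerr(a)dS}) and (\ref{LAI3kdS}).

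Concretely, I would first substitute $R_{\mu\nu}=\Lambda g_{\mu\nu}$ and $R=4\Lambda$ into the defining relation (\ref{WeylH}). Every Ricci-dependent term then becomes a multiple of the metric-product $(g_{\kappa\mu}g_{\lambda\nu}-g_{\kappa\nu}g_{\lambda\mu})$, and after collecting coefficients one obtains the clean identity
\[
C_{\kappa\lambda\mu\nu}=R_{\kappa\lambda\mu\nu}-\frac{\Lambda}{3}\left(g_{\kappa\mu}g_{\lambda\nu}-g_{\kappa\nu}g_{\lambda\mu}\right).
\]
Because the metric is covariantly constant, $\nabla_{\rho}g_{\mu\nu}=0$, differentiating this relation kills the additive $\Lambda$-term entirely, so that $\nabla_{\rho}C_{\kappa\lambda\mu\nu}=\nabla_{\rho}R_{\kappa\lambda\mu\nu}$ throughout the Kerr-(anti-)de Sitter spacetime. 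Contracting then gives
\[
I_3=\nabla_{\mu}C_{\alpha\beta\gamma\delta}\nabla^{\mu}C^{\alpha\beta\gamma\delta}=\nabla_{\mu}R_{\alpha\beta\gamma\delta}\nabla^{\mu}R^{\alpha\beta\gamma\delta}=\mathfrak{K},
\]
and the desired closed form (\ref{LAI3kdS}) is read off directly from the Karlhede result (\ref{KarlhedeKerr(a)dS}). This argument also clarifies why the analogous identity must fail once $q\neq0$: the electromagnetic stress-energy is trace-free, so $R=4\Lambda$ still holds but $R_{\mu\nu}\neq\Lambda g_{\mu\nu}$, the Weyl--Riemann difference is no longer covariantly constant, and the general $I_3$ of Theorem \ref{synalParWeyl} correspondingly departs from the Karlhede invariant of Theorem \ref{unserekarlhede}.

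As an independent cross-check I would also recover (\ref{LAI3kdS}) purely algebraically as the $q\to0$ specialisation of Theorem \ref{synalParWeyl}: setting $q=0$ in (\ref{normcovderweyl}) annihilates every $q^{2},q^{4},q^{6}$ contribution, and the surviving polynomial in $r$ and $\cos\theta$ must then be collected and factored. The main obstacle in this second route is precisely that factorisation --- recognising the degree-twelve trigonometric polynomial as the product of the two conjugate quartics $a^{4}\cos^{4}\theta\mp4a^{3}r\cos^{3}\theta-6a^{2}r^{2}\cos^{2}\theta\pm4ar^{3}\cos\theta+r^{4}$, which reflect the complex combination $(r\pm ia\cos\theta)^{4}$ characteristic of this algebraically special (type D) metric, with the ergosurface factor $\Lambda a^{4}\cos^{4}\theta-\Lambda a^{4}\cos^{2}\theta-\Lambda a^{2}r^{2}-\Lambda r^{4}+3a^{2}\cos^{2}\theta-6mr+3r^{2}$. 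This step is mechanical in Maple but unilluminating by hand, which is exactly why I regard the Einstein-space argument as the preferable proof.
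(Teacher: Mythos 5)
Your proposal is correct, and it is worth noting how it relates to what the paper actually does. The paper obtains (\ref{LAI3kdS}) the same way it obtains every other invariant here: by direct symbolic computation of $\nabla_{\mu}C_{\alpha\beta\gamma\delta}\nabla^{\mu}C^{\alpha\beta\gamma\delta}$ in Maple for the metric (\ref{KNADSelement}) with $q=0$. The Einstein-space identity you put at the centre of your argument appears in the paper only \emph{after} the theorem, as a Remark explaining why the computed $I_3$ coincides with the Karlhede invariant (\ref{KarlhedeKerr(a)dS}): for $q=0$ one has $R_{\alpha\beta}=\Lambda g_{\alpha\beta}$, $R=4\Lambda$, hence $C_{\alpha\beta\gamma\delta;\epsilon}=R_{\alpha\beta\gamma\delta;\epsilon}$. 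You invert this logic and use the identity as the proof itself, taking the already-established Theorem for $\mathfrak{K}$ as input; your coefficient check is right (the Ricci terms in (\ref{WeylH}) collapse to $-\Lambda(g_{\kappa\mu}g_{\lambda\nu}-g_{\kappa\nu}g_{\lambda\mu})$ and the scalar term to $+\tfrac{2\Lambda}{3}(g_{\kappa\mu}g_{\lambda\nu}-g_{\kappa\nu}g_{\lambda\mu})$, giving the constant $-\Lambda/3$, which is then annihilated by $\nabla_{\rho}g_{\mu\nu}=0$), and raising indices commutes with covariant differentiation, so $I_3=\mathfrak{K}$ follows. What your route buys is a structural explanation of the term-by-term agreement and of why the coincidence must fail for $q\neq 0$ (where $R_{\mu\nu}\neq\Lambda g_{\mu\nu}$ despite $R=4\Lambda$); what it costs is that it is only as trustworthy as the prior Maple computation of (\ref{KarlhedeKerr(a)dS}), so it relocates rather than removes the brute-force step. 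Your secondary cross-check via the $q\to 0$ limit of Theorem \ref{synalParWeyl} is essentially the paper's own method and closes that loop.
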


\begin{remark}
We observe from Eqn.(\ref{LAI3kdS}) that the curvature invariant $I_3$ for a Kerr-(anti-)de Sitter black hole vanishes on the boundary of the ergosphere region.
\end{remark}
Indeed the stationary limit surfaces of the rotational Kerr-(anti-)de Sitter black hole are defined by $g_{tt}=0$. The metric element $g_{tt}$ in this case is given by:
\begin{align}
&g_{tt}=-\frac{a^{2}+q^{2}-2 m r +r^{2}-\frac{\Lambda  \left(a^{2}+r^{2}\right) r^{2}}{3}}{\left(r^{2}+a^{2} \cos \! \left(\theta \right)^{2}\right) \left(1+\frac{a^{2} \Lambda}{3}\right)}+\frac{\left(1+\frac{a^{2} \Lambda  \cos \left(\theta \right)^{2}}{3}\right) \sin \! \left(\theta \right)^{2} a^{2}}{\left(r^{2}+a^{2} \cos \! \left(\theta \right)^{2}\right) \left(1+\frac{a^{2} \Lambda}{3}\right)}\nonumber \\
&=-\frac{1}{3\rho^2\Xi^2}\left(\Lambda  \cos \! \left(\theta \right)^{4} a^{4}-\Lambda  \cos \! \left(\theta \right)^{2} a^{4}-\Lambda  a^{2} r^{2}-\Lambda  r^{4}+3 a^{2} \cos \! \left(\theta \right)^{2}-6 m r +3 r^{2}\right)=0.
\end{align}
\begin{remark}
We observe from eqn.(\ref{KarlhedeKerr(a)dS}) and Eqn.(\ref{LAI3kdS}) that for the case of Kerr-(anti-)de Sitter spacetime the Karlhede invariant is equal to the invariant $I_3$. This is consistent with the fact that in vacuum spacetimes with a nonzero cosmological constant $\Lambda$, $R_{\alpha\beta}=\Lambda g_{\alpha\beta},R=4\Lambda\not=0$, so that $C_{\alpha\beta\gamma\delta;\epsilon}=R_{\alpha\beta\gamma\delta;\epsilon}$.
\end{remark}

The analytic computation of the invariant $I_4$, for the case of the Kerr-Newman-(anti-)de Sitter black hole yields the result:
\begin{theorem}\label{mixedsynparweyl}
\begin{align}
&I_4^{\rm {KN(a)dS}}=\frac{1920 a \cos \! \left(\theta \right) }{\left(r^{2}+a^{2} \cos \! \left(\theta \right)^{2}\right)^{9}} \Biggl(a^{10} m \Lambda  \left(m r -\frac{3 q^{2}}{10}\right) \cos \! \left(\theta \right)^{10}-a^{8} \Biggl(7 \Lambda  m^{2} r^{3}-\frac{57 \Lambda  m \,q^{2} r^{2}}{10}+\Biggl[a^{2} m^{2} \Lambda +\frac{29}{30} q^{4} \Lambda\nonumber \\
& -3 m^{2}\Biggr] r -\frac{3 m \,q^{2} \left(\Lambda  a^{2}-3\right)}{10}\Biggr) \cos \! \left(\theta \right)^{8}+6 a^{6} \Biggl(\Lambda  m^{2} r^{5}-\frac{83 m \,q^{2} r^{4} \Lambda}{60}+\left(a^{2} m^{2} \Lambda +\frac{37}{90} q^{4} \Lambda -3 m^{2}\right) r^{3}\nonumber \\
&-\frac{11 \left(a^{2} \Lambda  q^{2}+\frac{12}{11} m^{2}-3 q^{2}\right) m \,r^{2}}{12}+\frac{29 \left(a^{2} \Lambda  q^{2}+\frac{126}{29} m^{2}-3 q^{2}\right) q^{2} r}{180}+\frac{m \,q^{2} \left(a^{2}-2 q^{2}\right)}{20}\Biggr) \cos \! \left(\theta \right)^{6}\nonumber \\
&+\frac{37 a^{4} r }{10}\Biggl(\frac{60 \Lambda  m^{2} r^{6}}{37}-\frac{33 r^{5} m \,q^{2} \Lambda}{37}+m \left(a^{2} \Lambda  q^{2}+\frac{420}{37} m^{2}-3 q^{2}\right) r^{3}-\frac{19 \left(a^{2} \Lambda  q^{2}+\frac{498}{19} m^{2}-3 q^{2}\right) q^{2} r^{2}}{37}\nonumber \\
&-\frac{27 \left(a^{2}-\frac{178 q^{2}}{27}\right) m \,q^{2} r}{37}+\frac{12 q^{4} a^{2}}{37}-\frac{17 q^{6}}{37}\Biggr) \cos \left(\theta \right)^{4}-6 a^{2} r^{3} \Biggl(\frac{7 \Lambda  m^{2} r^{6}}{6}-\frac{3 r^{5} m \,q^{2} \Lambda}{2}+\Biggl[a^{2} m^{2} \Lambda +\frac{37}{90} q^{4} \Lambda \nonumber \\
&-3 m^{2}\Biggr] r^{4}-\frac{5 m \left(a^{2} \Lambda  q^{2}-\frac{28}{5} m^{2}-3 q^{2}\right) r^{3}}{4}+\frac{19 \left(a^{2} \Lambda  q^{2}-\frac{750}{19} m^{2}-3 q^{2}\right) q^{2} r^{2}}{60}+\frac{\left(a^{2}+\frac{418 q^{2}}{15}\right) m \,q^{2} r}{4}\nonumber \\
&-\frac{37 q^{6}}{30}\Biggr) \cos \! \left(\theta \right)^{2}+r^{5} \Biggl(\Lambda  m^{2} r^{6}-2 r^{5} m \,q^{2} \Lambda +\left(a^{2} m^{2} \Lambda +\frac{29}{30} q^{4} \Lambda -3 m^{2}\right) r^{4}-2 m \left(a^{2} \Lambda  q^{2}-3 m^{2}-3 q^{2}\right) r^{3}\nonumber \\
&+\left(\frac{29}{30} a^{2} q^{4} \Lambda -15 m^{2} q^{2}-\frac{29}{10} q^{4}\right) r^{2}+\frac{3 m \,q^{2} \left(a^{2}+\frac{118 q^{2}}{15}\right) r}{2}-\frac{6 q^{4} a^{2}}{5}-\frac{29 q^{6}}{10}\Biggr)\Biggr).
\label{covdermixedweyli4}
\end{align}
\end{theorem}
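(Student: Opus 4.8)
The plan is to obtain $I_4^{\rm KN(a)dS}$ by a direct covariant computation from the metric \eqref{KNADSelement}, organised so as to reuse the tensorial machinery already assembled for $I_3$ in Theorem \ref{synalParWeyl}. First I would compute from \eqref{KNADSelement} the Christoffel symbols, the Riemann tensor, the Ricci tensor and scalar, and then the Weyl tensor $C_{\kappa\lambda\mu\nu}$ through \eqref{WeylH}; the dual $C^{*}_{\alpha\beta\gamma\delta}$ follows by contracting with the Levi-Civita pseudotensor $E_{\alpha\beta\kappa\lambda}$. The two covariant derivatives $\nabla_{\mu}C_{\alpha\beta\gamma\delta}$ and $\nabla_{\mu}C^{*}_{\alpha\beta\gamma\delta}$ are then formed, raised with the inverse metric, and contracted to give $I_4=\nabla_{\mu}C_{\alpha\beta\gamma\delta}\nabla^{\mu}C^{*\alpha\beta\gamma\delta}$.

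A decisive simplification, which I would use both to make the computation tractable and to cross-check it, comes from packaging $C$ and $C^{*}$ into the self-dual Weyl tensor $\mathcal{C}_{\alpha\beta\gamma\delta}:=C_{\alpha\beta\gamma\delta}+i\,C^{*}_{\alpha\beta\gamma\delta}$. Since $E_{\alpha\beta\kappa\lambda}$ is covariantly constant, the double-dual identity $\nabla_{\mu}C^{*}_{\alpha\beta\gamma\delta}\nabla^{\mu}C^{*\alpha\beta\gamma\delta}=-I_3$ holds — a fact already visible on comparing \eqref{LAI3kdS} with \eqref{KerrLambdanormnablaweyldual} — so that
\begin{equation}
\nabla_{\mu}\mathcal{C}_{\alpha\beta\gamma\delta}\nabla^{\mu}\mathcal{C}^{\alpha\beta\gamma\delta}=2\,(I_3+i\,I_4).
\end{equation}
Hence $I_4$ is exactly the imaginary part of a single complex contraction whose real part must reproduce the $I_3$ of Theorem \ref{synalParWeyl}, giving an immediate and stringent consistency test. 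Because KN(a)dS is of Petrov type D, in a principal null tetrad the self-dual Weyl tensor is carried entirely by the single scalar $\Psi_2$, so $\nabla_{\mu}\mathcal{C}$ collapses to $\nabla_{\mu}\Psi_2$ together with spin-coefficient contributions controlled by the Bianchi identities \eqref{BianchiEin}--\eqref{BianchiVier}; this is the route I would actually drive through a symbolic engine.

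After contraction I would collect the output as a rational function whose denominator is the expected $\rho^{18}=(r^{2}+a^{2}\cos^{2}\theta)^{9}$ and whose numerator is a polynomial in $r$, $\cos\theta$ and the parameters $a,m,q,\Lambda$. The final step is to extract the overall prefactor $1920\,a\cos\theta$ in \eqref{covdermixedweyli4}: this factor makes manifest that $I_4$ is odd under $\theta\to\pi-\theta$, vanishes on the equatorial plane $\theta=\pi/2$, and vanishes identically in the static limit $a\to 0$ (so that Reissner-Nordstr\"om-(a)dS carries no such mixed invariant). The remaining even polynomial in $\cos\theta$ is then reorganised into the displayed nested form.

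The main obstacle is the algebraic simplification rather than any conceptual difficulty: the raw contraction generates an enormous expression, and forcing it into the compact factored shape of \eqref{covdermixedweyli4} requires controlled trigonometric reduction (eliminating $\sin\theta$ in favour of $\cos^{2}\theta$) together with careful handling of the $\Lambda$-dependent normalisations $\Delta_\theta$ and $\Xi$ appearing in \eqref{KNADSelement}. I would validate the result by checking that its real-part companion coincides with $I_3$ from \eqref{normcovderweyl}, that setting $q=0$ reproduces the Kerr-(a)dS reduction, and that the full $a\to 0$ and $\Lambda\to 0$ limits return the expected Reissner-Nordstr\"om and Kerr values.
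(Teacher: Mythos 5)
Your proposal is sound and, in its first paragraph, coincides with what the paper actually does for this theorem: the result is obtained by a direct symbolic computation of $\nabla_{\mu}C_{\alpha\beta\gamma\delta}\nabla^{\mu}C^{*\alpha\beta\gamma\delta}$ from the metric (\ref{KNADSelement}) through the chain Christoffel symbols $\to$ Riemann $\to$ Weyl (\ref{WeylH}) $\to$ dual $\to$ covariant derivatives $\to$ contraction, carried out in Maple; no further argument is given in the text. Where you genuinely diverge is in promoting the self-dual packaging $\mathcal{C}_{\alpha\beta\gamma\delta}=C_{\alpha\beta\gamma\delta}+iC^{*}_{\alpha\beta\gamma\delta}$ and the Newman--Penrose type-D reduction to the primary computational route. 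The identity $\nabla_{\mu}C^{*}_{\alpha\beta\gamma\delta}\nabla^{\mu}C^{*\alpha\beta\gamma\delta}=-I_3$ is in fact a general consequence of the covariant constancy of $E_{\alpha\beta\kappa\lambda}$ together with the Lorentzian contraction identity $E_{\alpha\beta\kappa\lambda}E^{\alpha\beta\mu\nu}=-2\left(\delta_{\kappa}^{\;\mu}\delta_{\lambda}^{\;\nu}-\delta_{\kappa}^{\;\nu}\delta_{\lambda}^{\;\mu}\right)$, not merely an observation about (\ref{LAI3kdS}) versus (\ref{KerrLambdanormnablaweyldual}); hence $\nabla_{\mu}\mathcal{C}_{\alpha\beta\gamma\delta}\nabla^{\mu}\mathcal{C}^{\alpha\beta\gamma\delta}=2(I_3+iI_4)$ does reduce the task to a single complex contraction whose real part must reproduce Theorem \ref{synalParWeyl} --- a strong cross-check the paper does not state explicitly. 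One caution on the NP leg: for the charged case the reduction does \emph{not} collapse to $\nabla_{\mu}\Psi_2\nabla^{\mu}\Psi_2$ alone. The formulae $I_3=80\,\Re\,\nabla_{\mu}\overline{\Psi_2}\nabla^{\mu}\overline{\Psi_2}$ and $I_4=80\,\Im\,\nabla_{\mu}\overline{\Psi_2}\nabla^{\mu}\overline{\Psi_2}$ of section \ref{RogerPenN} are derived from the complex syzygy, which by Proposition \ref{tritisyzygiaLkerr} fails for Kerr--Newman--(anti-)de Sitter; the $\Phi_{11}$ source terms in the Bianchi identities (\ref{BianchiEin})--(\ref{BianchiVier}) obstruct the vacuum simplification, so the bivector-derivative (spin-coefficient) contributions you allude to must be retained in full rather than absorbed into $\nabla\Psi_2$. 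Your stated consistency tests --- agreement of the real part with (\ref{normcovderweyl}), the overall prefactor $1920\,a\cos\theta$ enforcing oddness under $\theta\to\pi-\theta$ and vanishing in the static limit, and the $q\to0$ and $\Lambda\to0$ reductions --- are all correct and are the appropriate validation for a result of this computational character.
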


\begin{theorem}
The closed form analytic solution for the curvature invariant $I_4$ in the Kerr-(anti-)de Sitter spacetime is:
\begin{align}
&I_4=\frac{1920 \cos \! \left(\theta \right)r \,m^{2} a}{\left(r^{2}+a^{2} \cos \! \left(\theta \right)^{2}\right)^{9}} \left(\Lambda  \cos \! \left(\theta \right)^{4} a^{4}-\Lambda  \cos \! \left(\theta \right)^{2} a^{4}-\Lambda  a^{2} r^{2}-\Lambda  r^{4}+3 a^{2} \cos \! \left(\theta \right)^{2}-6 m r +3 r^{2}\right)\nonumber \\
&\times \left(a^{2} \cos \! \left(\theta \right)^{2}-2 \cos \! \left(\theta \right) a r -r^{2}\right) \left(a^{2} \cos \! \left(\theta \right)^{2}+2 \cos \! \left(\theta \right) a r -r^{2}\right)  \left(a^{2} \cos \! \left(\theta \right)^{2}-r^{2}\right).
\end{align}
\end{theorem}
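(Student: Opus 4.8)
The plan is to exploit that the Kerr--(anti-)de Sitter metric (\ref{KNADSelement}) with $q=0$ is a Petrov type D Einstein space, $R_{\alpha\beta}=\Lambda g_{\alpha\beta}$. By the remark following (\ref{LAI3kdS}) this gives $C_{\alpha\beta\gamma\delta;\epsilon}=R_{\alpha\beta\gamma\delta;\epsilon}$, and the whole Weyl tensor is governed by the single Newman--Penrose scalar $\Psi_2=-m/(r-ia\cos\theta)^3$ in a principal null tetrad. Rather than computing $I_4=\nabla_\mu C_{\alpha\beta\gamma\delta}\nabla^\mu C^{*\alpha\beta\gamma\delta}$ in isolation, I would introduce the self-dual Weyl tensor $\mathcal{C}_{\alpha\beta\gamma\delta}=C_{\alpha\beta\gamma\delta}+iC^{*}_{\alpha\beta\gamma\delta}$ and form the single complex scalar $\tfrac12\nabla_\mu\mathcal{C}_{\alpha\beta\gamma\delta}\nabla^\mu\mathcal{C}^{\alpha\beta\gamma\delta}$. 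Since $\nabla_\mu$ annihilates the Levi--Civita tensor, dualisation commutes with the covariant derivative, and the Lorentzian identity $(C^{*})^{*}=-C$ then forces $\nabla_\mu C^{*}_{\alpha\beta\gamma\delta}\nabla^\mu C^{*\alpha\beta\gamma\delta}=-I_3$ (already borne out by comparing (\ref{KerrLambdanormnablaweyldual}) with (\ref{LAI3kdS})). Consequently this complex scalar equals $I_3+iI_4$, whose real part is the known $I_3$ of (\ref{LAI3kdS}) and whose imaginary part is precisely the quantity to be determined.

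The contraction itself I would carry out by computer algebra on (\ref{KNADSelement}) with $q=0$: Christoffel symbols, Riemann tensor, Weyl tensor (\ref{WeylH}), its dual, the two covariant derivatives, and the final trace. A cheaper and independent route is to specialise the Kerr--Newman--(anti-)de Sitter expression of Theorem \ref{mixedsynparweyl}, equation (\ref{covdermixedweyli4}), to $q=0$: every $q$-dependent term drops and one is left with a bivariate polynomial in $(r,\cos\theta)$ to collect and factor. In either route the organising variable is $\zeta=r-ia\cos\theta$, with $\rho^{2}=\zeta\bar\zeta=r^{2}+a^{2}\cos^{2}\theta$. I expect the result to collapse into the compact complex form $I_3+iI_4=240\,m^{2}\zeta^{8}G/(r^{2}+a^{2}\cos^{2}\theta)^{9}$, where $G=\Lambda\cos^4\theta\,a^4-\Lambda\cos^2\theta\,a^4-\Lambda a^2r^2-\Lambda r^4+3a^2\cos^2\theta-6mr+3r^2=-3\rho^{2}\Xi^{2}g_{tt}$ is the stationary-limit (ergosurface) factor already present in (\ref{LAI3kdS}).

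The final step is purely algebraic: split this complex identity into real and imaginary parts. Writing $\zeta^{4}=R-iS$ with $R=r^{4}-6a^{2}r^{2}\cos^{2}\theta+a^{4}\cos^{4}\theta$ and $S=4ar\cos\theta\,(r^{2}-a^{2}\cos^{2}\theta)$, one has $\zeta^{8}=(R^{2}-S^{2})-2iRS$. The real part $R^{2}-S^{2}$ factors as the product of the two quartics appearing in (\ref{LAI3kdS}), which pins down the normalisation $240\,m^{2}$ and the $G$-factor against the established $I_3$. The imaginary part yields $I_4=-480\,m^{2}RS\,G/(r^{2}+a^{2}\cos^{2}\theta)^{9}$, and using $-2RS=8\,ar\cos\theta\,(a^{2}\cos^{2}\theta-r^{2})R$ together with the elementary factorisation $R=(a^{2}\cos^{2}\theta-2ar\cos\theta-r^{2})(a^{2}\cos^{2}\theta+2ar\cos\theta-r^{2})$ reproduces exactly the claimed product of $\cos\theta$, $r$, $G$ and the three quadratics. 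The main obstacle is the middle paragraph: establishing the complex identity, i.e.\ correctly tracking the spin-coefficient contributions to $\nabla_\mu\mathcal{C}$ in the tetrad approach, or equivalently controlling the factorisation of the large $q=0$ polynomial; once that compact form is secured, the real/imaginary separation and the factorisation of $R$ are routine.
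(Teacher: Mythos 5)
Your proposal is correct, and it takes a genuinely different route from the paper's, which obtains this theorem (like the other $I_k$) by direct tensorial symbolic computation in Maple on the metric (\ref{KNADSelement}) with $q=0$, without exhibiting any intermediate structure. Your self-dual reduction is sound: the identity $\nabla_\mu C^{*}_{\alpha\beta\gamma\delta}\nabla^\mu C^{*\alpha\beta\gamma\delta}=-I_3$ is exactly what one reads off by comparing (\ref{KerrLambdanormnablaweyldual}) with (\ref{LAI3kdS}); your two quartics $R\pm S$ are precisely the two quartic factors appearing in $I_3$; and the factorisation $R=(a^{2}\cos^{2}\theta-2ar\cos\theta-r^{2})(a^{2}\cos^{2}\theta+2ar\cos\theta-r^{2})$ together with $-2RS=8ar\cos\theta\,(a^{2}\cos^{2}\theta-r^{2})R$ reproduces the stated $I_4$ with the correct prefactor $1920$. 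The one step you defer --- establishing the compact complex identity $I_3+iI_4=240\,m^{2}\zeta^{8}G/\rho^{18}$ --- is genuinely the crux, since knowing the real part alone does not determine the imaginary part; but it closes in a few lines using the paper's own Newman--Penrose relations of section \ref{RogerPenN}: from $I_3+iI_4=80\,\nabla_\mu\overline{\Psi_2}\nabla^\mu\overline{\Psi_2}$ and $\Psi_2=-m/\zeta^{3}$ one gets $80\cdot 9m^{2}\bar\zeta^{-8}\left(g^{rr}-a^{2}\sin^{2}\theta\,g^{\theta\theta}\right)$, and the elementary identity $\Delta_r^{KN}\big|_{q=0}-a^{2}\sin^{2}\theta\,\Delta_{\theta}=G/3$ turns this into $240\,m^{2}\zeta^{8}G/\rho^{18}$ because $\rho^{2}=\zeta\bar\zeta$. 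What your approach buys, beyond an independent check of the Maple output, is a structural explanation of why $I_4$ carries the ergosurface factor $G$ and why its remaining factors are exactly $\Im(\zeta^{8})$ --- neither of which is visible in the paper's raw polynomial expression.
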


In Fig.\ref{TRIAD14}, we display three-dimensional plots of the differential invariant $I_4$  as a function of the Boyer-Lindquist coordinates $r$ and $\theta$, for three sets of values for the spin,cosmological constant, electric charge and mass of the black hole.

\begin{figure}[ptbh]
\centering
  \begin{subfigure}[b]{.60\linewidth}
    \centering
    \includegraphics[width=.99\textwidth]{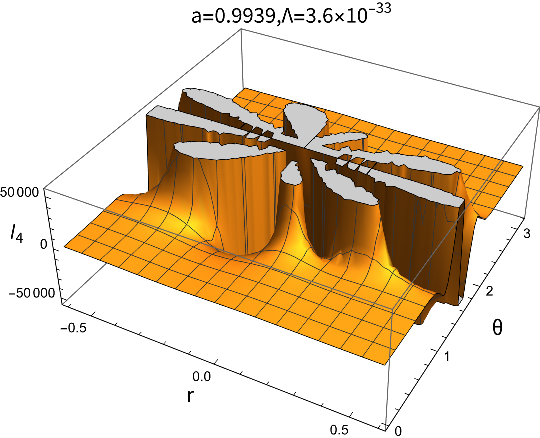}
    \caption{ 3D plot of  $I_4$.}\label{graphI4a09939q0Lobs}
  \end{subfigure}%
  \begin{subfigure}[b]{.60\linewidth}
    \centering
    \includegraphics[width=.99\textwidth]{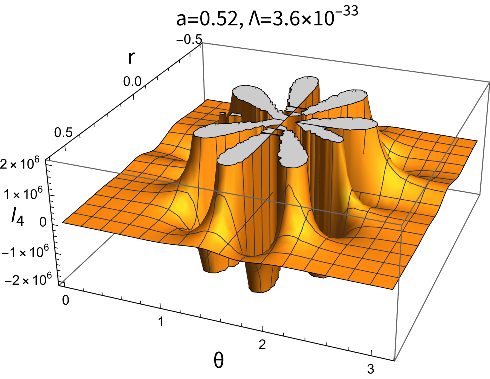}
    \caption{3D plot of $I_4$}\label{graphI4a052Lobs}
  \end{subfigure}\\
  \begin{subfigure}[b]{.60\linewidth}
    \centering
    \includegraphics[width=.99\textwidth]{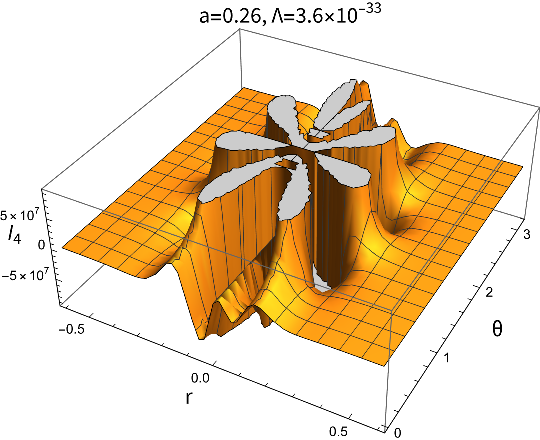}
    \caption{3D Plot of $I_4$.}\label{grafosI4a09939Lobs}
  \end{subfigure}%
  \caption{3D plots of the differential curvature invariant,  $I_4$, plotted as a function of the Boyer-Lindquist coordinates $r$ and $\theta$ . (a) for high spin parameter $a=0.9939$, charge $q=0$, $\Lambda=3.6\times 10^{-33}$,$m=1$. (b) For spin  $a=0.52$, charge $q=0$,$\Lambda=3.6\times 10^{-33}$, $m=1$. (c) For low spin  $a=0.26$, electric charge $q=0$,$\Lambda=3.6\times 10^{-33}$  and mass $m=1$.}\label{TRIAD14}
\end{figure}

  We computed the invariants $I_6,I_5$ for the Kerr and Kerr-(anti-)de Sitter metrics \footnote{We have calculated the corresponding explicit expressions for the KN(a-)dS black hole, however the resulting expressions are long and cumbersome and we refrain from presenting them.}:

\begin{theorem}
We computed the explicit algebraic expression for the invariant $I_6$ for the Kerr-(anti-)de Sitter black hole. The result is:
\begin{align}
&I_6=-\frac{1354752m^{4} a^{2}}{\left(r^{2}+a^{2} \cos \! \left(\theta \right)^{2}\right)^{15}} \Biggl(\Lambda  \cos \! \left(\theta \right)^{16} a^{14} r^{2}-\frac{48 \left(\frac{163 \Lambda  r^{4}}{16}+\left(\Lambda  a^{2}-3\right) r^{2}-\frac{m r}{8}+\frac{a^{2}}{16}\right) a^{12} \cos \left(\theta \right)^{14}}{49}\nonumber \\
&+\frac{64}{7} \left(\frac{211 \Lambda  r^{4}}{64}+\left(\Lambda  a^{2}-3\right) r^{2}-\frac{9 m r}{16}-\frac{3 a^{2}}{64}\right) r^{2} a^{10} \cos \left(\theta \right)^{12}-\frac{144 r^{4}}{7} \Biggl[\frac{139 \Lambda  r^{4}}{144}+\left(\Lambda  a^{2}-3\right) r^{2}-\frac{73 m r}{24}\nonumber \\ &+\frac{a^{2}}{16}\Biggr] a^{8} \cos \left(\theta \right)^{10}
-\frac{15 r^{6} \left(\frac{139}{15} \Lambda  r^{4}+\frac{424}{5} m r +a^{2}\right) a^{6} \cos \left(\theta \right)^{8}}{7}+\frac{144 r^{8} a^{4} }{7} \Biggl[\frac{211 \Lambda  r^{4}}{144}+\left(\Lambda  a^{2}-3\right) r^{2}+\frac{217 m r}{24}\nonumber \\
&-\frac{5 a^{2}}{48}\Biggr] \cos \left(\theta \right)^{6}
-\frac{64 r^{10} \left(\frac{489 \Lambda  r^{4}}{448}+\left(\Lambda  a^{2}-3\right) r^{2}+\frac{105 m r}{16}+\frac{9 a^{2}}{64}\right) a^{2} \cos \left(\theta \right)^{4}}{7}\nonumber \\
&+\left(\Lambda  r^{16}+\left(-\frac{144}{49}+\frac{48 \Lambda  a^{2}}{49}\right) r^{14}+6 m \,r^{13}-\frac{3 a^{2} r^{12}}{7}\right) \cos \! \left(\theta \right)^{2}-\frac{3 r^{14}}{49}\Biggr).
\label{gradientcurvaI6}
\end{align}
\end{theorem}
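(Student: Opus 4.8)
The plan is to exploit the fact that $I_6=l_\mu l^\mu$ with $l_\mu=-\nabla_\mu I_2$ is nothing but the squared norm of the gradient of a \emph{scalar}, so that the entire computation collapses to a single contraction against the inverse metric. Setting $q=0$ in (\ref{KNADSelement}), the Kerr-(anti-)de Sitter line element is block diagonal in the $(t,\phi)$ and $(r,\theta)$ sectors, and the $(r,\theta)$ block is itself diagonal with $g_{rr}=\rho^2/\Delta_r$ and $g_{\theta\theta}=\rho^2/\Delta_\theta$. Hence the only inverse-metric components that can enter are $g^{rr}=\Delta_r/\rho^2$ and $g^{\theta\theta}=\Delta_\theta/\rho^2$, where now $\Delta_r=\left(1-\tfrac{\Lambda}{3}r^2\right)(r^2+a^2)-2mr$.

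First I would record $I_2=C^{*}_{\alpha\beta\gamma\delta}C^{\alpha\beta\gamma\delta}$ in closed form. The Kerr-(anti-)de Sitter spacetime is of Petrov type D with the single nonvanishing Weyl scalar $\Psi_2=-m/(r-ia\cos\theta)^3$, which is independent of $\Lambda$ because the cosmological constant alters only the Ricci (trace) part and leaves the conformal Weyl curvature of this family unchanged. Packaging the two quadratic Weyl invariants as $I_1+iI_2\propto\Psi_2^2$, one obtains $I_2\propto m^2\,\mathrm{Im}\!\left[(r+ia\cos\theta)^6\right]/(r^2+a^2\cos^2\theta)^6$, with the precise constant fixed as in Eqn.(33) of \cite{KraniotisCurvature}. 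This is a function of $r$ and $\theta$ only; it carries an overall factor $m^2$ and is odd in $a\cos\theta$, so it vanishes as $a\to0$.

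Because $I_2=I_2(r,\theta)$ is a scalar, $l_\mu=-\partial_\mu I_2$ has only $r$ and $\theta$ components and the off-diagonal $t\phi$ part of the metric never contributes, giving the master formula
\begin{equation}
I_6=l_\mu l^\mu=g^{\mu\nu}\partial_\mu I_2\,\partial_\nu I_2=\frac{\Delta_r}{\rho^2}\left(\partial_r I_2\right)^2+\frac{\Delta_\theta}{\rho^2}\left(\partial_\theta I_2\right)^2 .
\end{equation}
This already explains the shape of (\ref{gradientcurvaI6}): the prefactor $m^4 a^2$ appears because $I_2\propto m^2 a$, so each squared gradient scales as $m^4a^2$; the denominator $(r^2+a^2\cos^2\theta)^{15}$ arises because $I_2\sim\rho^{-12}$, differentiation raises this to $\rho^{-14}$, squaring gives $\rho^{-28}$, and the factor $1/\rho^2$ in $g^{rr},g^{\theta\theta}$ completes the count; and all the $\Lambda$-dependence is inherited solely from $\Delta_r$ and $\Delta_\theta$, each of which is \emph{linear} in $\Lambda$. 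Since $I_6$ is a \emph{sum} of the two terms and never a product of $\Delta_r$ with $\Delta_\theta$, no $\Lambda^2$ terms can occur, exactly as observed in (\ref{gradientcurvaI6}) where $\Lambda$ enters only through combinations such as $\Lambda r^4$ and $(\Lambda a^2-3)r^2$. The factor $\Delta_r$ multiplying the dominant $(\partial_r I_2)^2$ term is moreover what makes $I_6$ change character across $\Delta_r=0$ and thereby act as a local horizon/ergosurface marker.

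It then only remains to substitute the explicit $I_2$, perform the two partial differentiations, square, combine over the common denominator $(r^2+a^2\cos^2\theta)^{15}$, and collect by powers of $\cos\theta$ up to $\cos^{16}\theta$ to reach (\ref{gradientcurvaI6}). The single real obstacle is the scale of the polynomial algebra: expanding $\mathrm{Im}[(r+ia\cos\theta)^6]$, differentiating, squaring and sorting the resulting degree-sixteen expression in $\cos\theta$ is precisely the step that is impractical by hand, which is why it is delegated to Maple. As independent checks I would verify that $\Lambda\to0$ reproduces the Kerr value of $I_6$ and that the $a\to0$ limit collapses $I_6$ to zero, consistent with $I_2\equiv0$ for Schwarzschild-(anti-)de Sitter.
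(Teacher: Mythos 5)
Your proposal is correct and follows essentially the same route as the paper: $I_6$ is by definition the norm of the gradient of the scalar $I_2$, so it reduces to $g^{rr}(\partial_r I_2)^2+g^{\theta\theta}(\partial_\theta I_2)^2$ with the $\Lambda$-independent type-D expression for $I_2$, and the remaining polynomial expansion is precisely the Maple computation the paper performs. Your a priori structural checks --- the overall $m^4a^2$ factor, the $(r^2+a^2\cos^2\theta)^{15}$ denominator, and the linearity in $\Lambda$ because $\Delta_r$ and $\Delta_\theta$ enter linearly and additively --- are all consistent with Eqn.(\ref{gradientcurvaI6}) and with the paper's equatorial and axial corollaries.
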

\begin{corollary}
 For the Kerr metric ($\Lambda=0$) we find:
\begin{align}
  &I_6=\frac{82944m^{4} a^{2}}{\left(r^{2}+a^{2} \cos \! \left(\theta \right)^{2}\right)^{15}} \Biggl(a^{12} \left(a^{2}-2 m r -48 r^{2}\right) \cos \! \left(\theta \right)^{14}+7 a^{10} r^{2} \left(a^{2}+12 m r +64 r^{2}\right) \cos \! \left(\theta \right)^{12}\nonumber \\
  &+21 r^{4} \left(a^{2}-\frac{146}{3} m r -48 r^{2}\right) a^{8} \cos \! \left(\theta \right)^{10}+\left(35 a^{8} r^{6}+2968 a^{6} m \,r^{7}\right) \cos \! \left(\theta \right)^{8}+\Biggl[35 r^{8} a^{6}-3038 a^{4} m \,r^{9}\nonumber \\
  &+1008 r^{10} a^{4}\Biggr] \cos \! \left(\theta \right)^{6}+\left(21 r^{10} a^{4}+980 a^{2} m \,r^{11}-448 a^{2} r^{12}\right) \cos \! \left(\theta \right)^{4}\nonumber \\
  &+\left(7 a^{2} r^{12}-98 m \,r^{13}+48 r^{14}\right) \cos \! \left(\theta \right)^{2}+r^{14}\Biggr).
  \end{align}
\end{corollary}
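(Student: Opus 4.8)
The plan is to obtain the corollary directly as the $\Lambda\to 0$ specialisation of the Kerr-(anti-)de Sitter expression for $I_6$ already established in Eqn.(\ref{gradientcurvaI6}), rather than recomputing the invariant from the metric. This is legitimate because the Kerr metric is precisely the $\Lambda=0$ member of the Kerr-(anti-)de Sitter family, and $I_6=l_\mu l^\mu=\nabla_\mu I_2\,\nabla^\mu I_2$ is a rational, hence analytic, function of $\Lambda$ away from the singular locus, so its value at $\Lambda=0$ is obtained by substitution.

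First I would set $\Lambda=0$ in Eqn.(\ref{gradientcurvaI6}) term by term. The leading monomial $\Lambda\cos^{16}\theta\,a^{14}r^2$ drops out entirely; in each bracketed coefficient the explicit $\Lambda r^4$ contribution vanishes and every factor $(\Lambda a^2-3)r^2$ collapses to $-3r^2$; and in the $\cos^2\theta$ coefficient the terms $\Lambda r^{16}$ and $48\Lambda a^2 r^{14}/49$ disappear. What remains is a polynomial in $r,a,m,\cos\theta$ supported on the even powers $\cos^{14}\theta$ down to the constant $-3r^{14}/49$.

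Next I would collect the surviving polynomial by powers of $\cos\theta$ and factor out a single numerical constant. For instance the $\cos^{14}\theta$ coefficient reduces to $-\tfrac{48}{49}a^{12}\bigl(a^2/16-mr/8-3r^2\bigr)=-\tfrac{3}{49}a^{12}(a^2-2mr-48r^2)$, and the $\cos^8\theta$ coefficient to $-\tfrac{1}{7}(1272\,a^6 m r^7+15\,a^8 r^6)$. Multiplying by the Kerr-(anti-)de Sitter prefactor $-1354752\,m^4a^2$ and using the identity $1354752=\tfrac{49}{3}\cdot 82944$ turns these into $82944\,m^4a^2\cdot a^{12}(a^2-2mr-48r^2)$ and $82944\,m^4a^2\,(35\,a^8r^6+2968\,a^6mr^7)$ respectively, matching the $\cos^{14}\theta$ and $\cos^8\theta$ entries of the stated expression. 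Carrying out the same normalisation for each power of $\cos\theta$ rewrites the whole result with the clean prefactor $82944\,m^4a^2/(r^2+a^2\cos^2\theta)^{15}$ and integer coefficients.

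The only real obstacle is bookkeeping: one must take the $\Lambda=0$ limit of all eight coefficient brackets without dropping a term, and then verify that the single rescaling $-1354752=-\tfrac{49}{3}\cdot 82944$ simultaneously clears the denominators $49$ and $7$ appearing throughout into the integers displayed in the corollary. No new geometric or analytic input is required beyond consistency of the $\Lambda\to 0$ reduction with the closed form already proved.
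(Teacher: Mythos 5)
Your proposal is correct and is exactly how the paper obtains this result: the statement is presented as a corollary of the Kerr-(anti-)de Sitter expression Eqn.(\ref{gradientcurvaI6}), obtained by setting $\Lambda=0$ and renormalising the prefactor via $1354752=\tfrac{49}{3}\cdot 82944$. Your sample coefficient checks (e.g.\ the $\cos^{14}\theta$ and $\cos^{8}\theta$ terms) are arithmetically accurate and the same reduction works for every power of $\cos\theta$.
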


\begin{theorem}
We computed the following  explicit analytic expression for the  differential invariant $I_5$ for the spacetime of the Kerr-(anti-)de Sitter black hole:
\begin{align}
&I_5=-\frac{27648 m^{4}}{\left(r^{2}+a^{2} \cos \! \left(\theta \right)^{2}\right)^{15}} \Biggl(\Lambda  \cos \! \left(\theta \right)^{18} a^{18}-a^{16} \left(\Lambda  a^{2}+42 \Lambda  r^{2}-3\right) \cos \! \left(\theta \right)^{16}+42 a^{14} \Biggl(\left(\Lambda  r^{2}-\frac{1}{14}\right) a^{2}\nonumber \\
&+\frac{73 \Lambda  r^{4}}{6}-3 r^{2}\Biggr) \cos \! \left(\theta \right)^{14}-462 \left(\left(\Lambda  r^{2}+\frac{1}{22}\right) a^{2}-\frac{7 \left(-\frac{205}{42} \Lambda  r^{3}+m +\frac{33}{7} r \right) r}{11}\right) r^{2} a^{12} \cos \! \left(\theta \right)^{12}\nonumber \\
&+994 \left(\left(\Lambda  r^{2}-\frac{9}{142}\right) a^{2}-\frac{210 \left(-\frac{7}{20} \Lambda  r^{3}+m +\frac{71}{70} r \right) r}{71}\right) r^{4} a^{10} \cos \! \left(\theta \right)^{10}\nonumber \\
&+\left(-105 a^{10} r^{6}+\Biggl[1029 \Lambda  r^{10}\nonumber +9114 m \,r^{7}\Bigg] a^{8}\right) \cos \! \left(\theta \right)^{8}-994 \Biggl[\left(\Lambda  r^{2}+\frac{15}{142}\right) a^{2}+\frac{205 \Lambda  r^{4}}{142}+\frac{636 m r}{71}\nonumber \\
&-3 r^{2}\Biggr] r^{8} a^{6} \cos \! \left(\theta \right)^{6}+462 r^{10} a^{4} \left(\left(\Lambda  r^{2}-\frac{3}{22}\right) a^{2}+\frac{73 r \left(\frac{1}{6} \Lambda  r^{3}+m -\frac{33}{73} r \right)}{11}\right) \cos \! \left(\theta \right)^{4}\nonumber \\
&-42 r^{12} \left(\left(\Lambda  r^{2}+\frac{1}{2}\right) a^{2}+\Lambda  r^{4}+6 m r -3 r^{2}\right) a^{2} \cos \! \left(\theta \right)^{2}+r^{14} \left(\left(\Lambda  r^{2}-3\right) a^{2}+\Lambda  r^{4}+6 m r -3 r^{2}\right)\Biggr)
\label{Ifunfkds}
\end{align}
\end{theorem}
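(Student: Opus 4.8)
The plan is to use that $I_5\equiv k_\mu k^\mu$ with $k_\mu=-\nabla_\mu I_1$ is assembled from the \emph{scalar} $I_1=C_{\alpha\beta\gamma\delta}C^{\alpha\beta\gamma\delta}$, so that $\nabla_\mu I_1=\partial_\mu I_1$ and no connection coefficients appear: $I_5=g^{\mu\nu}\,\partial_\mu I_1\,\partial_\nu I_1$. Since the Kerr-(anti-)de Sitter geometry is stationary and axisymmetric, $I_1=I_1(r,\theta)$, and in the Boyer-Lindquist form (\ref{KNADSelement}) the metric is block diagonal with $g^{r\theta}=0$ and the $(r,\theta)$ block orthogonal to the $(t,\phi)$ block. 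Hence the contraction collapses to
\begin{equation}
I_5=g^{rr}(\partial_r I_1)^2+g^{\theta\theta}(\partial_\theta I_1)^2=\frac{1}{\rho^2}\Bigl[\Delta_r^{KN}\big|_{q=0}\,(\partial_r I_1)^2+\Delta_\theta\,(\partial_\theta I_1)^2\Bigr],
\label{reduceI5}
\end{equation}
where I have inserted $g^{rr}=\Delta_r^{KN}/\rho^2$ and $g^{\theta\theta}=\Delta_\theta/\rho^2$ with $q=0$ (the $\Xi$ factors live only in the $(t,\phi)$ sector and do not enter here).

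The second ingredient is a closed form for $I_1$. Because Kerr-(anti-)de Sitter is an Einstein space, $R_{\alpha\beta}=\Lambda g_{\alpha\beta}$, the Weyl tensor equals the trace-free part of the Riemann tensor and the only nonzero Newman-Penrose curvature scalar is $\Psi_2=-m/(r-ia\cos\theta)^3$, which is \emph{independent of} $\Lambda$ (the cosmological constant sits entirely in the pure-trace part, the same mechanism that forces $I_3=\mathfrak{K}$ in the Remark following (\ref{LAI3kdS})). Consequently $I_1$ equals its $q=0$ Kerr value already recorded in \cite{KraniotisCurvature},
\begin{equation}
I_1=\frac{48\,m^2\,(r^2-a^2\cos^2\theta)\,(r^4-14\,a^2 r^2\cos^2\theta+a^4\cos^4\theta)}{\rho^{12}},\qquad \rho^2=r^2+a^2\cos^2\theta,
\end{equation}
and is itself $\Lambda$-independent: \emph{all} of the cosmological-constant dependence of $I_5$ enters through the inverse-metric factors $\Delta_r^{KN}|_{q=0}=(1-\tfrac{\Lambda}{3}r^2)(r^2+a^2)-2mr$ and $\Delta_\theta=1+\tfrac{a^2\Lambda}{3}\cos^2\theta$ in (\ref{reduceI5}).

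Writing $I_1=N(r,\theta)/\rho^{12}$ with $N$ polynomial, the gradients are $\partial_r I_1=P_r/\rho^{14}$ and $\partial_\theta I_1=P_\theta/\rho^{14}$ with $P_r,P_\theta$ polynomials, so that after squaring and restoring the factors $g^{rr},g^{\theta\theta}$ every term carries the common denominator $\rho^{30}=(r^2+a^2\cos^2\theta)^{15}$, precisely the denominator in (\ref{Ifunfkds}). The remaining task is to form the numerator $\Delta_r^{KN}|_{q=0}\,P_r^2+\Delta_\theta\,P_\theta^2$, expand, and collect it as a polynomial in $\cos\theta$ with coefficients in $r,a,m,\Lambda$. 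This is the one genuinely laborious step and I would carry it out in Maple; the main obstacle is purely the control of this degree-heavy polynomial algebra and its compression into the factored closed form of (\ref{Ifunfkds}), not any conceptual difficulty.

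To protect against algebraic slips I would impose three independent consistency checks. First, setting $\Lambda=0$ must reproduce the known Kerr expression for $I_5$, and the structure of (\ref{reduceI5}) makes the $\Lambda$-linear split manifest. Second, on the equatorial plane $\theta=\pi/2$ one has $I_1=48m^2/r^6$ and $\partial_\theta I_1=0$, whence (\ref{reduceI5}) must collapse to $I_5=82944\,m^4\,\Delta_r^{KN}|_{q=0}/r^{16}$; this both fixes the overall numerical prefactor (consistent with $27648=82944/3$, the factor $3$ being absorbed into the $-3\Delta_r^{KN}|_{q=0}$ that one reads off as the $\cos^0\theta$ coefficient in (\ref{Ifunfkds})) and confirms that $\Delta_r^{KN}|_{q=0}$ threads through as an overall factor on the symmetry axis. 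Third, the companion invariant $I_6=l_\mu l^\mu$ with $l_\mu=-\nabla_\mu I_2$ must follow from the identical reduction $I_6=g^{rr}(\partial_r I_2)^2+g^{\theta\theta}(\partial_\theta I_2)^2$; that its already-computed value (\ref{gradientcurvaI6}) shares the same denominator $\rho^{15}$ and the same $\Delta_r,\Delta_\theta$ pattern provides a strong structural cross-check on the present derivation.
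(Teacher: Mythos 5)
Your reduction is correct and your plan would reproduce the stated result. The paper itself offers no derivation beyond the remark that the invariants were obtained with Maple, so strictly speaking both routes end in the same computer-algebra expansion; what you add is a principled organisation of that computation that the paper leaves implicit. In particular, your two key simplifications are both valid for this metric: (i) since $I_1$ is a scalar depending only on $(r,\theta)$ and the Boyer--Lindquist form (\ref{KNADSelement}) has the $(r,\theta)$ block orthogonal to the $(t,\phi)$ block, $I_5=g^{rr}(\partial_r I_1)^2+g^{\theta\theta}(\partial_\theta I_1)^2$ with $g^{rr}=\Delta_r^{KN}/\rho^2$, $g^{\theta\theta}=\Delta_\theta/\rho^2$; and (ii) because Kerr--(anti-)de Sitter is an Einstein space with $\Psi_2=-m/(r+ia\cos\theta)^3$ independent of $\Lambda$ (consistent with the paper's own formula for $\Psi_2$ at $\alpha=q=0$ and with $I_1+iI_2=48\overline{\Psi}_2^{\,2}$), the invariant $I_1$ retains its Kerr value and all $\Lambda$-dependence of $I_5$ enters through $\Delta_r^{KN}|_{q=0}$ and $\Delta_\theta$. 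Your equatorial check is the decisive one: it reproduces exactly Corollary \ref{normdifinv1pi2}, since the $\cos^0\theta$ coefficient of (\ref{Ifunfkds}) is $-3r^{14}\Delta_r^{KN}|_{q=0}$ and $-27648\cdot(-3)=82944$, matching $g^{rr}(\partial_r I_1)^2=82944\,m^4\Delta_r^{KN}|_{q=0}/r^{16}$ at $\theta=\pi/2$. One small imprecision: the computation you describe in that check is performed on the equatorial plane, not on the symmetry axis (on the axis $\partial_\theta I_1$ also vanishes by axisymmetry, so $\Delta_r^{KN}|_{q=0}$ again factors out there, in agreement with the paper's axis corollary, but that is a separate verification). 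Apart from that slip of wording, the proposal is sound and its structural cross-checks against (\ref{gradientcurvaI6}) and the equatorial/axis corollaries give it more verifiability than the paper's bare statement of the result.
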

\begin{corollary}
For $\Lambda=0$ (i.e. the Kerr metric), the invariant $I_5$ becomes:
\begin{align}
&I_5=-\frac{82944 m^{4}}{\left(r^{2}+a^{2} \cos \! \left(\theta \right)^{2}\right)^{15}} \Biggl(a^{16} \cos \! \left(\theta \right)^{16}+\left(-a^{16}-42 a^{14} r^{2}\right) \cos \! \left(\theta \right)^{14}-7 \left(a^{2}-14 m r -66 r^{2}\right) r^{2} a^{12} \cos \! \left(\theta \right)^{12}\nonumber \\
&+\left(-21 r^{4} a^{12}+\left(-980 m \,r^{5}-994 r^{6}\right) a^{10}\right) \cos \! \left(\theta \right)^{10}+\left(-35 a^{10} r^{6}+3038 m \,r^{7} a^{8}\right) \cos \! \left(\theta \right)^{8}\nonumber \\
&+\left(-35 r^{8} a^{8}+\left(-2968 m \,r^{9}+994 r^{10}\right) a^{6}\right) \cos \! \left(\theta \right)^{6}-21 \left(a^{2}-\frac{146}{3} m r +22 r^{2}\right) r^{10} a^{4} \cos \! \left(\theta \right)^{4}\nonumber \\
&-7 r^{12} \left(a^{2}+12 m r -6 r^{2}\right) a^{2} \cos \! \left(\theta \right)^{2}-a^{2} r^{14}+2 m \,r^{15}-r^{16}\Biggr)
\end{align}
\end{corollary}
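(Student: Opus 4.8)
The plan is to obtain the corollary as a direct specialisation of the Kerr-(anti-)de Sitter formula (\ref{Ifunfkds}) to the Ricci-flat case, rather than recomputing anything from scratch. Since the cosmological constant enters (\ref{Ifunfkds}) only through the coefficient polynomials that multiply each power of $\cos\theta$ inside the large parenthesis, the first step is simply to set $\Lambda=0$ in that expression and to discard, bracket by bracket, every monomial carrying an explicit factor of $\Lambda$. This leaves in each coefficient only its $\Lambda$-independent remainder.

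The decisive observation---and the one I would verify most carefully---is that after this truncation every surviving coefficient polynomial carries a common numerical factor of $3$. For instance, the $\cos^{16}\theta$ coefficient $-a^{16}(\Lambda a^2+42\Lambda r^2-3)$ collapses to $+3a^{16}$; the $\cos^{14}\theta$ coefficient $42a^{14}((\Lambda r^2-\tfrac{1}{14})a^2+\tfrac{73}{6}\Lambda r^4-3r^2)$ collapses to $-3a^{16}-126a^{14}r^2=3(-a^{16}-42a^{14}r^2)$; and the $\theta$-independent term $r^{14}((\Lambda r^2-3)a^2+\Lambda r^4+6mr-3r^2)$ collapses to $3r^{14}(-a^2+2mr-r^2)$. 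Pulling this factor of $3$ out of the large parenthesis turns the prefactor $-27648\,m^4$ into $-3\times 27648\,m^4=-82944\,m^4$, which is exactly the prefactor appearing in the corollary.

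With the factor of $3$ extracted, the remaining task is the routine but mechanical step of matching the reduced polynomial against the target expression, power of $\cos\theta$ by power of $\cos\theta$, from $\cos^{16}\theta$ down to the $\theta$-independent terms $-a^2r^{14}+2mr^{15}-r^{16}$. Each match is a short arithmetic check of the surviving $\Lambda$-independent monomials, with no cancellations that are not already transparent.

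The main---and essentially the only---obstacle is purely computational: one must keep careful track of the mixed brackets in (\ref{Ifunfkds}), where $\Lambda$-dependent and $\Lambda$-independent monomials appear side by side, and confirm that the common factor of $3$ is genuinely present in \emph{every} coefficient and not merely in some of them. There is no conceptual difficulty, because the Kerr spacetime is a vacuum solution, so $R_{\alpha\beta}=0$ forces $C_{\alpha\beta\gamma\delta}=R_{\alpha\beta\gamma\delta}$; consequently no fresh computation of $I_1=C_{\alpha\beta\gamma\delta}C^{\alpha\beta\gamma\delta}$, of its gradient $k_\mu=-\nabla_\mu I_1$, or of the norm $I_5=k_\mu k^\mu$ is required, and the $\Lambda\to 0$ limit of the already-established formula suffices.
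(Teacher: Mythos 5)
Your proposal is correct and is essentially the paper's own (implicit) derivation: the Kerr expression is presented as a corollary of the Kerr-(anti-)de Sitter formula (\ref{Ifunfkds}), obtained by setting $\Lambda=0$, and the overall factor of $3$ you extract from every surviving coefficient indeed accounts for the change of prefactor from $-27648\,m^4$ to $-82944\,m^4$. I spot-checked the remaining coefficients (e.g.\ the $\cos^{12}\theta$, $\cos^{8}\theta$ and $\cos^{4}\theta$ brackets) and the factor of $3$ is present in all of them, so the term-by-term match goes through exactly as you describe.
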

\begin{corollary}\label{normdifinv1pi2}
For $\theta=\pi/2$, i.e. for the equatorial plane, $I_5$ reduces to:
\begin{equation}
I_5=-\frac{27648 \left(\left(\Lambda  r^{2}-3\right) a^{2}+\Lambda  r^{4}+6 m r -3 r^{2}\right) m^{4}}{r^{16}}.
\end{equation}
Thus $I_5$ in this case, vanishes at the stationary horizons of the Kerr black hole with cosmological constant.
\end{corollary}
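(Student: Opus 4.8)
The plan is to obtain the equatorial reduction by a direct specialisation of the general formula (\ref{Ifunfkds}) and then to recognise the surviving polynomial factor as the horizon-defining function. First I would set $\theta=\pi/2$, equivalently $\cos\theta=0$, in (\ref{Ifunfkds}). The right-hand side is arranged as a polynomial in $\cos\theta$ in which every term except the last carries a strictly positive power of $\cos\theta$; all such terms therefore vanish identically, leaving only the $\cos\theta$-free contribution $r^{14}\bigl((\Lambda r^{2}-3)a^{2}+\Lambda r^{4}+6mr-3r^{2}\bigr)$. At the same time the overall denominator $\left(r^{2}+a^{2}\cos^{2}\theta\right)^{15}$ collapses to $r^{30}$. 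Cancelling $r^{14}/r^{30}=r^{-16}$ against the prefactor $-27648\,m^{4}$ yields precisely the claimed compact expression.

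The second step is to establish the vanishing at the stationary horizons. The key observation is that for the Kerr-(anti-)de Sitter black hole (zero charge, $q=0$) the horizon function (\ref{DiscrimiL}) reduces to $\Delta_{r}=\left(1-\tfrac{\Lambda}{3}r^{2}\right)(r^{2}+a^{2})-2mr$. Expanding and multiplying by $-3$ gives $-3\Delta_{r}=(\Lambda r^{2}-3)a^{2}+\Lambda r^{4}+6mr-3r^{2}$, which is exactly the polynomial factor appearing in the reduced $I_5$. Hence on the equatorial plane $I_5=\dfrac{82944\,m^{4}}{r^{16}}\,\Delta_{r}$, so that $I_5$ vanishes precisely where $\Delta_{r}=0$, i.e. at the event and Cauchy (stationary) horizons, and is otherwise nonzero along $\theta=\pi/2$ away from $r=0$.

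I do not anticipate any genuine obstacle: the argument is a one-line specialisation followed by an elementary algebraic identification. The only point demanding minor care is to confirm that no $\cos\theta$-independent pieces are hidden inside the $\cos\theta$-weighted brackets of (\ref{Ifunfkds})---that is, that the displayed grouping genuinely isolates the $\cos^{0}\theta$ coefficient---which is immediate from the layout of the formula.
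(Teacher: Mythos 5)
Your proposal is correct and follows exactly the route the paper intends: the corollary is an immediate specialisation of Eqn.~(\ref{Ifunfkds}) at $\cos\theta=0$, where only the $\cos^{0}\theta$ term $r^{14}\bigl((\Lambda r^{2}-3)a^{2}+\Lambda r^{4}+6mr-3r^{2}\bigr)$ survives and the denominator becomes $r^{30}$. Your identification of the surviving factor as $-3\Delta_{r}^{KN}\big|_{q=0}$, giving $I_5=82944\,m^{4}\Delta_r/r^{16}$, is precisely the justification needed for the horizon-vanishing claim.
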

\begin{corollary}
For $\theta=0$, i.e. along the axis, $I_5$ reduces to:
\begin{align}
I_5=-\frac{1354752 m^{4} \left(a^{6}-5 a^{4} r^{2}+3 a^{2} r^{4}-\frac{1}{7} r^{6}\right)^{2} r^{2} \left(\Lambda  r^{4}+\left(\Lambda  a^{2}-3\right) r^{2}+6 m r -3 a^{2}\right)}{\left(a^{2}+r^{2}\right)^{15}}.
\end{align}
For all the values of the Kerr-parameter $a$ consistent with a Kerr-(anti-)de Sitter black hole, the norm of the gradient vector $k_{\mu}$ (i.e. $I_5$) vanishes at the stationary black hole horizons and at the real positive roots of the sextic radial polynomial:
\begin{align}
r\simeq 0.481574618807 a,\;1.2539603376 a,\;4.3812862675 a.
\label{discrete}
\end{align}
\end{corollary}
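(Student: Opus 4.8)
The plan is to derive the axial expression by a direct specialisation of Eqn.(\ref{Ifunfkds}) and then to read off its zero set factor by factor. First I would put $\cos\theta=1$ (i.e. $\theta=0$) in Eqn.(\ref{Ifunfkds}). The prefactor denominator $\left(r^{2}+a^{2}\cos^{2}\theta\right)^{15}$ collapses to $\left(a^{2}+r^{2}\right)^{15}$, already matching the claimed form, while every power of $\cos\theta$ in the large bracket becomes unity, so the bracket degenerates to a single polynomial $P(r)$ in $r$ of degree eighteen whose coefficients are polynomials in $a,m,\Lambda$. Collecting like powers of $r$ is mechanical; the substantive step is to establish the factorisation
\[
P(r)=49\,r^{2}\left(a^{6}-5a^{4}r^{2}+3a^{2}r^{4}-\tfrac{1}{7}r^{6}\right)^{2}\left(\Lambda r^{4}+(\Lambda a^{2}-3)r^{2}+6mr-3a^{2}\right),
\]
the numerical constant $1354752=49\times 27648$ then reproducing the prefactor of the corollary. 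I would verify this identity by expanding the right-hand side and matching the coefficient of each power $r^{k}$, $k=0,\dots,18$, a finite and purely algebraic check. Several partial checks already constrain it: a degree count gives $r^{2}\cdot(r^{6})^{2}\cdot\Lambda r^{4}=\Lambda r^{18}$ with leading coefficient $\Lambda$ on both sides, the $r^{0}$ and $r^{1}$ coefficients of $P(r)$ are seen to cancel (confirming the overall factor $r^{2}$), and the $r^{2}$ coefficient equals $-147a^{14}$ on both sides.

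Next I would identify the zeros of $I_5$ with those of the three factors. The quartic $\Lambda r^{4}+(\Lambda a^{2}-3)r^{2}+6mr-3a^{2}$ equals $-3\Delta_{r}$ with $\Delta_{r}$ of Eqn.(\ref{DiscrimiL}) taken at $q=0$; it therefore vanishes precisely at the stationary (event and Cauchy) horizons, which is the horizon-detection half of the claim. The remaining zeros arise from the squared sextic $a^{6}-5a^{4}r^{2}+3a^{2}r^{4}-\frac{1}{7}r^{6}$, which notably carries no dependence on $m$ or $\Lambda$. Substituting $r=x\,a$ extracts a factor $a^{6}$ and reduces the vanishing condition to $x^{6}-21x^{4}+35x^{2}-7=0$; this exhibits at once that these discrete zeros scale linearly with $a$, which is exactly why the statement holds uniformly in $a$.

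Finally I would count and bracket the positive roots. Writing $u=x^{2}$ turns the condition into the cubic $g(u)=u^{3}-21u^{2}+35u-7=0$. The sign pattern $(+,-,+,-)$ of its coefficients has three changes, so Descartes' rule of signs (Theorem \ref{rene}) limits the positive roots to three, while $g(-u)$ has the pattern $(-,-,-,-)$ with no change and hence no negative root. Evaluating $g(0)=-7$, $g(1)=8$, $g(2)=-13$, $g(19)=-64$, $g(20)=293$ and invoking Bolzano's theorem (Theorem \ref{bernardbolzano}) localises exactly one root in each of $(0,1)$, $(1,2)$, $(19,20)$, accounting for all three positive roots; taking square roots and multiplying by $a$ returns the radii $r\simeq 0.4816\,a,\,1.2540\,a,\,4.3813\,a$ quoted in Eqn.(\ref{discrete}). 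The only genuinely laborious step is the coefficient-by-coefficient verification of the factorisation of $P(r)$; everything downstream is a short application of the sign-counting and intermediate-value arguments already employed elsewhere in the paper.
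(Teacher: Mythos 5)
Your proposal is correct and follows essentially the same route as the paper: the corollary is obtained by setting $\cos\theta=1$ in Eqn.(\ref{Ifunfkds}), factoring the resulting degree-18 polynomial into $r^{2}$ times the squared sextic times $-3\Delta_{r}|_{q=0}$ (with $1354752=49\times27648$), and reading off the horizon zeros from the quartic and the three positive roots $r=a\sqrt{u}$ of $u^{3}-21u^{2}+35u-7=0$ from the sextic. Your added Descartes/Bolzano bracketing of the cubic's roots is a sound (and slightly more explicit) justification of the numerical values in Eqn.(\ref{discrete}) than the paper provides, but it is the same argument the paper deploys elsewhere and all your spot checks (leading $\Lambda r^{18}$ term, vanishing $r^{0},r^{1}$ coefficients, $-147a^{14}$ at order $r^{2}$) are accurate.
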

Whereas the vanishing of $I_5$ in the equatorial plane (and on the axis away from the discrete roots in Eqn.(\ref{discrete})) singles out the horizons the global behaviour of $I_5$ is also interesting. In general terms, the area of regions where $I_5<0$ (this means that in these regions the vector $k_{\mu}$ is timelike) on any $r-\theta$ hypersurface decreases as the Kerr parameter increases. This strong sign dependence of the curvature invariant $I_5$ with black hole's spin $a$ is amply demonstrated in Fig. \ref{Regionsnegativenormkmu}.

\begin{figure}[ptbh]
\centering
  \begin{subfigure}[b]{.60\linewidth}
    \centering
    \includegraphics[width=.99\textwidth]{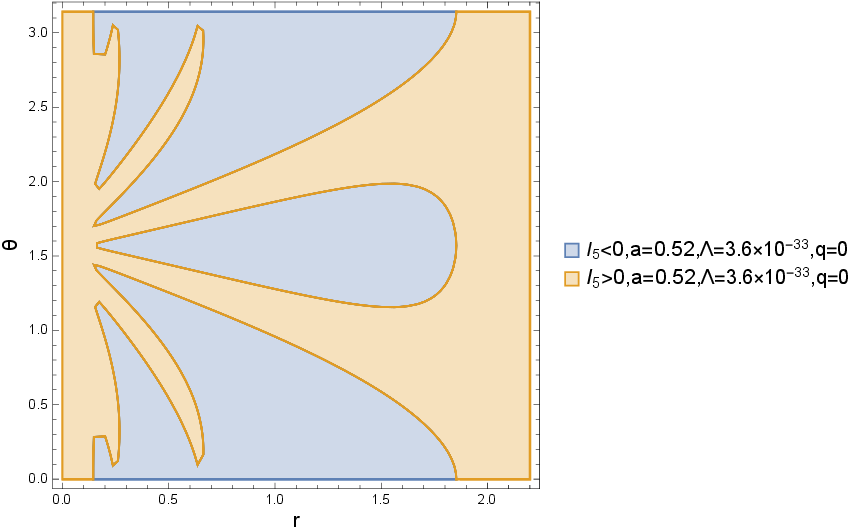}
    \caption{Region plot of gradient curvature invariant $I_5$.}\label{RegionnNormI5kdsa052}
  \end{subfigure}%
  \begin{subfigure}[b]{.60\linewidth}
    \centering
    \includegraphics[width=.99\textwidth]{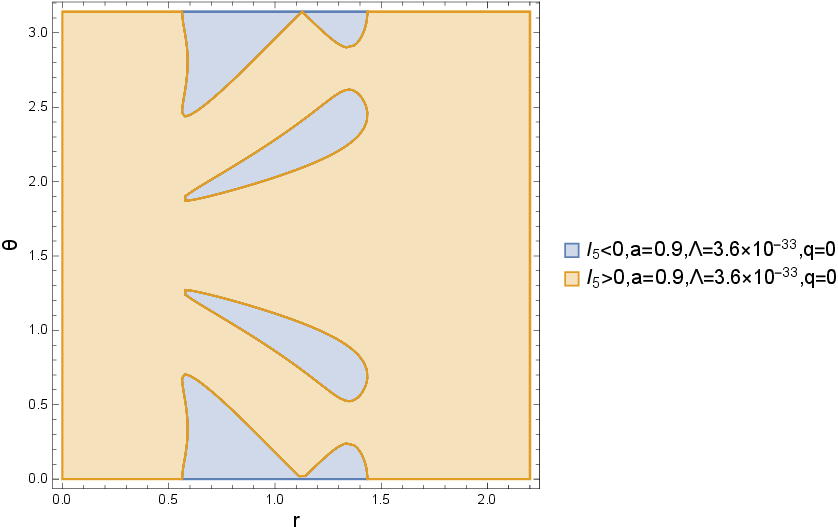}
    \caption{Region plot of gradient curvature invariant $I_5$.}\label{RegionI5kds09}
  \end{subfigure}\\
  \begin{subfigure}[b]{.60\linewidth}
    \centering
    \includegraphics[width=.99\textwidth]{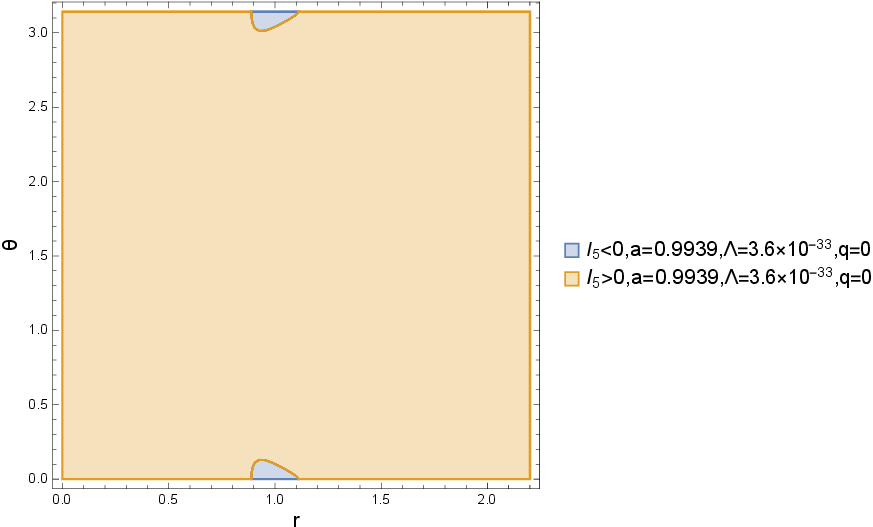}
    \caption{Region Plot of  gradient curvature invariant $I_5$.}\label{RegionI5kds09939}
  \end{subfigure}
  \caption{Region plots of the  curvature invariant $I_5$ , Eqn.(\ref{Ifunfkds}), for the Kerr-(anti-)de Sitter black hole. (a) for spin parameter $a=0.52$, charge $q=0$,dimensionless cosmological parameter $\Lambda=3.6\times 10^{-33},m=1$. (b) For spin  $a=0.9$, charge $q=0$,dimensionless cosmological parameter $\Lambda=3.6\times 10^{-33},m=1$. (c) For high spin  $a=0.9939$, electric charge $q=0$,dimensionless cosmological parameter $\Lambda=3.6\times 10^{-33}$ and mass $m=1$. We note that the area of regions of negative sign of the invariant $I_5$ (and hence the regions where the gradient vector $k_{\mu}=-\nabla_{\mu}(C_{\alpha\beta\gamma\delta}C^{\alpha\beta\gamma\delta})$ is timelike)  decreases as the  Kerr parameter $a$ increases.}\label{Regionsnegativenormkmu}
\end{figure}

We reiterated the analysis of the $I_5$ invariant for the case of the Kerr-Newman black hole (KN BH) in Figure \ref{RegionsnegativenormkmuKN}. The strong sign dependence on the black hole's spin is still present. However, there is now a new effect due to the electric charge. For a fixed value of the Kerr parameter $a$, increasing the electric charge results in a decrease of the area of regions where $I_5<0$ and thus the areas in the $r-\theta$ space where the gradient vector $k_{\mu}$
is timelike. As a matter of fact for the choice of values $a=0.9939,q=0.11$ the gradient vector $k_{\mu}$ is spacelike throughout the $r-\theta$ space, as shown in the Fig.\ref{RegionI5kna9939q011}.
As regards the values of $I_5$ in the equatorial plane and on the axis for a KN(a-)dS black hole we obtain:
\begin{corollary}
\begin{align}
I_5(\theta=\frac{\pi}{2})=-\frac{27648 \left(\Lambda  r^{4}+\left(\Lambda  a^{2}-3\right) r^{2}+6 m r -3 a^{2}-3 q^{2}\right) \left(m r -q^{2}\right)^{2} \left(m r -\frac{4 q^{2}}{3}\right)^{2}}{r^{20}},
\end{align}
\end{corollary}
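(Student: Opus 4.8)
The plan is to reduce the full covariant contraction $I_{5}=k_{\mu}k^{\mu}$, with $k_{\mu}=-\nabla_{\mu}I_{1}$ and $I_{1}=C_{\alpha\beta\gamma\delta}C^{\alpha\beta\gamma\delta}$, to a single radial term by exploiting the reflection symmetry of the metric (\ref{KNADSelement}) about the equatorial plane. Since $I_{1}$ is a scalar built from the Weyl tensor of the stationary, axisymmetric spacetime, it depends on the coordinates only through $r$ and $\cos^{2}\theta$; hence $k_{\mu}=-\partial_{\mu}I_{1}$ has nonvanishing components only along $r$ and $\theta$, with $k_{\theta}\propto\cos\theta\sin\theta$. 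Because the $(r,\theta)$ block of (\ref{KNADSelement}) is diagonal and decoupled from the $(t,\phi)$ block, the contraction collapses to
\[
I_{5}=g^{rr}\left(\partial_{r}I_{1}\right)^{2}+g^{\theta\theta}\left(\partial_{\theta}I_{1}\right)^{2},\qquad g^{rr}=\frac{\Delta_{r}^{KN}}{\rho^{2}},\quad g^{\theta\theta}=\frac{\Delta_{\theta}}{\rho^{2}}.
\]
First I would set $\theta=\pi/2$: the second term drops out because $\partial_{\theta}I_{1}\propto\cos\theta\sin\theta$ vanishes there, while $\rho^{2}=r^{2}$ and $\Delta_{\theta}=1$, so the whole problem reduces to evaluating $g^{rr}(\partial_{r}I_{1})^{2}=\frac{\Delta_{r}^{KN}}{r^{2}}(\partial_{r}I_{1})^{2}$ on the equator.

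The next step is to evaluate $I_{1}$ on the equatorial plane. For the type D geometry (\ref{KNADSelement}) the only nonvanishing Newman--Penrose Weyl scalar in the principal null tetrad is $\Psi_{2}$, and it carries no $\Lambda$-dependence, taking the form $\Psi_{2}=-\bigl(m-q^{2}/(r+ia\cos\theta)\bigr)/(r-ia\cos\theta)^{3}$. At $\theta=\pi/2$ all the $a\cos\theta$ terms vanish, so $\Psi_{2}$ collapses to the Reissner-Nordstr\"{o}m value $\Psi_{2}=-(mr-q^{2})/r^{4}$, independent of both $a$ and $\Lambda$. Using $C_{\alpha\beta\gamma\delta}C^{\alpha\beta\gamma\delta}=48\,\mathrm{Re}(\Psi_{2}^{2})$ (the overall constant being fixed by the Schwarzschild/Reissner-Nordstr\"{o}m limit), one obtains
\[
I_{1}\big|_{\theta=\pi/2}=\frac{48\,(mr-q^{2})^{2}}{r^{8}},
\]
and differentiating gives $\partial_{r}I_{1}\big|_{\theta=\pi/2}=-288\,(mr-q^{2})(mr-\tfrac{4}{3}q^{2})/r^{9}$, where the factorisation $-6mr+8q^{2}=-2(3mr-4q^{2})$ produces the two binomial factors appearing in the claim.

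Finally I would assemble the two pieces. Squaring $\partial_{r}I_{1}$ yields $82944\,(mr-q^{2})^{2}(mr-\tfrac{4}{3}q^{2})^{2}/r^{18}$, and multiplying by $g^{rr}=\Delta_{r}^{KN}/r^{2}$ gives $82944\,\Delta_{r}^{KN}(mr-q^{2})^{2}(mr-\tfrac{4}{3}q^{2})^{2}/r^{20}$. Expanding $\Delta_{r}^{KN}$ from (\ref{DiscrimiL}) and rewriting it as $\Delta_{r}^{KN}=-\tfrac{1}{3}\bigl[\Lambda r^{4}+(\Lambda a^{2}-3)r^{2}+6mr-3a^{2}-3q^{2}\bigr]$ converts the factor $82944/3$ into the stated coefficient $27648$ and reproduces the claimed expression, sign included. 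The \emph{main obstacle}, and the only genuinely nontrivial input, is the equatorial reduction of $I_{1}$: one must justify that $\Psi_{2}$ is $\Lambda$-free and loses all $a$-dependence at $\cos\theta=0$, and pin down the normalisation constant $48$ unambiguously. Everything downstream --- the vanishing of the $\theta$-term, the appearance of $\Delta_{r}^{KN}$ through $g^{rr}$ (which is precisely what forces $I_{5}$ to vanish at the stationary horizons $\Delta_{r}^{KN}=0$), and the elementary factorisations --- is then automatic. As a consistency check one can instead specialise the general (unwieldy) KN(a)dS expression for $I_{5}$ to $\theta=\pi/2$ and verify the same collapse; the symmetry argument merely explains why that algebraic collapse must occur.
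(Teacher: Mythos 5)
Your derivation is correct and reproduces the stated formula exactly, but it follows a genuinely different route from the paper. The paper obtains this corollary by specialising to $\theta=\pi/2$ the full tensorial expression for $I_5$ of the Kerr--Newman--(anti-)de Sitter metric, computed symbolically with Maple (an expression the author describes as too long to print); the corollary is thus a by-product of brute-force computer algebra. You instead give a short analytic argument: since $I_1$ is a function of $r$ and $\cos^2\theta$ only and the $(r,\theta)$ block of the metric is diagonal, $I_5=g^{rr}(\partial_r I_1)^2+g^{\theta\theta}(\partial_\theta I_1)^2$ and the angular term dies on the equator; the equatorial value of $I_1=48\,\Re(\Psi_2^2)$ collapses to the Reissner--Nordstr\"om form $48(mr-q^2)^2/r^8$ because $\Psi_2$ is $\Lambda$-free and all $a\cos\theta$ terms vanish at $\cos\theta=0$ (consistent with the paper's Eqn.~(\ref{hweylpsi2}) at $\alpha=0$ and with the NP relation $I_1+iI_2=48\overline{\Psi_2}^2$ used in Section~\ref{RogerPenN}); and the factor $\Delta_r^{KN}=-\tfrac13[\Lambda r^4+(\Lambda a^2-3)r^2+6mr-3a^2-3q^2]$ enters solely through $g^{rr}$. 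Your approach buys an explanation of \emph{why} the horizon polynomial factors out of the equatorial $I_5$ (it is just the inverse radial metric component), and of why the remaining factors are the squares $(mr-q^2)^2(mr-\tfrac43 q^2)^2$ coming from $I_1$ and its logarithmic derivative; the paper's approach buys the full $(r,\theta)$-dependent invariant, of which this is only a slice. All your intermediate numbers check out: $\partial_r I_1=-288(mr-q^2)(mr-\tfrac43 q^2)/r^9$, $288^2=82944$, and $82944/3=27648$ with the correct overall sign.
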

and
\begin{corollary}
\begin{align}
I_5(\theta=0)&=-\frac{3072 \left(\Lambda  a^{2} r^{2}+\Lambda  r^{4}-3 a^{2}+6 m r -3 q^{2}-3 r^{2}\right)}{\left(a^{2}+r^{2}\right)^{15}} \Biggl(21 a^{6} m^{2} r -5 a^{6} m \,q^{2}-105 a^{4} m^{2} r^{3}\nonumber \\
&+85 a^{4} m \,q^{2} r^{2}-12 a^{4} q^{4} r +63 a^{2} m^{2} r^{5}-95 a^{2} m \,q^{2} r^{4}+32 a^{2} q^{4} r^{3}-3 r^{7} m^{2}+7 m \,q^{2} r^{6}-4 q^{4} r^{5}\Biggr)^{2}.
\end{align}
\end{corollary}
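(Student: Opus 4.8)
The plan is to avoid specialising the (lengthy, unprinted) general KN(a)dS expression for $I_5$ and instead compute the on-axis value directly, reducing everything to a one-variable radial problem. Since $I_1=C_{\alpha\beta\gamma\delta}C^{\alpha\beta\gamma\delta}$ is a scalar that, by stationarity and axisymmetry, depends only on $r$ and $\theta$, the gradient $k_\mu=-\nabla_\mu I_1$ equals the ordinary partial gradient, with nonvanishing components only in the $r$ and $\theta$ slots. The metric (\ref{KNADSelement}) has no $g^{r\theta}$ cross term, so the norm splits cleanly:
\begin{equation}
I_5=k_\mu k^\mu=g^{rr}(\partial_r I_1)^2+g^{\theta\theta}(\partial_\theta I_1)^2=\frac{\Delta_r^{KN}}{\rho^2}(\partial_r I_1)^2+\frac{\Delta_\theta}{\rho^2}(\partial_\theta I_1)^2,
\end{equation}
where I read off $g^{rr}=\Delta_r^{KN}/\rho^2$ and $g^{\theta\theta}=\Delta_\theta/\rho^2$ from (\ref{KNADSelement}).

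First I would dispose of the angular term. The KN(a)dS geometry is invariant under the reflection $\theta\mapsto\pi-\theta$, which forces $I_1$ to be an even function of $\cos\theta$, i.e. a function of $\cos^2\theta$ alone (this is explicit in the $I_1$ of \cite{KraniotisCurvature}). Hence $\partial_\theta I_1\propto\sin\theta\cos\theta$, so $(\partial_\theta I_1)^2$ vanishes like $\sin^2\theta$ on the axis while $g^{\theta\theta}$ stays finite; the angular contribution drops out at $\theta=0$. Setting $\cos\theta=1$, so that $\rho^2=r^2+a^2$, therefore collapses the invariant to $I_5(\theta=0)=\tfrac{\Delta_r^{KN}}{r^2+a^2}(\partial_r I_1|_{\theta=0})^2$. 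It is worth noting already here that the horizon-detecting factor is simply $g^{rr}|_{\theta=0}=\Delta_r^{KN}/(r^2+a^2)$: the radial direction degenerates on the stationary horizons (where $\Delta_r^{KN}=0$), which is the structural reason $I_5$ must vanish there along the axis.

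The remaining work is one-dimensional. I would restrict the explicit $I_1$ of \cite{KraniotisCurvature} to $\theta=0$, obtaining a rational function of the form $I_1|_{\theta=0}=N(r)/(r^2+a^2)^6$, differentiate once in $r$, and organise the numerator so that $\partial_r I_1|_{\theta=0}=96\,P(r)/(r^2+a^2)^7$, with $P(r)=21a^6m^2r-5a^6mq^2-\cdots-4q^4r^5$ the degree-seven polynomial in the statement. Squaring and multiplying by $g^{rr}|_{\theta=0}$ gives $I_5(\theta=0)=9216\,\Delta_r^{KN}\,P(r)^2/(r^2+a^2)^{15}$; finally, using (\ref{DiscrimiL}) to write $-3\Delta_r^{KN}=\Lambda a^2r^2+\Lambda r^4-3a^2+6mr-3q^2-3r^2$ converts the prefactor $9216=(-3072)\cdot(-3)$ into exactly the advertised form. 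There is no deep obstacle; the real labour is the mechanical bookkeeping of extracting the perfect square $P(r)^2$ and the single linear factor $\Delta_r^{KN}$ from $\partial_r I_1|_{\theta=0}$, which I would carry out in Maple consistently with the rest of the paper. A useful consistency check is that $P(r)$ contains no $\Lambda$: this reflects that $I_1$ is a purely algebraic Weyl invariant, unaffected by the (Ricci-carried) cosmological constant, so that all the $\Lambda$-dependence of $I_5(\theta=0)$ enters only through the single factor $g^{rr}\propto\Delta_r^{KN}$.
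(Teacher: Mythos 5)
Your derivation is correct, and I have checked the key algebraic step by hand: with $\Psi_2|_{\alpha=0}$ restricted to the axis one gets $I_1|_{\theta=0}=48\,\Re[(mu-q^2)^2u^4]/(r^2+a^2)^6$ with $u=r-ia$, whose $r$-derivative is exactly $96\,P(r)/(r^2+a^2)^7$ with your $P(r)$, so that $g^{rr}|_{\theta=0}\,(\partial_r I_1)^2=9216\,\Delta_r^{KN}P(r)^2/(r^2+a^2)^{15}=-3072\,(-3\Delta_r^{KN})P(r)^2/(r^2+a^2)^{15}$, reproducing the corollary verbatim. Your route is, however, genuinely different from the paper's. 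The paper obtains this corollary by first computing the full two-variable invariant $I_5$ for the KN(a)dS metric symbolically in Maple (an expression it declares too long to print) and then setting $\theta=0$; you instead exploit the block-diagonal structure of the metric in $(r,\theta)$ together with the evenness of $I_1$ in $\cos\theta$ to kill the angular term on the axis and reduce everything to a single radial derivative. What your approach buys is transparency and hand-checkability: the horizon-detecting factor is manifestly $g^{rr}=\Delta_r^{KN}/\rho^2$, the absence of $\Lambda$ from $P(r)$ is explained rather than observed, and the same two-term decomposition immediately yields the equatorial corollary as well (there the roles of the radial and angular terms are reversed only in that both survive, but the $\Delta_r^{KN}$ factor again comes solely from $g^{rr}$). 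What the paper's brute-force route buys is the full off-axis expression, which your method deliberately avoids computing. The only point worth making explicit in a write-up is that $k_\mu$ has no $t$ or $\phi$ components because $I_1$ is independent of $t$ and $\phi$ (equivalently, Eqn.~(\ref{orthogonaltokilling})), which you use implicitly when writing $I_5$ as the two-term sum.
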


\begin{figure}[ptbh]
\centering
  \begin{subfigure}[b]{.60\linewidth}
    \centering
    \includegraphics[width=.99\textwidth]{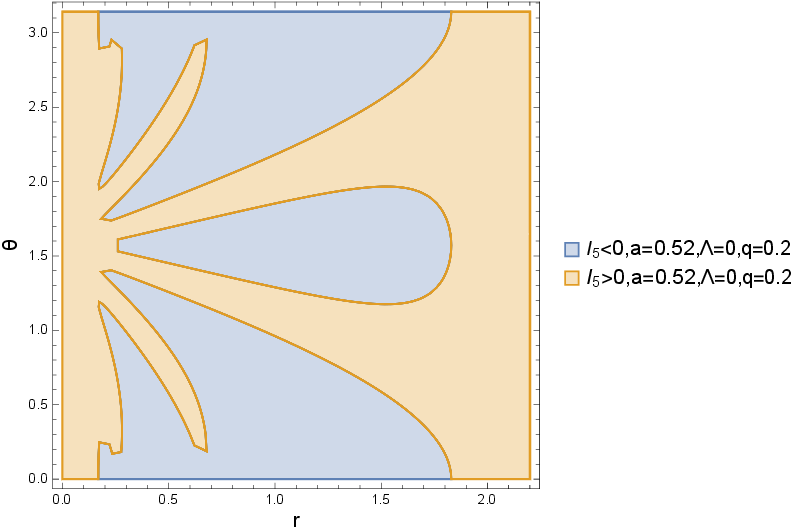}
    \caption{Region plot of curvature invariant $I_5$ for KN BH.}\label{RegionnNormI5kna052q02}
  \end{subfigure}%
  \begin{subfigure}[b]{.60\linewidth}
    \centering
    \includegraphics[width=.99\textwidth]{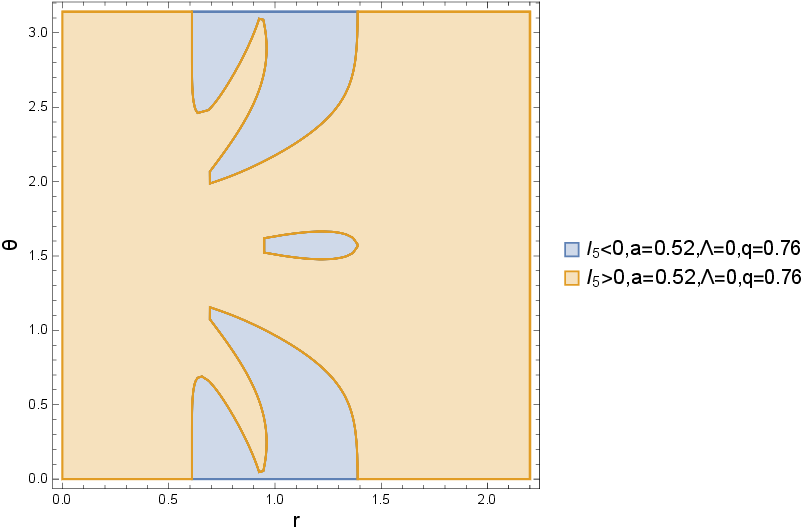}
    \caption{Region plot of curvature invariant $I_5$ for KN BH.}\label{RegionI5kna052q076}
  \end{subfigure}\\
  \begin{subfigure}[b]{.60\linewidth}
    \centering
    \includegraphics[width=.99\textwidth]{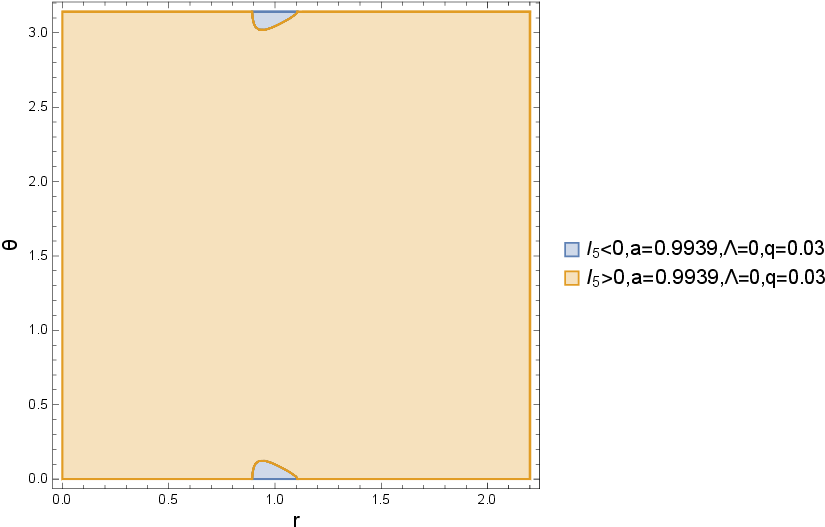}
    \caption{Region Plot of curvature invariant $I_5$ for KN BH.}\label{RegionI5kn09939q003}
  \end{subfigure}%
  \begin{subfigure}[b]{.60\linewidth}
    \centering
    \includegraphics[width=.99\textwidth]{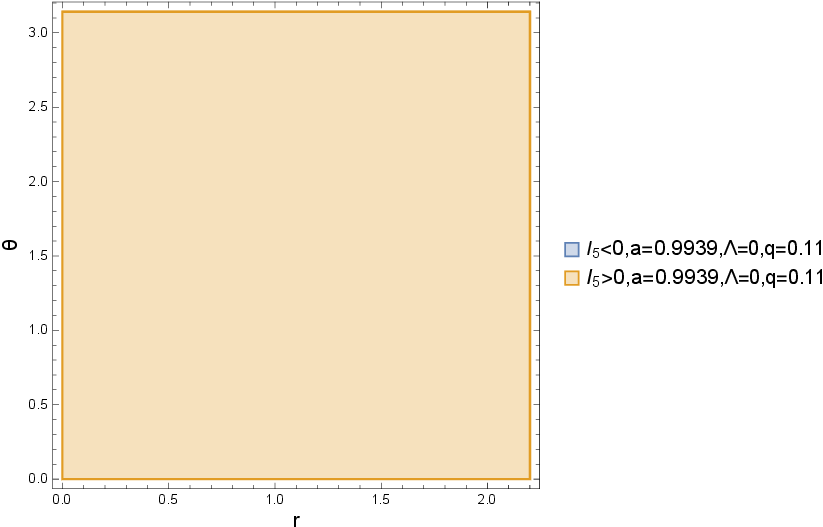}
    \caption{Region Plot of the invariant $I_5$ for KN BH.}\label{RegionI5kna9939q011}
  \end{subfigure}
  \caption{Region plots of the  curvature invariant $I_5$ , for the Kerr-Newman black hole. (a) for spin parameter $a=0.52$, charge $q=0.2,m=1$. (b) For spin  $a=0.52$, charge $q=0.76,m=1$. (c) For high spin  $a=0.9939$, electric charge $q=0.03$, and mass $m=1$. (d) For high spin  $a=0.9939$, electric charge $q=0.11$, and mass $m=1$.}\label{RegionsnegativenormkmuKN}
\end{figure}

Let us turn our attention to the curvature invariant $I_6$ for the Kerr-(anti-)de Sitter black hole. $I_6$ is the norm associated with the gradient of the Chern-Pontryagin invariant $I_2$. We note the following:
\begin{corollary}
In the equatorial plane, i.e.for $\theta=\frac{\pi}{2}$  eqn.(\ref{gradientcurvaI6}) reduces to:
\begin{equation}
I_6=\frac{82944 m^{4} a^{2}}{r^{16}}.
\end{equation}
\end{corollary}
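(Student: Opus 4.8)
The plan is to specialise the general expression for $I_6$ in Eqn.(\ref{gradientcurvaI6}) to the equatorial plane $\theta=\pi/2$, where $\cos\theta=0$, and simply track which terms survive. Inspecting the large parenthesis in (\ref{gradientcurvaI6}), each of the eight $\cos\theta$-bearing summands carries an explicit factor $\cos^{k}\theta$ with $k$ running through the even values $16,14,12,10,8,6,4,2$; the only term free of $\cos\theta$ is the trailing constant $-3r^{14}/49$. Hence, upon setting $\cos\theta=0$, the entire bracket collapses to this single constant term.

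Next I would simplify the overall prefactor. The denominator $\left(r^{2}+a^{2}\cos^{2}\theta\right)^{15}$ reduces to $\left(r^{2}\right)^{15}=r^{30}$, so that the invariant becomes
\begin{equation}
I_6|_{\theta=\pi/2}=-\frac{1354752\,m^{4}a^{2}}{r^{30}}\left(-\frac{3r^{14}}{49}\right).
\end{equation}
The remaining step is pure arithmetic bookkeeping: the two minus signs cancel, the powers combine as $r^{14}/r^{30}=r^{-16}$, and the numerical coefficient evaluates to $1354752\times 3/49=4064256/49=82944$. One verifies $49\times 82944=4064256$ directly, which confirms the quoted constant, and assembling these pieces yields $I_6=82944\,m^{4}a^{2}/r^{16}$, as claimed.

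Since this is a pure substitution argument, there is no genuine analytic obstacle. The only point requiring care is the bookkeeping: I would double-check that no summand in the long expression (\ref{gradientcurvaI6}) is free of $\cos\theta$ apart from the constant tail, by confirming that the lowest power of $\cos\theta$ appearing among the eight remaining terms is $\cos^{2}\theta$, so that indeed only the constant term contributes at $\theta=\pi/2$. With that verified, and the reduction of the numerical prefactor performed correctly, the corollary follows immediately.
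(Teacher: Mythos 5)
Your proposal is correct and follows the same route the paper implicitly takes: direct substitution of $\cos\theta=0$ into Eqn.~(\ref{gradientcurvaI6}), where only the constant tail $-3r^{14}/49$ survives, and the prefactor $-1354752\,m^{4}a^{2}/r^{30}$ times $-3/49$ indeed gives $82944\,m^{4}a^{2}/r^{16}$. The arithmetic check $1354752\times 3/49=82944$ is right, so nothing is missing.
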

\begin{corollary}
On the symmetry axis, i.e. for $\theta=0$   eqn.(\ref{gradientcurvaI6}) reduces to the expression:
\begin{equation}
I_6=-\frac{27648 m^{4} \left(a^{6}-21 a^{4} r^{2}+35 a^{2} r^{4}-7 r^{6}\right)^{2} \left(\Lambda  r^{4}+\left(\Lambda  a^{2}-3\right) r^{2}+6 m r -3 a^{2}\right) a^{2}}{\left(a^{2}+r^{2}\right)^{15}}.
\label{discreteI6}
\end{equation}
In this case the invariant $I_6$, i.e. the norm of the gradient vector $l_{\mu}$, vanishes at the horizon radii of the Kerr black hole with the cosmological constant present and also at the three positive real roots of the sextic polynomial in (\ref{discreteI6}):
\begin{equation}
r\simeq 0.22824347439 a,\;0.79747338888 a,\; 2.0765213965 a.
\label{diakritaI6}
\end{equation}
\end{corollary}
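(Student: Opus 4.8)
The plan is to evaluate the general expression (\ref{gradientcurvaI6}) for $I_6$ at $\theta=0$ and then exhibit its factorisation. First I would set $\cos\theta=1$ throughout (\ref{gradientcurvaI6}): every power $\cos^{2k}\theta$ collapses to unity, the overall denominator $(r^2+a^2\cos^2\theta)^{15}$ becomes $(a^2+r^2)^{15}$, and the seven bracketed blocks merge into a single polynomial in $r$ with coefficients in $a,m,\Lambda$. Collecting like powers of $r$ then produces a polynomial numerator, which I claim equals the numerator displayed in (\ref{discreteI6}).

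To establish the factorisation itself I would compute the right-hand side of (\ref{discreteI6}), namely the product $-27648\,m^{4}a^{2}\,(a^{6}-21a^{4}r^{2}+35a^{2}r^{4}-7r^{6})^{2}\,(\Lambda r^{4}+(\Lambda a^{2}-3)r^{2}+6mr-3a^{2})$, expand it, and verify term-by-term agreement with the collapsed numerator from the first step. Since both sides are explicit polynomials in $r$ of matching degree, this reduces to a finite coefficient comparison; because of the length involved I would treat the factored form as an ansatz and confirm that the difference vanishes identically, using the same computer-algebra computation that produced (\ref{gradientcurvaI6}).

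The identification of the zero set then follows by inspecting the two nontrivial factors. I observe that $\Lambda r^{4}+(\Lambda a^{2}-3)r^{2}+6mr-3a^{2}=-3\bigl[(1-\tfrac{\Lambda}{3}r^{2})(r^{2}+a^{2})-2mr\bigr]$, i.e. it is exactly $-3$ times the radial horizon polynomial $\Delta_r$ of the Kerr-(anti-)de Sitter black hole (the $q=0$ case of (\ref{DiscrimiL})); hence this factor vanishes precisely on the stationary horizons, which proves the first half of the claim. The remaining factor $(a^{6}-21a^{4}r^{2}+35a^{2}r^{4}-7r^{6})^{2}$ contains only even powers of $r$ and is homogeneous of degree six in $(a,r)$, so its positive zeros are fixed multiples of $a$: setting $v=r^{2}/a^{2}$ reduces $-7r^{6}+35a^{2}r^{4}-21a^{4}r^{2}+a^{6}$ to $a^{6}(-7v^{3}+35v^{2}-21v+1)$. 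By Descartes' rule of signs (Theorem \ref{rene}) the cubic $-7v^{3}+35v^{2}-21v+1$ has three sign changes and hence at most three positive roots; solving it yields three distinct positive values of $v$, whose square roots are exactly the coefficients quoted in (\ref{diakritaI6}), and these are manifestly independent of $m$ and $\Lambda$.

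The \emph{main obstacle} I expect is purely computational: faithfully collapsing and re-collecting the very large numerator of (\ref{gradientcurvaI6}) at $\cos\theta=1$ without algebraic slips, so that the coefficient comparison closes exactly. The conceptual content, recognising the horizon factor as $-3\Delta_r$ and reducing the discrete roots to a single cubic in $r^{2}/a^{2}$, is then immediate.
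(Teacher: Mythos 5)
Your proposal is correct and follows essentially the same route the paper takes implicitly: substitute $\theta=0$ into Eqn.~(\ref{gradientcurvaI6}) (a computer-algebra collapse and coefficient check), recognise the quartic factor as $-3\Delta_r$ of the Kerr--(anti-)de Sitter metric so that it vanishes exactly on the stationary horizons, and reduce the homogeneous sextic to the cubic $-7v^{3}+35v^{2}-21v+1$ in $v=r^{2}/a^{2}$, whose three positive roots give the quoted multiples of $a$. Your explicit identifications of the two factors merely spell out what the paper leaves implicit, and a spot check (e.g.\ at $r=0$ both expressions give $82944\,m^{4}/a^{14}$) confirms the normalisation, including the absorbed factor $1354752=49\times 27648$.
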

Again, $I_6$ shows a strong sign dependence on the Kerr parameter $a$. Whereas the vanishing of $I_6$ on the symmetry axis obviously singles out the horizons of the K(a-)dS black hole (away from the discrete points in eqn.(\ref{diakritaI6})), the global behaviour of $I_6$, like $I_5$, is also quite interesting. In general terms, the area of regions where $I_6<0$ (and so $l_{\mu}$ is a timelike vector) on any $r-\theta$ hypersurface decreases as $a$ increases.

In Fig. \ref{ContourPlotsgradI6curva} we display contour plots of the differential invariant $I_6$, eqn.(\ref{gradientcurvaI6}), for the Kerr-(anti-)de Sitter black hole. In particular, the strong sign dependence of $I_6$ with the Kerr parameter is evident in Fig.\ref{nullIsolinesI6}. Indeed, a plot of $I_6=0$ for different values of the Kerr parameter is shown in Fig.\ref{nullIsolinesI6}. The enclosed area decreases for increasing Kerr parameter $a$ and shrinks to zero for $a\rightarrow 1^-$.

Let us now investigate the curvature invariant $I_6$ for a Kerr-Newman black hole.
\begin{corollary}
For $\theta=\frac{\pi}{2}$in the equatorial plane,  $I_6$   reduces to:
\begin{align}
I_6=\frac{82944 \left(m r -\frac{2 q^{2}}{3}\right)^{2} \left(m r -q^{2}\right)^{2} a^{2}}{r^{20}}.
\label{klisiCPknequato}
\end{align}
\end{corollary}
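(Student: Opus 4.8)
The plan is to compute $I_6$ directly from its definition $I_6\equiv l_\mu l^\mu$ with $l_\mu=-\nabla_\mu I_2$, rather than by specialising the (long, unlisted) general Kerr-Newman expression to the equatorial plane. Because $I_2=C^{*}_{\alpha\beta\gamma\delta}C^{\alpha\beta\gamma\delta}$ is a scalar depending only on $r$ and $\theta$, we have $l_\mu=-\partial_\mu I_2$, and since the line element (\ref{KNADSelement}) with $\Lambda=0$ is block diagonal in the $(r,\theta)$ and $(t,\phi)$ sectors (no $g^{r\theta}$ term), the norm collapses to $I_6=g^{rr}(\partial_r I_2)^2+g^{\theta\theta}(\partial_\theta I_2)^2$ with $g^{rr}=\Delta_r^{KN}/\rho^2$ and $g^{\theta\theta}=\Delta_\theta/\rho^2$, where $\Delta_\theta=\Xi=1$ for $\Lambda=0$.

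First I would record the closed form of the Chern-Pontryagin invariant $I_2$ for the type D Kerr-Newman geometry. Introducing $z=r+i\,a\cos\theta$, so that $\rho^2=z\bar z$, the single nonvanishing Weyl scalar is $\Psi_2=-(mz-q^2)/(\bar z^{3}z)$, and for a type D spacetime $I_1-iI_2=48\,\Psi_2^{2}=48\,(m^2z^6-2mq^2z^5+q^4z^4)/\rho^{12}$, so that $I_2=\mp\,48\,\mathrm{Im}\!\left[m^2z^6-2mq^2z^5+q^4z^4\right]/\rho^{12}$. This makes manifest the structural fact I would exploit: $I_2$ is an odd function of $a\cos\theta$, as befits the Chern-Pontryagin density.

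Next I would evaluate the gradient on the equatorial plane $\theta=\pi/2$, where $a\cos\theta=0$, $z=r$ and $\rho^2=r^2$. Writing $F(z)=m^2z^6-2mq^2z^5+q^4z^4$, the parity of $I_2$ forces both $I_2$ and $\partial_r I_2$ to vanish there (the imaginary parts of the real quantities $F(r)$ and $F'(r)$ are zero), so the entire radial contribution to $I_6$ drops out and only $\partial_\theta I_2$ survives. Using $\partial_\theta z=-i\,a\sin\theta=-i\,a$ at $\theta=\pi/2$ gives $\partial_\theta I_2=\pm\,48\,a\,F'(r)/r^{12}$, and the factorisation $F'(r)=2r^3(mr-q^2)(3mr-2q^2)$, obtained by solving $3m^2r^2-5mq^2r+2q^4=0$, puts this in product form.

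Finally, substituting $\partial_r I_2=0$ and this $\partial_\theta I_2$ into $I_6=g^{\theta\theta}(\partial_\theta I_2)^2=(\partial_\theta I_2)^2/r^2$, and using $(3mr-2q^2)^2=9\,(mr-\tfrac{2}{3}q^2)^2$ together with $(2\cdot 48)^2\cdot 9=9216\cdot 9=82944$, reproduces (\ref{klisiCPknequato}) exactly. The only genuinely delicate points are fixing the overall normalisation (the factor $48$, calibrated e.g.\ against the Schwarzschild value $I_1=48m^2/r^6$) and carrying out the factorisation of $F'$; the sign of $I_2$ is immaterial since it enters $I_6$ quadratically. The conceptual crux — and what makes the equatorial result so compact — is the parity argument that annihilates the radial gradient, leaving a single $\theta$-derivative whose square produces the clean product $(mr-q^2)^2(mr-\tfrac{2}{3}q^2)^2$.
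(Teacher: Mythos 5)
Your derivation is correct and reproduces Eqn.~(\ref{klisiCPknequato}) exactly, but it follows a genuinely different route from the paper. The paper obtains this corollary by specialising to $\theta=\pi/2$ a general tensorial Maple computation of $I_6$ for the Kerr--Newman(--(a)dS) metric which it states is too long to print; there is no structural argument, just evaluation. You instead work from the definition $I_6=\nabla_\mu I_2\nabla^\mu I_2$ using the type~D identity $I_1+iI_2=48\overline{\Psi}_2^{\,2}$ (which the paper itself only introduces later, in Section~\ref{RogerPenN}), writing $I_2\propto \mathrm{Im}\,F(z)/\rho^{12}$ with $z=r+ia\cos\theta$ and $F(z)=m^2z^6-2mq^2z^5+q^4z^4$. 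The key observation --- that $I_2$ is odd in $a\cos\theta$, so both $I_2$ and $\partial_r I_2$ vanish on the equator and only the $g^{\theta\theta}(\partial_\theta I_2)^2$ term survives --- is correct (the block-diagonality of the metric in the $(r,\theta)$ sector and the $t,\phi$-independence of $I_2$ justify the reduction of the norm), and your factorisation $F'(r)=2r^3(mr-q^2)(3mr-2q^2)$ together with $96^2\cdot 9=82944$ and $g^{\theta\theta}=1/r^2$ closes the computation. Your $\Psi_2$ differs from the paper's Eqn.~(\ref{hweylpsi2}) (at $\alpha=0$) by complex conjugation, which only flips the sign of $I_2$ and is harmless here, as you note. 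What your approach buys is an explanation of \emph{why} the equatorial expression is so compact --- the parity argument annihilates the radial gradient, and the two factors are just the roots of $F'$ --- and an independent check of the overall normalisation; what it does not do is produce the full $(r,\theta)$ expression from which the paper's version of the corollary is read off, so it is a proof of the stated special case rather than of the unlisted general formula.
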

\begin{corollary}
Along the symmetry axis ($\theta=0$,$\theta=\pi$), the norm of the gradient vector $l_{\mu}$ for a KN BH is given by:
\begin{align}
I_6&=\frac{9216 \left(a^{2}-2 m r +q^{2}+r^{2}\right)}{\left(a^{2}+r^{2}\right)^{15}} \Biggl(3 a^{6} m^{2}-63 a^{4} m^{2} r^{2}+32 a^{4} m \,q^{2} r -2 a^{4} q^{4}+105 a^{2} m^{2} r^{4}\nonumber \\&-120 a^{2} m \,q^{2} r^{3}
+28 a^{2} q^{4} r^{2}-21 m^{2} r^{6}+40 m \,q^{2} r^{5}-18 q^{4} r^{4}\Biggr)^{2} a^{2}.
\label{gradCPknaxis}
\end{align}
\end{corollary}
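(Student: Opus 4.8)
The plan is to compute $I_6=l_\mu l^\mu$ with $l_\mu=-\nabla_\mu I_2$ directly, exploiting the symmetries of the Kerr--Newman metric, i.e. (\ref{KNADSelement}) with $\Lambda=0$. Since $I_2\equiv C^{*}_{\alpha\beta\gamma\delta}C^{\alpha\beta\gamma\delta}$ is a scalar built algebraically from the curvature, it is invariant under the stationarity and axial Killing flows and hence depends on $r$ and $\theta$ only. Because the Boyer--Lindquist form has no $g_{r\theta}$ term (the only off-diagonal component is $g_{t\phi}$), the $(r,\theta)$ block of the inverse metric is diagonal, so the norm reduces to
\begin{equation}
I_6=g^{rr}(\partial_r I_2)^2+g^{\theta\theta}(\partial_\theta I_2)^2,\qquad g^{rr}=\frac{\Delta_r^{KN}}{\rho^2},\quad g^{\theta\theta}=\frac{\Delta_\theta}{\rho^2},
\end{equation}
and for $\Lambda=0$ one has $\Delta_\theta=1$ and $\Delta_r^{KN}=r^2+a^2-2mr+q^2$. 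First I would establish the parity of $I_2$ in $\cos\theta$: writing the single nonvanishing type-D Weyl scalar as $\Psi_2=(-m\bar\zeta+q^2)/(\zeta^3\bar\zeta)$ with $\zeta=r-ia\cos\theta$, the complex Weyl invariant $I_1\pm iI_2$ is (up to a real constant) $\Psi_2^2$. Under $\cos\theta\to-\cos\theta$, i.e. $\zeta\leftrightarrow\bar\zeta$, one gets $\Psi_2\to\overline{\Psi_2}$, so $I_1$ is even and $I_2$ is odd in $\cos\theta$. Thus $I_2=\cos\theta\,f(r,\cos^2\theta)$ for some $f$.

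The second step is to evaluate on the axis. Since $\partial_\theta I_2=-\sin\theta\,(f+2\cos^2\theta\,f_{,\cos^2\theta})$, the $\theta$-derivative vanishes at $\theta=0$, and $\rho^2|_{\theta=0}=r^2+a^2$, so the norm collapses to
\begin{equation}
I_6\big|_{\theta=0}=\frac{\Delta_r^{KN}}{r^2+a^2}\,\big(\partial_r I_2\big|_{\theta=0}\big)^2=\frac{a^2-2mr+q^2+r^2}{r^2+a^2}\,\big(\partial_r I_2\big|_{\theta=0}\big)^2 .
\end{equation}
This already produces the overall factor $(a^2-2mr+q^2+r^2)=\Delta_r^{KN}$ displayed in the statement, and explains at the conceptual level why $I_6$ vanishes on the Kerr--Newman horizons along the symmetry axis: it is the $g^{rr}$ prefactor, not the bracketed polynomial, that carries the zero. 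The same reasoning shows the bracket must be a perfect square and must carry an overall $a^2$, since $I_2\propto a\cos\theta$ forces $\partial_r I_2|_{\theta=0}\propto a$.

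The third step is the explicit extraction of $\partial_r I_2|_{\theta=0}$. Taking $I_2=\operatorname{Im}(\text{const}\cdot\Psi_2^2)$, setting $\cos\theta=1$, and differentiating in $r$ yields $\partial_r I_2|_{\theta=0}=\pm\,96\,a\,P(r)/(a^2+r^2)^7$, where $P(r)$ is the degree-seven polynomial $3a^6m^2-63a^4m^2r^2+\cdots-18q^4r^4$ of the corollary; squaring and using $96^2=9216$ together with $(a^2+r^2)^{-14}(a^2+r^2)^{-1}=(a^2+r^2)^{-15}$ reproduces (\ref{gradCPknaxis}). Equivalently, one may simply substitute $\theta=0,\ \Lambda=0$ into the general (computed but unpublished) Kerr--Newman expression for $I_6$ and simplify. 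I expect the main obstacle to be precisely this last step: the coefficient-by-coefficient bookkeeping needed to isolate $P(r)$ and confirm the match is algebraically heavy and is where the computer-algebra computation does the real work, whereas the first two steps are what render the final form transparent and guarantee both the horizon factor $\Delta_r^{KN}$ and the perfect-square structure.
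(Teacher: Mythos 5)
Your argument is correct, and I have checked the decisive algebra: with the paper's convention $I_1+iI_2=48\overline{\Psi_2}^2$ one finds $\partial_r I_2\big|_{\theta=0}=96\,a\,P(r)/(a^2+r^2)^7$, where $P(r)$ is exactly the bracketed polynomial of the corollary, so squaring and multiplying by $g^{rr}|_{\theta=0}=(r^2+a^2-2mr+q^2)/(r^2+a^2)$ reproduces (\ref{gradCPknaxis}), including $9216=96^2$ and the power $(a^2+r^2)^{-15}$. Your route is, however, genuinely different from the paper's. The paper obtains the corollary by brute force: the full two-variable invariant $I_6(r,\theta)$ for the Kerr--Newman metric is computed symbolically (the authors state the general expression is too long to print) and then specialised to $\theta=0$. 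You never need the full $I_6$: block-diagonality of the Boyer--Lindquist $(r,\theta)$ sector gives $I_6=g^{rr}(\partial_rI_2)^2+g^{\theta\theta}(\partial_\theta I_2)^2$, the oddness of $I_2$ in $\cos\theta$ (equivalently, regularity of an axisymmetric scalar on the axis) kills the angular term at $\theta=0$, and the problem collapses to differentiating the single-variable function $I_2(r,0)\propto \mathrm{Im}\bigl[(m\bar\zeta-q^2)^2\bar\zeta^4\bigr]/(r^2+a^2)^6$ with $\bar\zeta=r-ia$. What your approach buys is transparency: it isolates the origin of the overall factor $\Delta_r^{KN}=r^2+a^2-2mr+q^2$ (it is the $g^{rr}$ prefactor, not an accidental factorisation of a large polynomial) and predicts a priori that the residue is a perfect square carrying an overall $a^2$; the same reasoning, with the roles of the two derivatives exchanged, explains why the equatorial expression (\ref{klisiCPknequato}) carries no horizon factor. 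What the paper's approach buys is the full off-axis $I_6(r,\theta)$, needed for the region plots, which your slicing argument does not provide. Two trivial slips that do not affect anything: $P(r)$ has degree six in $r$, not seven, and with the paper's sign conventions the $\pm$ in your step three resolves to $+$.
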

In Fig.\ref{RegionplotsofsignI6} the sign dependence of the norm of the gradient vector $l_{\mu}$ on the Kerr parameter $a$ and the electric charge $q$ of the Kerr-Newman black hole is nicely exhibited. The strong sign dependence on the black hole's spin is still present. On the other hand,  for a fixed value of the Kerr parameter $a$, increasing the electric charge results in a decrease of the area of regions where $I_6<0$ and thus the areas in the $r-\theta$ space where the gradient vector $l_{\mu}$ is timelike. Again, as $I_5$, for the values of physical black hole  parameters: $a=0.9939,q=0.11$ the gradient vector $l_{\mu}$ is spacelike throughout the $r-\theta$ space. In particular in this case, the positive norm of the gradient vector $l_{\mu}=-\nabla_{\mu}(C^{*}_{\alpha\beta\gamma\delta}C^{\alpha\beta\gamma\delta})$  defines an $\mathcal{R}$ region.
\begin{corollary}\label{norm2gradientflowI6eqax}
The generalisation of eqn.(\ref{gradCPknaxis}) for a Kerr-Newman-(anti-)de Sitter black hole is:
\begin{align}
I_6&=-\frac{3072 \left(\Lambda  a^{2} r^{2}+\Lambda  r^{4}-3 a^{2}+6 m r -3 q^{2}-3 r^{2}\right)}{\left(a^{2}+r^{2}\right)^{15}} \Biggl(3 m^{2} a^{6}-63 r^{2} a^{4} m^{2}+32 a^{4} m \,q^{2} r -2 a^{4} q^{4}\nonumber \\
&+105 a^{2} m^{2} r^{4}-120 a^{2} m \,q^{2} r^{3}+28 r^{2} a^{2} q^{4}-21 m^{2} r^{6}+40 m \,q^{2} r^{5}-18 q^{4} r^{4}\Biggr)^{2} a^{2},
\end{align}
while eqn.(\ref{klisiCPknequato}) remains the same in the presence of the cosmological constant.
\end{corollary}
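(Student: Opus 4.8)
The plan is to exploit the fact that $I_6=l_\mu l^\mu$ with $l_\mu=-\nabla_\mu I_2$ is the squared norm of the gradient of the scalar $I_2=C^{*}_{\alpha\beta\gamma\delta}C^{\alpha\beta\gamma\delta}$. Since $I_2$ depends only on $r$ and $\theta$ for the stationary, axisymmetric metric (\ref{KNADSelement}), its covariant derivative reduces to the ordinary gradient, $l_\mu=-\partial_\mu I_2$, with only the $r$- and $\theta$-components surviving. The $(r,\theta)$ block of (\ref{KNADSelement}) is diagonal and decoupled from the $(t,\phi)$ block, so $g^{rr}=\Delta_r^{KN}/\rho^{2}$ and $g^{\theta\theta}=\Delta_\theta/\rho^{2}$ carry no cross terms, giving
\begin{equation}
I_6=g^{rr}(\partial_r I_2)^2+g^{\theta\theta}(\partial_\theta I_2)^2=\frac{\Delta_r^{KN}}{\rho^{2}}(\partial_r I_2)^2+\frac{\Delta_\theta}{\rho^{2}}(\partial_\theta I_2)^2 .
\end{equation}
First I would record the explicit form of $I_2$ for the Kerr-Newman-(anti-)de Sitter black hole (Eqn.(33) of \cite{KraniotisCurvature}), noting that $I_2$, being built purely from the Weyl tensor whose only nonvanishing Newman-Penrose scalar $\Psi_2$ for these type D metrics is independent of $\Lambda$, does not depend on the cosmological constant.

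Next I would specialise to the symmetry axis $\theta=0,\pi$. Because $I_2$ is a smooth function of $\cos\theta$ and regular on the axis, $\partial_\theta I_2$ carries an overall factor $\sin\theta$ and therefore vanishes there; the second term drops out and $I_6=(\Delta_r^{KN}/\rho^{2})(\partial_r I_2)^2$ with $\rho^{2}=r^{2}+a^{2}$. Differentiating the axis value of $I_2$ with respect to $r$ and squaring produces the bracketed sextic polynomial of the statement, while the prefactor $\Delta_r^{KN}/(r^{2}+a^{2})$ supplies, up to the constant $-3$ absorbed into the numerical coefficient, the $\Lambda$-dependent linear factor
\begin{equation}
-3\,\Delta_r^{KN}=\Lambda a^{2}r^{2}+\Lambda r^{4}-3a^{2}+6mr-3q^{2}-3r^{2},
\end{equation}
where $\Delta_r^{KN}$ is given in (\ref{DiscrimiL}); matching the global constant then fixes the $3072$ and the factor $a^{2}$.

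For the equatorial claim I would argue that on $\theta=\pi/2$ the scalar $I_2$ vanishes identically in $r$ (there $\Psi_2$ is real, so $C^{*}_{\alpha\beta\gamma\delta}C^{\alpha\beta\gamma\delta}=0$), whence $\partial_r I_2=0$ along the entire plane and only the $\theta$-term survives, $I_6=g^{\theta\theta}(\partial_\theta I_2)^2$. On the equatorial plane $\Delta_\theta=1$ and $\rho^{2}=r^{2}$, so $g^{\theta\theta}=1/r^{2}$ is free of $\Lambda$; combined with the $\Lambda$-independence of $I_2$, and hence of $\partial_\theta I_2$, this shows that $I_6$ reduces to (\ref{klisiCPknequato}) with no cosmological-constant correction. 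The main obstacle is purely computational: generating the axis value of $\partial_r I_2$ from the unwieldy KN(a)dS expression for $I_2$ and simplifying its square into the compact polynomial displayed, a step I would delegate to the same symbolic algebra used throughout. The only genuinely conceptual point requiring care is the $\Lambda$-independence of the Weyl-built scalars $I_1,I_2$, which underlies both the clean factorisation on the axis and the invariance of the equatorial expression under $\Lambda$.
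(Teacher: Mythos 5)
Your argument is correct and arrives at both claims by a genuinely different, more structural route than the paper. The paper obtains this corollary by having Maple compute the full two-variable invariant $I_6(r,\theta)$ for the Kerr--Newman--(anti-)de Sitter metric (an expression the author explicitly declines to print as too long) and then restricting to $\theta=0$ and $\theta=\pi/2$. You instead observe that, $I_2$ being a scalar of $(r,\theta)$ and the $(r,\theta)$ block of (\ref{KNADSelement}) being diagonal, $I_6=g^{rr}(\partial_r I_2)^2+g^{\theta\theta}(\partial_\theta I_2)^2$, and that since $\Psi_2$ in (\ref{hweylpsi2}) carries no $\Lambda$, the quadratic Weyl invariants $I_1,I_2$ are the same rational functions of $r,\cos\theta$ as at $\Lambda=0$; hence all $\Lambda$-dependence of $I_6$ resides in $g^{rr}=\Delta_r^{KN}/\rho^2$ and $g^{\theta\theta}=\Delta_\theta/\rho^2$. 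On the axis $\partial_\theta I_2\propto\sin\theta$ vanishes, and the surviving term carries exactly the factor $-\tfrac13\bigl(\Lambda a^2r^2+\Lambda r^4-3a^2+6mr-3q^2-3r^2\bigr)/(r^2+a^2)$ from $g^{rr}$, which converts the prefactor $9216\,(a^2-2mr+q^2+r^2)$ of (\ref{gradCPknaxis}) into the displayed $-3072$ times the $\Lambda$-dependent linear factor (note $9216=3\times3072$, consistent with $-3\Delta_r^{KN}$ being that factor); on the equator $\Psi_2$ is real so $I_2\equiv0$ there, only $g^{\theta\theta}(\partial_\theta I_2)^2$ survives, and $\Delta_\theta|_{\theta=\pi/2}=1$ makes (\ref{klisiCPknequato}) manifestly $\Lambda$-independent. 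Your route buys a conceptual explanation of both statements --- in particular of why $\Lambda$ enters the axis formula only through the single horizon factor --- at the cost of still requiring a (now one-dimensional, much lighter) symbolic evaluation of $\partial_r I_2$ on the axis to produce the bracketed polynomial; the paper's brute-force route produces the full $I_6$, which it needs anyway for the region plots. Your normalisations check out: expanding $\Psi_2^2$ to first order in $\cos\theta$ gives $|\partial_\theta I_2|_{\pi/2}=288\,a\,|mr-q^2|\,|mr-\tfrac23 q^2|/r^9$, and $288^2=82944$ reproduces (\ref{klisiCPknequato}) exactly.
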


\begin{figure}[ptbh]
\centering
  \begin{subfigure}[b]{.60\linewidth}
    \centering
    \includegraphics[width=.99\textwidth]{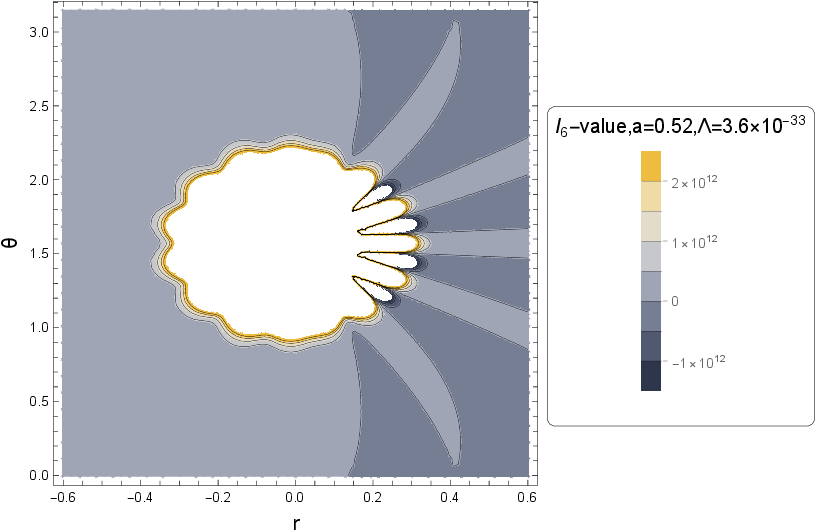}
    \caption{ Contour plot of gradient curvature invariant $I_6$.}\label{ContourI6kdsa052}
  \end{subfigure}%
  \begin{subfigure}[b]{.60\linewidth}
    \centering
    \includegraphics[width=.99\textwidth]{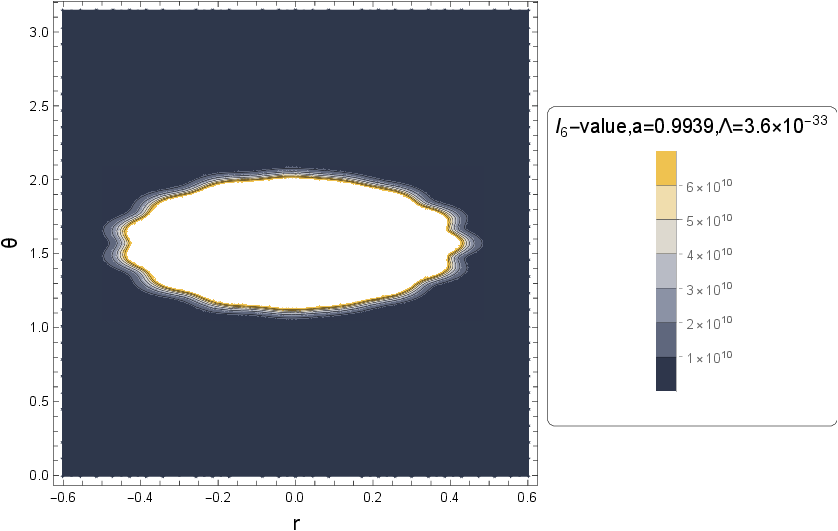}
    \caption{Contour plot of gradient curvature invariant $I_6$.}\label{ContourI6kds09939}
  \end{subfigure}\\
  \begin{subfigure}[b]{.60\linewidth}
    \centering
    \includegraphics[width=.99\textwidth]{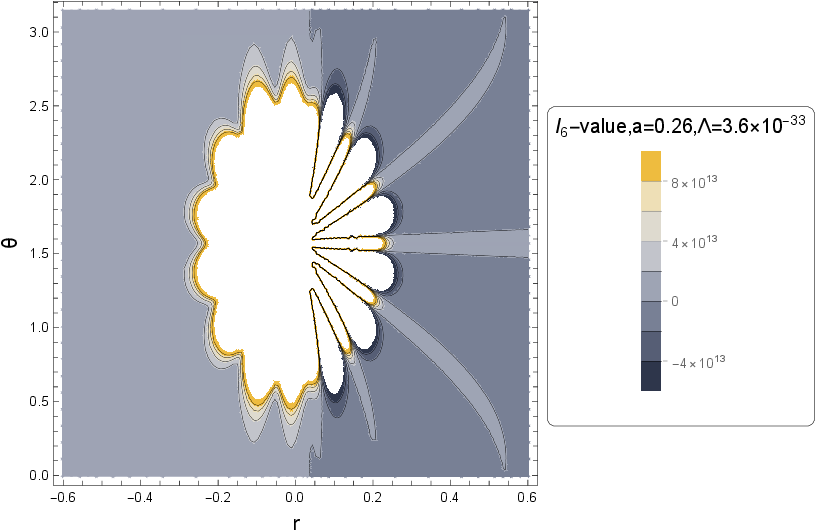}
    \caption{Contour Plot of  gradient curvature invariant $I_6$.}\label{ContourI6kdsea026}
  \end{subfigure}%
  \begin{subfigure}[b]{.60\linewidth}
    \centering
    \includegraphics[width=.99\textwidth]{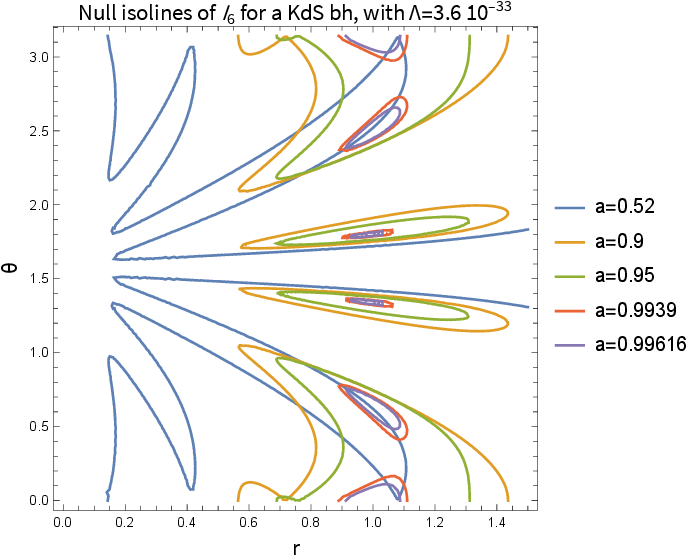}
    \caption{Plot of $I_6=0$ for different values of $a$.}\label{nullIsolinesI6}
  \end{subfigure}%
  \caption{Contour plots of the gradient curvature $I_6$ invariant, Eqn.(\ref{gradientcurvaI6}), for the Kerr-(anti-)de Sitter black hole. (a) for spin parameter $a=0.52$, charge $q=0$,dimensionless cosmological parameter $\Lambda=3.6\times 10^{-33},m=1$. (b) For spin  $a=0.9939$, charge $q=0$,dimensionless cosmological parameter $\Lambda=3.6\times 10^{-33},m=1$. (c) For low spin  $a=0.26$, electric charge $q=0$,dimensionless cosmological parameter $\Lambda=3.6\times 10^{-33}$ and mass $m=1$. (d) Plot of $I_6=0$ for $a=0.52,0.9,0.95,0.9939,0.99616$ and $\Lambda=3.6\times 10^{-33},m=1$. The enclosed area decreases for increasing Kerr parameter $a$. }\label{ContourPlotsgradI6curva}
\end{figure}

\begin{figure}[ptbh]
\centering
  \begin{subfigure}[b]{.60\linewidth}
    \centering
    \includegraphics[width=.99\textwidth]{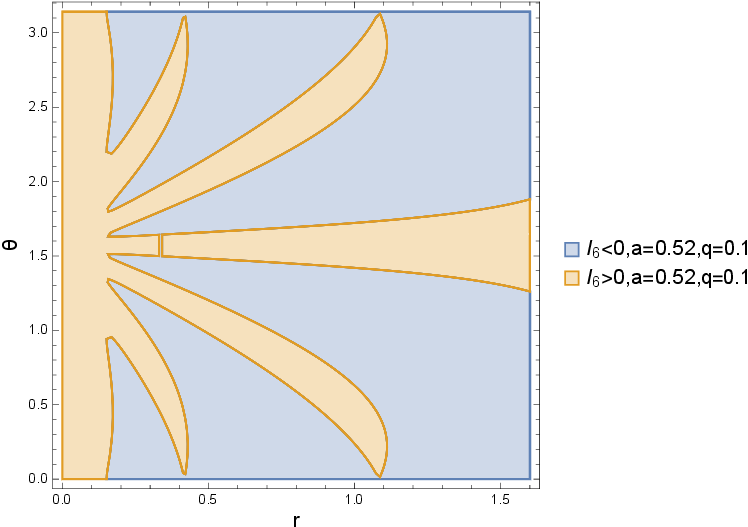}
    \caption{Region plot of $I_6$ for KN BH.}\label{RegI6kna052q01}
  \end{subfigure}%
  \begin{subfigure}[b]{.60\linewidth}
    \centering
    \includegraphics[width=.99\textwidth]{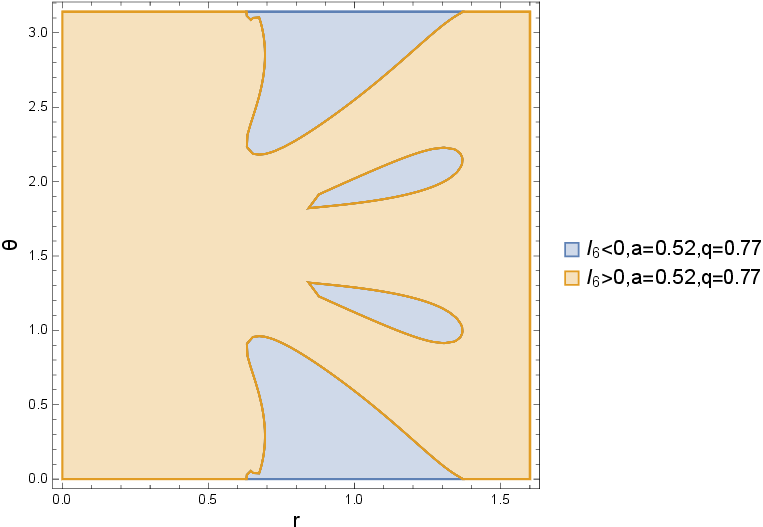}
    \caption{Region plot of $I_6$ for KN BH.}\label{RegiI6kna052q077}
  \end{subfigure}\\
  \begin{subfigure}[b]{.60\linewidth}
    \centering
    \includegraphics[width=.99\textwidth]{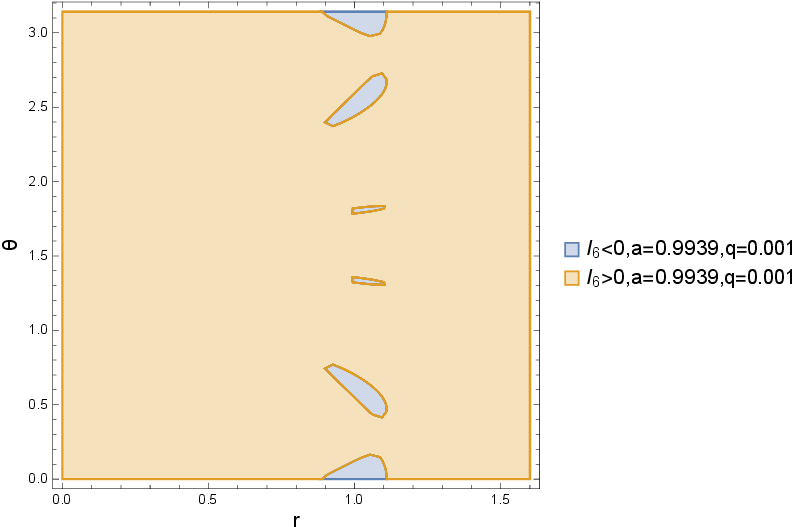}
    \caption{Region plot of $I_6$ for KN BH.}\label{RegiI6a09939q0001}
  \end{subfigure}%
  \begin{subfigure}[b]{.60\linewidth}
    \centering
    \includegraphics[width=.99\textwidth]{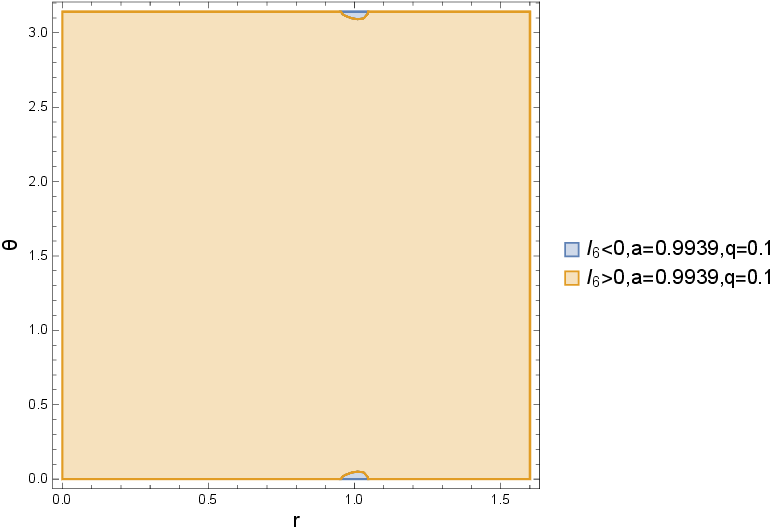}
    \caption{Region plot of $I_6$ for KN BH.}\label{regplotI6a09939q01}
  \end{subfigure}%
  \caption{Region plots of $I_6$ the norm of the gradient vector $l_{\mu}=-\nabla_{\mu}I_2$, for the Kerr-Newman black hole. (a) for spin parameter $a=0.52$, charge $q=0.1,m=1$. (b) For spin  $a=0.52$, charge $q=0.77,m=1$. (c) For high spin  $a=0.9939$, electric charge $q=0.001$,  and mass $m=1$. (d) for $a=0.9939,q=0.1,m=1$. }\label{RegionplotsofsignI6}
\end{figure}

In the following theorem we computed the invariant $I_7\equiv k_{\mu}l^{\mu}$:
\begin{theorem}
The closed form analytic expression for the invariant $I_7$ in the Kerr-(anti-)de Sitter spacetime is given by:
\begin{align}
&I_7=-\frac{27648 m^{4} r \cos \! \left(\theta \right) a}{\left(r^{2}+a^{2} \cos \! \left(\theta \right)^{2}\right)^{15}}\left(7 \cos \! \left(\theta \right)^{6} a^{6}-35 a^{4} \cos \! \left(\theta \right)^{4} r^{2}+21 \cos \! \left(\theta \right)^{2} a^{2} r^{4}-r^{6}\right) \nonumber \\
&\times \left(\cos \! \left(\theta \right)^{6} a^{6}-21 a^{4} \cos \! \left(\theta \right)^{4} r^{2}+35 \cos \! \left(\theta \right)^{2} a^{2} r^{4}-7 r^{6}\right) \nonumber \\
&\times\left(\Lambda  \cos \! \left(\theta \right)^{4} a^{4}-\Lambda  \cos \! \left(\theta \right)^{2} a^{4}-\Lambda  a^{2} r^{2}-\Lambda  r^{4}+3 a^{2} \cos \! \left(\theta \right)^{2}-6 m r +3 r^{2}\right).
\label{difinsieben}
\end{align}
\end{theorem}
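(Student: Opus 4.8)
The plan is to compute $I_7 = k_\mu l^\mu$ directly from its definition $k_\mu = -\nabla_\mu I_1$, $l_\mu = -\nabla_\mu I_2$. Because $I_1$ and $I_2$ are scalars and the Kerr-(anti-)de Sitter spacetime is stationary and axisymmetric, both invariants depend only on $r$ and $\theta$, so $\nabla_\mu I_j = \partial_\mu I_j$ with only the $r$ and $\theta$ components surviving. Since the $r$-$\theta$ block of the Boyer-Lindquist metric (\ref{KNADSelement}) is diagonal ($g_{r\theta}=0$ and the $t$, $\phi$ directions are annihilated by $\partial_\mu I_j$), the contraction collapses to $I_7 = g^{rr}\,\partial_r I_1\,\partial_r I_2 + g^{\theta\theta}\,\partial_\theta I_1\,\partial_\theta I_2$, with $g^{rr} = \Delta_r/\rho^2$ and $g^{\theta\theta} = \Delta_\theta/\rho^2$. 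One would take the known $I_1$, $I_2$ from \cite{KraniotisCurvature}, differentiate, and contract; this is the routine part, handled in Maple.

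The key step that explains the factored form in (\ref{difinsieben}) is to pass to the complex Weyl scalar. For this type D geometry only $\psi_2 = -m/(r-ia\cos\theta)^3$ is nonvanishing and, as for all members of the Kerr-NUT-(anti-)de Sitter family, it is independent of $\Lambda$; hence $I_1$ and $I_2$ here coincide with their pure-Kerr expressions. Writing $\zeta := r + i a\cos\theta$ so that $\rho^2 = \zeta\bar\zeta$, the two invariants assemble into a single object $\mathcal{J} := I_1 + i I_2 \propto m^2/\zeta^{6}$ (the branch $\zeta$ versus $\bar\zeta$ and the overall sign being fixed by the orientation convention used for the dual invariant $I_2$), so that $I_1 = \mathrm{Re}\,\mathcal{J}$ and $I_2 = \mathrm{Im}\,\mathcal{J}$. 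Using $\partial_r\zeta = 1$ and $\partial_\theta\zeta = -ia\sin\theta$ together with the identity $\partial_\mu I_1\,\partial_\mu I_2 = \tfrac12\,\mathrm{Im}\!\big[(\partial_\mu\mathcal{J})^2\big]$, I would write both surviving terms as imaginary parts of $(\partial_r\mathcal{J})^2 \propto m^4/\zeta^{14}$ and $(\partial_\theta\mathcal{J})^2 \propto -a^2\sin^2\theta\, m^4/\zeta^{14}$.

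Assembling the two contributions, the real, $\Lambda$-dependent coefficients $\Delta_r$ and $\Delta_\theta$ enter linearly and factor out of the imaginary part, producing the combination $\Delta_r - a^2\sin^2\theta\,\Delta_\theta = -\Xi^2\rho^2\,g_{tt}$, which is exactly the ergosurface factor $\Lambda a^4\cos^4\theta - \Lambda a^4\cos^2\theta - \Lambda a^2 r^2 - \Lambda r^4 + 3a^2\cos^2\theta - 6mr + 3r^2$ of (\ref{difinsieben}). The remaining complex part is $\mathrm{Im}[\zeta^{14}] = 2\,\mathrm{Re}[\zeta^7]\,\mathrm{Im}[\zeta^7]$, and since $\mathrm{Re}[\zeta^7] = -r\,(7a^6\cos^6\theta - 35 a^4 r^2\cos^4\theta + 21 a^2 r^4\cos^2\theta - r^6)$ and $\mathrm{Im}[\zeta^7] = -a\cos\theta\,(a^6\cos^6\theta - 21 a^4 r^2\cos^4\theta + 35 a^2 r^4\cos^2\theta - 7r^6)$, these reproduce precisely the two degree-six polynomial factors, together with the kinematic prefactor $m^4\,r\,a\cos\theta/\rho^{30}$. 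The overall constant $-27648 = -12\cdot 48^2$ is then fixed by the type D normalization $|\mathcal{J}| = 48\,|\psi_2|^2$ used in \cite{KraniotisCurvature}.

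The main obstacle is twofold. Conceptually, one must justify that $\psi_2$ (and hence $I_1$, $I_2$) is genuinely $\Lambda$-independent and that no $g^{r\theta}$ cross-term contaminates the contraction; both follow from the type D structure and the diagonal $r$-$\theta$ block of (\ref{KNADSelement}). Computationally, the honest verification that the unwieldy rational expressions for $\partial_r I_1\,\partial_r I_2$ and $\partial_\theta I_1\,\partial_\theta I_2$ really collapse into the single clean product above is the heaviest step. The complex-variable bookkeeping sketched here is what makes the final factorization transparent rather than accidental, and it is the route I would use to organise and cross-check the Maple output.
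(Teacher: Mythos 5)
Your proposal is correct and takes essentially the same route as the paper: the theorem itself is obtained by direct Maple evaluation of $k_{\mu}l^{\mu}=g^{rr}\partial_{r}I_{1}\partial_{r}I_{2}+g^{\theta\theta}\partial_{\theta}I_{1}\partial_{\theta}I_{2}$, and your complex-scalar bookkeeping via $I_{1}+iI_{2}\propto \Psi_2^{2}$ is precisely the Newman--Penrose cross-check the paper carries out in section \ref{RogerPenN} (where $I_7=-2\cdot 48^{2}\,\Im(\Psi_2^{2}\nabla_{\mu}\Psi_2\nabla^{\mu}\Psi_2)$). Your identification of the ergosurface factor as $\tfrac13(\Delta_r^{KN}-a^{2}\sin^{2}\theta\,\Delta_{\theta})$ and of the two sextic factors as $\Re[\zeta^{7}]$ and $\Im[\zeta^{7}]$ correctly reproduces every factor of (\ref{difinsieben}), including the constant $27648=12\cdot 48^{2}$.
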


In Fig.\ref{TRIADI17} we display three-dimensional plots of the curvature invariant $I_7$ as a function of the Boyer-Lindquist coordinates $r$ and $\theta$, for three sets of values for the spin,cosmological constant, electric charge and mass of the black hole.

\begin{figure}[ptbh]
\centering
  \begin{subfigure}[b]{.60\linewidth}
    \centering
    \includegraphics[width=.99\textwidth]{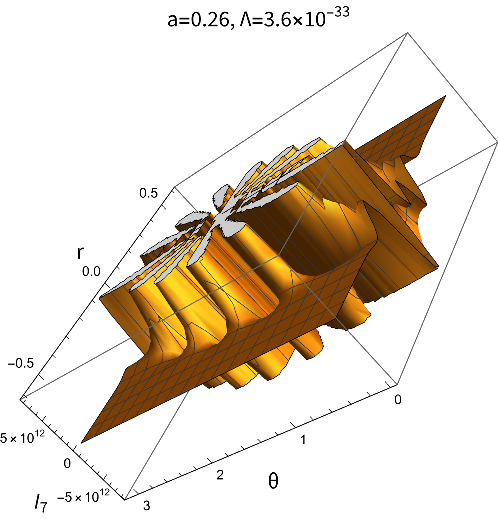}
    \caption{ 3D plot of  $I_7$.}\label{graphI7a026q0Lobs}
  \end{subfigure}%
  \begin{subfigure}[b]{.60\linewidth}
    \centering
    \includegraphics[width=.99\textwidth]{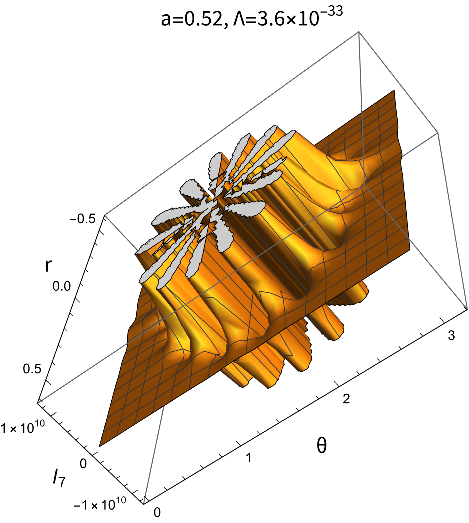}
    \caption{3D plot of $I_7$}\label{graphI7a052Lobs}
  \end{subfigure}\\
  \begin{subfigure}[b]{.60\linewidth}
    \centering
    \includegraphics[width=.99\textwidth]{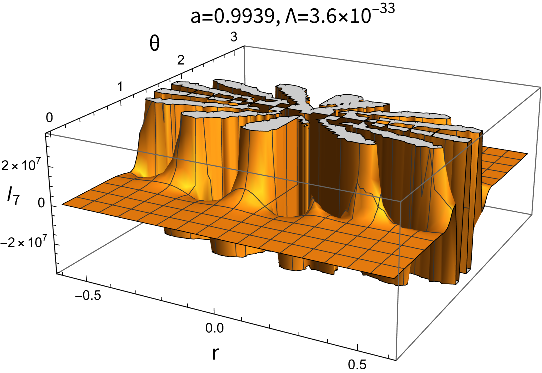}
    \caption{3D Plot of $I_7$.}\label{grafosI7a09939Lobs}
  \end{subfigure}%
  \caption{3D plots of the differential invariant,  $I_7$, plotted as a function of the Boyer-Lindquist coordinates $r$ and $\theta$ . (a) for spin parameter $a=0.26$, charge $q=0$, $\Lambda=3.6\times 10^{-33}$,$m=1$. (b) For spin  $a=0.52$, charge $q=0$,$\Lambda=3.6\times 10^{-33}$, $m=1$. (c) For high spin  $a=0.9939$, electric charge $q=0$,$\Lambda=3.6\times 10^{-33}$  and mass $m=1$.}\label{TRIADI17}
\end{figure}
We also note from (\ref{difinsieben}) that the invariant $I_7$ vanishes on the boundary of the \textit{ergosphere} region.

 In \cite{LakeZwei}, the following syzygy was discovered for the Kerr metric:
  \begin{equation}
  I_6-I_5+\frac{12}{5}(I_1I_3-I_2I_4)=0.
  \label{erstesyzygie}
  \end{equation}
  We have discovered. using our explicit algebraic expressions for the curvature invariants, that Eqn.(\ref{erstesyzygie}) \textit{also} holds for the Kerr-(anti-) de Sitter black hole.
Moreover, we find that the following syzygy is valid for the case of the Kerr-(anti-)de Sitter spacetime \footnote{This syzygy was discovered in the Kerr case in \cite{PageD}. Our result is that this syzygy is also satisfied for the Kerr-(anti-)de Sitter black hole. }:
\begin{align}
I_7=\frac{6}{5}(I_1I_4+I_2I_3)
\label{zweisyzygi}
\end{align}
The syzygies (\ref{erstesyzygie}) and (\ref{zweisyzygi}) may be expressed as the real and imaginary part of the complex syzygy \cite{PageD}:
\begin{equation}
\nabla_{\mu}(I_1+iI_2)\nabla^{\mu}(I_1+iI_2)=\frac{12}{5}(I_1+iI_2)(I_3+iI_4).
\end{equation}
On the other hand, we computed the following invariant for the Kerr-(anti-)de Sitter black hole:
\begin{align}
(I_1+i I_2)^4 (I_3-i I_4)^3&=\frac{73383542784000 m^{14}}{\left(r^{2}+a^{2} \cos \! \left(\theta \right)^{2}\right)^{27}} \Biggl(\Lambda  \cos \! \left(\theta \right)^{4} a^{4}-\Lambda  \cos \! \left(\theta \right)^{2} a^{4}-\Lambda  a^{2} r^{2}\nonumber \\
&-\Lambda  r^{4}+3 a^{2} \cos \! \left(\theta \right)^{2}-6 m r +3 r^{2}\Biggr)^{3},
\end{align}
which is a purely real expression.
Thus,
\begin{align}
\Im((I_1+i I_2)^4 (I_3-i I_4)^3)=0.
\end{align}
This is equivalent with the syzygy:
\begin{align}
4I_1I_2I_3(I_3^2-3I_4^2)(I_1^2-I_2^2)=I_4(3I_3^2-I_4^2)(-6I_1^2I_2^2+I_2^4+I_1^4).
\label{dreisyzygy}
\end{align}

\begin{theorem}\label{difanalQ1}
The exact analytic expression for the differential curvature invariant $Q_1$ for the Kerr-(anti-) de Sitter black hole is the following:
\begin{align}
Q_1=-\frac{ \left(\Lambda  \cos \! \left(\theta \right)^{4} a^{4}-\Lambda  \cos \! \left(\theta \right)^{2} a^{4}-\Lambda  a^{2} r^{2}-\Lambda  r^{4}+3 a^{2} \cos \! \left(\theta \right)^{2}-6 m r +3 r^{2}\right) \left(a^{2} \cos \! \left(\theta \right)^{2}-r^{2}\right)}{3 \left(r^{2}+a^{2} \cos \! \left(\theta \right)^{2}\right)^{\frac{3}{2}} m}.
\label{Q1KerrdeSitter}
\end{align}
\end{theorem}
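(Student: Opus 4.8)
The plan is to sidestep a head-on expansion of the quartic-root normalisation and instead exploit the complex packaging of the Weyl invariants already introduced in the paper. Write $\mathcal{I}:=I_1+iI_2$ and $\mathcal{J}:=I_3+iI_4$, and recall $k_\mu=-\nabla_\mu I_1$, $l_\mu=-\nabla_\mu I_2$, so that $\nabla_\mu\mathcal{I}=-(k_\mu+il_\mu)$ and hence $\nabla_\mu\mathcal{I}\,\nabla^\mu\mathcal{I}=(I_5-I_6)+2iI_7$. A one-line expansion of a single complex product then reproduces precisely the numerator of $Q_1$ in definition (\ref{Qeins}):
\begin{equation}
(I_1^2-I_2^2)(I_5-I_6)+4I_1I_2I_7=\operatorname{Re}\!\left(\mathcal{I}^2\,\overline{\nabla_\mu\mathcal{I}\,\nabla^\mu\mathcal{I}}\right).
\end{equation}

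First I would insert the complex syzygy $\nabla_\mu\mathcal{I}\,\nabla^\mu\mathcal{I}=\tfrac{12}{5}\,\mathcal{I}\mathcal{J}$, which the paper has already established for the Kerr-(anti-)de Sitter metric. Using $\mathcal{I}^2\overline{\mathcal{I}}=|\mathcal{I}|^2\mathcal{I}=(I_1^2+I_2^2)\,\mathcal{I}$ collapses the numerator to $\tfrac{12}{5}(I_1^2+I_2^2)\operatorname{Re}(\mathcal{I}\overline{\mathcal{J}})=\tfrac{12}{5}(I_1^2+I_2^2)(I_1I_3+I_2I_4)$, so that one power of $(I_1^2+I_2^2)$ cancels and
\begin{equation}
Q_1=\frac{4}{5\sqrt{3}}\,\frac{I_1I_3+I_2I_4}{\left(I_1^2+I_2^2\right)^{5/4}}.
\end{equation}
This already eliminates $I_5,I_6,I_7$ and reduces the task to the non-differential invariants together with $I_3,I_4$.

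The remaining step is to evaluate the quotient in closed form. Since the Weyl scalar $\Psi_2=-m/(r-ia\cos\theta)^3$ is unaffected by $\Lambda$, one has $\mathcal{I}=48m^2/\bar{\zeta}^{\,6}$ with $\zeta:=r-ia\cos\theta$ and $\rho^2=\zeta\bar{\zeta}=r^2+a^2\cos^2\theta$, whence $I_1^2+I_2^2=|\mathcal{I}|^2=(48m^2)^2/\rho^{12}$ (the conjugate power $\bar\zeta^{-6}$ rather than $\zeta^{-6}$ is pinned by the sign convention of the dual in $I_2$, equivalently by consistency with $\mathcal{J}\propto\zeta^{8}$ below). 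From the factored Kerr-(anti-)de Sitter forms of $I_3$ (Eqn.(\ref{LAI3kdS})) and of $I_4$ one recognises the crucial identity $(u^4-6u^2v^2+v^4)+i\,4uv(u^2-v^2)=\zeta^{4}$ with $u=a\cos\theta,\ v=r$, which packages both into
\begin{equation}
\mathcal{J}=I_3+iI_4=\frac{240\,m^2\,\mathcal{G}}{\rho^{18}}\,\zeta^{8},
\end{equation}
where $\mathcal{G}:=\Lambda a^4\cos^4\theta-\Lambda a^4\cos^2\theta-\Lambda a^2r^2-\Lambda r^4+3a^2\cos^2\theta-6mr+3r^2=-3\Xi^2\rho^2 g_{tt}$ is exactly the ergosurface polynomial. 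Then $\mathcal{I}\overline{\mathcal{J}}=\tfrac{11520\,m^4\mathcal{G}}{\rho^{18}}\,\bar\zeta^{2}$, so $I_1I_3+I_2I_4=\operatorname{Re}(\mathcal{I}\overline{\mathcal{J}})=\tfrac{11520\,m^4\mathcal{G}}{\rho^{18}}(r^2-a^2\cos^2\theta)$. Substituting this together with $(I_1^2+I_2^2)^{5/4}=(48m^2)^{5/2}/\rho^{15}=9216\sqrt{3}\,m^5/\rho^{15}$, the powers of $\rho$, the power of $m$, and the numerical constants $\tfrac{4}{5\sqrt3}\cdot\tfrac{11520}{9216\sqrt3}$ collapse to $\tfrac13$, producing $Q_1=\mathcal{G}\,(r^2-a^2\cos^2\theta)/(3m\rho^3)=-\mathcal{G}\,(a^2\cos^2\theta-r^2)/(3m\rho^3)$, i.e. Eqn.(\ref{Q1KerrdeSitter}).

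I expect the main obstacle to be the factorisation bookkeeping rather than anything conceptual: one must verify $\mathcal{J}=\tfrac{240m^2\mathcal{G}}{\rho^{18}}\zeta^{8}$ by matching it against the long explicit polynomials of Eqn.(\ref{LAI3kdS}) and its $I_4$ counterpart, and confirm the $\Lambda$-independence of $\mathcal{I}$ so that no stray factor of $\Xi$ survives. The overall sign is then automatic, since $(I_1^2+I_2^2)^{5/4}$ is manifestly positive and the leading minus sign in the statement is merely $(r^2-a^2\cos^2\theta)=-(a^2\cos^2\theta-r^2)$. As an independent cross-check, and the brute-force alternative, one may substitute the explicit expressions for $I_1,I_2$ from \cite{KraniotisCurvature} and for $I_5,I_6,I_7$ directly into definition (\ref{Qeins}) and simplify in Maple; the two routes must agree.
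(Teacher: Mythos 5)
Your proposal is correct, and it reaches Eqn.(\ref{Q1KerrdeSitter}) by a genuinely different route from the paper. The paper obtains $Q_1$ by direct symbolic computation: the explicit tensorial expressions for $I_1,I_2,I_5,I_6,I_7$ are substituted into the definition (\ref{Qeins}) and simplified in Maple, so the stated theorem is the output of that calculation rather than a structured argument. You instead exploit the complex packaging $\mathcal{I}=I_1+iI_2$, the identity $\nabla_\mu\mathcal{I}\nabla^\mu\mathcal{I}=(I_5-I_6)+2iI_7$, and the complex syzygy $\nabla_\mu\mathcal{I}\nabla^\mu\mathcal{I}=\tfrac{12}{5}\mathcal{I}\mathcal{J}$ (which the paper does establish for Kerr-(anti-)de Sitter), collapsing $Q_1$ to $\tfrac{4}{5\sqrt3}(I_1I_3+I_2I_4)/(I_1^2+I_2^2)^{5/4}$ and then to a single complex factorisation. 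Your key identifications check out against the paper's formulas: the two quartic factors in Eqn.(\ref{LAI3kdS}) multiply to $\Re[(a\cos\theta+ir)^8]=\Re[\zeta^8]$ and the polynomial factor of $I_4$ is $\tfrac18\Im[\zeta^8]$, so indeed $\mathcal{J}=240m^2\mathcal{G}\,\zeta^8/\rho^{18}$; with $\mathcal{I}=48m^2/\bar\zeta^{6}$ one gets $\Re(\mathcal{I}\overline{\mathcal{J}})=11520\,m^4\mathcal{G}(r^2-a^2\cos^2\theta)/\rho^{18}$, and the constants $\tfrac{4}{5\sqrt3}\cdot\tfrac{11520}{9216\sqrt3}=\tfrac13$ reproduce the theorem exactly (one can even confirm the whole chain independently from $(\nabla\zeta)^2=(\Delta_r-a^2\sin^2\theta\,\Delta_\theta)/\rho^2=\mathcal{G}/(3\rho^2)$). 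What your approach buys is transparency: it explains structurally why the ergosurface polynomial $\mathcal{G}=-3\Xi^2\rho^2g_{tt}$ and the factor $a^2\cos^2\theta-r^2$ appear, why $\Lambda$ enters only through $\mathcal{G}$, and it is hand-verifiable. What it costs is that it leans on the syzygy (itself only verified computationally in the paper for this metric) and on a convention-sensitive choice between $\zeta^{-6}$ and $\bar\zeta^{-6}$ for $\mathcal{I}$, which you correctly flag and pin down by consistency; the paper's brute-force route is convention-independent and self-contained but opaque. The two routes agree.
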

\begin{corollary}\label{outerergosurfaceQ1det}
The invariant $Q_1$ vanishes on the boundary of the ergosphere region and at $r=\pm a\cos(\theta)$. $Q_1$ is strictly positive outside the outer ergosurface, vanishes at the ergosurface, and then becomes negative as soon as we cross it. The surfaces associated with the roots of $Q_1$ at the inner ergosurface and at $r=\pm a\cos(\theta)$, for the observed $\Lambda$ lie strictly within the outer ergosurface regardless of the values of $m$ and $a$. Thus, these additional roots of $Q_1$ do not affect its capability to detect the outer ergosurface. We conclude that $Q_1$ is a very convenient invariant to use for detecting the ergosurface in Kerr-de Sitter spacetime.
\end{corollary}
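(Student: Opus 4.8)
The plan is to extract everything from the closed form (\ref{Q1KerrdeSitter}) and reduce the claim to a sign analysis of its two numerator factors. Writing $c\equiv\cos\theta$, set
\[
G(r,\theta)\equiv\Lambda a^4 c^4-\Lambda a^4 c^2-\Lambda a^2 r^2-\Lambda r^4+3a^2 c^2-6mr+3r^2,
\]
so that the numerator of $Q_1$ is $-G\,(a^2c^2-r^2)$. Since the denominator $3(r^2+a^2c^2)^{3/2}m$ is strictly positive for $m>0$ away from the ring singularity, the zero set of $Q_1$ is exactly $\{G=0\}\cup\{r=\pm ac\}$ and $\operatorname{sgn}Q_1=-\operatorname{sgn}\!\big(G\,(a^2c^2-r^2)\big)$. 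By the relation $g_{tt}=-G/(3\rho^2\Xi^2)$ recorded before Eqn.(\ref{LAI3kdS}), the locus $G=0$ is precisely the family of infinite-redshift (ergo) surfaces; this already gives the first sentence of the corollary.

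Next I would, for fixed $\theta$, treat $G$ as a quartic in $r$ with negative leading coefficient $-\Lambda$ and with $G(0,\theta)=a^2c^2(3-\Lambda a^2\sin^2\theta)>0$ for the observed small $\Lambda>0$. Using Descartes' rule of signs (Theorem \ref{rene}) together with Bolzano's theorem (Theorem \ref{bernardbolzano}), the aim is to show that on $r>0$ the equation $G=0$ has exactly three roots $r_E^-<r_E^+<r_c$ (inner ergosurface, outer ergosurface, cosmological ergosurface), with $G>0$ on $(0,r_E^-)\cup(r_E^+,r_c)$ and $G<0$ on $(r_E^-,r_E^+)\cup(r_c,\infty)$. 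Because $r_c\sim\sqrt{3/\Lambda}$ is enormous for the observed $\Lambda$, the static band $(r_E^+,r_c)$ is nonempty and the two black-hole ergosurfaces are the only relevant ones.

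The crucial step is to trap the spurious root $r=ac$ (take $c\ge0$) strictly below $r_E^+$. Evaluating the quartic there collapses: the $\pm\Lambda a^4c^4$ terms cancel and one finds
\[
G(ac,\theta)=2ac\big(ac(3-\Lambda a^2)-3m\big).
\]
For a sub-extremal black hole one has $a(1-\Lambda a^2/3)\le m$, which with $|c|\le1$ gives $ac(3-\Lambda a^2)\le a(3-\Lambda a^2)\le 3m$, hence $G(ac,\theta)\le0$, strictly off the axis and the extremal limit. Thus $r=ac$ lies in the ergoregion $[r_E^-,r_E^+]$, i.e. strictly inside the outer ergosurface, and the inner ergosurface root trivially satisfies $r_E^-<r_E^+$; this is the content of the penultimate sentence of the corollary. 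Consequently on $(r_E^+,r_c)$ we have $G>0$ and $a^2c^2-r^2<0$, so $Q_1>0$; at $r=r_E^+$, $Q_1=0$; and as $r$ decreases through $r_E^+$ the factor $G$ changes sign from $+$ to $-$ while $a^2c^2-r^2$ stays negative, so $Q_1$ turns negative. This is the monotone sign change across the outer ergosurface, and it shows the spurious roots are screened from any observer approaching from outside.

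The main obstacle I anticipate is the global, parameter-uniform root count and ordering of the quartic $G$: one must guarantee, for every admissible $(m,a)$ at the observed $\Lambda$, that exactly the three positive ergo-roots occur in the stated order and that the Descartes/Bolzano bookkeeping is airtight. Everything else is mechanical once the clean evaluation $G(ac,\theta)=2ac\big(ac(3-\Lambda a^2)-3m\big)\le0$ is in hand, since it is precisely this inequality that pins the spurious root inside the outer ergosurface.
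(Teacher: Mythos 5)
Your proposal is correct and follows the only natural route, which is also the paper's implicit one: the paper offers no written proof of this corollary, letting it rest on reading the factored closed form in Theorem \ref{difanalQ1}, where the quartic factor is $-3\rho^{2}\Xi^{2}g_{tt}$ (as recorded just before Eqn.~(\ref{LAI3kdS})) and the second factor vanishes at $r=\pm a\cos\theta$. What you add beyond the paper is genuinely useful. First, the Descartes--Bolzano root count for the quartic $G$ mirrors the technique the paper itself deploys only later, for $Q_2$; applying it here to order the three positive ergo-roots is a natural transplant. Second, and more importantly, the clean evaluation $G(a\cos\theta,\theta)=2a\cos\theta\,\bigl(a\cos\theta\,(3-\Lambda a^{2})-3m\bigr)\le 0$ is a quantitative lemma the paper never states, and it is exactly what upgrades the assertion ``the additional roots lie strictly within the outer ergosurface'' from a claim checked on plots to a proof; note that the inequality you need, $a\cos\theta\,(1-\Lambda a^{2}/3)\le m$, follows immediately from $a\le m$ and $\Lambda>0$, so you need not appeal to any delicate Kerr--de Sitter extremality bound. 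Two small loose ends, neither fatal: your statement $G(0,\theta)>0$ fails precisely at the equator, where the constant term $a^{2}\cos^{2}\theta\,(3-\Lambda a^{2}\sin^{2}\theta)$ vanishes and the inner ergosurface degenerates to $r=0$ (this coincides with the spurious root $r=a\cos\theta=0$, so the conclusion survives but the bookkeeping should treat $\theta=\pi/2$ separately); and the degenerate case of your key inequality occurs at the equator (where $a\cos\theta=0$) rather than ``off the axis'' as you wrote. With those caveats noted, the sign analysis across $r_{E}^{+}$ and the conclusion that $Q_1$ detects the outer ergosurface are sound.
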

\begin{corollary}
For $\Lambda=0$, Eqn.(\ref{Q1KerrdeSitter}) reduces to:
\begin{equation}
Q_1=-\frac{ \left(3 a^{2} \cos \! \left(\theta \right)^{2}-6 m r +3 r^{2}\right) \left(a^{2} \cos \! \left(\theta \right)^{2}-r^{2}\right)}{3 \left(r^{2}+a^{2} \cos \! \left(\theta \right)^{2}\right)^{\frac{3}{2}} m}.
\end{equation}
This agrees with the result for the invariant $Q_1$ for the Kerr black hole, derived in \cite{LakeZwei}.
\end{corollary}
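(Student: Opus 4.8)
The plan is to specialise the closed-form expression (\ref{Q1KerrdeSitter}) of Theorem \ref{difanalQ1} to vanishing cosmological constant. The parameter $\Lambda$ enters that formula only through the four monomials $\Lambda\cos^4\theta\,a^4$, $-\Lambda\cos^2\theta\,a^4$, $-\Lambda a^2 r^2$ and $-\Lambda r^4$ sitting inside the first bracket of the numerator; the denominator $3(r^2+a^2\cos^2\theta)^{3/2}m$ and the second numerator factor $(a^2\cos^2\theta-r^2)$ are manifestly independent of $\Lambda$. Hence the limit $\Lambda\to 0$ is obtained simply by deleting exactly these four terms, with no competing singular behaviour in the denominator, and it reproduces verbatim the displayed expression. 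This establishes the first assertion without any further computation.

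It remains to confirm that this specialisation coincides with the invariant $Q_1$ computed for the Kerr geometry in \cite{LakeZwei}. To this end I would first rewrite the surviving numerator in the transparent factored form $3\bigl(r^2-2mr+a^2\cos^2\theta\bigr)\bigl(a^2\cos^2\theta-r^2\bigr)$, recognising the first factor as three times the numerator $-\rho^2 g_{tt}$ of the Kerr metric component $g_{tt}$ (equivalently, the ergosurface polynomial whose roots are the $r_E^{\pm}$ of Eqn.(\ref{radiiergosurfaces})) and the second factor as the polynomial vanishing on $r=\pm a\cos\theta$. Both factors are precisely those appearing in the Abdelqader--Lake Kerr result, so matching reduces to verifying that the overall rational function of $(r,\cos\theta)$, together with the numerical prefactor $-\tfrac13 m^{-1}$, is identical.

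The only genuine subtlety, and hence the step I expect to require care, is the normalisation and branch conventions inherent in the definition (\ref{Qeins}) of $Q_1$: the sign of the dual Weyl tensor $C^{*}$ fixes the signs of $I_2,I_4,I_7$, while the fractional power $(I_1^2+I_2^2)^{9/4}$ together with the prefactor $1/(3\sqrt{3})$ must be evaluated on the correct branch so that the net denominator comes out as $3(r^2+a^2\cos^2\theta)^{3/2}m$ rather than its absolute value or its negative. I expect this bookkeeping to be the sole nontrivial point; once the orientation of the Levi-Civita pseudotensor and the positive branch of the fractional power are fixed consistently with \cite{LakeZwei}, the two expressions agree identically, completing the proof.
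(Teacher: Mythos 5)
Your proposal is correct and is essentially what the paper does: the corollary follows by direct substitution of $\Lambda=0$ into Eqn.(\ref{Q1KerrdeSitter}), which removes exactly the four $\Lambda$-proportional monomials in the numerator while leaving the denominator untouched, and the resulting expression is then compared term-by-term with the Kerr formula of \cite{LakeZwei}. The additional remarks about branch and orientation conventions are reasonable bookkeeping but do not change the argument, which is the same trivial specialisation the paper relies on.
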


\begin{theorem}\label{difQ3inv}
The exact explicit algebraic expression for the differential curvature invariant $Q_3$ for the Kerr-(anti-) de Sitter black hole is the following:
\begin{align}
Q_3=\frac{-\Lambda  \cos \! \left(\theta \right)^{4} a^{4}-3 a^{2} \cos \! \left(\theta \right)^{2}+\Lambda  \cos \! \left(\theta \right)^{2} a^{4}-\Lambda  a^{2} r^{2}+6 a^{2}-\Lambda  r^{4}-6 m r +3 r^{2}}{6 m \sqrt{r^{2}+a^{2} \cos \! \left(\theta \right)^{2}}}.
\label{Q3KerradS}
\end{align}
\end{theorem}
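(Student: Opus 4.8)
The plan is to evaluate the defining ratio $Q_3=\frac{1}{6\sqrt{3}}\frac{I_5+I_6}{(I_1^2+I_2^2)^{5/4}}$ directly, simplifying the denominator and the numerator separately before combining them. First I would dispose of the denominator. Because the Kerr-(anti-)de Sitter metric is of Petrov type D, the only nonvanishing Weyl scalar in the principal null frame is $\Psi_2=-m/(r-ia\cos\theta)^3$, and the two algebraic Weyl invariants assemble (up to a choice of orientation) as $I_1-iI_2=48\,\Psi_2^2$. Hence $I_1^2+I_2^2=\lvert 48\Psi_2^2\rvert^2=48^2 m^4/(r^2+a^2\cos^2\theta)^6$, so that $(I_1^2+I_2^2)^{5/4}=9216\sqrt{3}\,m^5/(r^2+a^2\cos^2\theta)^{15/2}$. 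This already produces the factor $\sqrt{3}$ that cancels the $1/\sqrt{3}$ in the normalisation and fixes the powers of $m$ and of $(r^2+a^2\cos^2\theta)$ contributed by the denominator.

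Next I would assemble the numerator from the closed forms already established in this section: $I_5$ from Eqn.(\ref{Ifunfkds}) and $I_6$ from Eqn.(\ref{gradientcurvaI6}). Both carry the common prefactor $m^4/(r^2+a^2\cos^2\theta)^{15}$, so their sum equals $m^4/(r^2+a^2\cos^2\theta)^{15}$ times a single polynomial $P$ in $r,\cos\theta,a,m,\Lambda$. The crucial step is to show that $P$ contains $(r^2+a^2\cos^2\theta)^7$ as a factor; extracting it collapses the long bracket to a low-degree cofactor $P=27648\,(r^2+a^2\cos^2\theta)^7\,N$. Forming the ratio, the net power of $(r^2+a^2\cos^2\theta)$ becomes $-15+7+\tfrac{15}{2}=-\tfrac12$ and the power of $m$ becomes $4-5=-1$, so that $Q_3=\frac{27648}{6\sqrt{3}\cdot 9216\sqrt{3}}\,\frac{N}{m\sqrt{r^2+a^2\cos^2\theta}}=\frac{N}{6m\sqrt{r^2+a^2\cos^2\theta}}$, and $N$ is precisely the numerator polynomial displayed in Eqn.(\ref{Q3KerradS}).

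As internal checks I would verify the degenerate limits. Setting $a=0$ reduces $N$ to $-\Lambda r^4+3r^2-6mr=3r\,\Delta_r$, giving the Schwarzschild-(anti-)de Sitter value $Q_3=(3r-6m-\Lambda r^3)/(6m)$, which vanishes on the horizon $\Delta_r=0$; setting $\Lambda=0$ must reproduce the Kerr value of \cite{LakeZwei}. I expect the principal obstacle to be the factorisation of $I_5+I_6$ by $(r^2+a^2\cos^2\theta)^7$: the bracketed polynomials in Eqns.(\ref{Ifunfkds}) and (\ref{gradientcurvaI6}) are lengthy and collapse only after extensive cancellation, so this reduction is most safely performed with the computer algebra system (Maple) used throughout the paper, with special attention to the constant $48^{5/2}=9216\sqrt{3}$ so that the $1/(6\sqrt{3})$ normalisation returns exactly the bare factor $1/6$ in the final answer.
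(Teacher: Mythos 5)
Your proposal is correct and follows essentially the same route as the paper: the theorem is obtained by forming the defining ratio $\tfrac{1}{6\sqrt{3}}(I_5+I_6)/(I_1^2+I_2^2)^{5/4}$ from the Maple-computed closed forms of $I_5$ (Eqn.(\ref{Ifunfkds})), $I_6$ (Eqn.(\ref{gradientcurvaI6})) and $I_1^2+I_2^2=48^2m^4/(r^2+a^2\cos^2\theta)^6$, exactly as you describe, and your constant bookkeeping ($48^{5/2}=9216\sqrt{3}$, $27648/(18\cdot 9216)=1/6$) together with the factorisation of $I_5+I_6$ by $(r^2+a^2\cos^2\theta)^7$ is consistent with the paper's equatorial and on-axis specialisations of $I_5$ and $I_6$. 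Only a cosmetic slip: at $a=0$ the numerator reduces to $3\Delta_r$, not $3r\,\Delta_r$, though your final Schwarzschild--de Sitter expression $Q_3=(3r-6m-\Lambda r^3)/(6m)$ is nevertheless the correct one.
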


\begin{corollary}
For zero cosmological constant, Eqn.(\ref{Q3KerradS}) reduces to:
\begin{align}
Q_3=\frac{-a^{2} \cos \! \left(\theta \right)^{2}+2 a^{2}-2 m r +r^{2}}{2 m \sqrt{r^{2}+a^{2} \cos \! \left(\theta \right)^{2}}}.
\end{align}
This agrees with the result for the local invariant $Q_3$ for the Kerr black hole, derived in \cite{LakeZwei}.
\end{corollary}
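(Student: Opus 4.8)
The plan is to obtain this corollary directly as a specialisation of Theorem \ref{difQ3inv}, i.e. of the general closed-form expression (\ref{Q3KerradS}) for $Q_3$ in the Kerr-(anti-)de Sitter spacetime, by setting the cosmological constant to zero. No new computation of the underlying invariants $I_1,I_2,I_5,I_6$ is needed: the hard analytic work has already been done in deriving (\ref{Q3KerradS}) from the definition $Q_3\equiv\frac{1}{6\sqrt{3}}\,(I_5+I_6)/(I_1^2+I_2^2)^{5/4}$, so here one only has to carry out an elementary algebraic reduction and verify consistency with the known Kerr result.

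Concretely, I would begin by inspecting the numerator of (\ref{Q3KerradS}) and discarding every term carrying an explicit factor of $\Lambda$, namely $-\Lambda a^4\cos^4\theta$, $+\Lambda a^4\cos^2\theta$, $-\Lambda a^2 r^2$, and $-\Lambda r^4$. The surviving terms are $-3a^2\cos^2\theta+6a^2-6mr+3r^2$, while the denominator $6m\sqrt{r^2+a^2\cos^2\theta}$ is manifestly $\Lambda$-independent and is therefore unchanged. I would then extract the common factor of $3$ from the numerator, writing it as $3\bigl(-a^2\cos^2\theta+2a^2-2mr+r^2\bigr)$, and cancel it against the $6$ in the denominator to obtain
\begin{equation}
Q_3=\frac{-a^{2}\cos^{2}\theta+2a^{2}-2mr+r^{2}}{2m\sqrt{r^{2}+a^{2}\cos^{2}\theta}},
\end{equation}
which is precisely the asserted expression.

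The only remaining step is the cross-check against the literature: I would confirm that this reduced formula coincides with the expression for $Q_3$ computed for the Kerr black hole in \cite{LakeZwei}, thereby validating both the $\Lambda\to 0$ limit and the original derivation of (\ref{Q3KerradS}). There is no genuine obstacle in this argument; the substitution is routine and the factorisation is immediate. If anything, the one point deserving a word of care is that the $\Lambda$-dependent terms in (\ref{Q3KerradS}) enter only additively in the numerator and nowhere in the denominator, so the limit is regular and may be taken termwise without any subtlety. Hence the corollary follows at once from Theorem \ref{difQ3inv}.
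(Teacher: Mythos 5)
Your reduction is correct and is exactly how the paper obtains this corollary: set $\Lambda=0$ in Eqn.(\ref{Q3KerradS}), factor the $3$ out of the surviving numerator $-3a^2\cos^2\theta+6a^2-6mr+3r^2$, and cancel against the $6m$ in the denominator to recover the Kerr expression of \cite{LakeZwei}. Nothing further is needed.
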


Returning to the syzygies (\ref{erstesyzygie}),(\ref{zweisyzygi}) and  (\ref{dreisyzygy}) we note the following:
\begin{proposition}\label{tritisyzygiaLkerr}
We have checked that equations (\ref{erstesyzygie}),(\ref{zweisyzygi}) and (\ref{dreisyzygy}) do not hold in the cases of the Kerr-Newman-(anti-)de Sitter and Kerr-Newman black holes.
\end{proposition}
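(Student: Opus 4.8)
The plan is to prove these as \textbf{non}-identities by producing explicit counterexamples, while making the underlying mechanism transparent. The conceptual reason to expect failure is that the three syzygies (\ref{erstesyzygie}), (\ref{zweisyzygi}), (\ref{dreisyzygy}) — equivalently the single complex relation $\nabla_{\mu}(I_1+iI_2)\nabla^{\mu}(I_1+iI_2)=\frac{12}{5}(I_1+iI_2)(I_3+iI_4)$ — were established under the vacuum-with-$\Lambda$ condition, where $R_{\alpha\beta}=\Lambda g_{\alpha\beta}$ forces $C_{\alpha\beta\gamma\delta;\epsilon}=R_{\alpha\beta\gamma\delta;\epsilon}$ and the single Weyl scalar $\psi_2$ scales as a pure cube of $(r-ia\cos\theta)^{-1}$. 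Switching on the electric charge $q$ adds an electromagnetic contribution to the Ricci tensor and a $q^{2}$-term to $\psi_2$, so the clean algebraic relationship among $I_1,\dots,I_7$ that encodes the complex syzygy is broken. The task is therefore to quantify this breaking.

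First I would assemble the complete set of invariants for the Kerr-Newman-(anti-)de Sitter family: $I_1,I_2$ from \cite{KraniotisCurvature}, $I_3$ and $I_4$ from Theorem \ref{synalParWeyl} (Eqn.(\ref{normcovderweyl})) and Theorem \ref{mixedsynparweyl} (Eqn.(\ref{covdermixedweyli4})), and $I_5,I_6,I_7$ from the lengthy Maple output already generated. With these in hand I would form the three residual functions
\begin{align}
\mathcal{S}_1 &\equiv I_6-I_5+\tfrac{12}{5}(I_1I_3-I_2I_4),\nonumber\\
\mathcal{S}_2 &\equiv I_7-\tfrac{6}{5}(I_1I_4+I_2I_3),\nonumber\\
\mathcal{S}_3 &\equiv 4I_1I_2I_3(I_3^2-3I_4^2)(I_1^2-I_2^2)-I_4(3I_3^2-I_4^2)(I_1^4-6I_1^2I_2^2+I_2^4),\nonumber
\end{align}
each an explicit rational function of $(r,\theta)$ and the parameters $(a,m,q,\Lambda)$, and each known to vanish identically when $q=0$ by the Kerr-(anti-)de Sitter results established earlier. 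It then suffices to show that $\mathcal{S}_1,\mathcal{S}_2,\mathcal{S}_3$ are \emph{not} identically zero for $q\neq0$.

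To exhibit this cleanly I would specialise to loci where the invariants collapse to the compact closed forms already recorded. For the first syzygy the equatorial plane $\theta=\pi/2$ is ideal: there the Chern--Pontryagin invariant $I_2$ and the mixed invariant $I_4$ vanish (each carries an overall $\cos\theta$), so $\mathcal{S}_1$ reduces to $I_6-I_5+\frac{12}{5}I_1I_3$, and substituting the equatorial forms of $I_5$, $I_6$ (Corollary \ref{normdifinv1pi2}, Eqn.(\ref{klisiCPknequato})) together with the equatorial values of $I_1,I_3$ yields a residual proportional to a positive power of $q$ times a generically nonvanishing factor, establishing $\mathcal{S}_1\not\equiv0$. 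The equatorial plane is, however, \emph{too} degenerate for the other two syzygies, since the simultaneous vanishing of $I_2,I_4,I_7$ makes (\ref{zweisyzygi}) and (\ref{dreisyzygy}) hold there trivially; for these I would evaluate instead on the symmetry axis $\theta=0$, where $I_2,I_4,I_7$ are generically nonzero, or at a single generic off-axis point with a fixed numerical choice of $(a,m,q,\Lambda)$, and display that the two sides disagree. One such nonzero evaluation already proves the proposition.

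The main obstacle is purely computational bookkeeping: the full Kerr-Newman expressions for $I_5,I_6,I_7$ are long, and naive substitution into $\mathcal{S}_3$ generates very large intermediate polynomials. I expect to control this by (i) keeping $q$ as the sole symbolic charge parameter and extracting only the lowest-order ($\mathcal{O}(q^2)$) term of each residual — this term is precisely what vanishes in the Kerr-(anti-)de Sitter limit, and its nonvanishing for $q\neq0$ is decisive — and (ii) carrying out the heavy simplification in Maple on the specialised slices rather than on the general $(r,\theta)$ expressions. The genuinely delicate point is only to avoid choosing an evaluation locus on which a syzygy happens to hold for accidental reasons of symmetry; once a suitably generic point is used, the nonvanishing of the charge-dependent residual is immediate.
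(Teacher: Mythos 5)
Your proposal is correct and is essentially the approach the paper takes: the proposition is a computational assertion with no written proof beyond the statement that the syzygies were checked symbolically and found to fail for $q\neq 0$, and your plan is exactly such a direct verification, with the added (and genuinely useful) care of noting that the equatorial plane is a degenerate locus where (\ref{zweisyzygi}) and (\ref{dreisyzygy}) hold trivially because $I_2$, $I_4$ and $I_7$ all vanish there, so those two must be tested on the axis or at a generic point. Your identification of the mechanism --- the charge sourcing $\Phi_{11}$ in the Bianchi identities so that $\nabla_\mu C_{\alpha\beta\gamma\delta}$ is no longer controlled by $\nabla_\mu\Psi_2$ alone --- is consistent with the paper's Remark that the vacuum syzygies rest on $C_{\alpha\beta\gamma\delta;\epsilon}=R_{\alpha\beta\gamma\delta;\epsilon}$.
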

We conclude that the syzygies for the cases of Proposition \ref{tritisyzygiaLkerr} remain to be found.

Abdelqader and Lake defined the following dimensionless invariant $\chi$, in order to construct an invariant measure of the \lq Kerrness \rq of a spacetime locally \cite{LakeZwei}:
\begin{equation}
\chi\equiv \frac{I_6-I_5+\frac{12}{5}(I_1I_3-I_2I_4)}{(I_1^2+I_2^2)^{5/4}}.
\end{equation}

Indeed, we computed this invariant for the Kerr-Newman-(anti-) de Sitter black hole.
We present our result for the Kerr-Newman black hole:

\begin{theorem}\label{kerrdeviation}
The invariant $\chi$ for the Kerr-Newman black hole takes the form:
\begin{align}
&\chi\overset{\Lambda=0}{=}\frac{44\sqrt{3}\, q^{2}}{5 \left(\cos \! \left(\theta \right)^{2} a^{2} m^{2}+\left(r m -q^{2}\right)^{2}\right)^{\frac{5}{2}} \left(r^{2}+a^{2} \cos \! \left(\theta \right)^{2}\right)^{4}} \Biggl(a^{12} m^{2} \left(r m -\frac{19 q^{2}}{66}\right) \cos \! \left(\theta \right)^{12}-2 \Biggl[-\frac{7 q^{6}}{132}\nonumber \\
&-\frac{17 m \left(m -\frac{176 r}{17}\right) q^{4}}{132}-\frac{3 m^{2} \left(a^{2}-\frac{25}{9} r m +\frac{427}{18} r^{2}\right) q^{2}}{11}+m^{3} r \left(a^{2}-r m +8 r^{2}\right)\Biggr] a^{10} \cos \! \left(\theta \right)^{10}+30 a^{8} \Biggl(-\frac{q^{8}}{396}\nonumber \\
&+\left(-\frac{1}{165} a^{2}+\frac{79}{990} r m -\frac{17}{180} r^{2}\right) q^{6}+\frac{9 \left(a^{2}-\frac{1079}{324} r m +\frac{143}{27} r^{2}\right) m r \,q^{4}}{55}-\frac{9 m^{2} r^{2} \left(a^{2}-\frac{71}{45} r m +\frac{307}{108} r^{2}\right) q^{2}}{11}\nonumber \\
&+m^{3} r^{3} \left(a^{2}-r m +\frac{19}{10} r^{2}\right)\Biggr) \cos \! \left(\theta \right)^{8}-84 r^{2} \Biggl[-\frac{q^{8}}{36}+\left(-\frac{2}{33} a^{2}+\frac{7}{22} r m -\frac{41}{396} r^{2}\right) q^{6}\nonumber \\
&+\frac{6 m \left(a^{2}-\frac{475}{216} r m +\frac{11}{9} r^{2}\right) r \,q^{4}}{11}-\frac{15 m^{2} \left(a^{2}-\frac{61}{45} r m +\frac{89}{90} r^{2}\right) r^{2} q^{2}}{11}+m^{3} r^{3} \left(a^{2}-r m +\frac{6}{7} r^{2}\right)\Biggr] a^{6} \cos \! \left(\theta \right)^{6}\nonumber \\
&+60 r^{4} \Biggl(-\frac{7 q^{8}}{66}+\left(-\frac{7}{33} a^{2}+\frac{392}{495} r m -\frac{287}{1980} r^{2}\right) q^{6}+\frac{63 \left(a^{2}-\frac{601}{324} r m +\frac{121}{189} r^{2}\right) m r \,q^{4}}{55}\nonumber \\
&-\frac{21 m^{2} \left(a^{2}-\frac{401}{315} r m +\frac{307}{504} r^{2}\right) r^{2} q^{2}}{11}+m^{3} r^{3} \left(a^{2}-r m +\frac{7}{12} r^{2}\right)\Biggr) a^{4} \cos \! \left(\theta \right)^{4}-10 \Biggl[-\frac{91 q^{8}}{330}\nonumber \\
&+\left(-\frac{28}{55} a^{2}+\frac{263}{165} r m -\frac{17}{60} r^{2}\right) q^{6}+\frac{108 m \left(a^{2}-\frac{2213}{1296} r m +\frac{44}{81} r^{2}\right) r \,q^{4}}{55}-\frac{27 m^{2} \left(a^{2}-\frac{167}{135} r m +\frac{427}{810} r^{2}\right) r^{2} q^{2}}{11}\nonumber \\
&+m^{3} r^{3} \left(a^{2}-r m +\frac{28}{55} r^{2}\right)\Biggr] r^{6} a^{2} \cos \! \left(\theta \right)^{2}\nonumber \\
&+\frac{2 \left(r m -q^{2}\right)^{2} r^{8} \left(-\frac{7 q^{4}}{12}+\left(-a^{2}+\frac{5}{3} r m -\frac{7}{12} r^{2}\right) q^{2}+m r \left(a^{2}-r m +\frac{1}{2} r^{2}\right)\right)}{11}\Biggr).
\label{kerrnessKN}
\end{align}
\end{theorem}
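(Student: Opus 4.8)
The plan is to compute $\chi$ directly from its definition $\chi = \bigl[I_6 - I_5 + \tfrac{12}{5}(I_1 I_3 - I_2 I_4)\bigr]/(I_1^2+I_2^2)^{5/4}$ by substituting the explicit closed forms of the six constituent invariants and simplifying. All the inputs are already at hand: $I_1$ and $I_2$ for the Kerr-Newman-(anti-)de Sitter family from \cite{KraniotisCurvature}, the norms $I_3$ and $I_4$ from Theorems \ref{synalParWeyl} and \ref{mixedsynparweyl}, and the gradient-flow norms $I_5$, $I_6$ for Kerr-Newman (computed but not displayed in this section). Setting $\Lambda=0$ in each specialises to the Kerr-Newman black hole. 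First I would assemble the numerator $N \equiv I_6 - I_5 + \tfrac{12}{5}(I_1 I_3 - I_2 I_4)$ and the denominator $(I_1^2+I_2^2)^{5/4}$ as rational expressions in $r,\cos\theta,a,m,q$, reduced over the common denominator $\rho^2 = r^2+a^2\cos^2\theta$.

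The key structural step is to handle the fractional power in the denominator cleanly. Because the Weyl tensor of Kerr-Newman is of Petrov type D, the complex invariant $I_1+iI_2$ is proportional to the square of the single nonvanishing Newman-Penrose scalar $\Psi_2 = -(m\bar\zeta - q^2)/(\zeta^3\bar\zeta)$, with $\zeta = r - i a\cos\theta$. Hence $I_1^2+I_2^2 = |I_1+iI_2|^2$ is an exact rational multiple of $|\Psi_2|^4$, so that
\begin{equation}
(I_1^2+I_2^2)^{5/4} \propto |\Psi_2|^5 = \frac{\bigl((mr-q^2)^2 + m^2 a^2\cos^2\theta\bigr)^{5/2}}{(r^2+a^2\cos^2\theta)^{10}}.
\end{equation}
This is precisely what produces the factor $\bigl(\cos^2\theta\, a^2 m^2 + (rm-q^2)^2\bigr)^{5/2}$ together with a power of $\rho^2$ in the denominator of (\ref{kerrnessKN}); extracting it this way avoids ever taking a genuine fractional power of a long polynomial.

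For the numerator I would exploit the syzygy (\ref{erstesyzygie}), which holds identically for Kerr and Kerr-(anti-)de Sitter, that is, exactly when $q=0$. This forces $N$ to vanish at $q=0$, so $N$ is divisible by $q^2$; factoring out $q^2$ both lowers the degree and explains the overall $q^2$ prefactor appearing in (\ref{kerrnessKN}), consistently with Proposition \ref{tritisyzygiaLkerr} that the syzygy fails once $q\neq 0$. I would then organise the reduced numerator as an even polynomial in $\cos\theta$ --- even powers only, by the reflection symmetry $\theta \to \pi-\theta$ of the metric --- collecting each $\cos^{2k}\theta$ coefficient as a polynomial in $r,a,m,q$, which yields the bracketed expansion displayed in the theorem.

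The hard part is purely the algebraic bulk: forming $I_3,\dots,I_6$ requires covariant derivatives of the Weyl tensor and its dual followed by full contractions, producing very large intermediate tensors, so the final simplification is best carried out in a computer-algebra system (here Maple), as for the earlier theorems of this section. As consistency checks I would confirm that the result degenerates correctly: setting $q=0$ must return $\chi=0$ (recovering the Kerr syzygy), and $a\to 0$ must annihilate every $\cos\theta$-dependent term and reproduce the spherically symmetric Reissner-Nordstr\"om value, with the denominator collapsing to $(rm-q^2)^5 r^8$. Finally I would verify the global scaling: under $(r,a,m,q)\to(\lambda r,\lambda a,\lambda m,\lambda q)$ the prefactor $q^2$, the $L^{16}$ numerator bracket and the $L^{18}$ denominator combine to weight zero, confirming that $\chi$ is dimensionless and serving as a stringent check on the bookkeeping.
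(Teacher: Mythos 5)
Your proposal is correct and follows essentially the same route as the paper: the theorem is established by direct substitution of the closed-form invariants $I_1,\dots,I_6$ (with $\Lambda=0$) into the definition of $\chi$ and simplification in a computer-algebra system. Your two organisational refinements are both consistent with the paper's own machinery — the identification $(I_1^2+I_2^2)^{5/4}\propto|\Psi_2|^5$ is exactly the relation $I_1+iI_2=48\overline{\Psi}_2^{\,2}$ used in Section~\ref{RogerPenN}, and the divisibility of the numerator by $q^2$ follows from the syzygy (\ref{erstesyzygie}) holding at $q=0$ together with the evenness of the invariants in $q$, in agreement with Proposition~\ref{tritisyzygiaLkerr}.
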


In Fig.\ref{ContourKerrnessKN6curva} we display contour plots for the scalar invariant $\chi$, eqn.(\ref{kerrnessKN}), for the Kerr-Newman black hole.

\begin{figure}[ptbh]
\centering
  \begin{subfigure}[b]{.60\linewidth}
    \centering
    \includegraphics[width=.99\textwidth]{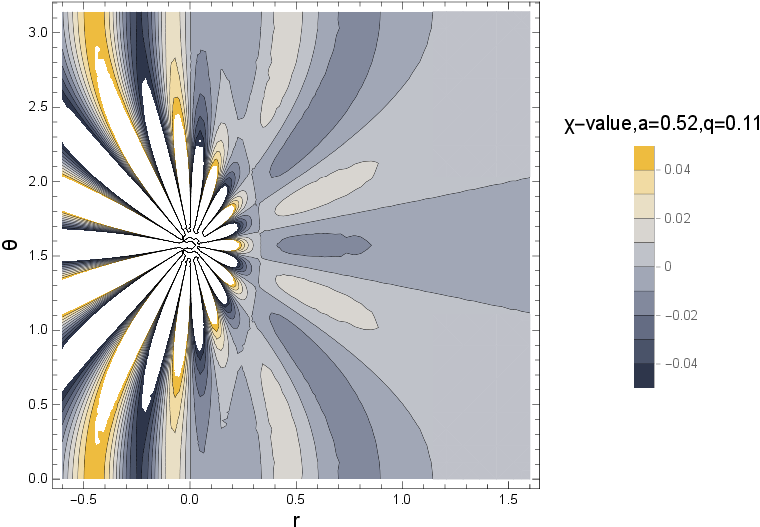}
    \caption{ Contour plot of curvature invariant $\chi$.}\label{Contourchisa052}
  \end{subfigure}%
  \begin{subfigure}[b]{.60\linewidth}
    \centering
    \includegraphics[width=.99\textwidth]{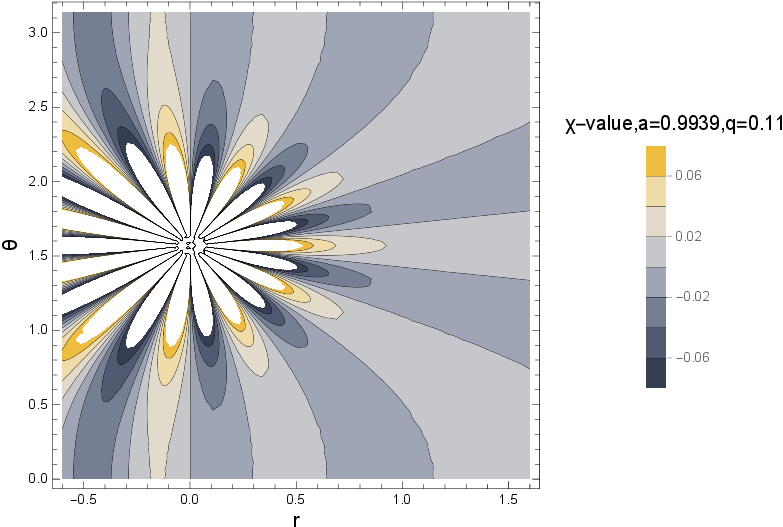}
    \caption{Contour plot of curvature invariant $\chi$.}\label{Contourchi09939}
  \end{subfigure}\\
  \begin{subfigure}[b]{.60\linewidth}
    \centering
    \includegraphics[width=.99\textwidth]{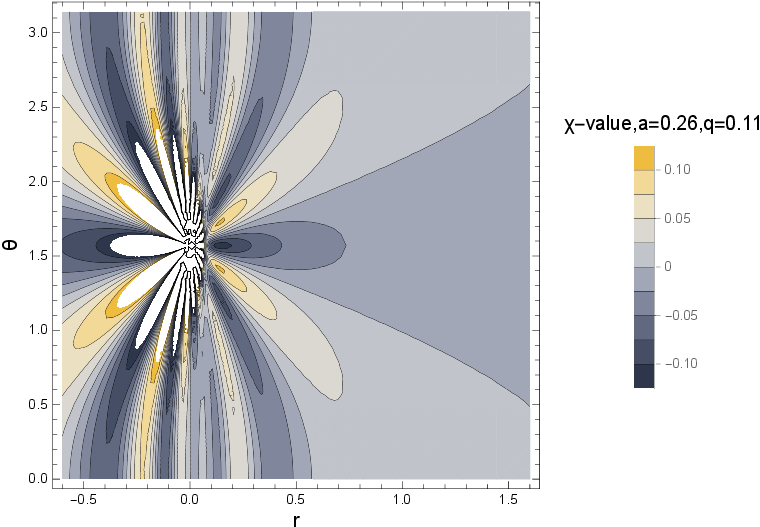}
    \caption{Contour Plot of curvature invariant $chi$.}\label{Contourchi026}
  \end{subfigure}%
  \caption{Contour plots of the curvature $\chi$ invariant, Eqn.(\ref{kerrnessKN}), for the Kerr-Newman black hole. (a) for spin parameter $a=0.52$, charge $q=0.11$,dimensionless cosmological parameter $\Lambda=0,m=1$. (b) For spin  $a=0.9939$, charge $q=0.11$,dimensionless cosmological parameter $\Lambda=0,m=1$. (c) For low spin  $a=0.26$, electric charge $q=0.11$,dimensionless cosmological parameter $\Lambda=0$ and mass $m=1$.}\label{ContourKerrnessKN6curva}
\end{figure}

\section{Differential invariants for accelerating Kerr-Newman black holes in (anti-)de Sitter spacetime}\label{epitaxydiaforikesanal}
\subsection{Accelerating and rotating charged black holes with non-zero cosmological constant }

 The Pleba\'{n}ski-Demia\'{n}ski metric covers a large family of solutions which include  the  physically most significant case: that of an accelerating, rotating and charged black hole with a non-zero cosmological constant \cite{PlebanskiDemianski}. We focus on the following metric that describes an accelerating Kerr-Newman black hole in (anti-)de Sitter spacetime \cite{GrifPod},\cite{PodolskyGrif}:
 \begin{align}
 \mathrm{d}s^{2}  & =\frac{1}{\Omega^2}\Biggl\{-\frac{Q}{\rho^2}\left[\mathrm{d}t-a\sin^2\theta\mathrm{d}\phi\right]^2+
 \frac{\rho^2}{Q}\mathrm{d}r^2+\frac{\rho^2}{P}\mathrm{d}\theta^2\nonumber \\
 &+\frac{P}{\rho^2}\sin^2\theta\left[a\mathrm{d}t-(r^2+a^2)\mathrm{d}\phi\right]^2\Biggr\},
 \label{epitaxmelaniopi}
 \end{align}
 where
 \begin{align}
 \Omega=&1-\alpha r \cos\theta,\\
 P=&1-2\alpha m \cos\theta+(\alpha^2(a^2+q^2)+\frac{1}{3}\Lambda a^2)\cos^2\theta,\\
 Q=&((a^2+q^2)-2mr+r^2)(1-\alpha^2r^2)\nonumber \\
 &-\frac{1}{3}\Lambda(a^2+r^2)r^2\label{radiieventpolyQ},
 \end{align}
 and $\alpha$ is the acceleration of the black hole.

 The metric (\ref{epitaxmelaniopi}) becomes singular at the roots of $\Omega, \rho^2,Q,P$. Some of them are pseudosingularities (mere coordinate singularities) while others are true (curvature) singularities detected by the curvature invariants.
 We shall discuss the influence of the acceleration parameter $\alpha$ on these singularities.

 $\Omega$ becomes zero if:
 \begin{equation}
 r=\frac{1}{\alpha\cos\theta}.
 \label{ConformalBound}
 \end{equation}
As the metric blows up if $\Omega\rightarrow 0$, Eq. (\ref{ConformalBound}) determines the boundary of the spacetime, thus we have to restrict to regions where $\Omega>0$.
For $\alpha=0$ there is no restriction because $\Omega=1$.

$\rho^2$ becomes zero at the ring singularity:
\begin{equation}
r=0\;\;\;\text{and}\;\;\;\cos\theta=0
\end{equation}
The ring singularity $r=0,\theta=\pi/2$ is a curvature singularity for ($m\not =0$) and is unaffected by $\alpha$.
The real roots of $Q$, yield coordinate singularities which correspond to the up to $4$ horizons of the spacetime. We investigated these pseudosingularities in \cite{KraniotisCurvature}.

In general, at the roots of $P$ would be coordinate singularities, too. These would indicate further horizons where the vector field $\partial_{\theta}$  would change its causal character, just as the vector field $\partial_r$ does at the roots of $Q$. Nevertheless, since these horizons would lie on cones $\theta={\rm constant}$ instead of on spheres $r={\rm constant}$ would be hardly of any physical relevance \cite{arneGrezen}.  The equation $P=0$ is a quadratic equation for $\cos(\theta)$:
\begin{align}
P=0\Leftrightarrow a_1\cos^2(\theta)+b_1 \cos(\theta)+c_1=0,
\label{polarexpress}
\end{align}
where $a_1\equiv(\alpha^2(a^2+q^2)+\frac{1}{3}\Lambda a^2),b_1\equiv-2\alpha m,c_1=1$.
The roots of the quadratic equation are given by the formula:
\begin{equation}
\cos\theta_{\pm}=\frac{-b_1\pm\sqrt{b_1^2-4 a_1}}{2 a_1}.
\label{polarhorizons}
\end{equation}
If the radicand (i.e. the discriminant) in (\ref{polarhorizons}) is negative $P\not =0$ is guaranteed $\forall \theta\in\mathbb{R}$. In fact in this case, for $\Lambda>0$ we have $P>0$ for $\theta\in[0,\pi]$, since the leading coefficient in (\ref{polarexpress}) is positive.
For positive radicand (discriminant) from the theory of quadratic algebraic equations we know that $P$ has the opposite sign of $a_1$ for values of $\cos\theta$ between the two real roots and has the same sign as the sign of the leading coefficient $a_1$ for values of $\cos\theta$ outside the two roots
\footnote{Nevertheless for the values of the physical black hole parameters we investigated the real roots of $P$ occur for $\theta\not\in\mathbb{R}$.}.

 For $\Lambda=0$, if $m^2\geq a^2+q^2$ the expression for $Q$ factorises as:
 \begin{equation}
 Q=(r_{-}-r)(r_{+}-r)(1-\alpha^2 r^2),
 \end{equation}
 where
 \begin{equation}
 r_{\pm}=m\pm\sqrt{m^2-a^2-q^2}.
 \end{equation}
 The expressions for the radii $r_{\pm}$ are identical to those for the location of the outer and inner horizons of the nonaccelerating Kerr-Newman black hole. However, in the present case there is another horizon at $r=\alpha^{-1}$ known in the context of the $C-$ metric as an acceleration horizon.
 When $\Lambda \not =0$, the location of all horizons is modified.

Subsequently, we compute explicit algebraic expressions for the Karlhede and   Abdelqade-Lake local curvature invariants, for the metric (\ref{epitaxmelaniopi}), with  Maple\textsuperscript{TM}2021 \footnote{The symbolic computation in this case is quite demanding. Despite the formidable complexity of the tensorial and differentiation operations involved, the symbolic computation of these differential curvature invariants runs smoothly in a modern 8GB RAM laptop.}.

 We start our computations with the calculation of the third-order Karlhede curvature invariant. In the following theorem we computed an explicit algebraic expression for the $\mathfrak{K}$ invariant for accelerating rotating black hole with $\Lambda\not=0$.
 \begin{theorem}\label{KarlEpitaKerrdS}
 The analytic computation of the  Karlhede curvature invariant for accelerating Kerr black holes in (anti)-de Sitter spacetime yields the following explicit algebraic expression:
 \begin{align}
 \mathfrak{K}=\frac{-720 m^2(\alpha r \cos(\theta)-1)^6}{\left(r^{2}+a^{2} \cos \! \left(\theta \right)^{2}\right)^{9}}\sum_{i=0}^{12}k_i\cos \! \left(\theta \right)^{i},
 \label{marvelKarlhedeaccelKerrdS}
 \end{align}
 \end{theorem}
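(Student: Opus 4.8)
The plan is to compute the scalar $\mathfrak{K}=R^{\lambda\mu\nu\kappa;\eta}R_{\lambda\mu\nu\kappa;\eta}$ directly from the metric (\ref{epitaxmelaniopi}) specialised to the accelerating Kerr case ($q=0$), and then to organise the resulting rational function by pulling out the conformal factor and the factor $(\rho^2)^{-9}$. First I would exploit a structural simplification. Since $q=0$, the metric (\ref{epitaxmelaniopi}) is a vacuum solution of the Einstein equations with cosmological constant, so that $R_{\mu\nu}=\Lambda g_{\mu\nu}$ and $R=4\Lambda$ at generic points. Because $\nabla_{\epsilon}g_{\mu\nu}=0$ we get $\nabla_{\epsilon}R_{\mu\nu}=\Lambda\nabla_{\epsilon}g_{\mu\nu}=0$ and $\nabla_{\epsilon}R=0$, so the decomposition (\ref{WeylH}) yields $R_{\alpha\beta\gamma\delta;\epsilon}=C_{\alpha\beta\gamma\delta;\epsilon}$. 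Hence $\mathfrak{K}$ coincides with the Weyl invariant $I_{3}=\nabla_{\mu}C_{\alpha\beta\gamma\delta}\nabla^{\mu}C^{\alpha\beta\gamma\delta}$, exactly as in the Remark following Eqn.(\ref{LAI3kdS}) for the non-accelerating case. This lets me work with the Weyl tensor, whose type-D structure (only the Newman-Penrose scalar $\Psi_{2}$ is nonzero in the principal null frame) is far more rigid than the full Riemann tensor.

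Second, I would carry out the coordinate computation. Starting from $g_{\mu\nu}$ in the coordinates $(t,r,\theta,\phi)$, which is block-off-diagonal in the $(t,\phi)$ sector, I would invert the metric, compute the Christoffel symbols $\Gamma^{\lambda}_{\;\mu\nu}$, then the Riemann tensor $R^{\kappa}_{\;\lambda\mu\nu}$, lower an index, form the covariant derivatives $R_{\lambda\mu\nu\kappa;\eta}$, and finally contract with the inverse metrics to obtain the scalar. The conformal factor $\Omega=1-\alpha r\cos\theta$ enters every Christoffel symbol and therefore propagates into the curvature and into its derivative; tracking its powers carefully is what ultimately produces the overall factor $(\alpha r\cos\theta-1)^{6}=\Omega^{6}$ in (\ref{marvelKarlhedeaccelKerrdS}). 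The conformal invariance of the mixed Weyl tensor $C^{\alpha}_{\;\beta\gamma\delta}$ under $g=\Omega^{-2}\hat{g}$ is the conceptual reason such a clean power of $\Omega$ survives the contraction.

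Finally, I would simplify and check. After contraction the raw output is a single rational function whose denominator is a power of $\rho^{2}=r^{2}+a^{2}\cos^{2}\theta$ times a power of $\Omega$. I would factor the numerator, verify that $\Omega^{6}$ divides it cleanly, confirm that the denominator reduces to $(\rho^{2})^{9}$, and collect the remaining numerator as a polynomial $\sum_{i=0}^{12}k_{i}\cos^{i}\theta$ whose coefficients $k_{i}$ depend on $r,a,m,\alpha,\Lambda$. As consistency checks I would set $\alpha=0$ and confirm collapse to Eqn.(\ref{KarlhedeKerr(a)dS}), and then set $a=0$ (and optionally $\Lambda=0$) to recover the known Reissner-Nordstr\"{o}m and Schwarzschild-type limits.

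The hard part will be the symbolic bookkeeping rather than anything conceptual. The non-diagonal $(t,\phi)$ sector together with the multiplicative factor $\Omega$ makes the Christoffel symbols, and hence the covariant derivative $R_{\lambda\mu\nu\kappa;\eta}$, extremely bulky, and the contraction generates an expression large enough that naive simplification may fail to recognise the common factor $\Omega^{6}$. The essential difficulty is therefore forcing the computer algebra system to factor and cancel correctly so that the compact form (\ref{marvelKarlhedeaccelKerrdS}) emerges, while confirming that the pieces discarded as $\Omega\to 1$ and $a\to 0$ match the previously established expressions.
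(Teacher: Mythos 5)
Your proposal matches the paper's route: the result is obtained by direct symbolic computation of $\nabla_{\eta}R_{\lambda\mu\nu\kappa}\nabla^{\eta}R^{\lambda\mu\nu\kappa}$ from the metric (\ref{epitaxmelaniopi}) with $q=0$ in a computer algebra system, followed by factoring out $\Omega^{6}$ and $(\rho^{2})^{-9}$; your observation that $R_{\mu\nu}=\Lambda g_{\mu\nu}$ makes $\mathfrak{K}$ coincide with $I_{3}$ is exactly the remark the paper itself records after Theorem \ref{weylcovI3epi}, and the $\alpha\to 0$ check against Eqn.(\ref{KarlhedeKerr(a)dS}) is the same consistency test the authors perform. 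The only slip is cosmetic: since $q=0$ throughout, the $a\to 0$ limit is a Schwarzschild--(anti-)de Sitter/C-metric check, not a Reissner--Nordstr\"{o}m one.
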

where the coefficients $k_i$ are given below:
\begin{align}
k_0&=-\Biggl[r \alpha^{2} \left(\alpha^{2}+\frac{\Lambda}{3}\right) a^{4}+\left(\frac{r^{3} \alpha^{2} \Lambda}{3}-2 m \,r^{2} \alpha^{4}+\left(2 \alpha^{2}-\frac{\Lambda}{3}\right) r +2 \alpha^{2} m \right) a^{2}+2 \alpha^{2} m \,r^{2}\nonumber \\
&-\frac{\Lambda  r^{3}}{3}-2 m +r \Biggr] r^{9},\\
k_1&=+2 \alpha  \left(\left(\alpha^{2} m -\frac{8 r \Lambda}{3}\right) a^{4}+\left(-\alpha^{4} m \,r^{4}+20 \alpha^{2} m \,r^{2}-\frac{8}{3} \Lambda  r^{3}-17 m \right) a^{2}+\alpha^{2} m \,r^{4}\right) r^{8},\\
k_2&=+27 \Biggl(r \alpha^{2} \left(\alpha^{2}+\frac{\Lambda}{3}\right) a^{6}+\left(\frac{\alpha^{4} \left(\alpha^{2}+\frac{\Lambda}{3}\right) r^{5}}{27}+\frac{28 r^{3} \alpha^{2} \Lambda}{81}-\frac{88 m \,r^{2} \alpha^{4}}{27}+\left(2 \alpha^{2}-\frac{\Lambda}{3}\right) r +\frac{88 \alpha^{2} m}{27}\right) a^{4}\nonumber \\
&+\left(\frac{2 \alpha^{2} \left(\alpha^{2}+\frac{\Lambda}{2}\right) r^{5}}{27}-\frac{40 \alpha^{4} m \,r^{4}}{27}-\frac{28 \Lambda  r^{3}}{81}+\frac{32 \alpha^{2} m \,r^{2}}{9}+r -\frac{56 m}{27}\right) a^{2}+\frac{\alpha^{2} r^{5}}{27}\Biggr) r^{7},\\
k_3&=-56 \alpha  \Biggl(\left(\alpha^{2} m -\frac{4 r \Lambda}{7}\right) a^{6}+\left(-\frac{11}{7} \alpha^{4} m \,r^{4}+\frac{241}{28} \alpha^{2} m \,r^{2}-\frac{2}{3} \Lambda  r^{3}-5 m \right) a^{4}\nonumber \\
&-\frac{\left(\alpha^{4} m \,r^{4}-48 \alpha^{2} m \,r^{2}+\frac{8}{3} \Lambda  r^{3}+17 m \right) r^{2} a^{2}}{28}+\frac{\alpha^{2} m \,r^{6}}{28}\Biggr) r^{6},\\
k_4&=-42 \Biggl[r \alpha^{2} \left(\alpha^{2}+\frac{\Lambda}{3}\right) a^{6}+\left(\frac{9 \alpha^{4} \left(\alpha^{2}+\frac{\Lambda}{3}\right) r^{5}}{14}+\frac{23 r^{3} \alpha^{2} \Lambda}{42}-\frac{26 m \,r^{2} \alpha^{4}}{3}+\left(2 \alpha^{2}-\frac{\Lambda}{3}\right) r +\frac{26 \alpha^{2} m}{3}\right) a^{4}\nonumber \\
&+\left(\frac{m \,r^{6} \alpha^{6}}{21}+\frac{9 \alpha^{2} \left(\alpha^{2}+\frac{\Lambda}{2}\right) r^{5}}{7}-\frac{241 \alpha^{4} m \,r^{4}}{21}-\frac{23 \Lambda  r^{3}}{42}+\frac{310 \alpha^{2} m \,r^{2}}{21}+r -\frac{10 m}{3}\right) a^{2}\nonumber \\
&-\frac{17 \alpha^{2} r^{4} \left(\alpha^{2} m \,r^{2}-m -\frac{27}{34} r \right)}{21}\Biggr] a^{2} r^{5},\\
k_5&=+140 \Biggl(a^{6} \alpha^{2} m +\left(-\frac{13}{5} \alpha^{4} m \,r^{4}+\frac{38}{5} \alpha^{2} m \,r^{2}-\frac{8}{35} \Lambda  r^{3}-\frac{13}{5} m \right) a^{4}+\Biggl[-\frac{22}{35} \alpha^{4} m \,r^{6}+\frac{31}{7} \alpha^{2} m \,r^{4}-\frac{8}{35} \Lambda  r^{5}\nonumber \\
&-2 m \,r^{2}\Biggr] a^{2}+\frac{2 \alpha^{2} m \,r^{6}}{5}\Biggr) \alpha  a^{2} r^{4},\\
k_6&=-42 \Biggl(r \alpha^{2} \left(\alpha^{2}+\frac{\Lambda}{3}\right) a^{6}+\left(\left(-\alpha^{6}-\frac{1}{3} \Lambda  \,\alpha^{4}\right) r^{5}+\frac{20 m \,r^{2} \alpha^{4}}{3}+\left(2 \alpha^{2}-\frac{\Lambda}{3}\right) r -\frac{20 \alpha^{2} m}{3}\right) a^{4}\nonumber \\
&+\left(-\frac{4 m \,r^{6} \alpha^{6}}{3}+\left(-2 \alpha^{4}-\Lambda  \,\alpha^{2}\right) r^{5}+\frac{76 \alpha^{4} m \,r^{4}}{3}-\frac{76 \alpha^{2} m \,r^{2}}{3}+r +\frac{4 m}{3}\right) a^{2}\nonumber \\
&+\frac{20 \alpha^{2} \left(\alpha^{2} m \,r^{2}-m -\frac{3}{20} r \right) r^{4}}{3}\Biggr) a^{4} r^{3}
\end{align}

\begin{align}
k_7&=-56 \alpha  \Biggl(\left(\alpha^{2} m +\frac{4 r \Lambda}{7}\right) a^{6}+\left(-5 \alpha^{4} m \,r^{4}+\frac{155}{14} \alpha^{2} m \,r^{2}+\frac{4}{7} \Lambda  r^{3}-\frac{11}{7} m \right) a^{4}\nonumber \\
&-\frac{13 \left(\alpha^{4} r^{4}-\frac{38}{13} \alpha^{2} r^{2}+1\right) m \,r^{2} a^{2}}{2}+\frac{5 \alpha^{2} m \,r^{6}}{2}\Biggr) a^{4} r^{2},\\
k_8&=+27 \Biggl[r \alpha^{2} \left(\alpha^{2}+\frac{\Lambda}{3}\right) a^{6}+\left(\frac{14 \alpha^{4} \left(\alpha^{2}+\frac{\Lambda}{3}\right) r^{5}}{9}+\frac{23 r^{3} \alpha^{2} \Lambda}{27}+\frac{34 m \,r^{2} \alpha^{4}}{27}+\left(2 \alpha^{2}-\frac{\Lambda}{3}\right) r -\frac{34 \alpha^{2} m}{27}\right) a^{4}\nonumber \\
&+\left(-\frac{140 m \,r^{6} \alpha^{6}}{27}+\frac{28 \alpha^{2} \left(\alpha^{2}+\frac{\Lambda}{2}\right) r^{5}}{9}+\frac{620 \alpha^{4} m \,r^{4}}{27}-\frac{23 \Lambda  r^{3}}{27}-\frac{482 \alpha^{2} m \,r^{2}}{27}+r +\frac{2 m}{27}\right) a^{2}\nonumber \\
&+\frac{364 \alpha^{2} r^{4} \left(\alpha^{2} m \,r^{2}-m +\frac{3}{26} r \right)}{27}\Biggr] a^{6} r,\\
k_9&=+2 \alpha  a^{6} \Biggl(\left(\alpha^{2} m +\frac{8 r \Lambda}{3}\right) a^{6}+\left(-17 \alpha^{4} m \,r^{4}+48 \alpha^{2} m \,r^{2}+\frac{56}{3} \Lambda  r^{3}-m \right) a^{4}+\Biggl[-140\alpha^{4} m \,r^{6}+241 \alpha^{2} m \,r^{4}\nonumber \\
&+16 \Lambda  r^{5}-44 m \,r^{2}\Biggr] a^{2}+28 \alpha^{2} m \,r^{6}\Biggr),\\
k_{10}&=-\Biggl(\alpha^{2} \left(\alpha^{2}+\frac{\Lambda}{3}\right) a^{6}+\left(\left(27 \alpha^{6}+9 \Lambda  \,\alpha^{4}\right) r^{4}+\frac{28 \Lambda  \,\alpha^{2} r^{2}}{3}+2 \alpha^{2}-\frac{\Lambda}{3}\right) a^{4}+\Biggl[1-56 \alpha^{6} m \,r^{5}\nonumber \\
&+\left(54 \alpha^{4}+27 \Lambda  \,\alpha^{2}\right) r^{4}+96 \alpha^{4} m \,r^{3}-\frac{28 \Lambda  r^{2}}{3}-40 \alpha^{2} m r \Biggr] a^{2}+88 \alpha^{2} \left(\alpha^{2} m \,r^{2}-m +\frac{27}{88} r \right) r^{3}\Biggr) a^{8},\\
k_{11}&=-2 \alpha  a^{8} \Biggl(\left(\alpha^{2} m +\frac{8 r \Lambda}{3}\right) a^{4}+\left(-17 \alpha^{4} m \,r^{4}+20 \alpha^{2} m \,r^{2}+\frac{8}{3} \Lambda  r^{3}-m \right) a^{2}+\alpha^{2} m \,r^{4}\Biggr),\\
k_{12}&=\Biggl[\alpha^{2} \left\{\alpha^{2} \left(\alpha^{2}+\frac{\Lambda}{3}\right) r^{2}+\frac{\Lambda}{3}\right\} a^{4}+\left(-2 \alpha^{6} m \,r^{3}+\left(2 \alpha^{4}+\Lambda  \,\alpha^{2}\right) r^{2}+2 \alpha^{4} m r -\frac{\Lambda}{3}\right) a^{2}\nonumber \\
&+2 \alpha^{2} \left(\alpha^{2} m \,r^{2}-m +\frac{1}{2} r \right) r \Biggr] a^{10}.
\end{align}

The invariants $I_1$ and $I_2$ for accelerating Kerr-Newman in (anti-)de Sitter spacetime have been computed in \cite{KraniotisCurvature},eqns.(95) and (94) correspondigly in \cite{KraniotisCurvature}.
In Theorems \ref{weylcovI3epi}-\ref{sebenepitaxkerr} we present our novel explicit analytic expressions for the Abdelqade-Lake  local invariants $I_3-I_7$ for the case of accelerating Kerr black holes \footnote{We have computed  these  Abdelqade-Lake invariants for the accelerating KN(a-)dS black hole. The resulting expressions are very lengthy to reproduce them here. Nevertheless, we have use them in order to obtain a very compact expression for the invariant $Q_2$ in latter sections of this paper. }.

  \begin{theorem}\label{weylcovI3epi}
  We computed in closed analytic form the invariant $I_3\equiv\nabla_{\mu}C_{\alpha\beta\gamma\delta}\nabla^{\mu}C^{\alpha\beta\gamma\delta}$ for an accelerating Kerr black hole. Our result is:
  \begin{align}
  &\nabla_{\mu}C_{\alpha\beta\gamma\delta}\nabla^{\mu}C^{\alpha\beta\gamma\delta}\nonumber \\
  &=-\frac{720 \left(\alpha  r \cos \! \left(\theta \right)-1\right)^{7} m^{2}}
  {\left(r^{2}+a^{2} \cos \! \left(\theta \right)^{2}\right)^{9}} \Biggl(\alpha  a^{10} \left(a^{4} \alpha^{4} r -2 \alpha^{2} \left(\alpha^{2} m \,r^{2}-m -r \right) a^{2}+2 \alpha^{2} m \,r^{2}-2 m +r \right) \cos \! \left(\theta \right)^{11}\nonumber \\
  &+\Biggl(a^{6} \alpha^{4}+\left(-2 \alpha^{4} m r +2 \alpha^{2}\right) a^{4}+\left(34 \alpha^{4} m \,r^{3}-38 \alpha^{2} m r +1\right) a^{2}\nonumber \\
  &-2 m \,r^{3} \alpha^{2}\Biggr) a^{8} \cos \! \left(\theta \right)^{10}
  -2 \Biggl[\alpha^{2} \left(\frac{27 r^{3} \alpha^{2}}{2}+m \right) a^{4}+\Biggl(-28 m \,r^{4} \alpha^{4}+31 \alpha^{2} m \,r^{2}\nonumber \\
  &+27 r^{3} \alpha^{2}-m \Biggr) a^{2}+44 \alpha^{2} m \,r^{4}-43 m \,r^{2}+\frac{27 r^{3}}{2}\Biggr] \alpha  a^{8} \cos \! \left(\theta \right)^{9}-27 \Biggl(a^{6} r \alpha^{4}-\frac{22 \left(\alpha^{2} m \,r^{2}+\frac{17}{11} m -\frac{27}{11} r \right) \alpha^{2} a^{4}}{27}\nonumber \\
  &+\left(\frac{280}{27} m \,r^{4} \alpha^{4}-\frac{394}{27} \alpha^{2} m \,r^{2}+\frac{2}{27} m +r \right) a^{2}-\frac{56 \alpha^{2} m \,r^{4}}{27}\Biggr) a^{6} r \cos \! \left(\theta \right)^{8}+56 \Biggl(\alpha^{2} \left(\frac{3 r^{3} \alpha^{2}}{4}+m \right) a^{4}\nonumber \\
  &+\left(-\frac{5}{2} m \,r^{4} \alpha^{4}+\frac{85}{14} \alpha^{2} m \,r^{2}+\frac{3}{2} r^{3} \alpha^{2}-\frac{11}{7} m \right) a^{2}+\frac{13 \left(\alpha^{2} m \,r^{2}-\frac{11}{13} m +\frac{3}{26} r \right) r^{2}}{2}\Biggr) \alpha  a^{6} r^{2} \cos \! \left(\theta \right)^{7}+42 a^{4} r^{3} \Biggl(\nonumber \\
  & a^{6} r \alpha^{4}+\frac{10 \left(\alpha^{2} m \,r^{2}-2 m +\frac{3}{5} r \right) \alpha^{2} a^{4}}{3}+\left(\frac{26}{3} m \,r^{4} \alpha^{4}-\frac{50}{3} \alpha^{2} m \,r^{2}+\frac{4}{3} m +r \right) a^{2}-\frac{10 \alpha^{2} m \,r^{4}}{3}\Biggr) \cos \! \left(\theta \right)^{6}\nonumber \\
  &-140 \Biggl[\alpha^{2} \left(-\frac{3 r^{3} \alpha^{2}}{10}+m \right) a^{4}+\left(-\frac{2}{5} m \,r^{4} \alpha^{4}+5 \alpha^{2} m \,r^{2}-\frac{3}{5} r^{3} \alpha^{2}-\frac{13}{5} m \right) a^{2}+2 r^{2} \left(\alpha^{2} m \,r^{2}-\frac{1}{2} m -\frac{3}{20} r \right)\nonumber \\
  &\Biggr] \alpha  a^{4} r^{4} \cos \! \left(\theta \right)^{5}+42 \Biggl\{a^{6} r \alpha^{4}-\frac{22 \left(\alpha^{2} m \,r^{2}-\frac{13}{11} m -\frac{3}{11} r \right) \alpha^{2} a^{4}}{3}+\left(-\frac{44}{21} m \,r^{4} \alpha^{4}+\frac{170}{21} \alpha^{2} m \,r^{2}-\frac{10}{3} m +r \right) a^{2}\nonumber \\
  &+\frac{4 \alpha^{2} m \,r^{4}}{3}\Biggr\} a^{2} r^{5} \cos \! \left(\theta \right)^{4}+56 \alpha  a^{2} r^{6} \Biggl(\alpha^{2} \left(-\frac{27 r^{3} \alpha^{2}}{56}+m \right) a^{4}+\Biggl(-\frac{1}{28} m \,r^{4} \alpha^{4}+\frac{197}{28} \alpha^{2} m \,r^{2}-\frac{27}{28} r^{3} \alpha^{2}\nonumber \\
  &-5 m \Biggr) a^{2}+\frac{17 \left(\alpha^{2} m \,r^{2}+\frac{11}{17} m -\frac{27}{34} r \right) r^{2}}{28}\Biggr) \cos \! \left(\theta \right)^{3}-27 \Biggl(a^{6} r \alpha^{4}-\frac{86 \left(\alpha^{2} m \,r^{2}-\frac{44}{43} m -\frac{27}{43} r \right) \alpha^{2} a^{4}}{27}\nonumber \\
  &+\left(-\frac{2}{27} m \,r^{4} \alpha^{4}+\frac{62}{27} \alpha^{2} m \,r^{2}-\frac{56}{27} m +r \right) a^{2}+\frac{2 \alpha^{2} m \,r^{4}}{27}\Biggr) r^{7} \cos \! \left(\theta \right)^{2}-2 \alpha  \Biggl(\alpha^{2} \left(-\frac{r^{3} \alpha^{2}}{2}+m \right) a^{4}\nonumber \\
  &+\left(19 \alpha^{2} m \,r^{2}-r^{3} \alpha^{2}-17 m \right) a^{2}+r^{2} \left(m -\frac{r}{2}\right)\Biggr) r^{8} \cos \! \left(\theta \right)+r^{9} \Biggl(a^{4} \alpha^{4} r -2 \alpha^{2} \left(\alpha^{2} m \,r^{2}-m -r \right) a^{2}\nonumber \\
  &+2 \alpha^{2} m \,r^{2}-2 m +r \Biggr)\Biggr).
  \label{idreiepitkerr}
  \end{align}
  \end{theorem}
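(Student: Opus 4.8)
The plan is to exploit the fact that the accelerating Kerr spacetime obtained from (\ref{epitaxmelaniopi}) by setting $q=0$ and $\Lambda=0$ is a vacuum type D solution — the spinning $C$-metric — and is therefore Ricci-flat away from the conical singularity on the axis, so that $R_{\alpha\beta}=0$ and $R=0$. Substituting this into the defining relation (\ref{WeylH}) for the Weyl tensor shows that all the Ricci- and scalar-curvature terms drop out, whence $C_{\kappa\lambda\mu\nu}=R_{\kappa\lambda\mu\nu}$ identically. Since the covariant derivative is metric-compatible this identity is preserved under $\nabla_\mu$, so $\nabla_\mu C_{\alpha\beta\gamma\delta}=\nabla_\mu R_{\alpha\beta\gamma\delta}$ and therefore $I_3=\nabla_\mu C_{\alpha\beta\gamma\delta}\nabla^\mu C^{\alpha\beta\gamma\delta}=\nabla_\mu R_{\alpha\beta\gamma\delta}\nabla^\mu R^{\alpha\beta\gamma\delta}=\mathfrak{K}$. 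Thus the whole problem collapses to evaluating the Karlhede invariant for the accelerating Kerr metric, which is exactly the $\Lambda\to0$ specialisation of Theorem \ref{KarlEpitaKerrdS}.

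Concretely I would first read off the metric coefficients of (\ref{epitaxmelaniopi}) with $q=\Lambda=0$, in which $\Omega=1-\alpha r\cos\theta$, $P=1-2\alpha m\cos\theta+\alpha^2a^2\cos^2\theta$ and $Q=(a^2-2mr+r^2)(1-\alpha^2r^2)$, then compute the Christoffel symbols $\Gamma^{\lambda}_{\;\mu\nu}$, the Riemann tensor $R^{\kappa}_{\;\lambda\mu\nu}$, its fully covariant form, and the tensor $\nabla_\mu R_{\alpha\beta\gamma\delta}$ by the standard coordinate formulae. The scalar $I_3$ is then obtained by raising all indices with $g^{\mu\nu}$ and performing the full contraction, exactly the symbolic pipeline used elsewhere in the paper and carried out in Maple. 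A useful organising principle throughout is the conformal structure: the physical metric is $\Omega^{-2}$ times a seed metric, and because the Weyl tensor transforms homogeneously under this rescaling while its covariant derivative accrues inhomogeneous terms in $\nabla\Omega$, one should expect the final scalar to carry an overall power of $\Omega=-(\alpha r\cos\theta-1)$ together with the vacuum type D factor $m^2$ and the denominator $\rho^{18}=(r^2+a^2\cos^2\theta)^9$.

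The main obstacle is the final algebraic simplification rather than the differential geometry: the raw contraction produces an unwieldy rational function of $r$ and $\cos\theta$, and the work lies in showing that it factors as in (\ref{idreiepitkerr}), i.e. that $(\alpha r\cos\theta-1)^7$, $m^2$ and $(r^2+a^2\cos^2\theta)^{-9}$ split off cleanly and that what remains is the displayed degree-eleven polynomial in $\cos\theta$. I would force this factorisation by clearing $\rho^{18}$, collecting the numerator in powers of $\cos\theta$, and verifying by polynomial division that $\Omega^7$ is an exact divisor, after which grouping the quotient coefficient by coefficient in $\cos^i\theta$ yields the stated form. Two independent consistency checks validate the result: setting $\alpha=0$ must reduce (\ref{idreiepitkerr}) to the Weyl (Karlhede) invariant of the ordinary Kerr black hole, namely the $\Lambda=0$ limit of (\ref{AKarlHKerr}); and, via the identity $I_3=\mathfrak{K}$ established above, the entire expression must coincide with the $\Lambda\to0$ specialisation of (\ref{marvelKarlhedeaccelKerrdS}). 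Checking that the two factored forms agree once the common $\Omega$ power is extracted confirms both computations simultaneously.
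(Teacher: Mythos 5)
Your proposal is correct and follows essentially the same route as the paper: the result is obtained by direct symbolic computation of the curvature contraction for the metric (\ref{epitaxmelaniopi}) with $q=\Lambda=0$ in Maple, and the key simplification you invoke — that the spinning $C$-metric is vacuum, so $C_{\alpha\beta\gamma\delta}=R_{\alpha\beta\gamma\delta}$ and hence $I_3=\mathfrak{K}$, making the theorem the $\Lambda\to 0$ specialisation of Theorem \ref{KarlEpitaKerrdS} — is exactly the observation recorded in the Remark immediately following the theorem. Your two consistency checks (the $\alpha=0$ reduction to the Kerr expression (\ref{AKarlHKerr}) and the agreement with (\ref{marvelKarlhedeaccelKerrdS}) at $\Lambda=0$) are the same ones the paper performs.
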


  \begin{remark}
  The curvature invariant $I_3$ is equal with the Karlhede invariant for accelerating Kerr black holes. We have also checked that our expression Eqn.(\ref{marvelKarlhedeaccelKerrdS}) in Theorem \ref{KarlEpitaKerrdS}, for vanishing cosmological constant ($\Lambda=0$), reduces to expression Eqn.(\ref{idreiepitkerr}) of Theorem \ref{weylcovI3epi}.
  \end{remark}

\begin{theorem}
We have computed the following analytic expression for the invariant $I_4$ for the accelerating Kerr black hole:
\begin{align}
&I_4=-\frac{5760 m^{2} \left(\alpha  r \cos \! \left(\theta \right)-1\right)^{7} a}{\left(r^{2}+a^{2} \cos \! \left(\theta \right)^{2}\right)^{9}} \Biggl(\left(-\frac{1}{2} \alpha^{4} m \,r^{2}+\frac{1}{2} \alpha^{2} m \right) a^{10} \cos \! \left(\theta \right)^{11}\nonumber \\
&+\left(a^{4} r \alpha^{4}-2 \alpha^{2} \left(\alpha^{2} m \,r^{2}-m -r \right) a^{2}+\frac{5 \alpha^{2} m \,r^{2}}{2}-\frac{5 m}{2}+r \right) \alpha  a^{8} r \cos \! \left(\theta \right)^{10}\nonumber \\
&+\left(a^{6} r \alpha^{4}-\frac{3 \alpha^{2} \left(\alpha^{2} m \,r^{2}+\frac{1}{3} m -\frac{4}{3} r \right) a^{4}}{2}+\left(16 \alpha^{4} m \,r^{4}-20 \alpha^{2} m \,r^{2}+r \right) a^{2}-2 \alpha^{2} m \,r^{4}\right) a^{6} \cos \! \left(\theta \right)^{9}\nonumber \\
&-2 \Biggl[\alpha^{2} \left(3 \alpha^{2} r^{3}+m \right) a^{4}+\left(-7 \alpha^{4} m \,r^{4}+\left(\frac{41}{4} m \,r^{2}+6 r^{3}\right) \alpha^{2}-\frac{5 m}{4}\right) a^{2}+14 \alpha^{2} m \,r^{4}-13 m \,r^{2}\nonumber \\
&+3 r^{3}\Biggr] \alpha  a^{6} r \cos \! \left(\theta \right)^{8}-6 \Biggl(a^{6} r \alpha^{4}+\frac{\alpha^{2} \left(\alpha^{2} m \,r^{2}-8 m +6 r \right) a^{4}}{3}+\left(\frac{49}{6} \alpha^{4} m \,r^{4}-\frac{79}{6} \alpha^{2} m \,r^{2}+\frac{1}{3} m +r \right) a^{2}\nonumber \\
&-\frac{7 \alpha^{2} m \,r^{4}}{3}\Biggr) a^{4} r^{2} \cos \! \left(\theta \right)^{7}+14 \left(\alpha^{2} a^{4}+\left(-\alpha^{4} r^{4}+5 \alpha^{2} r^{2}-2\right) a^{2}+\frac{7 r^{4} \alpha^{2}}{2}-\frac{5 r^{2}}{2}\right) \alpha  m \,a^{4} r^{3} \cos \! \left(\theta \right)^{6}\nonumber \\
&+35 m \,a^{2} r^{4} \left(\left(\alpha^{4} r^{2}-\frac{7}{5} \alpha^{2}\right) a^{4}+\left(\frac{4}{5} \alpha^{4} r^{4}-2 \alpha^{2} r^{2}+\frac{2}{5}\right) a^{2}-\frac{2 r^{4} \alpha^{2}}{5}\right) \cos \! \left(\theta \right)^{5}-14 \Biggl[\alpha^{2} \left(-\frac{3 \alpha^{2} r^{3}}{7}+m \right) a^{4}\nonumber \\
&+\left(-\frac{\alpha^{4} m \,r^{4}}{7}+\frac{79 \left(m -\frac{12 r}{79}\right) r^{2} \alpha^{2}}{14}-\frac{7 m}{2}\right) a^{2}+\frac{8 \left(\alpha^{2} m \,r^{2}-\frac{1}{8} m -\frac{3}{8} r \right) r^{2}}{7}\Biggr] \alpha  a^{2} r^{5} \cos \! \left(\theta \right)^{4}+6 \Biggl(a^{6} r \alpha^{4}\nonumber \\
&-\frac{13 \alpha^{2} \left(\alpha^{2} m \,r^{2}-\frac{14}{13} m -\frac{6}{13} r \right) a^{4}}{3}+\left(-\frac{5}{12} \alpha^{4} m \,r^{4}+\frac{41}{12} \alpha^{2} m \,r^{2}-\frac{7}{3} m +r \right) a^{2}+\frac{\alpha^{2} m \,r^{4}}{3}\Biggr) r^{6} \cos \! \left(\theta \right)^{3}\nonumber \\
&+2 \left(\alpha^{2} \left(-\frac{\alpha^{2} r^{3}}{2}+m \right) a^{4}+\left(\left(10 m \,r^{2}-r^{3}\right) \alpha^{2}-8 m \right) a^{2}+\frac{r^{2} \left(\alpha^{2} m \,r^{2}+3 m -2 r \right)}{4}\right) \alpha  r^{7} \cos \! \left(\theta \right)^{2}\nonumber \\
&-\left(a^{4} r \alpha^{4}-\frac{5 \alpha^{2} \left(\alpha^{2} m \,r^{2}-m -\frac{4}{5} r \right) a^{2}}{2}+2 \alpha^{2} m \,r^{2}-2 m +r \right) r^{8} \cos \! \left(\theta \right)-\frac{\alpha^{3} m \,r^{11}}{2}+\frac{\alpha  m \,r^{9}}{2}\Biggr).
\end{align}
\end{theorem}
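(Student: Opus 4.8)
The plan is to obtain the stated formula by a direct symbolic tensor computation for the metric (\ref{epitaxmelaniopi}) specialised to the vacuum accelerating Kerr case, i.e. $q=0$ and $\Lambda=0$, so that $\Omega=1-\alpha r\cos\theta$, $P=1-2\alpha m\cos\theta+\alpha^{2}a^{2}\cos^{2}\theta$ and $Q=(a^{2}-2mr+r^{2})(1-\alpha^{2}r^{2})$. First I would invert the metric and compute in turn the Christoffel symbols $\Gamma^{\lambda}_{\;\mu\nu}$, the Riemann tensor, the Ricci tensor and scalar, and the Weyl tensor $C_{\kappa\lambda\mu\nu}$ from (\ref{WeylH}). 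Because the spinning $C$-metric is a vacuum solution away from its conical singularity, $R_{\alpha\beta}=0$, $R=0$ and $C_{\alpha\beta\gamma\delta}=R_{\alpha\beta\gamma\delta}$; this both lightens the bookkeeping and explains why $I_3$ coincides with the Karlhede invariant. I would then build the dual $C^{*}_{\alpha\beta\gamma\delta}$ from the Levi-Civita pseudotensor, take the covariant derivatives $\nabla_{\mu}C_{\alpha\beta\gamma\delta}$ and $\nabla_{\mu}C^{*}_{\alpha\beta\gamma\delta}$, and contract with the inverse metric to assemble $I_4=\nabla_{\mu}C_{\alpha\beta\gamma\delta}\nabla^{\mu}C^{*\alpha\beta\gamma\delta}$.

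The organising principle that tames this computation, and that furnishes an independent check, is to pass to the self-dual Weyl tensor $\widetilde{C}_{\alpha\beta\gamma\delta}=C_{\alpha\beta\gamma\delta}+iC^{*}_{\alpha\beta\gamma\delta}$. Since $\nabla_{\mu}E_{\alpha\beta\kappa\lambda}=0$ and $C^{*}_{\alpha\beta\gamma\delta}C^{*\alpha\beta\gamma\delta}=-C_{\alpha\beta\gamma\delta}C^{\alpha\beta\gamma\delta}$, one obtains the clean identity
\begin{equation}
I_3+iI_4=\frac{1}{2}\,\nabla_{\mu}\widetilde{C}_{\alpha\beta\gamma\delta}\nabla^{\mu}\widetilde{C}^{\alpha\beta\gamma\delta},
\end{equation}
so that $I_4=\Im(I_3+iI_4)$. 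For the type D accelerating Kerr geometry only the Newman-Penrose scalar $\Psi_2$ survives in the principal null tetrad, so $\widetilde{C}$ and its contracted derivative square are governed by the single complex function $\Psi_2$ together with the spin coefficients. In the same vein I would test whether the Page-Shoom complex syzygy $\nabla_{\mu}(I_1+iI_2)\nabla^{\mu}(I_1+iI_2)=\frac{12}{5}(I_1+iI_2)(I_3+iI_4)$, which holds for the Kerr-(anti-)de Sitter case, persists once $\alpha$ is switched on in this vacuum setting. If it does, then $I_4$ is recovered directly as the imaginary part of $\frac{5}{12}\,\nabla_{\mu}(I_1+iI_2)\nabla^{\mu}(I_1+iI_2)/(I_1+iI_2)$, using the already-known invariants $I_1,I_2$ from \cite{KraniotisCurvature} and requiring only their first gradients rather than derivatives of the full Weyl tensor.

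I expect the main obstacle to be purely the algebraic explosion introduced by the conformal factor $\Omega=1-\alpha r\cos\theta$: every covariant derivative of the Weyl tensor inherits inhomogeneous conformal contributions, and the full contraction generates a very large rational function of $r$ and $\cos\theta$ before simplification. The delicate step for the computer algebra is to recognise and extract the global conformal weight, namely the factor $(\alpha r\cos\theta-1)^{7}$ together with the denominator $(r^{2}+a^{2}\cos^{2}\theta)^{9}$, and then to collect the numerator as the polynomial $\sum_i(\cdots)\cos^{i}\theta$ displayed in the statement. To validate the closed form I would impose three consistency checks: (i) the limit $\alpha\to0$ must reproduce the $\Lambda=0$ specialisation of the Kerr-(anti-)de Sitter expression for $I_4$ given earlier; (ii) the overall factor $a$ forces $I_4\to0$ as $a\to0$, consistent with the vanishing of the parity-odd invariant $\nabla_{\mu}C_{\alpha\beta\gamma\delta}\nabla^{\mu}C^{*\alpha\beta\gamma\delta}$ in the static accelerating Schwarzschild ($C$-metric) limit; and (iii) the computed $I_4$, together with $I_3,I_5,I_6,I_7$, must be consistent with the Abdelqader-Lake syzygies (or their controlled breaking), giving a stringent arithmetic test.
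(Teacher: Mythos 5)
Your proposal is correct and takes essentially the same route as the paper: the theorem is obtained there by exactly the direct symbolic tensor computation you describe (Maple\textsuperscript{TM}2021 applied to the metric (\ref{epitaxmelaniopi}) with $q=\Lambda=0$, computing the Christoffel symbols, the Riemann tensor, which equals the Weyl tensor for this vacuum solution, its dual, and the contracted covariant derivatives), with the numerator then collected in powers of $\cos\theta$ against the prefactor $(\alpha r\cos\theta-1)^{7}/(r^{2}+a^{2}\cos^{2}\theta)^{9}$. Your consistency checks are also the ones the paper uses implicitly: the $\alpha\to0$ limit is recorded as the corollary immediately following the theorem, and the NP/self-dual ($\Psi_2$-based) route you sketch is precisely the independent cross-check carried out in section \ref{RogerPenN}.
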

\begin{corollary}
For zero acceleration $\alpha=0$ the invariant $I_4$ takes the form:
\begin{align}
I_4&=\frac{5760 m^{2} a r \cos \! \left(\theta \right)}{\left(r^{2}+a^{2} \cos \! \left(\theta \right)^{2}\right)^{9}} \left(a^{2} \cos \! \left(\theta \right)^{2}-2 \cos \! \left(\theta \right) a r -r^{2}\right)\nonumber \\
&\times  \left(a^{2} \cos \! \left(\theta \right)^{2}+2 \cos \! \left(\theta \right) a r -r^{2}\right) \left(a^{2} \cos \! \left(\theta \right)^{2}-2 m r +r^{2}\right) \left(a^{2} \cos \! \left(\theta \right)^{2}-r^{2}\right)
\end{align}
\end{corollary}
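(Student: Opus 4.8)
The plan is to obtain the corollary as the $\alpha\to 0$ specialisation of the preceding theorem, simply reading off which terms of the general accelerating-Kerr expression survive. First I would note that the overall conformal prefactor $(\alpha r\cos\theta-1)^7$ of the theorem collapses to $(-1)^7=-1$ when $\alpha=0$; this single sign is precisely what converts the leading constant $-5760$ of the accelerating formula into the $+5760$ appearing in the corollary, so I would record it at the very outset to avoid a spurious sign discrepancy at the end.

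Next I would inspect the large parenthesised sum term by term. The coefficients of the \emph{even} powers $\cos^{10}\theta,\cos^{8}\theta,\cos^{6}\theta,\cos^{4}\theta,\cos^{2}\theta$, together with the $\cos^{11}\theta$ coefficient and the trailing $\alpha$-dependent constant $-\tfrac12\alpha^3 m r^{11}+\tfrac12\alpha m r^{9}$, each carry an overall factor of $\alpha$ and therefore vanish identically at $\alpha=0$. What remains are the \emph{odd}-power coefficients; setting $\alpha=0$ in each leaves a purely polynomial (in $a,r,m$) contribution, namely the $\alpha$-free part of each bracket. Explicitly I would collect
\begin{align}
&r a^{8}\cos^{9}\theta-\left(2mr^{2}+6r^{3}\right)a^{6}\cos^{7}\theta+14 m a^{4} r^{4}\cos^{5}\theta\nonumber\\
&\quad+\left(6r^{7}-14 m r^{6}\right)a^{2}\cos^{3}\theta+\left(2mr^{8}-r^{9}\right)\cos\theta,
\end{align}
which, after restoring the prefactor $\tfrac{5760 m^{2} a}{(r^{2}+a^{2}\cos^{2}\theta)^{9}}$ (with the sign already accounted for), is the quantity to be identified with the factored expression in the statement.

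Finally I would verify the factorisation. Writing $c=\cos\theta$, the two conjugate factors combine as $(a^{2}c^{2}-2arc-r^{2})(a^{2}c^{2}+2arc-r^{2})=a^{4}c^{4}-6a^{2}c^{2}r^{2}+r^{4}$; multiplying this successively by $(a^{2}c^{2}-r^{2})$, then by $(a^{2}c^{2}-2mr+r^{2})$, and finally by the overall $rc$ reproduces exactly the five-term polynomial above, coefficient by coefficient. This matching is the only genuine content of the proof and, while elementary, is where care is needed. In practice it is most safely discharged by performing the substitution $\alpha=0$ directly in the Maple output that produced the theorem and letting the system factor the result. I expect no conceptual obstacle here—the sole points demanding attention are the bookkeeping of the sign from $(\alpha r\cos\theta-1)^{7}$ and the confirmation that the surviving odd-power coefficients assemble into the \emph{stated} product rather than into some other, inequivalent factorisation.
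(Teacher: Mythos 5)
Your proposal is correct and follows exactly the route the paper implicitly takes: set $\alpha=0$ in the theorem's expression for the accelerating Kerr $I_4$, note that the prefactor $(\alpha r\cos\theta-1)^7$ contributes the sign flip to $+5760$, keep only the odd-power $\cos\theta$ coefficients (which are the only ones without an overall factor of $\alpha$), and verify that the surviving polynomial $r a^{8}c^{9}-(2mr^{2}+6r^{3})a^{6}c^{7}+14ma^{4}r^{4}c^{5}+(6r^{7}-14mr^{6})a^{2}c^{3}+(2mr^{8}-r^{9})c$ equals the stated product. I checked the term-by-term specialisation and the factorisation and both are exact.
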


\begin{theorem}
The analytic computation of the invariant $I_5$ for the accelerating Kerr black hole yields the result:
\begin{align}
I_5=-\frac{82944 m^{4} \left(\alpha  r \cos \! \left(\theta \right)-1\right)^{13}}{\left(r^{2}+a^{2} \cos \! \left(\theta \right)^{2}\right)^{15}}\sum_{i=0}^{17}T_i\;(\cos(\theta))^i,
\end{align}
\end{theorem}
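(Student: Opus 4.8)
The plan is to exploit the fact that $I_5$ is, by its very definition, the squared norm of the gradient of a scalar. Since $I_5\equiv k_\mu k^\mu$ with $k_\mu=-\nabla_\mu I_1$ and $I_1\equiv C_{\alpha\beta\gamma\delta}C^{\alpha\beta\gamma\delta}$ a scalar field, the covariant derivative collapses to an ordinary one, $k_\mu=-\partial_\mu I_1$. The metric (\ref{epitaxmelaniopi}) for the accelerating Kerr black hole (obtained by setting $q=0$, with $\Lambda$ entering only through $P$ and $Q$) is stationary and axisymmetric, so $I_1=I_1(r,\theta)$ and $k_\mu$ has nonvanishing components only along $r$ and $\theta$. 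The starting datum is the explicit expression for $I_1$ already computed in \cite{KraniotisCurvature}; no fresh curvature computation is needed beyond differentiating it.

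First I would record the relevant inverse–metric components. The $(r,\theta)$ block of (\ref{epitaxmelaniopi}) is diagonal, and since $g_{rr}=\rho^2/(\Omega^2 Q)$ and $g_{\theta\theta}=\rho^2/(\Omega^2 P)$ with $\rho^2=r^2+a^2\cos^2\theta$ and $\Omega=1-\alpha r\cos\theta$, one has $g^{rr}=\Omega^2 Q/\rho^2$, $g^{\theta\theta}=\Omega^2 P/\rho^2$ and $g^{r\theta}=0$. The components of $g^{\mu\nu}$ mixing $t$ and $\phi$ never enter because $k_t=k_\phi=0$. Therefore
\[
I_5=g^{\mu\nu}\partial_\mu I_1\,\partial_\nu I_1=\frac{\Omega^2}{\rho^2}\left[Q\,(\partial_r I_1)^2+P\,(\partial_\theta I_1)^2\right],
\]
which turns the problem into a finite, purely algebraic manipulation of $I_1$ and its two partial derivatives.

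Next I would track how the conformal factor $\Omega$ and the quantity $\rho$ organize the answer. For this type~D spacetime $I_1\propto\mathrm{Re}(\psi_2^{2})$ with $\psi_2\propto\Omega^3/(r-ia\cos\theta)^3$, so $I_1\propto m^2\Omega^6 F(r,\cos\theta)/\rho^{12}$, where $F$ reduces to $\mathrm{Re}[(r+ia\cos\theta)^6]$ as $\alpha\to0$. Differentiating pulls out an overall $\Omega^5/\rho^{14}$ from each of $\partial_r I_1$ and $\partial_\theta I_1$; squaring gives $\Omega^{10}/\rho^{28}$, and contracting with $\Omega^2/\rho^2$ together with the factors $Q$ and $P$ produces the prefactor $\Omega^{12}/\rho^{30}$. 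A further common factor $\Omega$ is then extracted from the bracket $Q(\partial_r I_1)^2+P(\partial_\theta I_1)^2$, upgrading $\Omega^{12}$ to $\Omega^{13}$; using $(\alpha r\cos\theta-1)^{13}=-\Omega^{13}$ and $\rho^{30}=(r^2+a^2\cos^2\theta)^{15}$ reproduces exactly the announced prefactor, while the residual $\theta$-dependence collapses to a polynomial $\sum_{i=0}^{17}T_i\cos^i\theta$ of degree seventeen.

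The hard part will be the symbolic simplification itself, which I would carry out in Maple. Two cancellations are essential and nonobvious: the highest-degree coefficient in $\cos\theta$ of the differentiated bracket must vanish (otherwise the numerator degree would exceed that of $\rho^{30}$ and the expression could not be a homogeneous ratio), and the whole combination $Q(\partial_r I_1)^2+P(\partial_\theta I_1)^2$ must carry the single extra factor $1-\alpha r\cos\theta$ that produces the integer power $13$. Verifying that these conspiracies indeed occur, and that the surviving coefficients are precisely the $T_i$, is where the computational weight lies. I would validate the final closed form through the limit $\alpha\to0$, which must return the previously established Kerr (and, if $\Lambda\neq0$, Kerr--de Sitter) expression for $I_5$, and by specializing to the equatorial plane $\theta=\pi/2$ and the axis $\theta=0$, whose far simpler forms furnish independent cross-checks.
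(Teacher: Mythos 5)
Your proposal is correct and follows essentially the same route as the paper: since $I_5\equiv k_\mu k^\mu$ with $k_\mu=-\nabla_\mu I_1=-\partial_\mu I_1$, the invariant reduces to $\tfrac{\Omega^2}{\rho^2}\bigl[Q\,(\partial_r I_1)^2+P\,(\partial_\theta I_1)^2\bigr]$ evaluated from the known $I_1$, and the paper likewise obtains the stated closed form by direct symbolic computation in Maple, with your $\Omega$/$\rho$ power counting being a useful (if not independently conclusive) consistency check on the prefactor $(\alpha r\cos\theta-1)^{13}/(r^2+a^2\cos^2\theta)^{15}$. The only small correction is that the quoted coefficients $T_i$ contain no $\Lambda$, so for this particular theorem you should set $\Lambda=0$ as well as $q=0$ in $P$ and $Q$; the cross-checks you propose (the $\alpha\to0$ limit and the axis and equatorial specializations, cf.\ Eqn.~(\ref{symaxisI5accelKerr})) are exactly the ones the paper records.
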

where the coefficients $T_i$ are given below:
\begin{align}
&T_0=-\Biggl(\left(\alpha^{6} r^{2}-\alpha^{4}\right) a^{6}+\left(-2 m \,r^{3} \alpha^{6}+2 m \alpha^{4} r +\alpha^{4} r^{2}-2 \alpha^{2}\right) a^{4}+\left(4 m \,r^{3} \alpha^{4}-4 m r \alpha^{2}-\alpha^{2} r^{2}-1\right) a^{2}\nonumber \\
&-2 \left(\alpha^{2} r^{2} m -m +\frac{1}{2} r \right) r \Biggr) r^{14},\\
&T_1=-\Biggl(\left(\alpha^{6} r^{3}+13 \alpha^{4} r \right) a^{6}+\left(-62 m \,r^{2} \alpha^{4}+r^{3} \alpha^{4}+64 m \alpha^{2}+26 r \alpha^{2}\right) a^{4}+\Biggl[60 \alpha^{2} r^{2} m -\alpha^{2} r^{3}-56 m \nonumber \\
&+13 r \Biggr] a^{2}+2 m \,r^{2}-r^{3}\Biggr) \alpha  r^{14},
\end{align}
\begin{align}
&T_2=+42 \Biggl(\left(\alpha^{6} r^{2}+\frac{1}{6} \alpha^{4}\right) a^{8}+\left(-\frac{8}{3} m \,r^{3} \alpha^{6}-\frac{13}{42} r^{4} \alpha^{6}+\frac{10}{3} m \alpha^{4} r +\alpha^{4} r^{2}+\frac{1}{3} \alpha^{2}\right) a^{6}\nonumber \\
&+\left(-\frac{1}{21} m \,r^{5} \alpha^{6}+16 m \,r^{3} \alpha^{4}-\frac{13}{21} \alpha^{4} r^{4}-\frac{44}{3} m r \alpha^{2}-\alpha^{2} r^{2}+\frac{1}{6}\right) a^{4}+\frac{2}{21} \Biggl[m \,r^{4} \alpha^{4}-16 \alpha^{2} r^{2} m -\frac{13}{4} \alpha^{2} r^{3}\nonumber \\
&+21 m -\frac{21}{2} r \Biggr] r \,a^{2}
-\frac{\alpha^{2} m \,r^{5}}{21}\Biggr) r^{12},\\
&T_3=-98 a^{2} \alpha  \Biggl(\alpha^{4} \left(-\frac{3}{7} \alpha^{2} r^{3}+m -\frac{51}{14} r \right) a^{6}+\frac{148 \left(-\frac{1}{2072} r^{5} \alpha^{4}+\alpha^{2} r^{2} m -\frac{3}{148} \alpha^{2} r^{3}-\frac{71}{74} m -\frac{51}{148} r \right) \alpha^{2} a^{4}}{7}\nonumber \\
&+\left(\frac{32}{49} m \,r^{4} \alpha^{4}-\frac{1}{49} r^{5} \alpha^{4}-\frac{872}{49} \alpha^{2} r^{2} m +\frac{3}{7} \alpha^{2} r^{3}+\frac{111}{7} m -\frac{51}{14} r \right) a^{2}-\frac{4 \left(\alpha^{2} r^{2} m +\frac{1}{56} \alpha^{2} r^{3}+m -\frac{3}{4} r \right) r^{2}}{7}\Biggr) r^{12},\\
&T_4=-462 a^{2} \Biggl(\left(\alpha^{6} r^{2}-\frac{1}{22} \alpha^{4}\right) a^{8}+\left(-\frac{118}{33} m \,r^{3} \alpha^{6}-\frac{17}{22} r^{4} \alpha^{6}+\frac{59}{11} m \alpha^{4} r +\alpha^{4} r^{2}-\frac{1}{11} \alpha^{2}\right) a^{6}\nonumber \\
&+\left(-\frac{10}{33} m \,r^{5} \alpha^{6}+\frac{652}{33} m \,r^{3} \alpha^{4}-\frac{17}{11} \alpha^{4} r^{4}-\frac{530}{33} m r \alpha^{2}-\alpha^{2} r^{2}-\frac{1}{22}\right) a^{4}\nonumber \\
&+\frac{4 \left(m \,r^{4} \alpha^{4}-\frac{35}{22} \alpha^{2} r^{2} m -\frac{51}{88} \alpha^{2} r^{3}+\frac{73}{44} m -\frac{3}{4} r \right) r \,a^{2}}{3}-\frac{2 \alpha^{2} m \,r^{5}}{11}\Biggr) r^{10},\\
&T_5=+980 a^{4} \alpha  \Biggl(\alpha^{4} \left(-\frac{33}{70} \alpha^{2} r^{3}+m -\frac{33}{20} r \right) a^{6}-\frac{1}{10}\Biggl[m \,r^{4} \alpha^{4}-\frac{1}{14} r^{5} \alpha^{4}-\frac{1026}{7} \alpha^{2} r^{2} m +\frac{33}{7} \alpha^{2} r^{3}+\frac{1004}{7} m \nonumber \\
&+33 r \Biggr] \alpha^{2} a^{4}+\left(\frac{71}{35} m \,r^{4} \alpha^{4}+\frac{1}{70} r^{5} \alpha^{4}-\frac{482}{35} \alpha^{2} r^{2} m +\frac{33}{70} \alpha^{2} r^{3}+\frac{321}{35} m -\frac{33}{20} r \right) a^{2}\nonumber \\
&-\frac{111 \left(\alpha^{2} r^{2} m -\frac{1}{222} \alpha^{2} r^{3}+\frac{14}{111} m -\frac{11}{37} r \right) r^{2}}{70}\Biggr) r^{10},\\
&T_6=+994 \Biggl(\left(\alpha^{6} r^{2}+\frac{5}{142} \alpha^{4}\right) a^{8}+\left(-\frac{428}{71} m \,r^{3} \alpha^{6}-\frac{231}{142} r^{4} \alpha^{6}+\frac{784}{71} m \alpha^{4} r +\alpha^{4} r^{2}+\frac{5}{71} \alpha^{2}\right) a^{6}\nonumber \\
&+\left(-\frac{177}{71} m \,r^{5} \alpha^{6}+\frac{2680}{71} m \,r^{3} \alpha^{4}-\frac{231}{71} \alpha^{4} r^{4}-\frac{1864}{71} m r \alpha^{2}-\alpha^{2} r^{2}+\frac{5}{142}\right) a^{4}+\frac{530}{71} \Biggl[m \,r^{4} \alpha^{4}-\frac{266}{265} \alpha^{2} r^{2} m \nonumber \\
&-\frac{231}{1060} \alpha^{2} r^{3}+\frac{2}{5} m -\frac{71}{530} r \Biggr] r \,a^{2}-\frac{73 \alpha^{2} m \,r^{5}}{71}\Biggr) a^{4} r^{8},
\end{align}
\begin{align}
&T_7=-3038 \Biggl(\alpha^{4} \left(-\frac{71}{217} \alpha^{2} r^{3}+m -\frac{1465}{3038} r \right) a^{6}-\frac{10}{31} \Biggl[m \,r^{4} \alpha^{4}+\frac{3}{140} r^{5} \alpha^{4}-\frac{1588}{49} \alpha^{2} r^{2} m +\frac{71}{70} \alpha^{2} r^{3}+\frac{7919}{245} m \nonumber \\
&+\frac{293}{98} r \Biggr] \alpha^{2} a^{4}+\left(\frac{1004}{217} m \,r^{4} \alpha^{4}-\frac{3}{217} r^{5} \alpha^{4}-\frac{20864}{1519} \alpha^{2} r^{2} m +\frac{71}{217} \alpha^{2} r^{3}+\frac{8383}{1519} m -\frac{1465}{3038} r \right) a^{2}\nonumber \\
&-\frac{642 \left(\alpha^{2} r^{2} m +\frac{1}{428} \alpha^{2} r^{3}-\frac{24}{107} m -\frac{71}{642} r \right) r^{2}}{217}\Biggr) a^{6} \alpha  r^{8},\\
&T_8=+35 \Biggl(a^{8} \alpha^{4}+\left(\frac{6864}{35} m \,r^{3} \alpha^{6}+\frac{293}{7} r^{4} \alpha^{6}-\frac{16766}{35} m \alpha^{4} r +2 \alpha^{2}\right) a^{6}+\Biggl[\frac{1568}{5} m \,r^{5} \alpha^{6}-\frac{59488}{35} m \,r^{3} \alpha^{4}\nonumber \\
&+\frac{586}{7} \alpha^{4} r^{4}+\frac{31676}{35} m r \alpha^{2}+1\Biggr] a^{4}+\left(-\frac{3728}{5} m \,r^{5} \alpha^{4}+\frac{25168}{35} \alpha^{2} r^{3} m +\frac{293}{7} \alpha^{2} r^{4}-\frac{434}{5} m r \right) a^{2}\nonumber \\
&+\frac{424 \alpha^{2} m \,r^{5}}{5}\Biggr) a^{6} r^{6},\\
&T_9=+2968 a^{8} \alpha  \Bigg(\alpha^{4} \left(\frac{1465 r}{2968}+m \right) a^{6}-\frac{217 \left(m \,r^{4} \alpha^{4}-\frac{5}{434} r^{5} \alpha^{4}-\frac{12584}{1519} \alpha^{2} r^{2} m +\frac{1864}{217} m -\frac{1465}{1519} r \right) \alpha^{2} a^{4}}{212}\nonumber \\
&+\left(\frac{7919}{742} m \,r^{4} \alpha^{4}+\frac{5}{212} r^{5} \alpha^{4}-\frac{7436}{371} \alpha^{2} r^{2} m +\frac{196}{53} m +\frac{1465}{2968} r \right) a^{2}-\frac{8383 \alpha^{2} m \,r^{4}}{1484}+\frac{5 \alpha^{2} r^{5}}{424}+\frac{858 m \,r^{2}}{371}\Biggr) r^{6},\\
&T_{10}=-994 a^{8} \Biggl(\left(\alpha^{6} r^{2}-\frac{3}{142} \alpha^{4}\right) a^{8}+\left(\frac{144}{71} m \,r^{3} \alpha^{6}-\frac{1465}{994} r^{4} \alpha^{6}-\frac{642}{71} m \alpha^{4} r +\alpha^{4} r^{2}-\frac{3}{71} \alpha^{2}\right) a^{6}\nonumber \\ &+\left(\frac{8383}{497} m \,r^{5} \alpha^{6}-\frac{20864}{497} m \,r^{3} \alpha^{4}-\frac{1465}{497} \alpha^{4} r^{4}+\frac{1004}{71} m r \alpha^{2}-\alpha^{2} r^{2}-\frac{3}{142}\right) a^{4}+\Biggl[-\frac{15838}{497} m \,r^{5} \alpha^{4}\nonumber \\
&+\frac{15880}{497} \alpha^{2} r^{3} m -\frac{1465}{994} \alpha^{2} r^{4}-\frac{70}{71} m r -r^{2}\Biggr] a^{2}+\frac{217 \alpha^{2} m \,r^{5}}{71}\Biggr) r^{4},\\
&T_{11}=-1022 a^{10} \alpha  \Biggl(\alpha^{4} \left(\frac{71}{73} \alpha^{2} r^{3}+m +\frac{231}{146} r \right) a^{6}-\frac{212 \alpha^{2}}{73} \Biggl[m \,r^{4} \alpha^{4}+\frac{5}{424} r^{5} \alpha^{4}-\frac{133}{53} \alpha^{2} r^{2} m -\frac{71}{212} \alpha^{2} r^{3}\nonumber \\
&+\frac{5}{2} m -\frac{231}{212} r \Biggr] a^{4}+\left(\frac{1864}{73} m \,r^{4} \alpha^{4}-\frac{5}{73} r^{5} \alpha^{4}-\frac{2680}{73} \alpha^{2} r^{2} m -\frac{71}{73} \alpha^{2} r^{3}+\frac{177}{73} m +\frac{231}{146} r \right) a^{2}\nonumber \\
&-\frac{784 \left(\alpha^{2} r^{2} m +\frac{5}{1568} \alpha^{2} r^{3}-\frac{107}{196} m +\frac{71}{784} r \right) r^{2}}{73}\Biggr) r^{4},
\end{align}

\begin{align}
&T_{12}=+462 a^{10} \Biggl(\left(\alpha^{6} r^{2}+\frac{1}{66} \alpha^{4}\right) a^{8}+\left(-\frac{14}{33} m \,r^{3} \alpha^{6}-\frac{7}{2} r^{4} \alpha^{6}-\frac{37}{11} m \alpha^{4} r +\alpha^{4} r^{2}+\frac{1}{33} \alpha^{2}\right) a^{6}\nonumber \\
&+\left(\frac{214}{11} m \,r^{5} \alpha^{6}-\frac{964}{33} m \,r^{3} \alpha^{4}-7 \alpha^{4} r^{4}+\frac{142}{33} m r \alpha^{2}-\alpha^{2} r^{2}+\frac{1}{66}\right) a^{4}+\Biggl[-\frac{1004}{33} m \,r^{5} \alpha^{4}+\frac{342}{11} \alpha^{2} r^{3} m -\frac{7}{2} \alpha^{2} r^{4}\nonumber \\
&-\frac{7}{33} m r -r^{2}\Biggr] a^{2}+\frac{70 \alpha^{2} m \,r^{5}}{33}\Biggr) r^{2},\\
&T_{13}=+84 \Biggl(\alpha^{4} \left(\frac{11}{2} \alpha^{2} r^{3}+m +\frac{17}{4} r \right) a^{6}-\frac{73 \left(m \,r^{4} \alpha^{4}-\frac{3}{146} r^{5} \alpha^{4}-\frac{70}{73} \alpha^{2} r^{2} m -\frac{33}{73} \alpha^{2} r^{3}+\frac{44}{73} m -\frac{51}{73} r \right) \alpha^{2} a^{4}}{6}\nonumber \\
&+\left(\frac{265}{3} m \,r^{4} \alpha^{4}+\frac{1}{2} r^{5} \alpha^{4}-\frac{326}{3} \alpha^{2} r^{2} m -\frac{11}{2} \alpha^{2} r^{3}+\frac{5}{3} m +\frac{17}{4} r \right) a^{2}\nonumber \\
&-\frac{59 \left(\alpha^{2} r^{2} m -\frac{1}{118} \alpha^{2} r^{3}-\frac{2}{3} m +\frac{11}{59} r \right) r^{2}}{2}\Biggr) a^{12} \alpha  r^{2},\\
&T_{14}=-42 \Biggl(\left(\alpha^{6} r^{2}-\frac{1}{42} \alpha^{4}\right) a^{8}-\frac{4 \alpha^{2} \left(m \,r^{3} \alpha^{4}+\frac{51}{8} \alpha^{4} r^{4}+m r \alpha^{2}-\frac{3}{4} \alpha^{2} r^{2}+\frac{1}{28}\right) a^{6}}{3}\nonumber \\
&+\left(37 m \,r^{5} \alpha^{6}-\frac{872}{21} m \,r^{3} \alpha^{4}-17 \alpha^{4} r^{4}+\frac{32}{21} m r \alpha^{2}-\alpha^{2} r^{2}-\frac{1}{42}\right) a^{4}+\Biggl[-\frac{142}{3} m \,r^{5} \alpha^{4}+\frac{148}{3} \alpha^{2} r^{3} m -\frac{17}{2} \alpha^{2} r^{4}\nonumber \\
&-r^{2}\Biggr] a^{2}+\frac{7 \alpha^{2} m \,r^{5}}{3}\Biggr) a^{12},\\
&T_{15}=-2 \Biggl(\alpha^{4} \left(21 \alpha^{2} r^{3}+m +\frac{13}{2} r \right) a^{6}+\left(-42 m \,r^{4} \alpha^{6}-\frac{7}{2} r^{5} \alpha^{6}+32 m \,r^{2} \alpha^{4}+21 r^{3} \alpha^{4}-2 m \alpha^{2}+13 r \alpha^{2}\right) a^{4}\nonumber \\
&+\left(308 m \,r^{4} \alpha^{4}-7 r^{5} \alpha^{4}-336 \alpha^{2} r^{2} m -21 \alpha^{2} r^{3}+m +\frac{13}{2} r \right) a^{2}-70 \alpha^{2} m \,r^{4}-\frac{7 \alpha^{2} r^{5}}{2}+56 m \,r^{2}-21 r^{3}\Biggr) a^{14}\alpha,\\
&T_{16}=+\Biggl(a^{6} \alpha^{6}+\left(-2 \alpha^{6} m r -13 \alpha^{6} r^{2}+\alpha^{4}\right) a^{4}+\left(56 m \,r^{3} \alpha^{6}-60 m \alpha^{4} r -26 \alpha^{4} r^{2}-\alpha^{2}\right) a^{2}\nonumber \\
&-64 m \,r^{3} \alpha^{4}+62 m r \alpha^{2}-13 \alpha^{2} r^{2}-1\Biggr) a^{16},\\
&T_{17}=a^{16} \alpha  \Biggl(a^{6} \alpha^{6} r -2 \left(\alpha^{2} r^{2} m -\frac{1}{2} \alpha^{2} r^{3}-m -\frac{1}{2} r \right) \alpha^{4} a^{4}+4 \left(\alpha^{2} r^{2} m +\frac{1}{2} \alpha^{2} r^{3}-m -\frac{1}{4} r \right) \alpha^{2} a^{2}\nonumber \\
&-2 \alpha^{2} r^{2} m +\alpha^{2} r^{3}+2 m -r \Biggr).
\label{gradflowaccelI5}
\end{align}
\begin{remark}
On the symmetry axis ($\theta=0$), the norm of the gradient vector $k_{\mu}$ for an accelerating Kerr black hole, acquires the form:
\begin{align}
I_5&=-\frac{82944 m^{4} \left(\alpha  r -1\right)^{13} \left(\alpha  r +1\right) \left(\alpha^{2} a^{2}+1\right)^{2} \left(a^{2}-2 m r +r^{2}\right)}{\left(a^{2}+r^{2}\right)^{15}} \Biggl(\alpha  a^{8}-21 a^{6} \alpha  r^{2}+35 a^{4} \alpha  r^{4}\nonumber \\
&-7 a^{2} \alpha  r^{6}-7 a^{6} r +35 a^{4} r^{3}-21 a^{2} r^{5}+r^{7}\Biggr)^{2}.\label{symaxisI5accelKerr}
\end{align}
The vanishing of $I_5$ on the axis singles out the horizons of the Kerr black hole as well as the acceleration horizon $r=1/\alpha$ (away from the discrete positive real roots of the septic radial polynomial).
\end{remark}
\begin{remark}
In the equatorial plane, $\theta=\frac{\pi}{2}$, eqn.(\ref{gradflowaccelI5}) reduces to the expression:
\begin{align}
I_5&=-\frac{82944 m^{4}}{r^{16}} \Biggl(-2 m \alpha^{2} \left(a \alpha -1\right)^{2} \left(a \alpha +1\right)^{2} r^{3}+\left(a^{6} \alpha^{6}+\alpha^{4} a^{4}-\alpha^{2} a^{2}-1\right) r^{2}\nonumber \\
&+\left(2 \alpha^{4} a^{4} m -4 \alpha^{2} a^{2} m +2 m \right) r -a^{2} \left(\alpha^{2} a^{2}+1\right)^{2}\Biggr).
\end{align}
\end{remark}

In Fig.\ref{RegionIfunfAccelKerr} we display region plots for the invariant $I_5$ for an accelerating Kerr black hole. Again we note the intriguing global behaviour of $I_5$ we observed for the non-accelerating rotating black holes. In general terms, the area of regions where $I_5<0$ on any $r-\theta$ hypersurface decreases as the Kerr parameter increases. On the other hand, for fixed value of the Kerr parameter $a$, increasing the acceleration $\alpha$ has a minimal effect and only distorts the symmetry of the plots with small or zero acceleration.

\begin{figure}[ptbh]
\centering
  \begin{subfigure}[b]{.60\linewidth}
    \centering
    \includegraphics[width=.99\textwidth]{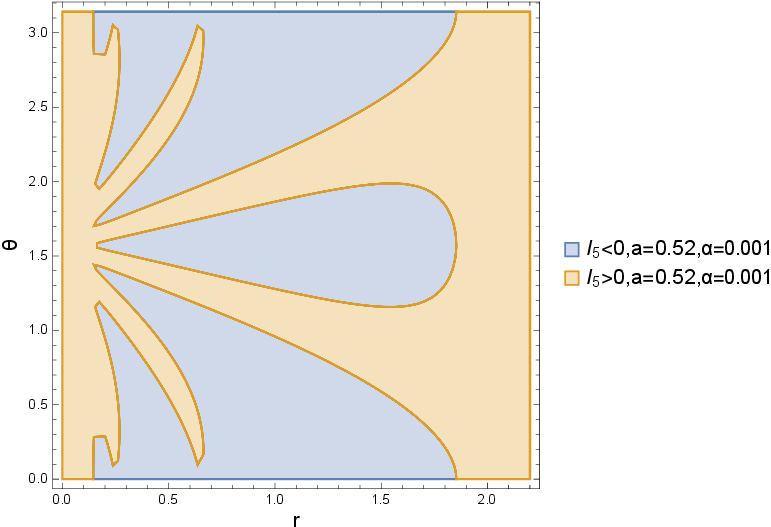}
    \caption{Region plot of $I_5$ for accelerating Kerr BH.}\label{RegionNormI5AccelKa052alpha0001}
  \end{subfigure}%
  \begin{subfigure}[b]{.60\linewidth}
    \centering
    \includegraphics[width=.99\textwidth]{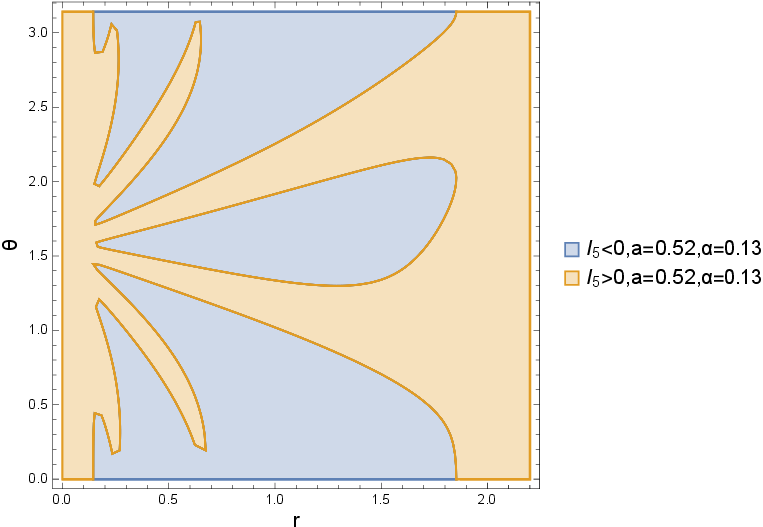}
    \caption{Region plot of  $I_5$ for accelerating  Kerr BH.}\label{RegionI5accelKa052alpha013}
  \end{subfigure}\\
  \begin{subfigure}[b]{.60\linewidth}
    \centering
    \includegraphics[width=.99\textwidth]{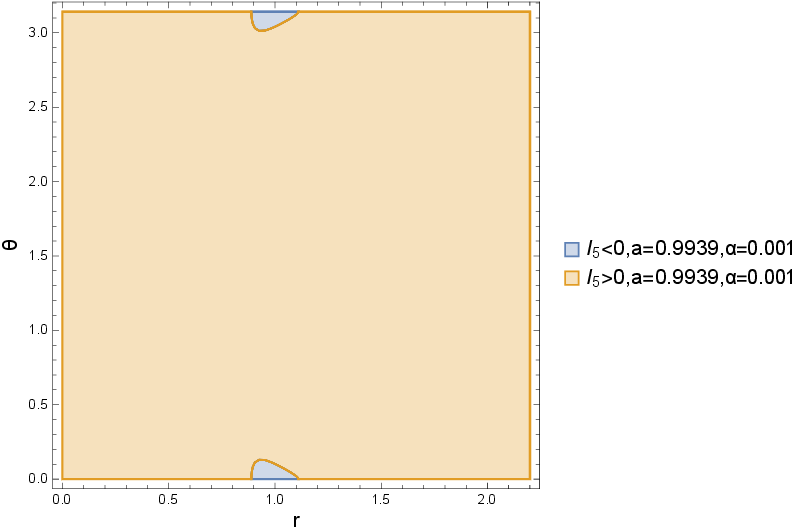}
    \caption{Region Plot of $I_5$ for accelerating Kerr BH.}\label{RegionI5kaccel09939alpha0001}
  \end{subfigure}%
  \begin{subfigure}[b]{.60\linewidth}
    \centering
    \includegraphics[width=.99\textwidth]{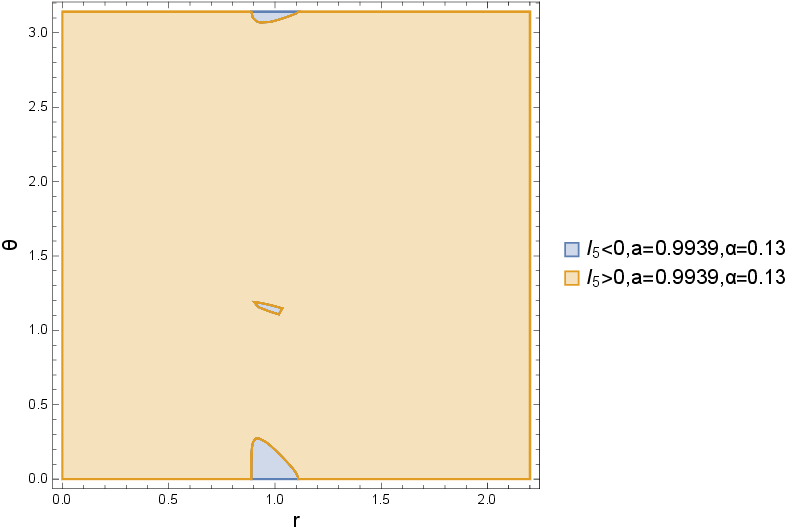}
    \caption{Region Plot of $I_5$ for accelerating Kerr BH.}\label{RegionI5kaccela09939alpha013}
  \end{subfigure}
  \caption{Region plots of the  curvature invariant $I_5$ , for the accelerating Kerr black hole. (a) for spin parameter $a=0.52$, acceleration $\alpha=0.001,m=1$. (b) For spin  $a=0.52$, acceleration  $\alpha=0.13,m=1$. (c) For high spin  $a=0.9939$, acceleration parameter $\alpha=0.001$, and mass $m=1$. (d) For high spin  $a=0.9939$, $\alpha=0.13$, and mass $m=1$.}\label{RegionIfunfAccelKerr}
\end{figure}

\begin{theorem}
We computed the invariant $I_6$ for the accelerating Kerr black hole with the result:
\begin{align}
I_6=-\frac{82944 a^{2} m^{4} \left(\alpha  r \cos \! \left(\theta \right)-1\right)^{13}}{\left(r^{2}+a^{2} \cos \! \left(\theta \right)^{2}\right)^{15}}\sum_{i=0}^{17}\mathcal{C}_i\;(\cos(\theta))^i,
\end{align}
\end{theorem}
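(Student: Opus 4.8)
The plan is to evaluate $I_6\equiv l_\mu l^\mu$ with $l_\mu=-\nabla_\mu I_2$ directly, where $I_2=C^{*}_{\alpha\beta\gamma\delta}C^{\alpha\beta\gamma\delta}$ is the Chern--Pontryagin (second ZM Weyl) invariant, and to exploit the symmetries of the metric (\ref{epitaxmelaniopi}) specialised to the accelerating Kerr case ($q=0$, $\Lambda=0$), for which $\Omega=1-\alpha r\cos\theta$, $P=1-2\alpha m\cos\theta+\alpha^2a^2\cos^2\theta$ and $Q=(a^2-2mr+r^2)(1-\alpha^2r^2)$. Since $I_2$ is a scalar, $l_\mu=-\partial_\mu I_2$; and since the spacetime is stationary and axisymmetric, $I_2=I_2(r,\theta)$, so only the $r$ and $\theta$ components of $l_\mu$ survive.

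The decisive simplification is that the $(r,\theta)$ block of (\ref{epitaxmelaniopi}) is diagonal and decoupled from the $(t,\phi)$ block. Reading off $g_{rr}=\rho^2/(\Omega^2Q)$ and $g_{\theta\theta}=\rho^2/(\Omega^2P)$, hence $g^{rr}=\Omega^2Q/\rho^2$ and $g^{\theta\theta}=\Omega^2P/\rho^2$, the norm collapses to
\begin{equation}
I_6=g^{rr}(\partial_rI_2)^2+g^{\theta\theta}(\partial_\theta I_2)^2=\frac{\Omega^2}{\rho^2}\Bigl[Q\,(\partial_rI_2)^2+P\,(\partial_\theta I_2)^2\Bigr].
\end{equation}
Thus the whole task reduces to differentiating the already-known closed form of $I_2$ for accelerating Kerr--Newman--(anti-)de Sitter (eqn.(94) of \cite{KraniotisCurvature}, specialised to $q=0$, $\Lambda=0$), squaring the two partials, weighting them by $Q$ and $P$ respectively, and reducing over a common denominator.

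I would carry out the differentiation and recombination symbolically in Maple. The anticipated structure organises the simplification: $I_2$ has the form $m^2\,a\cos\theta\,(\text{polynomial in }r,a\cos\theta)\,\Omega^{k}/\rho^{12}$, so upon differentiating, squaring, and multiplying by $g^{rr}$ or $g^{\theta\theta}$, one expects an overall denominator $\rho^{30}=(r^2+a^2\cos^2\theta)^{15}$, an overall factor $a^2$ (from the single power of $a\cos\theta$ in $I_2$), and a power of $\Omega=(1-\alpha r\cos\theta)$. Collecting over the common denominator should expose the prefactor $-82944\,a^2m^4(\alpha r\cos\theta-1)^{13}/(r^2+a^2\cos^2\theta)^{15}$, leaving a numerator that is polynomial in $\cos\theta$; collecting that polynomial in powers of $\cos\theta$ yields the degree-$17$ expansion $\sum_{i=0}^{17}\mathcal{C}_i\cos^i\theta$ whose coefficients are the claimed $\mathcal{C}_i$.

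The main obstacle is the symbolic factorisation rather than any conceptual step: Maple's raw output for $Q(\partial_rI_2)^2+P(\partial_\theta I_2)^2$ will not be presented in factored form, and the nontrivial part is verifying that the power of $\Omega$ assembles cleanly to the thirteenth order and that the numerator genuinely truncates to a degree-$17$ polynomial in $\cos\theta$ with the stated scalar prefactor. I would validate the final expression through independent consistency checks: (i) the limit $\alpha\to0$ must reproduce the non-accelerating Kerr result for $I_6$ (the Corollary following eqn.(\ref{gradientcurvaI6})), for which $(\alpha r\cos\theta-1)^{13}\to-1$ and the polynomial degree drops from $17$ to $14$ with only even powers surviving; (ii) the axial slice $\theta=0$ and equatorial slice $\theta=\pi/2$ must reduce to the specialised forms of $I_6$ recorded in the non-accelerating analysis; and (iii) the overall factor $a^2$ guarantees $I_6\to0$ as $a\to0$, consistent with the vanishing of the Chern--Pontryagin gradient in the static (accelerating Schwarzschild) limit.
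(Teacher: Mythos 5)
Your proposal is correct and follows essentially the same route as the paper: the theorem is established by direct symbolic computation of $l_{\mu}l^{\mu}=g^{\mu\nu}\partial_{\mu}I_2\,\partial_{\nu}I_2$ in Maple, and your reduction to $\frac{\Omega^2}{\rho^2}\left[Q\,(\partial_r I_2)^2+P\,(\partial_\theta I_2)^2\right]$ via the diagonal, decoupled $(r,\theta)$ block is precisely what that tensorial computation amounts to for a stationary axisymmetric scalar. The validation checks you propose ($\alpha\to 0$ recovering the non-accelerating corollary, the axial and equatorial specialisations, and the overall $a^2$ factor) are the same consistency tests the paper relies on.
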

where the coefficients $\mathcal{C}_i$ are:
\begin{align}
&\mathcal{C}_0=+r^{14} \left(8 \alpha^{4} m \,r^{3}-8 m r \alpha^{2}+\left(\alpha^{2} a^{2}+1\right)^{2}\right),\\
&\mathcal{C}_1=-2 r^{14} \alpha  \left(\left(32 a^{2} m \alpha^{4}-28 m \alpha^{2}\right) r^{2}-\frac{15 \left(\alpha^{2} a^{2}+1\right)^{2} r}{2}+a^{4} m \alpha^{4}-30 a^{2} m \alpha^{2}+29 m \right),\\
&\mathcal{C}_2=-48 r^{12} \Biggl(\frac{m \,r^{5} \alpha^{4}}{6}-\frac{5 \alpha^{2} \left(\alpha^{2} a^{2}+1\right)^{2} r^{4}}{16}-\frac{31 \alpha^{2} \left(a^{4} \alpha^{4}-\frac{162}{31} \alpha^{2} a^{2}+\frac{19}{31}\right) m \,r^{3}}{12}\nonumber \\
&+\left(a^{6} \alpha^{6}+a^{4} \alpha^{4}-\alpha^{2} a^{2}-1\right) r^{2}+\frac{77 \left(a^{4} \alpha^{4}-\frac{42}{11} \alpha^{2} a^{2}+\frac{7}{11}\right) m r}{24}-\frac{7 a^{2} \left(\alpha^{2} a^{2}+1\right)^{2}}{48}\Biggr),\\
&\mathcal{C}_3=+84 r^{12} \Biggl(\frac{\alpha^{2} \left(\alpha^{2} a^{2}+1\right)^{2} r^{5}}{84}+\left(-\frac{1}{42} a^{4} m \alpha^{6}+\frac{5}{7} a^{2} m \alpha^{4}-\frac{29}{42} m \alpha^{2}\right) r^{4}\nonumber \\
&+\left(-\frac{4}{7} a^{6} \alpha^{6}-\frac{4}{7} a^{4} \alpha^{4}+\frac{4}{7} \alpha^{2} a^{2}+\frac{4}{7}\right) r^{3}+\frac{515 \left(a^{4} \alpha^{4}-\frac{442}{515} \alpha^{2} a^{2}-\frac{17}{515}\right) m \,r^{2}}{21}-\frac{49 a^{2} \left(\alpha^{2} a^{2}+1\right)^{2} r}{12}\nonumber \\
&+a^{2} m \left(a^{4} \alpha^{4}-24 \alpha^{2} a^{2}+\frac{55}{3}\right)\Biggr) \alpha,\\
&\mathcal{C}_4=+448 r^{10} \Biggl(-\frac{11 \alpha^{2} \left(a^{4} \alpha^{4}-\frac{42}{11} \alpha^{2} a^{2}+\frac{7}{11}\right) m \,r^{5}}{32}-\frac{49 a^{2} \alpha^{2} \left(\alpha^{2} a^{2}+1\right)^{2} r^{4}}{64}\nonumber \\
&-\frac{29 \left(a^{4} \alpha^{4}-\frac{164}{29} \alpha^{2} a^{2}+\frac{17}{29}\right) a^{2} \alpha^{2} m \,r^{3}}{8}+\left(a^{8} \alpha^{6}+a^{6} \alpha^{4}-a^{4} \alpha^{2}-a^{2}\right) r^{2}\nonumber\\
&+\frac{87 a^{2} m \left(a^{4} \alpha^{4}-\frac{268}{87} \alpha^{2} a^{2}+\frac{35}{87}\right) r}{16}+\frac{3 a^{4} \left(\alpha^{2} a^{2}+1\right)^{2}}{64}\Biggr),\\
&\mathcal{C}_5=-1022 \Biggl(-\frac{\alpha^{2} \left(\alpha^{2} a^{2}+1\right)^{2} r^{5}}{146}-\frac{6 \left(a^{4} \alpha^{4}-24 \alpha^{2} a^{2}+\frac{55}{3}\right) \alpha^{2} m \,r^{4}}{73}\nonumber \\
&-\frac{32 \left(\alpha  a -1\right) \left(\alpha  a +1\right) \left(\alpha^{2} a^{2}+1\right)^{2} r^{3}}{73}+\frac{1028 \left(a^{4} \alpha^{4}-\frac{240}{257} \alpha^{2} a^{2}-\frac{3}{257}\right) m \,r^{2}}{73}\nonumber \\
&-\frac{237 a^{2} \left(\alpha^{2} a^{2}+1\right)^{2} r}{146}+a^{2} m \left(a^{4} \alpha^{4}-\frac{998}{73} \alpha^{2} a^{2}+\frac{645}{73}\right)\Biggr) r^{10} a^{2} \alpha,
\end{align}
\begin{align}
&\mathcal{C}_6=-1008 r^{8} a^{2} \Biggl(-\frac{29 \alpha^{2} m \left(a^{4} \alpha^{4}-\frac{268}{87} \alpha^{2} a^{2}+\frac{35}{87}\right) r^{5}}{12}-\frac{79 a^{2} \alpha^{2} \left(\alpha^{2} a^{2}+1\right)^{2} r^{4}}{48}\nonumber \\
&-\frac{215 a^{2} \alpha^{2} \left(a^{4} \alpha^{4}-\frac{1338}{215} \alpha^{2} a^{2}+\frac{267}{215}\right) m \,r^{3}}{36}+\left(a^{8} \alpha^{6}+a^{6} \alpha^{4}-a^{4} \alpha^{2}-a^{2}\right) r^{2}\nonumber \\
&+\frac{263 \left(a^{4} \alpha^{4}-\frac{618}{263} \alpha^{2} a^{2}+\frac{217}{789}\right) a^{2} m r}{24}-\frac{5 a^{4} \left(\alpha^{2} a^{2}+1\right)^{2}}{144}\Biggr),\\
&\mathcal{C}_7=+2968 r^{8} a^{4} \alpha  \Biggl(\frac{3 \alpha^{2} \left(\alpha^{2} a^{2}+1\right)^{2} r^{5}}{424}-\frac{73 \left(a^{4} \alpha^{4}-\frac{998}{73} \alpha^{2} a^{2}+\frac{645}{73}\right) \alpha^{2} m \,r^{4}}{212}\nonumber \\
&-\frac{18 \left(\alpha  a -1\right) \left(\alpha  a +1\right) \left(\alpha^{2} a^{2}+1\right)^{2} r^{3}}{53}+\frac{7933 \left(a^{4} \alpha^{4}-\frac{10446}{7933} \alpha^{2} a^{2}+\frac{497}{7933}\right) m \,r^{2}}{742}\nonumber \\
&-\frac{1395 a^{2} \left(\alpha^{2} a^{2}+1\right)^{2} r}{2968}+a^{2} m \left(a^{4} \alpha^{4}-\frac{3977}{371} \alpha^{2} a^{2}+\frac{2087}{371}\right)\Biggr),\\
&\mathcal{C}_8=+35 r^{6} a^{4} \Biggl(-\frac{1578 \left(a^{4} \alpha^{4}-\frac{618}{263} \alpha^{2} a^{2}+\frac{217}{789}\right) \alpha^{2} m \,r^{5}}{5}-\frac{279 a^{2} \alpha^{2} \left(\alpha^{2} a^{2}+1\right)^{2} r^{4}}{7}\nonumber \\
&-\frac{6864 a^{2} \alpha^{2} m \left(a^{4} \alpha^{4}-\frac{26}{3} \alpha^{2} a^{2}+\frac{11}{3}\right) r^{3}}{35}+\frac{16696 a^{2} m \left(a^{4} \alpha^{4}-\frac{3977}{2087} \alpha^{2} a^{2}+\frac{371}{2087}\right) r}{35}+a^{4} \left(\alpha^{2} a^{2}+1\right)^{2}\Biggr),\\
&\mathcal{C}_9=-3038 r^{6} a^{6} \alpha  \Biggl(-\frac{5 \alpha^{2} \left(\alpha^{2} a^{2}+1\right)^{2} r^{5}}{434}-\frac{212 \alpha^{2} \left(a^{4} \alpha^{4}-\frac{3977}{371} \alpha^{2} a^{2}+\frac{2087}{371}\right) m \,r^{4}}{217}\nonumber \\
&+\frac{12584 \left(a^{4} \alpha^{4}-\frac{26}{11} \alpha^{2} a^{2}+\frac{3}{11}\right) m \,r^{2}}{1519}+\frac{45 a^{2} \left(\alpha^{2} a^{2}+1\right)^{2} r}{98}+a^{2} m \left(a^{4} \alpha^{4}-\frac{1854}{217} \alpha^{2} a^{2}+\frac{789}{217}\right)\Biggr),\\
&\mathcal{C}_{10}=+1008 r^{4} a^{6} \Biggl(\frac{2087 \alpha^{2} m \left(a^{4} \alpha^{4}-\frac{3977}{2087} \alpha^{2} a^{2}+\frac{371}{2087}\right) r^{5}}{126}-\frac{155 a^{2} \alpha^{2} \left(\alpha^{2} a^{2}+1\right)^{2} r^{4}}{112}\nonumber \\
&+\frac{71 a^{2} \alpha^{2} \left(a^{4} \alpha^{4}-\frac{10446}{497} \alpha^{2} a^{2}+\frac{7933}{497}\right) m \,r^{3}}{36}+\left(a^{8} \alpha^{6}+a^{6} \alpha^{4}-a^{4} \alpha^{2}-a^{2}\right) r^{2}\nonumber \\
&-\frac{215 a^{2} \left(a^{4} \alpha^{4}-\frac{998}{645} \alpha^{2} a^{2}+\frac{73}{645}\right) m r}{24}+\frac{a^{4} \left(\alpha^{2} a^{2}+1\right)^{2}}{48}\Biggr),\\
&\mathcal{C}_{11}=+980 r^{4} a^{8} \alpha  \Biggl(\frac{\alpha^{2} \left(\alpha^{2} a^{2}+1\right)^{2} r^{5}}{28}-\frac{31 \left(a^{4} \alpha^{4}-\frac{1854}{217} \alpha^{2} a^{2}+\frac{789}{217}\right) \alpha^{2} m \,r^{4}}{10}\nonumber \\
&+\frac{36 \left(\alpha  a -1\right) \left(\alpha  a +1\right) \left(\alpha^{2} a^{2}+1\right)^{2} r^{3}}{35}+\frac{267 \left(a^{4} \alpha^{4}-\frac{446}{89} \alpha^{2} a^{2}+\frac{215}{267}\right) m \,r^{2}}{35}+\frac{237 a^{2} \left(\alpha^{2} a^{2}+1\right)^{2} r}{140}\nonumber \\
&+a^{2} m \left(a^{4} \alpha^{4}-\frac{268}{35} \alpha^{2} a^{2}+\frac{87}{35}\right)\Biggr),
\end{align}
\begin{align}
&\mathcal{C}_{12}=-448 r^{2} a^{8} \Biggl(\frac{645 \alpha^{2} \left(a^{4} \alpha^{4}-\frac{998}{645} \alpha^{2} a^{2}+\frac{73}{645}\right) m \,r^{5}}{32}-\frac{237 a^{2} \alpha^{2} \left(\alpha^{2} a^{2}+1\right)^{2} r^{4}}{64}\nonumber \\
&-\frac{3 a^{2} \left(a^{4} \alpha^{4}+80 \alpha^{2} a^{2}-\frac{257}{3}\right) \alpha^{2} m \,r^{3}}{8}+\left(a^{8} \alpha^{6}+a^{6} \alpha^{4}-a^{4} \alpha^{2}-a^{2}\right) r^{2}-\frac{55 \left(a^{4} \alpha^{4}-\frac{72}{55} \alpha^{2} a^{2}+\frac{3}{55}\right) a^{2} m r}{16}\nonumber \\
&-\frac{a^{4} \left(\alpha^{2} a^{2}+1\right)^{2}}{64}\Biggr),\\
&\mathcal{C}_{13}=-98 \Biggl(-\frac{3 \alpha^{2} \left(\alpha^{2} a^{2}+1\right)^{2} r^{5}}{14}-10 \alpha^{2} \left(a^{4} \alpha^{4}-\frac{268}{35} \alpha^{2} a^{2}+\frac{87}{35}\right) m \,r^{4}+\Biggl[\frac{32}{7} a^{6} \alpha^{6}+\frac{32}{7} a^{4} \alpha^{4}-\frac{32}{7} \alpha^{2} a^{2}\nonumber \\
&-\frac{32}{7}\Bigg] r^{3}+\frac{68 \left(a^{4} \alpha^{4}-\frac{164}{17} \alpha^{2} a^{2}+\frac{29}{17}\right) m \,r^{2}}{7}+\frac{7 a^{2} \left(\alpha^{2} a^{2}+1\right)^{2} r}{2}+a^{2} m \left(a^{4} \alpha^{4}-6 \alpha^{2} a^{2}+\frac{11}{7}\right)\Biggr) r^{2} a^{10} \alpha,\\
&\mathcal{C}_{14}=+48 a^{10} \Biggl(\frac{385 \left(a^{4} \alpha^{4}-\frac{72}{55} \alpha^{2} a^{2}+\frac{3}{55}\right) \alpha^{2} m \,r^{5}}{12}-\frac{343 a^{2} \alpha^{2} \left(\alpha^{2} a^{2}+1\right)^{2} r^{4}}{48}\nonumber \\
&-\frac{17 \left(a^{4} \alpha^{4}+26 \alpha^{2} a^{2}-\frac{515}{17}\right) a^{2} \alpha^{2} m \,r^{3}}{12}+\left(a^{8} \alpha^{6}+a^{6} \alpha^{4}-a^{4} \alpha^{2}-a^{2}\right) r^{2}\nonumber \\
&+\left(-\frac{29}{24} a^{6} m \alpha^{4}+\frac{5}{4} a^{4} m \alpha^{2}-\frac{1}{24} a^{2} m \right) r +\frac{a^{4} \left(\alpha^{2} a^{2}+1\right)^{2}}{48}\Biggr),\\
&\mathcal{C}_{15}=+48 a^{12} \alpha  \Biggl(\frac{7 \alpha^{2} \left(\alpha^{2} a^{2}+1\right)^{2} r^{5}}{48}-\frac{49 \left(a^{4} \alpha^{4}-6 \alpha^{2} a^{2}+\frac{11}{7}\right) \alpha^{2} m \,r^{4}}{24}+\left(a^{6} \alpha^{6}+a^{4} \alpha^{4}-\alpha^{2} a^{2}-1\right) r^{3}\nonumber \\
&+\frac{19 \left(a^{4} \alpha^{4}-\frac{162}{19} \alpha^{2} a^{2}+\frac{31}{19}\right) m \,r^{2}}{12}+\frac{5 a^{2} \left(\alpha^{2} a^{2}+1\right)^{2} r}{16}-\frac{a^{4} m \alpha^{2}}{6}\Biggr),\\
&\mathcal{C}_{16}=+15 r \left(\left(-\frac{58}{15} a^{4} m \alpha^{4}+4 a^{2} m \alpha^{2}-\frac{2}{15} m \right) r^{2}+a^{2} \left(\alpha^{2} a^{2}+1\right)^{2} r +\frac{56 a^{2} \left(\alpha^{2} a^{2}-\frac{8}{7}\right) m}{15}\right) a^{12} \alpha^{2},\\
&\mathcal{C}_{17}=\left(\left(\alpha^{2} a^{2}+1\right)^{2} r^{3}-8 a^{2} m \alpha^{2} r^{2}+8 a^{2} m \right) a^{14} \alpha^{3},
\end{align}

\begin{theorem}\label{sebenepitaxkerr}
We computed the invariant $I_7$ for an accelerating Kerr black hole:
\begin{align}
&I_7=-\frac{580608 \left(\alpha  r \cos \! \left(\theta \right)-1\right)^{13} a \,m^{4}}{\left(r^{2}+a^{2} \cos \! \left(\theta \right)^{2}\right)^{15}}\sum_{i=0}^{17}\mathcal{F}_i\;(\cos(\theta))^i,
\end{align}
\end{theorem}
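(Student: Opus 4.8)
The plan is to exploit the structural simplicity of the contraction $I_7=k_\mu l^\mu=g^{\mu\nu}\nabla_\mu I_1\,\nabla_\nu I_2$. Because $I_1$ and $I_2$ are scalars built algebraically from the Weyl tensor, they inherit the symmetries of the metric (\ref{epitaxmelaniopi}): they are independent of $t$ and $\phi$ and depend only on $r$ and $\theta$. Hence $\nabla_\mu I_1=\partial_\mu I_1$ has nonvanishing components only for $\mu\in\{r,\theta\}$, and likewise for $I_2$. Since the line element (\ref{epitaxmelaniopi}) is block diagonal in the $(t,\phi)$ and $(r,\theta)$ sectors, one has $g^{r\theta}=0$ and the contraction collapses to the two-term sum
\begin{equation}
I_7=g^{rr}\,\partial_r I_1\,\partial_r I_2+g^{\theta\theta}\,\partial_\theta I_1\,\partial_\theta I_2,
\end{equation}
with $g^{rr}=\Omega^2 Q/\rho^2$ and $g^{\theta\theta}=\Omega^2 P/\rho^2$ read directly off (\ref{epitaxmelaniopi}). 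This reduces the whole problem to ordinary partial differentiation of two known scalars followed by a single algebraic contraction.

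First I would specialise the accelerating Kerr-Newman-(anti-)de Sitter geometry to the accelerating Kerr case by setting $q=0$ and $\Lambda=0$, so that $\Omega=1-\alpha r\cos\theta$, $P=1-2\alpha m\cos\theta+\alpha^2 a^2\cos^2\theta$, $Q=(a^2-2mr+r^2)(1-\alpha^2 r^2)$ and $\rho^2=r^2+a^2\cos^2\theta$. The invariants $I_1$ and $I_2$ for this metric are already available: they are the $q\to0$, $\Lambda\to0$ restrictions of the closed-form expressions computed in \cite{KraniotisCurvature}. Substituting these into the two-term formula above, and using that for this vacuum type D spacetime the combination $I_1+iI_2$ carries an overall conformal factor built from powers of $\Omega=1-\alpha r\cos\theta$, I would evaluate $\partial_r I_{1,2}$ and $\partial_\theta I_{1,2}$ and assemble the contraction in Maple.

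The decisive step is the algebraic collapse of the resulting expression. After the contraction one obtains a rational function whose numerator is an enormous polynomial in $r,\cos\theta,a,m,\alpha$ over a high power of $\rho$; the content of the theorem is that this numerator carries the overall factor $(\alpha r\cos\theta-1)^{13}$ and that, once this factor and the prefactor $-580608\,a\,m^{4}/\rho^{15}$ are extracted, the remainder organises into a polynomial of degree seventeen in $\cos\theta$ whose coefficients $\mathcal F_i$ are polynomials in $r,a,m,\alpha$. I expect this factorisation to be the \emph{main obstacle}: the intermediate objects are far too large to manipulate by hand, so the task is to coax the computer algebra system to recognise the common conformal power of $\Omega$, to clear the $\rho$-denominators consistently, and to collect the surviving terms by powers of $\cos\theta$.

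Finally I would verify the result by three independent checks. Setting $\alpha=0$ the prefactor satisfies $(\alpha r\cos\theta-1)^{13}\to-1$, and the degree-seventeen polynomial must degenerate so that $I_7$ reproduces the non-accelerating expression (\ref{difinsieben}) at $\Lambda=0$; the numerical prefactors are compatible since $580608=21\times27648$. Evaluating on the axis $\theta=0$ and in the equatorial plane $\theta=\tfrac{\pi}{2}$ yields compact one-variable expressions that can be matched term by term. As a further stringent cross-check, I would compute $\tfrac{6}{5}(I_1 I_4+I_2 I_3)$ from the already-established expressions for $I_3$ (Theorem~\ref{weylcovI3epi}) and for $I_4$; agreement with the derived $I_7$ simultaneously confirms the computation and tests whether the syzygy (\ref{zweisyzygi}) persists once the acceleration parameter is switched on.
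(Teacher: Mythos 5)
Your proposal is correct and follows essentially the same route as the paper: a direct symbolic (Maple) evaluation of $I_7=g^{\mu\nu}\nabla_\mu I_1\nabla_\nu I_2$ for the metric (\ref{epitaxmelaniopi}) with $q=\Lambda=0$, using the previously computed $I_1,I_2$; your explicit reduction to the two-term sum $g^{rr}\partial_rI_1\partial_rI_2+g^{\theta\theta}\partial_\theta I_1\partial_\theta I_2$ (valid since the invariants depend only on $r,\theta$ and the $(r,\theta)$ block is diagonal) is just the mechanism by which that computation proceeds. Your consistency checks (the $\alpha\to0$ limit against Eqn.~(\ref{difinsieben}) with $580608=21\times27648$, and the syzygy test, correctly framed as a test rather than an assumption since Proposition~\ref{tritisyzygiaLkerr} shows such identities can fail) are appropriate.
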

where
\begin{align}
&\mathcal{F}_0=-\frac{r^{15} \alpha  \left(\left(-4 a^{2} m \alpha^{4}+4 m \alpha^{2}\right) r^{2}+\left(a^{2} \alpha^{2}+1\right)^{2} r +4 a^{2} \alpha^{2} m -4 m \right)}{7},\\
&\mathcal{F}_1=+r^{14} \Biggl(-\frac{\alpha^{2} \left(a^{2} \alpha^{2}+1\right)^{2} r^{3}}{7}-\frac{16 m \alpha^{2} \left(a^{4} \alpha^{4}-\frac{23}{4} a^{2} \alpha^{2}+\frac{3}{4}\right) r^{2}}{7}+\left(a^{6} \alpha^{6}+a^{4} \alpha^{4}-a^{2} \alpha^{2}-1\right) r \nonumber \\
&+\frac{18 m \left(a^{4} \alpha^{4}-\frac{44}{9} a^{2} \alpha^{2}+\frac{7}{9}\right)}{7}\Biggr),\\
&\mathcal{F}_2=-2 r^{13} \Biggl(\left(\frac{2}{7} a^{2} m \alpha^{4}-\frac{2}{7} m \alpha^{2}\right) r^{4}+\left(-\frac{1}{2} a^{6} \alpha^{6}-\frac{1}{2} a^{4} \alpha^{4}+\frac{1}{2} a^{2} \alpha^{2}+\frac{1}{2}\right) r^{3}\nonumber \\
&+\frac{230 m \left(a^{4} \alpha^{4}-\frac{20}{23} a^{2} \alpha^{2}-\frac{3}{115}\right) r^{2}}{7}-\frac{45 a^{2} \left(a^{2} \alpha^{2}+1\right)^{2} r}{7}+a^{2} m \left(a^{4} \alpha^{4}-32 a^{2} \alpha^{2}+27\right)\Biggr) \alpha,\\
&\mathcal{F}_3=-25 r^{12} \Biggl(-\frac{18 m \alpha^{2} \left(a^{4} \alpha^{4}-\frac{44}{9} a^{2} \alpha^{2}+\frac{7}{9}\right) r^{4}}{175}-\frac{18 a^{2} \alpha^{2} \left(a^{2} \alpha^{2}+1\right)^{2} r^{3}}{35},\nonumber \\
&-\frac{76 a^{2} \left(a^{4} \alpha^{4}-\frac{736}{133} a^{2} \alpha^{2}+\frac{71}{133}\right) m \alpha^{2} r^{2}}{25}+\left(a^{8} \alpha^{6}+a^{6} \alpha^{4}-a^{4} \alpha^{2}-a^{2}\right) r +\frac{104 \left(a^{4} \alpha^{4}-\frac{7}{2} a^{2} \alpha^{2}+\frac{1}{2}\right) a^{2} m}{25}\Biggr),\\
&\mathcal{F}_4=+52 r^{11} a^{2} \Biggl(-\frac{m \alpha^{2} \left(a^{4} \alpha^{4}-32 a^{2} \alpha^{2}+27\right) r^{4}}{26}-\frac{25 \left(a \alpha -1\right) \left(a \alpha +1\right) \left(a^{2} \alpha^{2}+1\right)^{2} r^{3}}{52}\nonumber \\
&+\frac{228 \left(a^{4} \alpha^{4}-\frac{33}{38} a^{2} \alpha^{2}-\frac{1}{38}\right) m \,r^{2}}{13}-\frac{5 a^{2} \left(a^{2} \alpha^{2}+1\right)^{2} r}{2}+a^{2} m \left(a^{4} \alpha^{4}-17 a^{2} \alpha^{2}+12\right)\Biggr) \alpha
\end{align}
\begin{align}
&\mathcal{F}_5=117 r^{10} \Biggl(-\frac{8 \left(a^{4} \alpha^{4}-\frac{7}{2} a^{2} \alpha^{2}+\frac{1}{2}\right) m \alpha^{2} r^{4}}{9}-\frac{10 a^{2} \alpha^{2} \left(a^{2} \alpha^{2}+1\right)^{2} r^{3}}{9}-\frac{40 a^{2} m \left(a^{4} \alpha^{4}-\frac{29}{5} a^{2} \alpha^{2}+\frac{4}{5}\right) \alpha^{2} r^{2}}{9}\nonumber \\
&+\left(a^{8} \alpha^{6}+a^{6} \alpha^{4}-a^{4} \alpha^{2}-a^{2}\right) r +\frac{22 a^{2} \left(a^{4} \alpha^{4}-\frac{8}{3} a^{2} \alpha^{2}+\frac{1}{3}\right) m}{3}\Biggr) a^{2},\\
&\mathcal{F}_6=-286 r^{9} a^{4} \Biggl(-\frac{2 m \alpha^{2} \left(a^{4} \alpha^{4}-17 a^{2} \alpha^{2}+12\right) r^{4}}{11}+\left(-\frac{9}{22} a^{6} \alpha^{6}-\frac{9}{22} a^{4} \alpha^{4}+\frac{9}{22} a^{2} \alpha^{2}+\frac{9}{22}\right) r^{3}\nonumber \\
&+\frac{134 m \left(a^{4} \alpha^{4}-\frac{72}{67} a^{2} \alpha^{2}+\frac{1}{67}\right) r^{2}}{11}-a^{2} \left(a^{2} \alpha^{2}+1\right)^{2} r +a^{2} m \left(a^{4} \alpha^{4}-12 a^{2} \alpha^{2}+7\right)\Biggr) \alpha,\\
&\mathcal{F}_7=-\frac{715 r^{8} a^{4}}{7} \Biggl(-\frac{42 \left(a^{4} \alpha^{4}-\frac{8}{3} a^{2} \alpha^{2}+\frac{1}{3}\right) m \alpha^{2} r^{4}}{5}-\frac{14 a^{2} \alpha^{2} \left(a^{2} \alpha^{2}+1\right)^{2} r^{3}}{5}\nonumber \\
&-\frac{52 a^{2} \left(a^{4} \alpha^{4}-\frac{92}{13} a^{2} \alpha^{2}+\frac{27}{13}\right) m \alpha^{2} r^{2}}{5}+\left(a^{8} \alpha^{6}+a^{6} \alpha^{4}-a^{4} \alpha^{2}-a^{2}\right) r +\frac{108 a^{6} \alpha^{4} m}{5}-\frac{228 a^{4} \alpha^{2} m}{5}+\frac{24 a^{2} m}{5}\Biggr),\\
&\mathcal{F}_8=+\frac{3432 r^{7} a^{6} \alpha}{7}  \Biggl(-\frac{7 m \alpha^{2} \left(a^{4} \alpha^{4}-12 a^{2} \alpha^{2}+7\right) r^{4}}{12}+\left(-\frac{5}{24} a^{6} \alpha^{6}-\frac{5}{24} a^{4} \alpha^{4}+\frac{5}{24} a^{2} \alpha^{2}+\frac{5}{24}\right) r^{3}\nonumber \\
&+\frac{28 m \left(a^{4} \alpha^{4}-\frac{12}{7} a^{2} \alpha^{2}+\frac{1}{7}\right) r^{2}}{3}+a^{6} \alpha^{4} m -\frac{19 a^{4} \alpha^{2} m}{2}+\frac{9 a^{2} m}{2}\Biggr), \\
&\mathcal{F}_9=-\frac{715 r^{6} a^{6}}{7} \Biggl(\left(\frac{108}{5} a^{4} \alpha^{6} m -\frac{228}{5} a^{2} m \alpha^{4}+\frac{24}{5} m \alpha^{2}\right) r^{4}+\frac{32 a^{2} \alpha^{2} m \left(a^{4} \alpha^{4}-12 a^{2} \alpha^{2}+7\right) r^{2}}{5}\nonumber \\
&+\left(a^{8} \alpha^{6}+a^{6} \alpha^{4}-a^{4} \alpha^{2}-a^{2}\right) r -\frac{98 a^{2} m \left(a^{4} \alpha^{4}-\frac{12}{7} a^{2} \alpha^{2}+\frac{1}{7}\right)}{5}\Biggr),\\
&\mathcal{F}_{10}=-286 r^{5} a^{8} \Biggl(\left(-\frac{12}{7} a^{4} \alpha^{6} m +\frac{114}{7} a^{2} m \alpha^{4}-\frac{54}{7} m \alpha^{2}\right) r^{4}+\left(\frac{5}{14} a^{6} \alpha^{6}+\frac{5}{14} a^{4} \alpha^{4}-\frac{5}{14} a^{2} \alpha^{2}-\frac{5}{14}\right) r^{3}\nonumber \\
&+\frac{54 m \left(a^{4} \alpha^{4}-\frac{92}{27} a^{2} \alpha^{2}+\frac{13}{27}\right) r^{2}}{7}+a^{2} \left(a^{2} \alpha^{2}+1\right)^{2} r +a^{2} m \left(a^{4} \alpha^{4}-8 a^{2} \alpha^{2}+3\right)\Biggr) \alpha,\\
&\mathcal{F}_{11}=117 r^{4} a^{8} \Biggl(\frac{154 m \alpha^{2} \left(a^{4} \alpha^{4}-\frac{12}{7} a^{2} \alpha^{2}+\frac{1}{7}\right) r^{4}}{9}-\frac{22 a^{2} \alpha^{2} \left(a^{2} \alpha^{2}+1\right)^{2} r^{3}}{9}+\frac{4 a^{2} \alpha^{2} m \left(a^{4} \alpha^{4}-72 a^{2} \alpha^{2}+67\right) r^{2}}{9}\nonumber \\
&+\left(a^{8} \alpha^{6}+a^{6} \alpha^{4}-a^{4} \alpha^{2}-a^{2}\right) r -\frac{16 a^{2} \left(a^{4} \alpha^{4}-\frac{17}{12} a^{2} \alpha^{2}+\frac{1}{12}\right) m}{3}\Biggr),
\end{align}
\begin{align}
&\mathcal{F}_{12}=+52 r^{3} \Biggl(-\frac{11 m \alpha^{2} \left(a^{4} \alpha^{4}-8 a^{2} \alpha^{2}+3\right) r^{4}}{2}+\left(\frac{9}{4} a^{6} \alpha^{6}+\frac{9}{4} a^{4} \alpha^{4}-\frac{9}{4} a^{2} \alpha^{2}-\frac{9}{4}\right) r^{3}+\Biggl[8 a^{4} \alpha^{4} m \nonumber \\
&-58 a^{2} \alpha^{2} m +10 m \Biggr] r^{2}+\frac{5 a^{2} \left(a^{2} \alpha^{2}+1\right)^{2} r}{2}+a^{2} m \left(a^{4} \alpha^{4}-7 a^{2} \alpha^{2}+2\right)\Biggr) a^{10} \alpha,\\
&\mathcal{F}_{13}=-25 \Biggl(\frac{624 \left(a^{4} \alpha^{4}-\frac{17}{12} a^{2} \alpha^{2}+\frac{1}{12}\right) m \alpha^{2} r^{4}}{25}-\frac{26 a^{2} \alpha^{2} \left(a^{2} \alpha^{2}+1\right)^{2} r^{3}}{5}-\frac{24 a^{2} \alpha^{2} m \left(a^{4} \alpha^{4}+33 a^{2} \alpha^{2}-38\right) r^{2}}{25}\nonumber \\
&+\left(a^{8} \alpha^{6}+a^{6} \alpha^{4}-a^{4} \alpha^{2}-a^{2}\right) r -\frac{54 a^{2} m \left(a^{4} \alpha^{4}-\frac{32}{27} a^{2} \alpha^{2}+\frac{1}{27}\right)}{25}\Biggr) r^{2} a^{10},\\
&\mathcal{F}_{14}=-2 r \,a^{12} \Biggl(-26 m \alpha^{2} \left(a^{4} \alpha^{4}-7 a^{2} \alpha^{2}+2\right) r^{4}+\left(\frac{25}{2} a^{6} \alpha^{6}+\frac{25}{2} a^{4} \alpha^{4}-\frac{25}{2} a^{2} \alpha^{2}-\frac{25}{2}\right) r^{3}\nonumber \\
&+\frac{142 m \left(a^{4} \alpha^{4}-\frac{736}{71} a^{2} \alpha^{2}+\frac{133}{71}\right) r^{2}}{7}+\frac{45 a^{2} \left(a^{2} \alpha^{2}+1\right)^{2} r}{7}+a^{2} m \left(a^{4} \alpha^{4}-\frac{44}{7} a^{2} \alpha^{2}+\frac{9}{7}\right)\Biggr) \alpha, \\
&\mathcal{F}_{15}=\Biggl(54 m \alpha^{2} \left(a^{4} \alpha^{4}-\frac{32}{27} a^{2} \alpha^{2}+\frac{1}{27}\right) r^{4}-\frac{90 a^{2} \alpha^{2} \left(a^{2} \alpha^{2}+1\right)^{2} r^{3}}{7}-\frac{12 a^{2} m \alpha^{2} \left(a^{4} \alpha^{4}+\frac{100}{3} a^{2} \alpha^{2}-\frac{115}{3}\right) r^{2}}{7}\nonumber \\
&+\left(a^{8} \alpha^{6}+a^{6} \alpha^{4}-a^{4} \alpha^{2}-a^{2}\right) r +\frac{4 a^{4} \alpha^{2} m}{7}-\frac{4 a^{6} \alpha^{4} m}{7}\Biggr) a^{12},\\
&\mathcal{F}_{16}=\Biggl(-2 \left(a^{4} \alpha^{4}-\frac{44}{7} a^{2} \alpha^{2}+\frac{9}{7}\right) m \alpha^{2} r^{3}+\left(a^{6} \alpha^{6}+a^{4} \alpha^{4}-a^{2} \alpha^{2}-1\right) r^{2}+\frac{12 m \left(a^{4} \alpha^{4}-\frac{23}{3} a^{2} \alpha^{2}+\frac{4}{3}\right) r}{7}\nonumber \\
&+\frac{a^{2} \left(a^{2} \alpha^{2}+1\right)^{2}}{7}\Biggr) a^{14} \alpha,\\
&\mathcal{F}_{17}=\frac{a^{16} \alpha^{2} \left(\left(-4 a^{2} m \alpha^{4}+4 m \alpha^{2}\right) r^{2}+\left(a^{2} \alpha^{2}+1\right)^{2} r +4 a^{2} \alpha^{2} m -4 m \right)}{7}.
\end{align}

We now compute for the first time the invariant $Q_2\equiv\frac{1}{27}\frac{I_5 I_6-I_7^2}{(I_1^2+I_2^2)^{5/2}}$ for the case of an accelerating Kerr black hole.
\begin{theorem}\label{Q2invAccelKerr}
The exact analytic expression for the invariant $Q_2$ for the accelerating Kerr black hole is the following:
\begin{align}
Q_2=-\frac{\left(a^{2}-2 m r +r^{2}\right) \left(\cos \! \left(\theta \right)^{2} a^{2} \alpha^{2}-2 \cos \! \left(\theta \right) \alpha  m +1\right) \sin \! \left(\theta \right)^{2} a^{2} \left(\alpha  r \cos \! \left(\theta \right)+1\right)^{2} \left(\alpha^{2} r^{2}-1\right)}{\left(\alpha  r \cos \! \left(\theta \right)-1\right)^{4} \left(r^{2}+\cos \! \left(\theta \right)^{2} a^{2}\right) m^{2} \left(a^{2} \alpha^{2}+1\right)}.
\end{align}
\end{theorem}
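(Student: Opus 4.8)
The plan is to bypass the unwieldy expressions for $I_5,I_6,I_7$ and instead exploit the Page--Shoom reformulation (\ref{donandrey}), which recasts $Q_2$ as a squared wedge-norm of two gradients. First I would write
\begin{equation}
27\,(I_1^2+I_2^2)^{5/2}\,Q_2 = 2\left\lVert dI_1\wedge dI_2\right\rVert^2 = I_5 I_6 - I_7^2,
\end{equation}
so that only $\left\lVert dI_1\wedge dI_2\right\rVert^2$ must be evaluated. Because for the metric (\ref{epitaxmelaniopi}) the Weyl invariants $I_1,I_2$ depend only on $r$ and $\theta$, and because the $(r,\theta)$ block of the metric is diagonal, the two-form $dI_1\wedge dI_2=\bigl(\partial_r I_1\,\partial_\theta I_2-\partial_\theta I_1\,\partial_r I_2\bigr)\,dr\wedge d\theta$ is a pure Jacobian and its norm collapses to
\begin{equation}
\left\lVert dI_1\wedge dI_2\right\rVert^2 = J^2\, g^{rr} g^{\theta\theta}, \qquad J \equiv \frac{\partial(I_1,I_2)}{\partial(r,\theta)}.
\end{equation}
Reading $g^{rr}=\Omega^2 Q/\rho^2$ and $g^{\theta\theta}=\Omega^2 P/\rho^2$ directly off (\ref{epitaxmelaniopi}) immediately produces the factors $Q$ and $P$, which for $q=0,\Lambda=0$ are $Q=(a^2-2mr+r^2)(1-\alpha^2r^2)$ and $P=1-2\alpha m\cos\theta+\alpha^2a^2\cos^2\theta$ --- precisely the horizon polynomial and the polar factor appearing in the claim.

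Next I would compute $J$ through the type-D structure of the spacetime. Writing $F\equiv I_1+iI_2$, which for this metric is proportional to $m^2\,\Omega^6/(r-ia\cos\theta)^6$ (cf. eqns.(94),(95) of \cite{KraniotisCurvature}), one has $J=\Im(\overline{\partial_r F}\,\partial_\theta F)$, and the logarithmic derivatives $\partial_r\ln F=-6\alpha\cos\theta/\Omega-6/(r-ia\cos\theta)$ and $\partial_\theta\ln F=6\alpha r\sin\theta/\Omega-6ia\sin\theta/(r-ia\cos\theta)$ make the imaginary part collapse, after using $2\alpha r\cos\theta+\Omega=1+\alpha r\cos\theta$, to $J\propto(I_1^2+I_2^2)\,a\sin\theta\,(1+\alpha r\cos\theta)/(\Omega\rho^2)$. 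Squaring $J$ and inserting everything back, the powers of $I_1^2+I_2^2$ combine into a single factor $(I_1^2+I_2^2)^{-1/2}\propto\rho^6/(m^2\Omega^6)$, and the $\Omega$ and $\rho$ powers telescope to leave exactly the compact ratio asserted: $\sin^2\theta$, $a^2$ and $(1+\alpha r\cos\theta)^2$ in the numerator, and $(\alpha r\cos\theta-1)^4=\Omega^4$, $\rho^2$ and $m^2$ in the denominator. Finally I would rewrite $Q=-(a^2-2mr+r^2)(\alpha^2r^2-1)$ to match the stated factorisation, which makes transparent that $Q_2$ vanishes at the roots of $Q$, i.e. at the event and Cauchy horizons and the acceleration horizon $r=1/\alpha$.

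The hard part is pinning down the overall multiplicative constant and, in particular, the normalisation factor $(1+a^2\alpha^2)$ in the denominator: this does \emph{not} come from the gradient geometry above but is carried by the precise normalisation of $I_1$ and $I_2$ for the coordinate conventions of (\ref{epitaxmelaniopi}) (the $t,\phi$ scaling absorbed into the brackets), and must be tracked through the exact prefactor of $F$ quoted from \cite{KraniotisCurvature}. As an independent check --- and the route guaranteed to terminate --- I would substitute the explicit closed forms of $I_5$, $I_6$ and $I_7$ (Theorem~\ref{sebenepitaxkerr}) together with $I_1,I_2$, form the polynomial $I_5 I_6-I_7^2$, and verify with Maple that it factorises as $Q\,P\,\sin^2\theta\,a^2(1+\alpha r\cos\theta)^2$ times the appropriate power of $(I_1^2+I_2^2)$. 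The massive cancellation in this brute-force factorisation is the principal computational obstacle, and its success is exactly what the Jacobian argument both predicts and explains.
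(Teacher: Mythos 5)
Your proposal is correct, and it reaches the stated formula by a genuinely different route from the one the paper uses for this particular theorem. In the paper, Theorem \ref{Q2invAccelKerr} is obtained by brute force: the explicit closed forms of $I_5$, $I_6$, $I_7$ (the long sums over $\cos^i\theta$ preceding the theorem) and of $I_1,I_2$ from \cite{KraniotisCurvature} are substituted directly into $Q_2=\tfrac{1}{27}(I_5I_6-I_7^2)/(I_1^2+I_2^2)^{5/2}$ and simplified in Maple; no structural explanation of the massive cancellation is offered at that point. Your argument supplies exactly that explanation, and it is essentially the strategy the paper itself adopts only later, in section \ref{RogerPenN}, where $Q_2$ is rewritten as $-2\lVert\nabla\overline{\Psi_2}\wedge\nabla\Psi_2\rVert^2/\bigl(18^2(\Psi_2\overline{\Psi_2})^3\bigr)$ and Theorem \ref{exactQ2accelKNdS} is derived: since $I_1+iI_2=48\overline{\Psi_2}^2$ with $\Psi_2$ the simple rational function (\ref{hweylpsi2}), the Jacobian $J=\Im(\overline{\partial_rF}\,\partial_\theta F)$ collapses via the logarithmic derivatives exactly as you describe, the inverse metric supplies the factors $Q$ and $P$, and the constant $(1+a^2\alpha^2)$ is indeed nothing but $\lvert(1+ia\alpha)^2\rvert$ from the prefactor of $\overline{\Psi_2}^2$. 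The one slip is the normalisation of the wedge norm: with the paper's convention $\lVert dI_1\wedge dI_2\rVert^2=\tfrac12\bigl((k_\mu k^\mu)(l_\nu l^\nu)-(k_\mu l^\mu)^2\bigr)$ one gets $\lVert dI_1\wedge dI_2\rVert^2=\tfrac12 J^2 g^{rr}g^{\theta\theta}$, not $J^2g^{rr}g^{\theta\theta}$; carrying that $\tfrac12$ through, the overall constant works out to exactly one (since $36^2=27\cdot 48=1296$), so the coefficient you flag as the ``hard part'' actually falls out of the calculation with no residual ambiguity, and the Maple cross-check you propose is a confirmation rather than a necessity. What your route buys is transparency (the horizon polynomial $Q$, the polar factor $P$, the $\sin^2\theta$ and the $(1+\alpha r\cos\theta)^2$ each have an identifiable origin); what the paper's route buys is uniformity with the rest of section \ref{epitaxydiaforikesanal}, where all the $I_k$ are computed tensorially anyway.
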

The generalisation of Theorem (\ref{Q2invAccelKerr}) in the presence of the cosmological constant is:
\begin{theorem}\label{q2detecKerrdSaccel}
The exact analytic expression for the invariant $Q_2$ for the accelerating Kerr black hole in (anti)de-Sitter spacetime is the following:
\begin{align}
Q_2&=-\frac{1}{\left(\alpha  r \cos \! \left(\theta \right)-1\right)^{4} \left(r^{2}+a^{2} \cos \! \left(\theta \right)^{2}\right) m^{2} \left(a^{2} \alpha^{2}+1\right)}\Biggl[a^{2} \left(\alpha  r \cos \! \left(\theta \right)+1\right)^{2} \sin \! \left(\theta \right)^{2} \Biggl(\left(\frac{\Lambda}{3}+\alpha^{2}\right) r^{4}\nonumber \\
&-2 m \alpha^{2} r^{3}+\left(a^{2} \alpha^{2}+\frac{1}{3} a^{2} \Lambda -1\right) r^{2}+2 m r -a^{2}\Biggr) \left(1+a^{2} \left(\frac{\Lambda}{3}+\alpha^{2}\right) \cos \! \left(\theta \right)^{2}-2 \cos \! \left(\theta \right) \alpha  m \right)\Biggr].
\label{Q2accelKdS}
\end{align}
\end{theorem}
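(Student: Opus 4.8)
The plan is to avoid ever assembling the enormous invariants $I_5,I_6,I_7$ directly, and instead to exploit the Page--Shoom reformulation (\ref{donandrey}), which recasts the numerator of $Q_2$ in (\ref{qzwei}) as the norm of a wedge product, $27(I_1^2+I_2^2)^{5/2}Q_2=2\lVert dI_1\wedge dI_2\rVert^2$. The decisive simplification is that, for the stationary and axisymmetric metric (\ref{epitaxmelaniopi}), every scalar invariant $I_1,\dots,I_7$ is a function of $r$ and $\theta$ only, so both gradients $dI_1$ and $dI_2$ lie in the two-dimensional $r$--$\theta$ cotangent plane. Because the $(r,\theta)$ block of the metric is diagonal ($g_{r\theta}=0$), the Gram determinant of the two gradients collapses to
\begin{equation}
I_5I_6-I_7^2=g^{rr}g^{\theta\theta}\,\mathcal{J}^2,\qquad \mathcal{J}\equiv \partial_r I_1\,\partial_\theta I_2-\partial_\theta I_1\,\partial_r I_2 ,
\end{equation}
so that $Q_2=\tfrac{1}{27}\,g^{rr}g^{\theta\theta}\,\mathcal{J}^2/(I_1^2+I_2^2)^{5/2}$. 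I would first establish this identity, which reduces the whole problem to a single $2\times2$ Jacobian of the already-known invariants $I_1,I_2$.

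The second step is to read off the inverse-metric factors from (\ref{epitaxmelaniopi}): $g^{rr}=\Omega^2 Q/\rho^2$ and $g^{\theta\theta}=\Omega^2 P/\rho^2$, whence $g^{rr}g^{\theta\theta}=\Omega^4 QP/\rho^4$. This already dictates the structural form of the claimed answer, for the factor $Q$ of (\ref{radiieventpolyQ}) (at $q=0$) appears verbatim --- indeed the quartic in the bracket of (\ref{Q2accelKdS}) is precisely $-Q$ --- and the polar polynomial $P$ appears as the second factor. Thus, once the conformal powers of $\Omega=1-\alpha r\cos\theta$, the $\rho^2$, and the $\mathcal{J}^2/(I_1^2+I_2^2)^{5/2}$ prefactor are accounted for, the $Q$-- and $P$--dependence of the result is fixed by the geometry alone, and the vanishing of $Q_2$ on the roots of $Q$ (the event, Cauchy and acceleration horizons) is then immediate.

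The remaining and genuinely laborious step is to show that $\Omega^4\mathcal{J}^2/\bigl(27\rho^4(I_1^2+I_2^2)^{5/2}\bigr)$ collapses to the rational prefactor $a^2(\alpha r\cos\theta+1)^2\sin^2\theta/[(\alpha r\cos\theta-1)^4\rho^2 m^2(a^2\alpha^2+1)]$. Here I would invoke the Petrov type~D character of (\ref{epitaxmelaniopi}): the complex combination $I_1+iI_2$ is proportional to the square of the single non-vanishing Newman--Penrose Weyl scalar $\Psi_2$, so that $(I_1^2+I_2^2)^{5/2}\propto|\Psi_2|^{10}$ --- an even power that removes all surds --- while $\mathcal{J}=\operatorname{Im}\!\bigl(\overline{\partial_r(I_1+iI_2)}\,\partial_\theta(I_1+iI_2)\bigr)$ factors through $|\Psi_2|^2\,\operatorname{Im}(\overline{\partial_r\Psi_2}\,\partial_\theta\Psi_2)$. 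The ratio then reduces to $\lvert\operatorname{Im}(\overline{\partial_r\Psi_2}\,\partial_\theta\Psi_2)\rvert^2/|\Psi_2|^{6}$, and inserting the explicit $\Psi_2$ for the accelerating Kerr--(anti-)de~Sitter metric (equivalently, differentiating the expressions for $I_1,I_2$ reported in \cite{KraniotisCurvature}) yields the $\sin^2\theta$, $a^2$, $(\alpha r\cos\theta\pm1)$ and $m^{-2}$ factors. I expect this algebraic reduction --- performed with Maple, as elsewhere in the paper --- to be the main obstacle, the difficulty being to recognise and cancel the massive common factors shared by $\mathcal{J}^2$ and $(I_1^2+I_2^2)^{5/2}$.

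Finally, I would close with two independent consistency checks: setting $\Lambda=0$ must reproduce Theorem~\ref{Q2invAccelKerr}, and the bracketed quartic in (\ref{Q2accelKdS}) must coincide with $-Q$ of (\ref{radiieventpolyQ}) at $q=0$ while the polar factor coincides with $P$, thereby confirming both the overall normalisation and the horizon-detecting property of $Q_2$.
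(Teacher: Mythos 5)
Your proposal is correct, and every structural claim in it checks out: the Lagrange/Gram identity $I_5I_6-I_7^2=g^{rr}g^{\theta\theta}\,\mathcal{J}^2$ is valid because $I_1,I_2$ depend only on $(r,\theta)$ and $g^{r\theta}=0$ for (\ref{epitaxmelaniopi}); the bracketed quartic in (\ref{Q2accelKdS}) is indeed $-Q$ of (\ref{radiieventpolyQ}) at $q=0$ and the polar factor is $P$, so these come straight from $g^{rr}=\Omega^2Q/\rho^2$ and $g^{\theta\theta}=\Omega^2P/\rho^2$; and the type~D reduction of $\mathcal{J}^2/(I_1^2+I_2^2)^{5/2}$ to $[\Im(\overline{\partial_r\Psi_2}\,\partial_\theta\Psi_2)]^2/(3|\Psi_2|^6)$ follows from $I_1+iI_2=48\overline{\Psi}_2^2$ (a spot check at $\alpha=\Lambda=0$ reproduces the Kerr prefactor $81a^2m^4\sin^2\theta/\rho^{16}$, consistent with (\ref{Q2accelKdS})). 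The comparison with the paper is this: the theorem as stated in section \ref{epitaxydiaforikesanal} is obtained by brute force --- Maple assembly of the explicitly computed $I_5,I_6,I_7,I_1,I_2$ into the defining ratio (\ref{qzwei}) --- whereas your route is essentially the paper's \emph{second}, independent derivation in section \ref{RogerPenN}, where the Page--Shoom identity (\ref{donandrey}) is converted into $Q_2=-2\lVert\nabla\overline{\Psi}_2\wedge\nabla\Psi_2\rVert^2/\bigl(18^2(\Psi_2\overline{\Psi}_2)^3\bigr)$ and only the single scalar (\ref{hweylpsi2}) needs differentiating; the paper deploys that route to reach the most general result, Theorem \ref{exactQ2accelKNdS}. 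What your write-up adds beyond the paper's presentation is the explicit observation that, in the diagonal $(r,\theta)$ block, the wedge-product norm factorizes as $g^{rr}g^{\theta\theta}$ times a coordinate Jacobian squared, so that the appearance of $Q$ and $P$ as overall factors of $Q_2$ --- and hence the horizon-detecting property --- is forced by the metric alone before any Weyl-tensor algebra is done; the paper leaves this structural explanation implicit. The only caveat is that your third step, the cancellation of the common factors of $\mathcal{J}^2$ against $|\Psi_2|^{10}$, is still a nontrivial symbolic computation, but it is a far smaller one than assembling $I_5I_6-I_7^2$ from the page-long coefficient lists $T_i$, $\mathcal{C}_i$, $\mathcal{F}_i$.
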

\begin{remark}
We note that the differential invariant $Q_2$, as given in Eqn. (\ref{Q2accelKdS}), vanishes at the horizons of the accelerating rotating black hole with $\Lambda\not =0$.
\end{remark}
For zero acceleration $\alpha=0$ we derive the following corollary for the invariant $Q_2$ in Kerr-(anti)de Sitter spacetime:
\begin{corollary}\label{porismaQ2kerrdeSitter}
\begin{align}
Q_2=-\frac{a^{2} \sin \! \left(\theta \right)^{2} \left(\frac{r^{4} \Lambda}{3}+\left(-1+\frac{a^{2} \Lambda}{3}\right) r^{2}+2 m r -a^{2}\right) \left(1+\frac{a^{2} \cos \left(\theta \right)^{2} \Lambda}{3}\right)}{\left(r^{2}+a^{2} \cos \! \left(\theta \right)^{2}\right) m^{2}}.
\label{Q2KerrdeSitter1}
\end{align}
\end{corollary}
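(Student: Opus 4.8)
The plan is to derive Corollary \ref{porismaQ2kerrdeSitter} as the direct specialization of Theorem \ref{q2detecKerrdSaccel} to vanishing acceleration, i.e. by setting $\alpha=0$ in the closed form (\ref{Q2accelKdS}). The key preliminary observation is that at $\alpha=0$ the conformal factor $\Omega=1-\alpha r\cos\theta$ reduces to unity, so the accelerating metric (\ref{epitaxmelaniopi}) with $q=0$ collapses smoothly to the Kerr-(anti-)de Sitter line element (\ref{KNADSelement}). Consequently each of the building-block invariants $I_1,I_2,I_5,I_6,I_7$ passes continuously to its Kerr-(anti-)de Sitter value, the denominator $(I_1^2+I_2^2)^{5/2}$ stays nonzero generically, and no spurious $0/0$ can be generated by the conformal prefactors. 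This legitimises substituting $\alpha=0$ termwise in (\ref{Q2accelKdS}) rather than having to recompute $Q_2$ from scratch.

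Carrying out the substitution, I would track the four $\alpha$-dependent groupings separately. The conformal pieces trivialise: $(\alpha r\cos\theta-1)^4\to 1$, $(a^2\alpha^2+1)\to 1$, and $(\alpha r\cos\theta+1)^2\to 1$. The radial quartic $\left(\frac{\Lambda}{3}+\alpha^2\right)r^4-2m\alpha^2 r^3+\left(a^2\alpha^2+\frac{1}{3}a^2\Lambda-1\right)r^2+2mr-a^2$ loses every $\alpha^2$ contribution and becomes $\frac{\Lambda}{3}r^4+\left(\frac{a^2\Lambda}{3}-1\right)r^2+2mr-a^2$, while the angular factor $1+a^2\left(\frac{\Lambda}{3}+\alpha^2\right)\cos^2\theta-2\cos\theta\,\alpha m$ reduces to $1+\frac{a^2\Lambda}{3}\cos^2\theta$. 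Reassembling these with the surviving prefactor $-a^2\sin^2\theta/[(r^2+a^2\cos^2\theta)m^2]$ reproduces exactly the right-hand side of (\ref{Q2KerrdeSitter1}).

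The genuinely illuminating step—rather than a computational obstacle—is recognising the two surviving factors. The angular factor is precisely $\Delta_\theta=1+\frac{a^2\Lambda}{3}\cos^2\theta$, and the radial quartic equals $-\left.\Delta_r^{KN}\right|_{q=0}$, since by (\ref{DiscrimiL}) one has $\left.\Delta_r^{KN}\right|_{q=0}=\left(1-\frac{\Lambda}{3}r^2\right)(r^2+a^2)-2mr=-\frac{\Lambda}{3}r^4-\left(\frac{a^2\Lambda}{3}-1\right)r^2-2mr+a^2$. Writing $\Delta_r\equiv\left.\Delta_r^{KN}\right|_{q=0}$ and $\rho^2=r^2+a^2\cos^2\theta$, the corollary therefore assumes the manifestly factorised form $Q_2=\frac{a^2\sin^2\theta\,\Delta_r\,\Delta_\theta}{\rho^2 m^2}$, from which the vanishing at the stationary horizons $\Delta_r=0$ is immediate (and $\Delta_\theta>0$ for $\Lambda>0$), setting up the rigorous horizon-detection argument. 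As an independent consistency check I would recompute $Q_2=\frac{1}{27}\frac{I_5 I_6-I_7^2}{(I_1^2+I_2^2)^{5/2}}$ directly from (\ref{Ifunfkds}), (\ref{gradientcurvaI6}), (\ref{difinsieben}) together with $I_1,I_2$ from \cite{KraniotisCurvature}; this route requires cancelling the common factor $(r^2+a^2\cos^2\theta)^{-30}$ and extracting the recurring ergosurface polynomial of Section \ref{KNlambdaAbdeLake}, and is far more laborious, so I would treat the $\alpha\to 0$ specialization as the primary derivation.
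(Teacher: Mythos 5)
Your proposal is correct and follows exactly the paper's own route: the corollary is obtained by setting $\alpha=0$ in Eqn.~(\ref{Q2accelKdS}) of Theorem~\ref{q2detecKerrdSaccel}, under which the factors $(\alpha r\cos\theta-1)^4$, $(a^2\alpha^2+1)$ and $(\alpha r\cos\theta+1)^2$ all reduce to unity and the remaining radial and angular polynomials collapse to $-\Delta_r^{KN}|_{q=0}$ and $\Delta_\theta$ respectively. Your identification of the result as $Q_2=a^2\sin^2\theta\,\Delta_r\Delta_\theta/(\rho^2 m^2)$ is a correct and useful reformulation, consistent with the horizon-detection argument the paper gives afterwards.
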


In the regional plots of Fig. \ref{DifInvaI4region} and Fig.\ref{DifInvaI4region09939} we determine the sign of the curvature invariant $I_4$ for the case of Kerr-(anti-)de Sitter black hole for two sets of values of the physical black hole parameters $a,\Lambda,m$ .
Also in Fig.\ref{DifInvaI7region09939} we determine the sign of the local invariant $I_7$ for the parameters: $a=0.9939,\Lambda=3.6\times 10^{-33},m=1,q=0,\alpha=0$.
In Fig.\ref{DifInvaQ2} we exhibit a three-dimensional plot for the scalar polynomial curvature invariant $Q_2$  as a function of the Boyer-Lindquist coordinates $r$ and $\theta$, for the case of a Kerr black hole with a cosmological constant for the set of values: $a=0.9939,\Lambda=3.6\times 10^{-33},m=1$.

\begin{figure}[ptbh]
\centering
  \begin{subfigure}[b]{.60\linewidth}
    \centering
    \includegraphics[width=.99\textwidth]{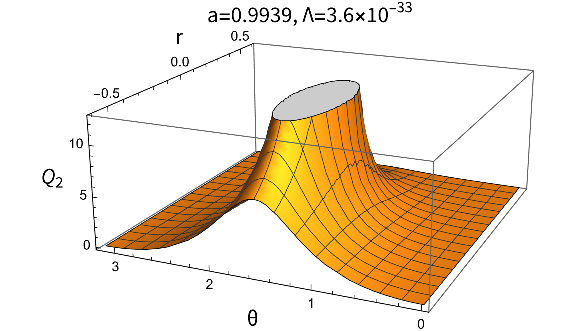}
    \caption{3D plot of $Q_2$ for  Kerr-de Sitter BH.}\label{DifInvaQ2}
  \end{subfigure}%
  \begin{subfigure}[b]{.60\linewidth}
    \centering
    \includegraphics[width=.99\textwidth]{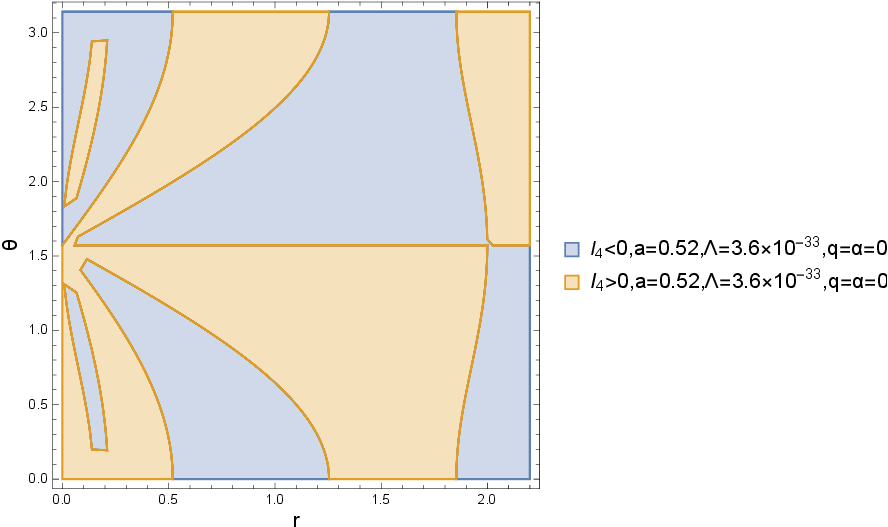}
    \caption{Region plot of  $I_4$ for Kerr-de Sitter BH.}\label{DifInvaI4region}
  \end{subfigure}\\
  \begin{subfigure}[b]{.60\linewidth}
    \centering
    \includegraphics[width=.99\textwidth]{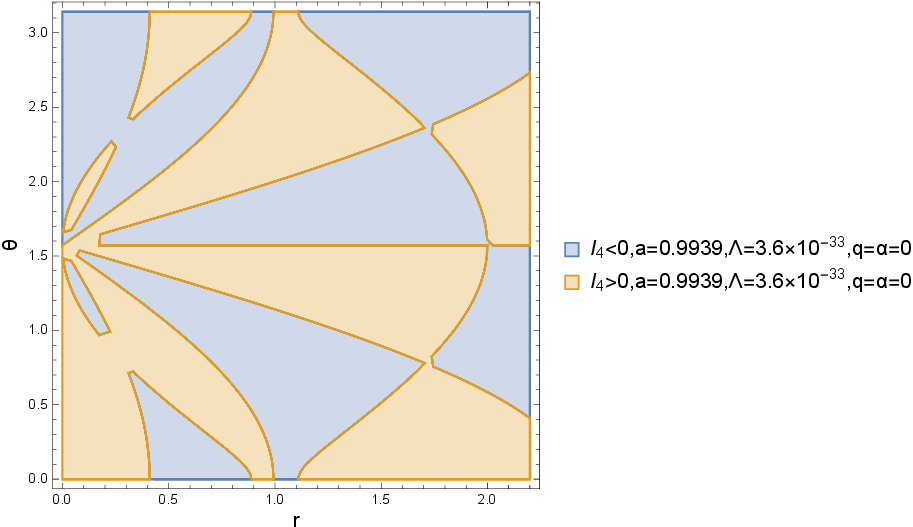}
    \caption{Region Plot of $I_4$ for Kerr-de Sitter BH.}\label{DifInvaI4region09939}
  \end{subfigure}%
  \begin{subfigure}[b]{.60\linewidth}
    \centering
    \includegraphics[width=.99\textwidth]{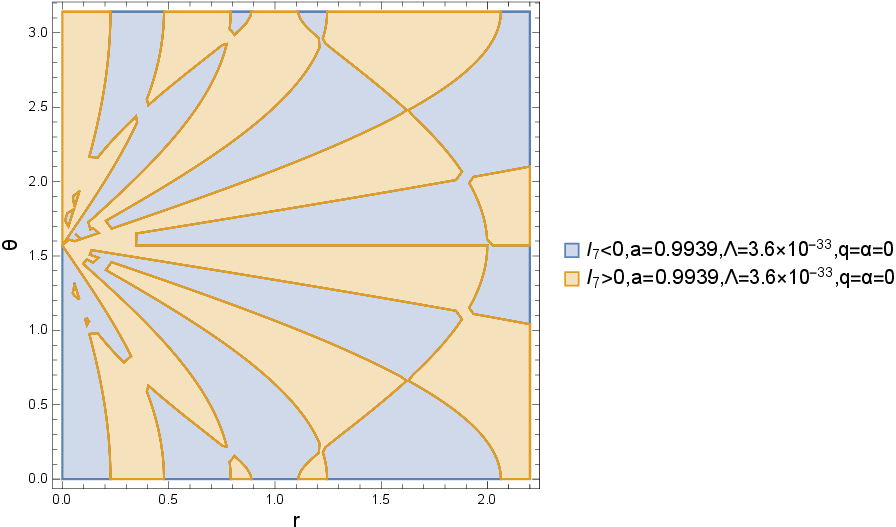}
    \caption{Region Plot of $I_7$ for  Kerr-de Sitter BH.}\label{DifInvaI7region09939}
  \end{subfigure}
  \caption{ (a) The invariant $Q_2$  plotted as a function of the Boyer-Lindquist coordinates $r$ and $\theta$ for a rotating black hole in the presence of the cosmological constant for the values $a=0.9939,\Lambda=3.6\times 10^{-33},m=1$  (b) Regions of negative and positive sign of $I_4$  for the values $a=0.52,\Lambda=3.6\times 10^{-33},m=1,q=0,\alpha=0$ . (c) Regions of negative and positive sign of $I_4$  for the values $a=0.9939,\Lambda=3.6\times 10^{-33},m=1,q=0,\alpha=0$ . (d) Regions of negative and positive differential invariant $I_7$  for the values $a=0.9939,\Lambda=3.6\times 10^{-33},m=1,q=0,\alpha=0$ .}\label{Total3dQ2signI4I7}
\end{figure}

In Fig.\ref{DifInvaQ2Region} we display the region plot for the invariant $Q_2$, eqn.(\ref{Q2KerrdeSitter1}) of Corollary \ref{porismaQ2kerrdeSitter}. We observe that $Q_2$ changes sign on the event and Cauchy horizons of the Kerr-de Sitter black hole. It vanishes at the stationary horizons and is non-zero everywhere else, a fact that we shall prove later in the main text.
In Fig.\ref{DifInvaQ2RegionAccel001} we exhibit the regions of negative and positive sign for the curvature invariant $Q_2$,eqn.(\ref{Q2accelKdS}) of Theorem \ref{q2detecKerrdSaccel} for the choice of values for the parameters: $a=0.52,\Lambda=3.6\times 10^{-33},\alpha=0.01,m=1$. We observe that $Q_2$ changes sign on the event and Cauchy horizons of the accelerating Kerr-de Sitter black hole.
In the region plot, Fig.\ref{DifInvaQ2RegionAccel01}, we determine the sign of the curvature invariant $Q_2$, for the choice of values for the parameters: $a=0.52,\Lambda=3.6\times 10^{-33},\alpha=0.1,m=1$. It is evident from the plot that $Q_2$ vanishes and changes sign at the event, Cauchy and acceleration  horizons of the accelerating, rotating black hole in (anti-)de Sitter spacetime. We repeated the analysis for a higher spin value of the black hole, in Fig.\ref{DifInvaQ2Regiona09939Accel01} and Fig.\ref{DifInvaQ2Regiona0939Accel01bH}. For  the choice of values for the parameters: $a=0.9939,\Lambda=3.6\times 10^{-33},\alpha=0.1,m=1$, Fig.\ref{DifInvaQ2Regiona09939Accel01} and Fig.\ref{DifInvaQ2Regiona0939Accel01bH}, exhibit the sign change of the invariant $Q_2$ at the horizons radii.

\begin{figure}[ptbh]
\centering
  \begin{subfigure}[b]{.60\linewidth}
    \centering
    \includegraphics[width=.99\textwidth]{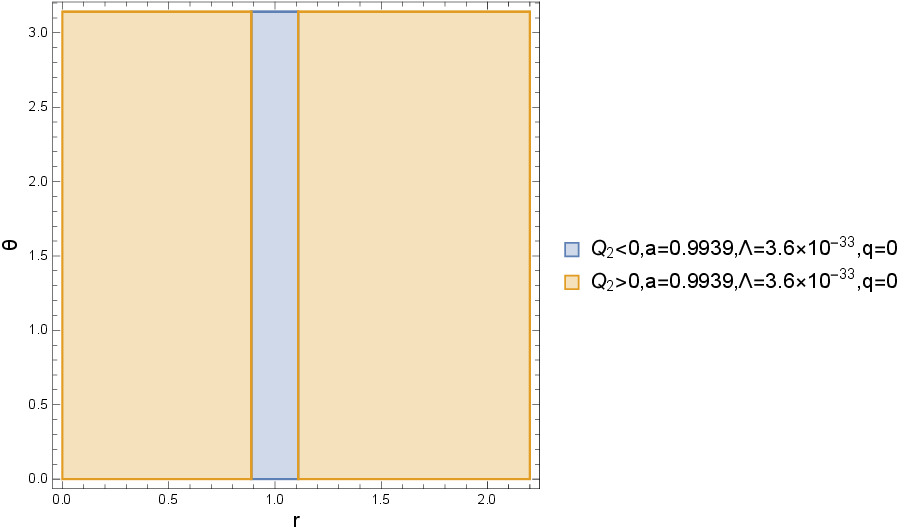}
    \caption{Region plot of $Q_2$ for non-accelerating Kerr-de Sitter BH.}\label{DifInvaQ2Region}
  \end{subfigure}%
  \begin{subfigure}[b]{.60\linewidth}
    \centering
    \includegraphics[width=.99\textwidth]{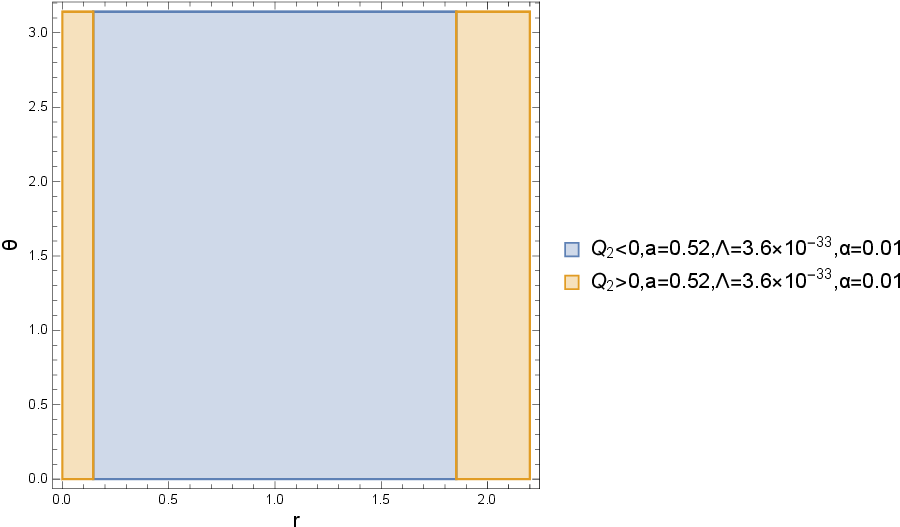}
    \caption{Region plot of  $Q_2$ for accelerating  Kerr-de Sitter BH.}\label{DifInvaQ2RegionAccel001}
  \end{subfigure}\\
  \begin{subfigure}[b]{.60\linewidth}
    \centering
    \includegraphics[width=.99\textwidth]{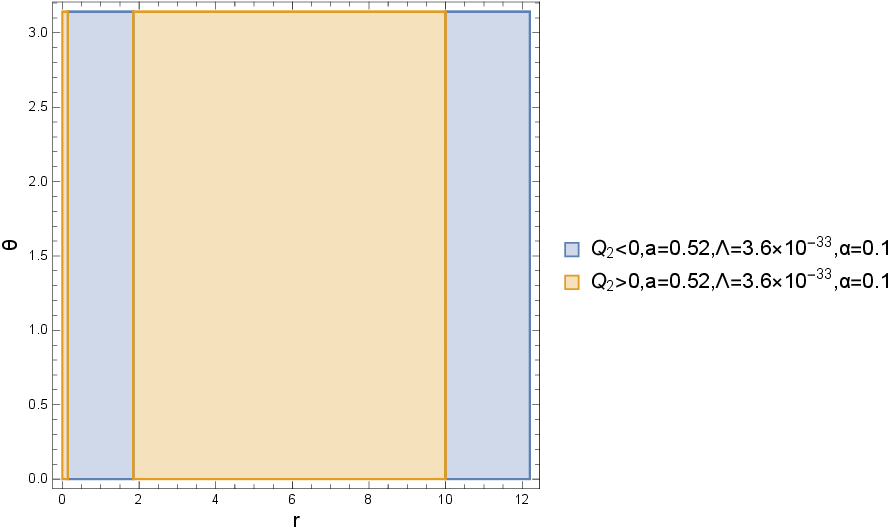}
    \caption{Region Plot of $Q_2$ for accelerating Kerr-de Sitter BH.}\label{DifInvaQ2RegionAccel01}
  \end{subfigure}%
  \begin{subfigure}[b]{.60\linewidth}
    \centering
    \includegraphics[width=.99\textwidth]{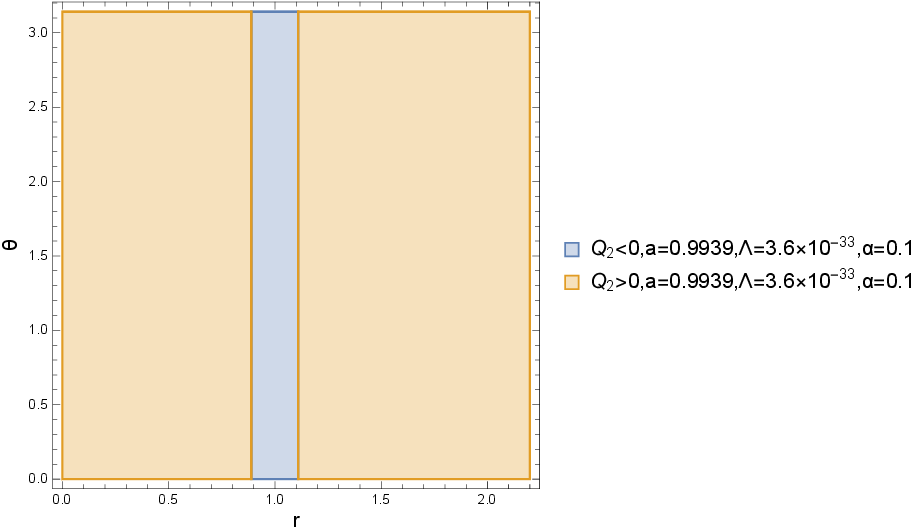}
    \caption{Region Plot of $Q_2$ for accelerating Kerr-de Sitter BH.}\label{DifInvaQ2Regiona09939Accel01}
  \end{subfigure}\\
  \begin{subfigure}[b]{.60\linewidth}
    \centering
    \includegraphics[width=.99\textwidth]{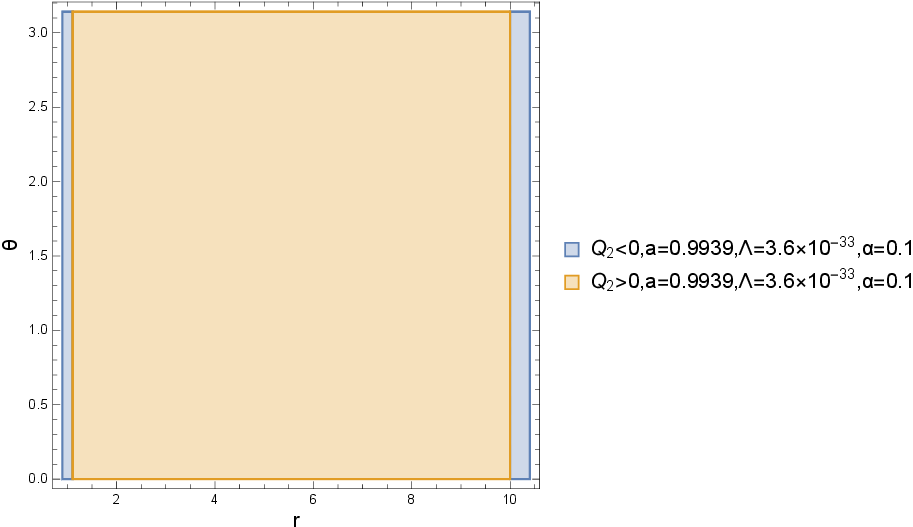}
    \caption{Region Plot of $Q_2$ for accelerating Kerr-de Sitter BH.}\label{DifInvaQ2Regiona0939Accel01bH}
  \end{subfigure}
  \caption{Region plots of the  curvature invariant $Q_2$ , for the accelerating Kerr black hole with $\Lambda\not=0$. (a) for the values $a=0.9939,\Lambda=3.6\times 10^{-33},m=1$ . (b) Region plot for the invariant $Q_2$ in the presence of the cosmological constant, Eqn. (\ref{Q2accelKdS}) , for the values $a=0.52,\Lambda=3.6\times 10^{-33},\alpha=0.01,m=1$. (c) Region plot for the curvature invariant $Q_2$  in the presence of the cosmological constant,Eqn. (\ref{Q2accelKdS}), for the values $a=0.52,\Lambda=3.6\times 10^{-33},\alpha=0.1,m=1$. We observe that the curvature invariant $Q_2$ vanishes and changes sign at the event, Cauchy and acceleration (modified due to $\Lambda$) horizons . (d) Region plot for the curvature invariant $Q_2$  in the presence of  $\Lambda$ for the values $a=0.9939,\Lambda=3.6\times 10^{-33},\alpha=0.1,m=1$. (e) Region plot for the curvature invariant $Q_2$  in the presence of the cosmological constant for the values $a=0.9939,\Lambda=3.6\times 10^{-33},\alpha=0.1,m=1$.}\label{RegionQ2KerrLambda}
\end{figure}

The most general explicit expression for the invariant $Q_2$ for the accelerating Kerr-Newman black hole in (anti-)de Sitter spacetime is given in Theorem \ref{exactQ2accelKNdS} and eqn.(\ref{totalQ2accelKN(a)dS}).

Now we will prove that the invariant $Q_2$ can be used to detect physically relevant surfaces such as horizons. We focus for simplicity on the invariant $Q_2$ for the case of the Kerr-de Sitter black hole eqn.(\ref{Q2KerrdeSitter1}), corollary \ref{porismaQ2kerrdeSitter}. For the proof we will need the following theorems:
\begin{theorem}\label{rene}
Descartes's Rule of Signs \cite{descartes}.
Let $f(x)=a_0x^{b_0}+a_1x^{b_1}+\cdots+a_nx^{b_n}$ a polynomial with nonzero real coefficients $a_i$, where the $b_i$ are integers satisfying $0\leq b_0<b_1<b_2<\cdots b_n$. Then the number of positive real zeros of $f(x)$ (counted with multiplicities) is either equal to the number of variations in sign in the sequence $a_0,\ldots,a_n$ of the  coefficients or less that by an even whole number.  The number of negative zeros of $f(x)$ (counted with multiplicities) is either equal to the number of variations in sign in the sequence of the coefficients of $f(-x)$ or less than that by an even whole number.
\end{theorem}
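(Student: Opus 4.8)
The plan is to prove the rule as the combination of two independent statements — a \emph{parity} statement and an \emph{inequality} — and then to fuse them. Throughout, write $v$ for the number of sign variations in the sequence $(a_0,\dots,a_n)$ and $p$ for the number of positive zeros of $f$ counted with multiplicity. First I would make two harmless reductions. Dividing $f(x)$ by the common factor $x^{b_0}$ changes neither the positive zeros nor the coefficient sequence, so I may assume $b_0=0$, i.e.\ $a_0$ is a genuine nonzero constant term; and since replacing $f$ by $-f$ alters neither $p$ nor $v$, I may further assume $a_0>0$.

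Second, I would establish the parity statement $p\equiv v \pmod 2$. As $x\to 0^{+}$ one has $f(x)\sim a_0 x^{b_0}$, and as $x\to +\infty$ one has $f(x)\sim a_n x^{b_n}$, so the signs of $f$ at the two ends of the interval $(0,\infty)$ are $\operatorname{sign}(a_0)$ and $\operatorname{sign}(a_n)$. A zero of even multiplicity leaves the sign of $f$ unchanged while a zero of odd multiplicity reverses it, so the parity of $p$ equals the parity of the total number of sign reversals of $f$ across $(0,\infty)$, which is $0$ precisely when $\operatorname{sign}(a_0)=\operatorname{sign}(a_n)$ and $1$ otherwise. The very same dichotomy controls the parity of $v$: scanning the coefficient sequence from $a_0$ to $a_n$, the number of sign changes is even exactly when the first and last entries agree in sign. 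Hence $p$ and $v$ share the same parity.

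Third, I would prove the inequality $p\le v+1$ by induction on the number of terms, using Rolle's theorem. The base case is $v=0$: then every $a_i$ carries the sign of $a_0>0$, so $f(x)>0$ for all $x>0$ and $p=0$. For the inductive step, differentiate; because $b_0=0$, the derivative $f'(x)=\sum_{i=1}^{n} a_i b_i x^{b_i-1}$ is again of the admissible form, its coefficient signs are exactly $(\operatorname{sign} a_1,\dots,\operatorname{sign} a_n)$, and deleting $a_0$ from the sequence drops the variation count by at most one, so the number $v'$ of variations of $f'$ satisfies $v'\le v$. Rolle's theorem applied with multiplicities — a zero of $f$ of multiplicity $m$ is a zero of $f'$ of multiplicity $m-1$, and a further zero of $f'$ lies strictly between each pair of consecutive distinct positive zeros of $f$ — gives $p'\ge p-1$ for the number $p'$ of positive zeros of $f'$. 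The induction hypothesis supplies $p'\le v'$, whence $p\le p'+1\le v'+1\le v+1$.

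Finally I would combine the two parts. Since $p\le v+1$ while $v-p$ is even, the value $p=v+1$ is impossible, forcing $p\le v$; together with the parity this says precisely that $p$ equals $v$ or is smaller by an even whole number. The assertion for negative zeros follows by applying the established positive case to $f(-x)$, whose positive zeros are the negatives of the negative zeros of $f$ and whose coefficient sequence is that of $f$ with the signs of the odd-power terms reversed. I expect the main obstacle to be the bookkeeping in the inductive step: the careful treatment of multiplicities in Rolle's theorem and the verification that differentiation can lower the variation count by at most one — which is exactly what the normalisation $b_0=0$ secures, since it guarantees that passing from $f$ to $f'$ simply deletes the leading entry $a_0$ of the sign sequence rather than reshuffling it.
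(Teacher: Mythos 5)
The paper does not actually prove this statement: Descartes's rule is quoted as a classical tool, with only the citation to Descartes, and is then applied to the quartic $f(r)=-Q(r)$ in the argument that $Q_2$ vanishes exactly at the stationary horizons. There is therefore no in-text proof to compare against; judged on its own, your proposal is the standard and essentially correct argument. The normalisations $b_0=0$, $a_0>0$ are harmless; the parity claim $p\equiv v\pmod 2$ follows correctly from the signs of $f$ near $0^+$ and $+\infty$ together with the observation that only odd-multiplicity zeros reverse the sign; the Rolle step correctly uses that $f'$ has the sign sequence of $f$ with the leading entry deleted (which is exactly what $b_0=0$ buys), so $v'\le v$ while $p'\ge p-1$; and the reduction of the negative-zero statement to the positive one via $f(-x)$ is fine. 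Your argument also correctly accommodates the lacunary exponents $b_i$ allowed in the statement.

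One step needs tightening. You announce that the induction establishes $p\le v+1$, yet inside the inductive step you invoke the stronger hypothesis $p'\le v'$ for $f'$; taken literally the induction does not close, since the hypothesis $p'\le v'+1$ only yields $p\le v+2$. The repair costs nothing you have not already proved: make the induction statement $p\le v$ itself (base case a single term, where $p=v=0$), and in the step combine Rolle and the hypothesis $p'\le v'$ to get $p\le v+1$, then invoke the parity lemma --- which you established unconditionally for every admissible polynomial, hence for $f$ at every level of the induction --- to exclude $p=v+1$. Equivalently, note that $p'\le v'+1$ together with $p'\equiv v'\pmod 2$ already forces $p'\le v'$, so the stronger hypothesis is legitimately available. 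With that rephrasing the proof is complete.
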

Denoting the number of variations in the signs of the sequence of the coefficients of $f$ by $V[(f(x)]$ and the number of positive zeros of $f$ counting multiplicities by $Pz[f(x)]$. Then it is valid the following:
\begin{lemma}
Let $f(x)=\sum_{i=0}^na_i x^{b_i}$ be a polynomial as in the theorem \ref{rene}. If $a_0a_n>0$, then $Pz[f(x)]$ is even: if $a_0a_n<0$, than $Pz[f(x)]$ is odd.
\end{lemma}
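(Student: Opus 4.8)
The plan is to deduce the parity of $Pz[f(x)]$ from the parity of the sign-variation count $V[f(x)]$, and then to read off the parity of $V[f(x)]$ directly from the product $a_0 a_n$. By Descartes's Rule of Signs (Theorem \ref{rene}), the number of positive zeros counted with multiplicity differs from the number of sign variations by an even nonnegative integer, that is, $Pz[f(x)] = V[f(x)] - 2k$ for some integer $k \geq 0$. Consequently $Pz[f(x)] \equiv V[f(x)] \pmod 2$, so it suffices to determine exactly when $V[f(x)]$ is even and when it is odd.

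To compute the parity of $V[f(x)]$, I would track the running sign as one reads the coefficient sequence $a_0, a_1, \ldots, a_n$ from left to right. Since every $a_i$ is nonzero by hypothesis, each $\operatorname{sgn}(a_i) \in \{+1,-1\}$ is well defined, and a sign variation occurs at index $i$ precisely when $\operatorname{sgn}(a_{i-1})\operatorname{sgn}(a_i) = -1$. Telescoping the product of consecutive sign ratios then gives
\begin{equation}
\operatorname{sgn}(a_n) = \operatorname{sgn}(a_0)\prod_{i=1}^{n}\frac{\operatorname{sgn}(a_i)}{\operatorname{sgn}(a_{i-1})} = \operatorname{sgn}(a_0)\,(-1)^{V[f(x)]},
\end{equation}
since among the $n$ factors exactly $V[f(x)]$ of them equal $-1$ and the remaining ones equal $+1$. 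Hence $(-1)^{V[f(x)]} = \operatorname{sgn}(a_0 a_n)$, which shows that $V[f(x)]$ is even when $a_0 a_n > 0$ and odd when $a_0 a_n < 0$.

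Combining the two steps finishes the argument: when $a_0 a_n > 0$ we obtain $V[f(x)]$ even, hence $Pz[f(x)]$ even; when $a_0 a_n < 0$ we obtain $V[f(x)]$ odd, hence $Pz[f(x)]$ odd. There is no genuine obstacle here beyond bookkeeping; the only point requiring care is the standing hypothesis that all coefficients are nonzero, which is exactly what makes each intermediate $\operatorname{sgn}(a_i)$ well defined and legitimizes the telescoping identity above. I would therefore flag explicitly that the nonvanishing of the $a_i$ is used precisely at this step, since if an interior coefficient were permitted to vanish the very notion of a sign variation would need to be reformulated before the parity count could be carried out.
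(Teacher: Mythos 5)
Your argument is correct: reducing the parity of $Pz[f(x)]$ to the parity of $V[f(x)]$ via Descartes's rule, and then reading off $(-1)^{V[f(x)]}=\operatorname{sgn}(a_0a_n)$ from the telescoping product of consecutive sign ratios, is the standard and complete proof of this lemma. The paper itself states the lemma without proof (treating it as an immediate consequence of Theorem \ref{rene}), so there is nothing to compare against; your write-up simply supplies the routine verification, and your remark that the nonvanishing of all $a_i$ is what makes the telescoping identity legitimate is the right point to flag.
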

\begin{theorem}\label{bernardbolzano}
Bolzano. If $f:[\alpha,\beta]\rightarrow \mathbb{R}$ is a continuous function on the closed interval $[\alpha,\beta]$ and $f(\alpha)f(\beta)<0$ then there exists $\xi\in (\alpha,\beta)$, such that $f(\xi)=0$.
\end{theorem}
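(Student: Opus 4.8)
The plan is to prove Bolzano's theorem directly from the completeness (least upper bound property) of $\mathbb{R}$, which is the only nontrivial ingredient. First I would reduce to a single sign configuration: without loss of generality assume $f(\alpha)<0<f(\beta)$, since the complementary case $f(\beta)<0<f(\alpha)$ is obtained by applying the same argument to $-f$, which has exactly the same zero set. The idea is then to locate the zero as the boundary between the region where $f$ is negative and the region where it is not.

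Next I would introduce the set $S=\{x\in[\alpha,\beta]:f(x)<0\}$. It is nonempty because $\alpha\in S$, and it is bounded above by $\beta$, so by the least upper bound property the supremum $\xi=\sup S$ exists and lies in $[\alpha,\beta]$. The claim is that $f(\xi)=0$, and I would establish this by ruling out the two strict inequalities using continuity of $f$ at $\xi$. If $f(\xi)>0$, continuity supplies a $\delta>0$ with $f>0$ on $(\xi-\delta,\xi]$, so $\xi-\delta$ would be a smaller upper bound for $S$, contradicting $\xi=\sup S$; here one uses $\xi>\alpha$, which holds since $f(\alpha)<0$. If $f(\xi)<0$, continuity supplies a $\delta>0$ with $f<0$ on $[\xi,\xi+\delta)$, so $S$ would contain points exceeding $\xi$, again contradicting the supremum property; here one uses $\xi<\beta$, which holds since $f(\beta)>0$. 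Hence $f(\xi)=0$, and because $f(\alpha)$ and $f(\beta)$ are both nonzero we necessarily have $\xi\in(\alpha,\beta)$.

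The main obstacle --- indeed the only genuine content --- is the appeal to the completeness of $\mathbb{R}$ guaranteeing that $\sup S$ exists; this is precisely what fails over $\mathbb{Q}$ and what makes the statement true over the reals. An equivalent route, which I could use instead if a more constructive formulation is preferred, is the bisection method: repeatedly halve $[\alpha,\beta]$, always retaining a closed subinterval across whose endpoints $f$ changes sign, and take the common point of the resulting nested sequence of intervals whose lengths tend to zero. Its existence is again the completeness axiom in the guise of the nested interval property, and continuity forces the value of $f$ at that point to be zero. Either way, once completeness is invoked the remaining steps are routine continuity estimates, after which the theorem is ready to be combined with Descartes's rule of signs (Theorem \ref{rene}) to count and localize the sign changes of $Q_2$ at the stationary horizons.
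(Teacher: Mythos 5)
Your proof is correct: the supremum argument (taking $\xi=\sup\{x\in[\alpha,\beta]:f(x)<0\}$ and excluding $f(\xi)>0$ and $f(\xi)<0$ by continuity) is the standard complete proof of Bolzano's theorem, and your handling of the endpoint cases and of the reduction to $f(\alpha)<0<f(\beta)$ is sound. Note that the paper itself supplies no proof of this statement --- it is quoted as a classical result and used only in combination with Descartes's rule of signs to show that $Q_2$ vanishes precisely at the stationary horizons --- so there is no authorial argument to compare against; your write-up simply fills in the standard textbook derivation from the least upper bound property.
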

\begin{proof}
We focus on the radial polynomial $f(r)=\frac{r^{4} \Lambda}{3}+\left(-1+\frac{a^{2} \Lambda}{3}\right) r^{2}+2 m r -a^{2}=-Q(r)$. All the other terms in the expression of $Q_2$ are non-zero for $\theta\in(0,\pi)$. Now we use Descartes's Rule of Signs, Theorem \ref{rene}. For a value of $\Lambda$ consistent with observations, the sequence of signs of the coefficients of $f(r)$ is: $+-+-$ in skeletal notation.  We note that there are three sign changes in the radial polynomial $f(r)$. On the other hand, when $r\rightarrow-r$, there occurs one sign change in $f(-r)$. Thus, there are three positive real roots and one negative for the polynomial $f(r)$. We conclude that the invariant $Q_2$ vanishes at the stationary horizons since $f(r)$ vanishes there, and since every polynomial is a continuous function, $Q_2$ is non-zero everywhere else from Bolzano's theorem \ref{bernardbolzano}.
\end{proof}

\begin{theorem}
As regards the invariant $Q_3$ for an accelerating Kerr black hole in (anti-)de Sitter spacetime our analytic computation yields the following explicit algebraic expression:
\begin{align}
&Q_3=\frac{1}{6 \sqrt{\frac{\left(\alpha  r \cos \left(\theta \right)-1\right)^{6} m^{2} \left(a^{2} \alpha^{2}+1\right)}{\left(r^{2}+a^{2} \cos \left(\theta \right)^{2}\right)^{3}}} \left(r^{2}+a^{2} \cos \! \left(\theta \right)^{2}\right)^{2}}\Biggl[-3 a^{2} \Biggl(\left(\alpha^{2} \left(\frac{\Lambda}{3}+\alpha^{2}\right) r^{2}+\frac{\Lambda}{3}\right) a^{2}\nonumber \\
&-2 \alpha^{2} r \left(\left(-\alpha^{2}-\frac{\Lambda}{3}\right) r^{3}+\alpha^{2} m \,r^{2}+\frac{r}{2}-m \right)\Biggr) \cos \! \left(\theta \right)^{4}+6 m \alpha  \left(\alpha^{2} r^{4}+a^{2}\right) \cos \! \left(\theta \right)^{3}\nonumber \\
&+3 \left(-1+\left(\frac{\Lambda}{3}+\alpha^{2}\right) a^{2}\right) \left(\alpha^{2} r^{4}+a^{2}\right) \cos \! \left(\theta \right)^{2}-6 m \alpha  \left(\alpha^{2} r^{4}+a^{2}\right) \cos \! \left(\theta \right)\nonumber \\
&+\left(6+\left(-3 \alpha^{2}-\Lambda \right) r^{2}\right) a^{2}+6 \alpha^{2} m \,r^{3}-\Lambda  r^{4}-6 m r +3 r^{2}\Biggr].
\end{align}
\end{theorem}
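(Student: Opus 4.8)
The plan is to obtain $Q_3$ directly from its definition $Q_3=\frac{1}{6\sqrt{3}}\frac{I_5+I_6}{(I_1^2+I_2^2)^{5/4}}$ by assembling the constituent invariants rather than re-deriving everything from the metric. The Weyl invariants $I_1=C_{\alpha\beta\gamma\delta}C^{\alpha\beta\gamma\delta}$ and $I_2=C^{*}_{\alpha\beta\gamma\delta}C^{\alpha\beta\gamma\delta}$ for the accelerating Kerr-(anti-)de Sitter metric (\ref{epitaxmelaniopi}) are already available as the charge-free specialisation of the expressions of \cite{KraniotisCurvature}, and the gradient-norm invariants $I_5=k_\mu k^\mu$, $I_6=l_\mu l^\mu$ (with $k_\mu=-\nabla_\mu I_1$, $l_\mu=-\nabla_\mu I_2$) are produced by the same symbolic pipeline used for Theorems \ref{weylcovI3epi}--\ref{sebenepitaxkerr}: from (\ref{epitaxmelaniopi}) I would form the Christoffel symbols, the Riemann and Weyl tensors and the Weyl dual, contract to obtain $I_1,I_2$, take gradients, and raise the index with the inverse metric. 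Since $I_1,I_2$ are scalars depending only on $r$ and $\theta$, this contraction collapses to $I_5+I_6=g^{rr}\big[(\partial_r I_1)^2+(\partial_r I_2)^2\big]+g^{\theta\theta}\big[(\partial_\theta I_1)^2+(\partial_\theta I_2)^2\big]$ with $g^{rr}=\Omega^2 Q/\rho^2$ and $g^{\theta\theta}=\Omega^2 P/\rho^2$; this is precisely the channel through which $\Lambda$ (hidden in $Q$) and the acceleration $\alpha$ (hidden in $\Omega,P,Q$) enter the final bracket polynomial.

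A structural shortcut both explains and helps organise the compactness of the answer. The metric (\ref{epitaxmelaniopi}) is Petrov type D, so up to a fixed real constant the complex combination $I_1+iI_2$ is proportional to $\Psi_2^2$, where $\Psi_2$ is the single non-vanishing Newman-Penrose Weyl scalar. Hence $(I_1^2+I_2^2)^{5/4}=|I_1+iI_2|^{5/2}\propto|\Psi_2|^5=(|\Psi_2|^2)^{5/2}$ is a half-integer power of a rational function of $r$ and $\cos\theta$, which is exactly what produces the square-root denominator $\sqrt{(\alpha r\cos\theta-1)^6 m^2(a^2\alpha^2+1)/(r^2+a^2\cos^2\theta)^3}$ appearing in the statement. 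Moreover $I_5+I_6=\nabla_\mu(I_1+iI_2)\,\nabla^\mu\overline{(I_1+iI_2)}$, since the cross terms are imaginary and cancel, so the whole quotient reduces up to a constant to $\nabla_\mu\Psi_2\,\nabla^\mu\bar\Psi_2\,/\,|\Psi_2|^3$. This is why the enormous separate expressions for $I_5$ and $I_6$ collapse to the short degree-four-in-$\cos\theta$ polynomial displayed in the bracket.

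The actual labour is the simplification. After clearing the common factors $\rho^2=r^2+a^2\cos^2\theta$ and the conformal factor $\Omega=1-\alpha r\cos\theta$ to their correct powers, one must extract the fractional power in $(I_1^2+I_2^2)^{5/4}$ cleanly so that the remaining rational part is the bracket of the theorem, and then coax the computer algebra system into presenting the $\Lambda$- and $\alpha$-dependent coefficients in a recognisably factored form. I expect this bookkeeping of the half-integer powers, rather than the curvature computation itself, to be the main obstacle. Finally I would validate the result through the two limits it must satisfy: setting $\alpha=0$ should return the Kerr-(anti-)de Sitter expression (\ref{Q3KerradS}) of Theorem \ref{difQ3inv}, and setting $\Lambda=0$ as well should reproduce the Kerr value of \cite{LakeZwei}. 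Agreement in both limits gives a strong independent check that the fractional-power extraction and the sign conventions for $k_\mu,l_\mu$ were handled correctly.
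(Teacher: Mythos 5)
Your proposal is correct and follows essentially the same route as the paper: the theorem is obtained by direct symbolic computation of $I_1,I_2,I_5,I_6$ from the metric (\ref{epitaxmelaniopi}) and assembly via the definition (\ref{Qdrei}), and the Newman--Penrose reduction you invoke as a structural shortcut is precisely the relation $Q_3=\frac{1}{18}\,\nabla_{\mu}\Psi_2\nabla^{\mu}\overline{\Psi_2}/(\Psi_2\overline{\Psi_2})^{3/2}$ that the paper itself derives in section \ref{RogerPenN} and uses as an independent cross-check. Your two consistency limits ($\alpha=0$ recovering Eqn.(\ref{Q3KerradS}), and additionally $\Lambda=0$ recovering the Kerr result of \cite{LakeZwei}) are exactly the validations the paper relies on.
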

\section{The Newman-Penrose formalism and differential curvature invariants}\label{RogerPenN}
In this section we shall apply the Newman-Penrose formalism \cite{NPformalismCI} and compute in an independent way the differential local invariants for the accelerating Kerr-Newman black hole in (anti-)de Sitter spacetime.
 It is gratifying that our results in NP formalism agree with those obtained for the corresponding curvature invariants in the previous sections.

From the relation $I_7=k_{\mu}l^{\mu}$, and the equation:
\begin{equation}
I_2=24 i(\Psi_2^2-\overline{\Psi}_2^2),
\end{equation}
we obtain:
\begin{align}
I_7&=48^2 i(\Psi_2^2\nabla_{\mu}\Psi_2\nabla^{\mu}\Psi_2-\overline{\Psi_2}^2\nabla_{\mu}\overline{\Psi_2}\nabla^{\mu}\overline{\Psi_2})\nonumber \\
&=-48^2\; 2\; \Im{(\Psi_2^2\nabla_{\mu}\Psi_2\nabla^{\mu}\Psi_2)},
\end{align}
where the Weyl scalar $\Psi_2$ is given by \cite{KraniotisCurvature}:
\begin{equation}
\Psi_2=-\frac{\left(1-\alpha  r \cos \! \left(\theta \right)\right)^{3} \left(m \left(i a \alpha +1\right) \left(r +i a \cos \! \left(\theta \right)\right)-q^{2} \left(1+\alpha  r \cos \! \left(\theta \right)\right)\right)}{\left(r -i a \cos \! \left(\theta \right)\right)^{3} \left(r +i a \cos \! \left(\theta \right)\right)}
\end{equation}

The functional relationship between invariants, the syzygy in Eqn.(\ref{erstesyzygie}), that we discovered  in this work also holds for the Kerr-de Sitter black hole spacetime may be expressed as the real part of the complex syzygy:
\begin{equation}
\nabla_{\mu}(I_1+iI_2)\nabla^{\mu}(I_1+iI_2)=\frac{12}{5}(I_1+iI_2)(I_3+iI_4),
\label{complexerste}
\end{equation}

On the other hand, in the NP formalism we have that:
\begin{align}
I_1+iI_2=48\Re{\Psi_2^2}-48i \Im{\Psi_2^2}=48\overline{\Psi_2}^2,
\end{align}
thus we derive the following covariant derivative:
\begin{equation}
\nabla_{\mu}(I_1+iI_2)=48\times 2\overline{\Psi}_2\nabla_{\mu}\overline{\Psi_2},
\end{equation}
and from (\ref{complexerste}) we derive:
\begin{align}
(96\overline{\Psi}_2)^2\nabla_{\mu}\overline{\Psi}_2\nabla^{\mu}\overline{\Psi}_2=
\frac{12\cdot48}{5}\overline{\Psi_2}^2(I_3+iI_4).
\end{align}
Thus we conclude that:
\begin{align}
I_3&=80\Re\nabla_{\mu}\overline{\Psi_2}\nabla^{\mu}\overline{\Psi_2},\\
I_4&=80\Im\nabla_{\mu}\overline{\Psi_2}\nabla^{\mu}\overline{\Psi_2}.
\end{align}
The computation of the differential invariants $I_5$ and $I_6$ in the NP formalism yields the result:
\begin{align}
I_5&=\nabla_{\mu}I_1\nabla^{\mu}I_1\nonumber \\
&=48^2 \Biggl[\Psi_2^2 (\nabla_{\mu}\Psi_2\nabla^{\mu}\Psi_2)+2\Psi_2\overline{\Psi_2}\nabla_{\mu}\Psi_2\nabla^{\mu}\overline{\Psi_2}+\overline{\Psi_2}^2\nabla_{\mu}\overline{\Psi_2}\nabla^{\mu}\overline{\Psi_2}\Biggr]\label{iotafunf},\\
I_6&=\nabla_{\mu}I_2\nabla^{\mu}I_2 ,\nonumber \\
&=-48^2 (\Psi_2^2 (\nabla_{\mu}\Psi_2\nabla^{\mu}\Psi_2)-2\Psi_2\overline{\Psi_2}\nabla_{\mu}\Psi_2\nabla^{\mu}\overline{\Psi_2}
+\overline{\Psi_2}^2\nabla_{\mu}\overline{\Psi_2}\nabla^{\mu}\overline{\Psi_2}) \label{sixNP}.
\end{align}

The invariant $Q_3$ as defined in Eqn.(\ref{Qdrei}), in the NP formalism and taking into account Eqns.(\ref{iotafunf})-(\ref{sixNP}), acquires the form:
\begin{align}
Q_3&=\frac{4\cdot 48^2\nabla_{\mu}\Psi_2\nabla^{\mu}\overline{\Psi_2}}{6\sqrt{3}\;48^{5/2}(\Psi_2\overline{\Psi_2})^{3/2}}\nonumber \\
&=\frac{1}{18}\frac{\nabla_{\mu}\Psi_2\nabla^{\mu}\overline{\Psi_2}}{(\Psi_2\overline{\Psi_2})^{3/2}},
\end{align}
whereas
\begin{align}
Q_1=\frac{2\Re(\overline{\Psi_2}^2 \nabla_{\mu}\Psi_2\nabla^{\mu}\Psi_2)}{9(\Psi_2\overline{\Psi_2})^{5/2}}.
\end{align}
Likewise, using the observation in Eqn.(\ref{donandrey}) and the theorem derived in \cite{PageD} we can derive an expression of the curvature invariant $Q_2$ in terms of the norm of the wedge product that involves exterior derivatives of the Weyl scalar $\Psi_2$ \cite{MAHMacCallum}:
\begin{equation}
Q_2=\frac{-2\left\lVert\nabla \overline {\Psi_2}\wedge \nabla\Psi_2\right\Vert^2}{18^2(\Psi_2\overline{\Psi_2})^3}.
\end{equation}

\begin{theorem}\label{exactQ2accelKNdS}
We calculated the exact algebraic expression for the invariant $Q_2$ for the accelerating Kerr-Newman black hole in (anti-)de Sitter spacetime.  Our result is:
\begin{align}
&Q_2=\frac{-1}{\mathcal{A}}\Biggl(\alpha  \left(\left(a^{4} m^{2} r +2 a^{2} m \,q^{2} r^{2}+\frac{8}{9} q^{4} r^{3}\right) \alpha^{2}+a^{2} m \left(m r -\frac{2 q^{2}}{3}\right)\right) \cos \! \left(\theta \right)^{3}\nonumber \\
&+\left(\left(a^{4} m^{2}+2 a^{2} m \,q^{2} r -2 m \,q^{2} r^{3}+\frac{16}{9} q^{4} r^{2}\right) \alpha^{2}+a^{2} m^{2}\right) \cos \! \left(\theta \right)^{2}+\alpha  \Biggl(a^{2} \alpha^{2} m^{2} r^{3}+2 a^{2} m \,q^{2}+r^{3} m^{2}\nonumber \\
&-2 m \,q^{2} r^{2}+\frac{16}{9} q^{4} r \Biggr) \cos \! \left(\theta \right)+m \,r^{2} \left(a^{2} m +\frac{2 q^{2} r}{3}\right) \alpha^{2}+r^{2} m^{2}-2 m \,q^{2} r +\frac{8 q^{4}}{9}\Biggr)^{2} \Biggl[r^{2} \left(a^{2}-2 m r +q^{2}+r^{2}\right) \alpha^{2}\nonumber \\
&+\frac{\Lambda  r^{4}}{3}+\left(\frac{\Lambda  a^{2}}{3}-1\right) r^{2}+2 m r -a^{2}-q^{2}\Biggr] a^{2} \sin \! \left(\theta \right)^{2} \left(1-2 \alpha  m \cos \! \left(\theta \right)+\left(\alpha^{2} \left(a^{2}+q^{2}\right)+\frac{\Lambda  a^{2}}{3}\right) \cos \! \left(\theta \right)^{2}\right).
\label{totalQ2accelKN(a)dS}
\end{align}
\end{theorem}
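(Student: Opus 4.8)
The plan is to bypass the direct tensorial route through $I_5,I_6,I_7$ and instead exploit the Page--Shoom representation of $Q_2$ as a wedge-product norm, which the paper has already recast in Newman--Penrose form. Because the accelerating Kerr--Newman--(anti-)de Sitter metric is algebraically special of type D, every Weyl invariant entering $Q_2$ is built from the single Weyl scalar $\Psi_2$; hence it suffices to evaluate
\begin{equation}
Q_2=\frac{-2\left\lVert\nabla\overline{\Psi_2}\wedge\nabla\Psi_2\right\rVert^2}{18^2(\Psi_2\overline{\Psi_2})^3},\qquad
\left\lVert\nabla\overline{\Psi_2}\wedge\nabla\Psi_2\right\rVert^2=\tfrac12\bigl[(\nabla_\mu\overline{\Psi_2}\nabla^\mu\overline{\Psi_2})(\nabla_\nu\Psi_2\nabla^\nu\Psi_2)-(\nabla_\mu\overline{\Psi_2}\nabla^\mu\Psi_2)^2\bigr],
\end{equation}
on the explicit $\Psi_2=\Psi_2(r,\theta)$ quoted above. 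The whole computation thus reduces to first derivatives of $\Psi_2$.

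The first key observation is that $\Psi_2$ depends only on $r$ and $\theta$, while the metric (\ref{epitaxmelaniopi}) has no $g^{r\theta}$ cross-term and the $t,\phi$ block decouples from the gradient; only $g^{rr}=\Omega^2 Q/\rho^2$ and $g^{\theta\theta}=\Omega^2 P/\rho^2$ survive. Writing $u\equiv\partial_r\Psi_2$, $v\equiv\partial_\theta\Psi_2$, I would then apply the purely algebraic identity
\begin{equation}
(g^{rr}u^2+g^{\theta\theta}v^2)(g^{rr}\bar u^2+g^{\theta\theta}\bar v^2)-(g^{rr}u\bar u+g^{\theta\theta}v\bar v)^2=g^{rr}g^{\theta\theta}\,(u\bar v-v\bar u)^2,
\end{equation}
so that the wedge norm collapses to the single ``Jacobian'' factor
\begin{equation}
\left\lVert\nabla\overline{\Psi_2}\wedge\nabla\Psi_2\right\rVert^2=\tfrac12\,\frac{\Omega^4 QP}{\rho^4}\,\bigl(\partial_r\Psi_2\,\partial_\theta\overline{\Psi_2}-\partial_\theta\Psi_2\,\partial_r\overline{\Psi_2}\bigr)^2.
\end{equation}
This is the structural heart of the result: the prefactor $g^{rr}g^{\theta\theta}\propto QP$ is precisely what supplies the radial horizon polynomial and the polar factor $P$ in (\ref{totalQ2accelKN(a)dS}), while the remaining Wronskian-type combination equals $2i\,\Im(\partial_r\Psi_2\,\partial_\theta\overline{\Psi_2})$ and hence has a definite (non-positive) square, which is what fixes the sign behaviour of $Q_2$.

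Next I would insert the explicit $\Psi_2=-\Omega^3 N/D$ with $N=m(ia\alpha+1)(r+ia\cos\theta)-q^2(1+\alpha r\cos\theta)$ and $D=(r-ia\cos\theta)^3(r+ia\cos\theta)$, differentiate, and form the antisymmetric combination. Finally I would assemble $Q_2$ over the common denominator $\mathcal{A}$, using $\rho^4=(r^2+a^2\cos^2\theta)^2$, the identity $Q=-\bigl[\alpha^2r^2(a^2-2mr+q^2+r^2)+\tfrac{\Lambda}{3}r^4+(\tfrac{\Lambda a^2}{3}-1)r^2+2mr-a^2-q^2\bigr]$, and $P=1-2\alpha m\cos\theta+(\alpha^2(a^2+q^2)+\tfrac13\Lambda a^2)\cos^2\theta$; these reproduce exactly the radial bracket and the final $\theta$-factor in (\ref{totalQ2accelKN(a)dS}), confirming the remark that $Q_2$ vanishes on the (acceleration-modified) horizons.

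The main obstacle is the explicit evaluation and simplification of the complex Wronskian $\partial_r\Psi_2\,\partial_\theta\overline{\Psi_2}-\partial_\theta\Psi_2\,\partial_r\overline{\Psi_2}$. Differentiating the conformal cube $\Omega^3$, the cubic pole $(r-ia\cos\theta)^{-3}$ and the charge-dependent numerator $N$ produces a large number of terms whose imaginary part must be extracted; the nontrivial point is that after multiplying by $QP/\rho^4$ and dividing by $(\Psi_2\overline{\Psi_2})^3$ a massive cancellation occurs, leaving the compact perfect square times $a^2\sin^2\theta$ displayed in (\ref{totalQ2accelKN(a)dS}). This algebra I would carry out with Maple, and then verify the result by specialisation: setting $q=0$ must return Theorem \ref{q2detecKerrdSaccel}, and setting in addition $\alpha=0$ must return Corollary \ref{porismaQ2kerrdeSitter}.
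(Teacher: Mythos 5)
Your proposal is correct and follows essentially the same route as the paper: it uses the Page--Shoom/Newman--Penrose representation $Q_2=-2\lVert\nabla\overline{\Psi_2}\wedge\nabla\Psi_2\rVert^2/(18^2(\Psi_2\overline{\Psi_2})^3)$ together with the explicit type-D Weyl scalar $\Psi_2(r,\theta)$ of the accelerating KN(a)dS metric, and delegates the final simplification to computer algebra, exactly as the paper does (which also cross-checks against the lengthy tensorial $I_5,I_6,I_7$ route). Your explicit Gram-determinant collapse to $\tfrac12 g^{rr}g^{\theta\theta}(\partial_r\Psi_2\,\partial_\theta\overline{\Psi_2}-\partial_\theta\Psi_2\,\partial_r\overline{\Psi_2})^2$ is a correct and slightly more transparent intermediate step that makes the appearance of the $Q$ and $P$ factors manifest, but it does not change the substance of the argument.
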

where
\begin{align}
\mathcal{A}&\equiv\left(\alpha  r \cos \! \left(\theta \right)-1\right)^{4} \Biggl(\left(\left(a^{2} m +q^{2} r \right)^{2} \alpha^{2}+a^{2} m^{2}\right) \cos \! \left(\theta \right)^{2}+2 q^{2} \alpha  \left(a^{2} m -m \,r^{2}+q^{2} r \right) \cos \! \left(\theta \right)\nonumber \\
&+r^{2} m^{2} \alpha^{2} a^{2}+\left(m r -q^{2}\right)^{2}\Biggr)^{3}.
\end{align}

\begin{corollary}
We observe that Eqn.(\ref{totalQ2accelKN(a)dS}) vanishes at the radii of the horizons of the accelerating Kerr-Newman black hole in (anti-)de Sitter spacetime. Thus, the differential curvature invariant $Q_2$ can serve as horizon detector for the most general class of accelerating rotating and charged black holes with non-zero cosmological constant.
\end{corollary}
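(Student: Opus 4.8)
The plan is to prove the vanishing by recognizing that the long square bracket multiplying $a^{2}\sin^{2}\theta$ in Eqn.~(\ref{totalQ2accelKN(a)dS}) is nothing but $-Q(r)$, with $Q$ the radial structure function of Eqn.~(\ref{radiieventpolyQ}). Since the horizons of the accelerating Kerr-Newman-(anti-)de Sitter black hole are \emph{defined} as the real roots of $Q(r)=0$, exhibiting $Q(r)$ as an explicit linear factor of the numerator of $Q_{2}$ makes the vanishing at every horizon radius immediate. Thus the whole argument reduces to one algebraic identification together with a check that no auxiliary factor spoils it.

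First I would isolate the factor
\[
B(r)\equiv r^{2}\left(a^{2}-2mr+q^{2}+r^{2}\right)\alpha^{2}+\frac{\Lambda r^{4}}{3}+\left(\frac{\Lambda a^{2}}{3}-1\right)r^{2}+2mr-a^{2}-q^{2},
\]
and expand both $B(r)$ and $-Q(r)$ as polynomials in $r$. Collecting by powers, one verifies that the coefficient of $r^{4}$ is $\alpha^{2}+\Lambda/3$ in each, the coefficient of $r^{3}$ is $-2m\alpha^{2}$, the coefficient of $r^{2}$ is $\alpha^{2}(a^{2}+q^{2})+\Lambda a^{2}/3-1$, the coefficient of $r$ is $2m$, and the constant term is $-(a^{2}+q^{2})$. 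Hence $B(r)=-Q(r)$ identically, and the numerator of $Q_{2}$ carries $Q(r)$ as a factor. Since the acceleration horizon is itself a root of $Q$ (for $\Lambda=0$ it is $r=1/\alpha$, the root of the factor $1-\alpha^{2}r^{2}$ in $Q$, and it is merely displaced when $\Lambda\neq0$), this single identity captures the event, Cauchy, cosmological, and acceleration horizons simultaneously.

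Next I would confirm that the vanishing is genuine rather than an indeterminate $0/0$, by showing the remaining factors are finite and nonzero on the horizon radii. The denominator $\mathcal{A}$ contains the conformal factor $(\alpha r\cos\theta-1)^{4}$, strictly positive away from the boundary $\Omega=0$ of Eqn.~(\ref{ConformalBound}); its other factor is the cube of the manifestly non-negative expression $\bigl|\,m(1+ia\alpha)(r+ia\cos\theta)-q^{2}(1+\alpha r\cos\theta)\,\bigr|^{2}$, i.e.\ a sum of two real squares, which coincides with the squared modulus entering the Weyl scalar $\Psi_{2}$ and is strictly positive for $m\neq0$ away from isolated loci. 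Therefore $\mathcal{A}$ is finite and nonvanishing at each horizon. The surviving numerator prefactors are independent of the radial condition $Q(r)=0$, so the factor $Q(r)$ alone forces $Q_{2}=0$ there; away from the horizons $Q(r)\neq0$ and the invariant is generically nonzero, which is the detector property claimed.

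The hard part will not be the algebra of the identification $B=-Q$, which is routine term-by-term bookkeeping, but the careful accounting of the auxiliary factors: one must guarantee that neither the cubed sum-of-squares in $\mathcal{A}$ nor the squared polynomial, the $a^{2}\sin^{2}\theta$, and the polar factor $1-2\alpha m\cos\theta+(\alpha^{2}(a^{2}+q^{2})+\Lambda a^{2}/3)\cos^{2}\theta$ (the metric function $P$) develops a compensating zero or pole coincident with a horizon radius. Establishing that these factors are regular and nonvanishing on the horizon spheres $r=\mathrm{const}$ for generic $\theta$ is what upgrades the bare factorization into a rigorous statement that $Q_{2}$ truly vanishes at, and only at (up to the known coordinate loci $\theta=0,\pi$ and $a=0$), the horizons.
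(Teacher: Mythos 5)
Your proposal is correct and takes essentially the same route as the paper: the corollary is read off from the explicit factorization of Eqn.~(\ref{totalQ2accelKN(a)dS}), whose bracketed radial factor is exactly $-Q(r)$ from Eqn.~(\ref{radiieventpolyQ}), the real roots of which define all (event, Cauchy, cosmological and acceleration) horizons. Your additional bookkeeping --- verifying term by term that $B(r)=-Q(r)$, and observing that the denominator $\mathcal{A}$ is the conformal factor $(\alpha r\cos\theta-1)^{4}$ times the cube of the sum of squares $\bigl|m(1+ia\alpha)(r+ia\cos\theta)-q^{2}(1+\alpha r\cos\theta)\bigr|^{2}$, hence finite and nonzero on the horizon spheres so the vanishing is not an indeterminate $0/0$ --- makes explicit a regularity check that the paper leaves implicit.
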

In Fig.\ref{Q2AccelKNdS} we display $Q_2$ for various values of the physical parameters for an accelerating Kerr-Newman black hole in (anti-)de Sitter spacetime. The vanishing of $Q_2$ at the stationary horizons and the change of its sign as we cross them is illustrated in Figs.\ref{graphQ2a026q085Lobsallpha01},\ref{graphQ2a052q011alpha01Lobs},\ref{grafosq2a09939q011al01Lobs}.

\begin{corollary}
In the equatorial plane $\theta=\frac{\pi}{2}$, $Q_2$ reduces to:
\begin{align}
Q_2(\theta=\frac{\pi}{2})&=-\frac{\left(m \,r^{2} \left(a^{2} m +\frac{2 q^{2} r}{3}\right) \alpha^{2}+r^{2} m^{2}-2 m \,q^{2} r +\frac{8 q^{4}}{9}\right)^{2} }{\left(r^{2} m^{2} \alpha^{2} a^{2}+\left(m r -q^{2}\right)^{2}\right)^{3}}\nonumber \\
&\times \Biggl(r^{2} \left(a^{2}-2 m r +q^{2}+r^{2}\right) \alpha^{2}+\frac{\Lambda  r^{4}}{3}+\left(\frac{\Lambda  a^{2}}{3}-1\right) r^{2}+2 m r -a^{2}-q^{2}\Biggr) a^{2}.
\label{ishmerinhQ2}
\end{align}
The vanishing of $Q_2$ in the equatorial plane singles out the horizons (away from the discrete roots of the cubic radial polynomial in (\ref{ishmerinhQ2})).
\end{corollary}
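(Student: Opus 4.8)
The plan is to obtain this corollary as a direct specialisation of Theorem \ref{exactQ2accelKNdS}: I would evaluate the master formula (\ref{totalQ2accelKN(a)dS}) on the equatorial slice $\theta=\pi/2$ and then read off the horizon-detection property from the resulting radial factor. No new computation of curvature invariants is needed; the work is algebraic simplification followed by an identification of one polynomial factor with the horizon polynomial $Q$ of (\ref{radiieventpolyQ}).

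First I would set $\cos\theta=0$ and $\sin\theta=1$ throughout (\ref{totalQ2accelKN(a)dS}). In the large squared bracket of the numerator, every monomial carrying a factor $\cos\theta$, $\cos^2\theta$ or $\cos^3\theta$ drops out, leaving only the $\theta$-independent contribution
\[
m r^{2}\Bigl(a^{2} m + \tfrac{2 q^{2} r}{3}\Bigr)\alpha^{2} + r^{2} m^{2} - 2 m q^{2} r + \tfrac{8 q^{4}}{9},
\]
which is exactly the cubic (in $r$) appearing squared in (\ref{ishmerinhQ2}). Simultaneously the trailing Plebański factor $1 - 2\alpha m\cos\theta + (\alpha^{2}(a^{2}+q^{2})+\tfrac{\Lambda a^{2}}{3})\cos^{2}\theta$ collapses to $1$, and $a^{2}\sin^{2}\theta\to a^{2}$. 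In the denominator $\mathcal{A}$ the conformal factor $(\alpha r\cos\theta-1)^{4}\to 1$ and the cubed bracket reduces to $(r^{2} m^{2}\alpha^{2} a^{2} + (mr-q^{2})^{2})^{3}$. Assembling these pieces reproduces (\ref{ishmerinhQ2}) verbatim; this first half of the statement is entirely mechanical.

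The substantive step is the horizon claim. I would show by direct expansion that the radial bracket
\[
f(r) \equiv r^{2}\bigl(a^{2} - 2 m r + q^{2} + r^{2}\bigr)\alpha^{2} + \tfrac{\Lambda r^{4}}{3} + \Bigl(\tfrac{\Lambda a^{2}}{3} - 1\Bigr) r^{2} + 2 m r - a^{2} - q^{2}
\]
coincides with $-Q(r)$, where $Q$ is the horizon polynomial (\ref{radiieventpolyQ}). Expanding $-Q=-[(a^{2}+q^{2}-2mr+r^{2})(1-\alpha^{2}r^{2})-\tfrac{1}{3}\Lambda(a^{2}+r^{2})r^{2}]$ yields precisely the same monomials, so $f(r)=-Q(r)$ and $f$ vanishes at and only at the horizon radii. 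The remaining factors are harmless: the prefactor $a^{2}$ is nonzero for any rotating black hole, the denominator $(r^{2} m^{2}\alpha^{2} a^{2} + (mr-q^{2})^{2})^{3}$ is the cube of a strictly positive quantity, and the squared cubic in the numerator vanishes only on its own discrete root set. Hence, away from those discrete cubic roots, $Q_{2}|_{\theta=\pi/2}=0$ is equivalent to $Q(r)=0$, i.e. to lying on an event, Cauchy, or acceleration horizon.

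The only point requiring real care — and the natural place to invoke the machinery already used in the Kerr-de Sitter proof, namely Descartes's rule (Theorem \ref{rene}) and Bolzano's theorem (Theorem \ref{bernardbolzano}) — is to confirm that $Q_{2}$ genuinely \emph{changes sign} at the horizons rather than merely touching zero, so that the invariant truly isolates them. Since $Q_{2}|_{\theta=\pi/2}$ is proportional to $Q(r)$ times the nonnegative squared cubic and the positive denominator, its sign equals that of $Q(r)$; thus at each simple real root of $Q$ the invariant flips sign, while at the discrete roots of the squared cubic it only touches zero. I would therefore count the real roots of $Q$ by Descartes's rule and confirm their simplicity, paralleling the argument already carried out for the non-accelerating, uncharged case, to establish that the sign-changing zeros of $Q_{2}$ in the equatorial plane are exactly the physical horizons. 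This is the main obstacle, but it is a mild one given that the identification $f=-Q$ reduces it to the sign analysis of a single radial polynomial.
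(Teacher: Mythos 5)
Your proposal is correct and is essentially the route the paper takes: the corollary is obtained by direct substitution of $\cos\theta=0$, $\sin\theta=1$ into the master formula (\ref{totalQ2accelKN(a)dS}), after which the surviving radial bracket is recognised as $-Q(r)$ from (\ref{radiieventpolyQ}), so the zeros of $Q_2|_{\theta=\pi/2}$ away from the roots of the squared cubic are exactly the horizon radii. Your additional sign-change analysis via Descartes and Bolzano goes slightly beyond what the corollary asserts but is consistent with the argument the paper carries out for the non-accelerating Kerr--de Sitter case.
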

On the axis $Q_2=0$.

\begin{figure}[ptbh]
\centering
  \begin{subfigure}[b]{.60\linewidth}
    \centering
    \includegraphics[width=.99\textwidth]{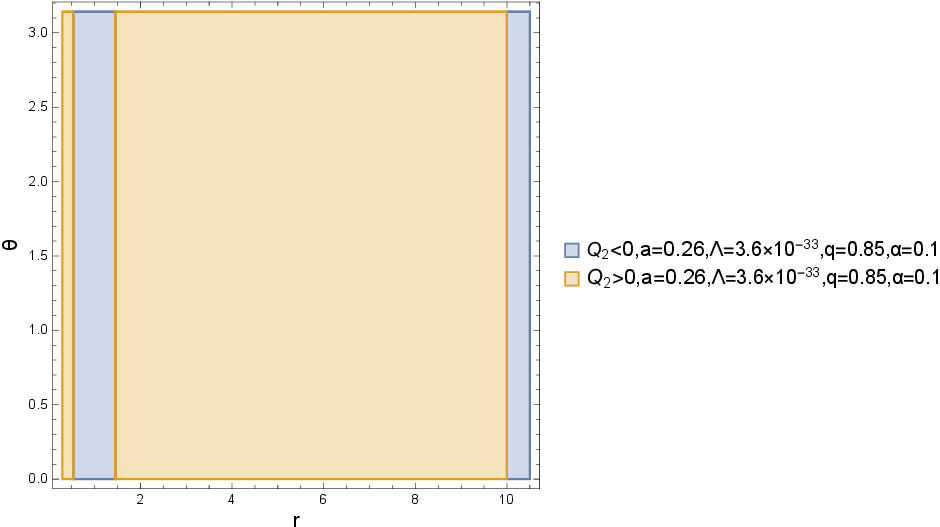}
    \caption{Region plot of  $Q_2$ for accelerating KNdS BH.}\label{graphQ2a026q085Lobsallpha01}
  \end{subfigure}%
  \begin{subfigure}[b]{.60\linewidth}
    \centering
    \includegraphics[width=.99\textwidth]{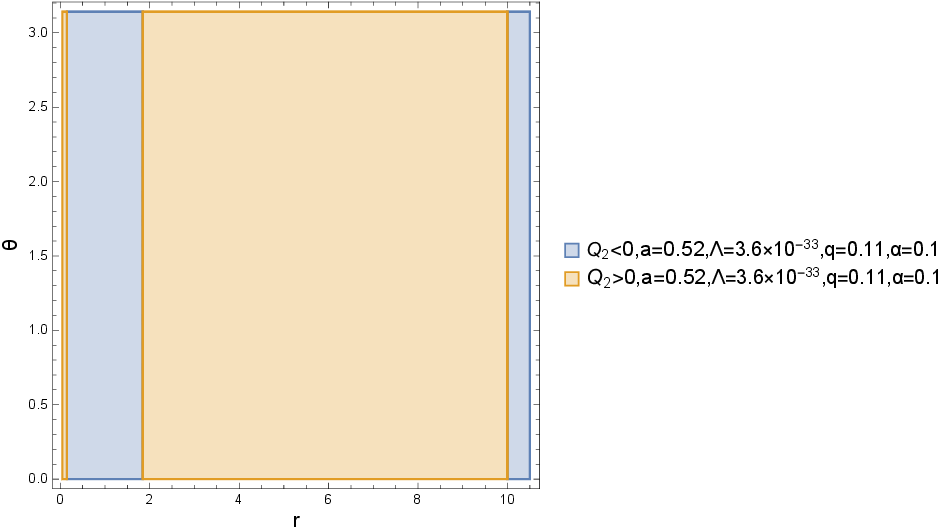}
    \caption{Region plot of $Q_2$ for accelerating KNdS BH}\label{graphQ2a052q011alpha01Lobs}
  \end{subfigure}\\
  \begin{subfigure}[b]{.60\linewidth}
    \centering
    \includegraphics[width=.99\textwidth]{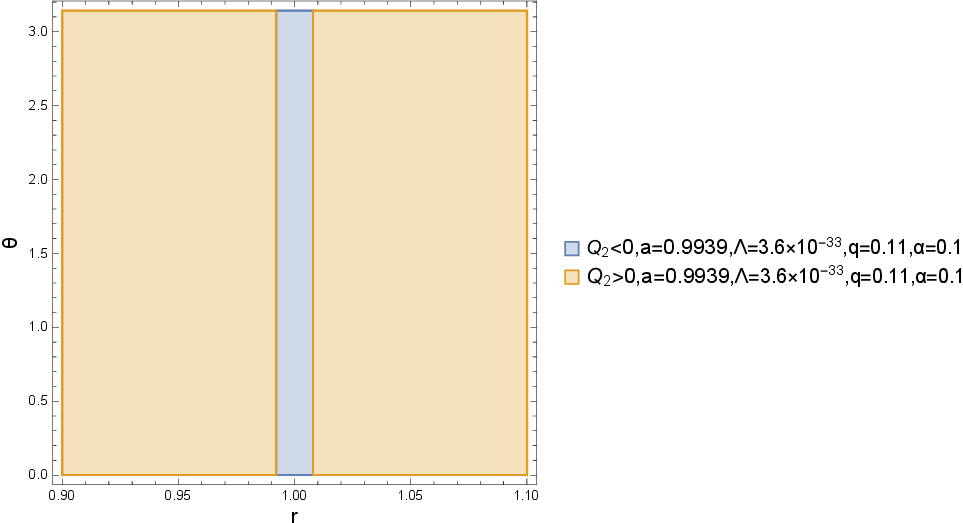}
    \caption{Region plot of $Q_2$ for accelerating KNdS BH.}\label{grafosq2a09939q011al01Lobs}
  \end{subfigure}%
  \caption{Regional plots for an accelerating Kerr-Newman black hole in (anti-)de Sitter spacetime, of the differential curvature invariant  $Q_2$  eqn.(\ref{totalQ2accelKN(a)dS}), that serves as a horizon detector, plotted as a function of the Boyer-Lindquist coordinates $r$ and $\theta$ . (a) for low spin Kerr parameter $a=0.26$, electric charge $q=0.85$, $\Lambda=3.6\times 10^{-33}$,$m=1$  acceleration $\alpha=0.1$. (b) For moderate spin  $a=0.52$, charge $q=0.11$,$\Lambda=3.6\times 10^{-33}$, $m=1$, $\alpha=0.1$ . (c) For high spin  $a=0.9939$, electric charge $q=0.11$,$\Lambda=3.6\times 10^{-33}$, mass $m=1$ and $\alpha=0.1$.}\label{Q2AccelKNdS}
\end{figure}

\subsection{Analytic computation of the Page-Shoom invariant $W$ in the NP formalism for accelerating Kerr-Newman black holes in (anti-)de Sitter spacetime}\label{winvps}

The covariant derivative operator $\nabla_{\mu}$ may be expressed in the form \cite{bicakpravda}:
\begin{equation}
\nabla_{\mu}=n_{\mu}D+l_{\mu}\Delta-\bar{m}_{\mu}\delta-m_{\mu}\delta^{*}.
\label{covderivation}
\end{equation}

Bianchi identities in terms of the spin coefficients and the Weyl and Ricci scalars read as follows \cite{Chandrasekhar}:
\begin{align}
-D\Psi_2+3\varrho\Psi_2+2\varrho\Phi_{11}-2D\Lambda&=0,
\label{BianchiEin}\\
-\Delta \Psi_2-3\mu \Psi_2-2\mu \Phi_{11}-2\Delta \Lambda&=0,\\
-\delta^{*}\Psi_2-3\pi \Psi_2+2\pi \Phi_{11}-2\delta^{*}\Lambda&=0,\\
-\delta\Psi_2+3\tau \Psi_2-2\tau \Phi_{11}-2\delta\Lambda&=0.
\label{BianchiVier}
\end{align}

In order to calculate the Ricci rotation coefficients that appear in Bianchi identities, Eqns.(\ref{BianchiEin})-(\ref{BianchiVier}), we will work with the following null-tetrad:
\begin{align}
l^{\mu}&=\left(-\frac{\sqrt{2}}{2}\frac{(r^2+a^2)\Omega}{\rho\sqrt{Q}},\frac{-\Omega\sqrt{2}\sqrt{Q}}{2\rho},0,-\frac{\Omega\sqrt{2}a}{2\sqrt{Q}\rho}\right),\label{symment1}\\
n^{\mu}&=\left(-\frac{\Omega \sqrt{2}(a^2+r^2)}{2\rho\sqrt{Q}},\frac{\Omega\sqrt{Q}\sqrt{2}}{2\rho},0,-\frac{\Omega\sqrt{2}a}{2\sqrt{Q}\rho}\right),\\
m^{\mu}&=\left(-\frac{\Omega a \sin(\theta)\sqrt{2}}{2\rho\sqrt{P}},0,-\frac{i\sqrt{2}\Omega\sqrt{P}}{2\rho},-\frac{\Omega\sqrt{2}}{2\sqrt{P}\rho\sin(\theta)}\right),\\
\bar{m}^{\mu}&=\left(-\frac{\Omega\sqrt{2}a \sin(\theta)}{2\rho\sqrt{P}},0,\frac{i\Omega \sqrt{2}\sqrt{P}}{2\rho},-\frac{\Omega \sqrt{2}}{\sqrt{P}2\rho\sin(\theta)}\right)\label{symmetnt2}.
\end{align}

 We computed the Ricci-rotation coefficients via  the formula for the $\lambda$-symbols given by Chandrasekhar \cite{Chandrasekhar}:
\begin{equation}
\lambda_{(a)(b)(c)}=e_{(b)i,j}[e_{(a)}^{\;i}e_{(c)}^{\;j}-e_{(a)}^{\;j}e_{(c)}^{\;i}].
\end{equation}

This formula has the advantage that one has to calculate ordinary derivatives of the dual co-tetrad.
Computation of the dual co-tetrad yields:
\begin{align}
\mathbf{l}&=\frac{\sqrt{Q}}{\sqrt{2}\Omega\rho}{\rm d}t-\frac{\rho}{\sqrt{Q}\Omega\sqrt{2}}{\rm d}r-\frac{\sqrt{Q}a \sin(\theta)^2}{\Omega\rho\sqrt{2}}{\rm d}\phi,\\
\mathbf{n}&=\frac{\sqrt{Q}}{\sqrt{2}\Omega\rho}{\rm d}t+\frac{\rho}{\sqrt{Q}\Omega\sqrt{2}}{\rm d}r-\frac{\sqrt{Q}a \sin(\theta)^2}{\Omega\rho\sqrt{2}}{\rm d}\phi,\\
\mathbf{m}&=\frac{\sin(\theta)\sqrt{P}a}{\sqrt{2}\rho\Omega}{\rm d}t+\frac{-i\rho}{\Omega\sqrt{2}\sqrt{P}}{\rm d}\theta-\frac{\sin(\theta)\sqrt{P}(r^2+a^2)}{\Omega \rho\sqrt{2}}{\rm d}\phi,\\
\mathbf{\bar{m}}&=\frac{\sin(\theta)\sqrt{P}a}{\sqrt{2}\rho\Omega}{\rm d}t+\frac{i\rho}{\Omega\sqrt{2}\sqrt{P}}{\rm d}\theta-\frac{\sin(\theta)\sqrt{P}(r^2+a^2)}{\Omega \rho\sqrt{2}}{\rm d}\phi.
\end{align}

The Ricci rotation coefficients $\gamma_{(a)(b)(c)}$ are expressed through the $\lambda$-coefficients as follows:
\begin{equation}
\gamma_{(a)(b)(c)}=\frac{1}{2}[\lambda_{(a)(b)(c)}+\lambda_{(c)(a)(b)}-\lambda_{(b)(c)(a)}]
\end{equation}

Using the null-tetrad in Eqns.(\ref{symment1})-(\ref{symmetnt2}), for the accelerating Kerr-Newman black hole in (anti-)de Sitter spacetime, we computed for the \textit{first time} the following Ricci spin coefficients that appear in the Bianchi identities:
\begin{align}
\mu&=\varrho=-\frac{\sqrt{Q}\alpha\cos(\theta)}{\sqrt{2}\rho}-\frac{-\Omega\sqrt{Q}(r-ia\cos(\theta)}{\sqrt{2}\rho^3}\label{symframespincoef1},\\
\tau&=\pi=\frac{i\sqrt{P}}{\sqrt{2}\rho}\alpha r\sin(\theta)-\frac{\sin(\theta)\sqrt{P}\Omega a}{\sqrt{2}\rho^3}(r-i a \cos(\theta).
\label{symframespin2}
\end{align}

The only non-zero curvature scalars in the NP-formalism for the metric (\ref{epitaxmelaniopi}) using the null-tetrad in Eqns.(\ref{symment1})-(\ref{symmetnt2}), are
the Weyl scalar $\Psi_2$ which is given by the following closed form expression:
\begin{align}
\Psi_2&\equiv C_{\mu\nu\lambda\sigma}\overline{m}^{\mu}n^{\nu}l^{\lambda}m^{\sigma}\nonumber \\
&=  -\frac{\left(\alpha  r \cos \! \left(\theta \right)-1\right)^{3} \left(\left(\alpha  r \,q^{2}+\left(a \alpha +i\right) m a \right) \cos \! \left(\theta \right)+m \left(i a \alpha -1\right) r +q^{2}\right)}{-\cos \! \left(\theta \right)^{4} a^{4}+2 \,i a^{3} r \cos \! \left(\theta \right)^{3}+2 \,i r^{3} a \cos \! \left(\theta \right)+r^{4}}\nonumber \\
&=\frac{\left(-m \left(1-i a \alpha \right)+\frac{q^{2} \left(1+\alpha  r \cos \left(\theta \right)\right)}{r -i a \cos \left(\theta \right)}\right) \left(1-\alpha  r \cos \! \left(\theta \right)\right)^{3}}{\left(r +i a \cos \! \left(\theta \right)\right)^{3}}\label{hweylpsi2},
\end{align}
and the Ricci-NP scalars:
\begin{equation}
\Phi_{11}\equiv\frac{1}{4}R_{\mu\nu}(l^{\mu}n^{\nu}+m^{\mu}\overline{m}^{\nu})=\frac{\left(\alpha  r \cos \! \left(\theta \right)-1\right)^{4} q^{2}}{2 \left(r^{2}+a^{2} \cos \! \left(\theta \right)^{2}\right)^{2}},\;\;\;\text{and}\;\;\;\Lambda\label{riccivathm11}.
\end{equation}

Now we have at our disposal all the arsenal necessary to prove the following theorem of the Page-Shoom invariant for the case of accelerating Kerr-Newman black holes in (anti-)de Sitter spacetime:
\begin{theorem}\label{cohompageshoom}
We computed in closed-form the Page-Shoom invariant for accelerating, rotating and charged black holes with non-zero cosmological constant ($\Lambda\not=0$) in the Newman-Penrose formalism  with the result:
\begin{align}
W&\equiv \left\lVert {\rm d}\Psi_2\wedge {\rm d}\overline{\Psi}_2 \right\rVert^2=
16\Re[(3\mu\Psi_2+2\mu\Phi_{11})^2(-3\bar{\pi}\overline{\Psi}_2+2\bar{\pi}\Phi_{11})^2]\nonumber \\
&-16\lVert3\mu\Psi_2+2\mu\Phi_{11}\rVert^2\lVert-3\pi \Psi_2+2\pi \Phi_{11}\rVert^2.
\label{pageshoominvt}
\end{align}
In (\ref{pageshoominvt}), the Weyl scalar $\Psi_2$ and the Ricci-NP scalar $\Phi_{11}$ are given by closed-form expressions,equations (\ref{hweylpsi2}) and (\ref{riccivathm11}), respectively. Whereas the spin coefficients $\mu,\pi$ are given by the explicit algebraic expressions in eqns.(\ref{symframespincoef1}),(\ref{symframespin2}).
\end{theorem}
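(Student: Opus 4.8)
The plan is to bypass any direct coordinate evaluation of $\nabla_\mu\Psi_2$ and instead extract all four directional derivatives of $\Psi_2$ algebraically from the Bianchi system, exactly as the shape of (\ref{pageshoominvt}) advertises. First I would write the squared norm of the complex two-form $\mathrm{d}\Psi_2\wedge\mathrm{d}\overline{\Psi}_2$ in the Page--Shoom form used for $Q_2$ below Eqn.(\ref{donandrey}), namely
\[
W=\tfrac12\left[(\nabla_\mu\Psi_2\nabla^\mu\Psi_2)(\nabla_\nu\overline{\Psi}_2\nabla^\nu\overline{\Psi}_2)-(\nabla_\mu\Psi_2\nabla^\mu\overline{\Psi}_2)^2\right],
\]
so that the whole theorem reduces to evaluating the two scalar contractions $\nabla_\mu\Psi_2\nabla^\mu\Psi_2$ and $\nabla_\mu\Psi_2\nabla^\mu\overline{\Psi}_2$.

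Next I would substitute the tetrad resolution of the gradient, Eqn.(\ref{covderivation}), and contract with the completeness relation of the null tetrad ($l\!\cdot\! n=1$, $m\!\cdot\!\bar m=-1$, all other inner products vanishing) to obtain the purely NP expressions
\[
\nabla_\mu\Psi_2\nabla^\mu\Psi_2=2\,(D\Psi_2)(\Delta\Psi_2)-2\,(\delta\Psi_2)(\delta^{*}\Psi_2),
\]
\[
\nabla_\mu\Psi_2\nabla^\mu\overline{\Psi}_2=(D\Psi_2)(\Delta\overline{\Psi}_2)+(\Delta\Psi_2)(D\overline{\Psi}_2)-(\delta\Psi_2)(\delta^{*}\overline{\Psi}_2)-(\delta^{*}\Psi_2)(\delta\overline{\Psi}_2).
\]
The decisive input is then the Bianchi system (\ref{BianchiEin})--(\ref{BianchiVier}): since the metric solves the Einstein--Maxwell equations with constant Ricci scalar, the NP scalar $\Lambda$ is constant and every term $D\Lambda,\Delta\Lambda,\delta\Lambda,\delta^{*}\Lambda$ drops out. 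What survives is $D\Psi_2=\varrho(3\Psi_2+2\Phi_{11})$, $\Delta\Psi_2=-\mu(3\Psi_2+2\Phi_{11})$, $\delta^{*}\Psi_2=\pi(-3\Psi_2+2\Phi_{11})$ and $\delta\Psi_2=\tau(3\Psi_2-2\Phi_{11})$. Feeding in the two tetrad identities $\mu=\varrho$ and $\tau=\pi$ from Eqns.(\ref{symframespincoef1})--(\ref{symframespin2}) collapses the directional derivatives onto the two building blocks of the theorem, $P\equiv3\mu\Psi_2+2\mu\Phi_{11}$ and $S\equiv-3\pi\Psi_2+2\pi\Phi_{11}$, giving $D\Psi_2=-\Delta\Psi_2=P$ and $\delta^{*}\Psi_2=S$, $\delta\Psi_2=\mp S$.

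With these substitutions both contractions become algebraic in $P$, $S$ and their conjugates: using that $D,\Delta$ are real operators (hence $\overline{D\Psi_2}=D\overline{\Psi}_2$) while $\delta,\delta^{*}$ are mutually conjugate (hence $\overline{\delta\Psi_2}=\delta^{*}\overline{\Psi}_2$), one finds $\nabla_\mu\Psi_2\nabla^\mu\Psi_2\propto(S^2-P^2)$ and $\nabla_\mu\Psi_2\nabla^\mu\overline{\Psi}_2\propto(|P|^2+|S|^2)$. Inserting these into the wedge-norm identity above and expanding in the invariants $\Re(P^2\overline{S}^2)$ and $|P|^2|S|^2$ produces (\ref{pageshoominvt}); the explicit closed forms for $\Psi_2$, $\Phi_{11}$, $\mu$ and $\pi$ in (\ref{hweylpsi2}), (\ref{riccivathm11}), (\ref{symframespincoef1}) and (\ref{symframespin2}) can then be inserted to render $W$ fully explicit.

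The hard part will be the bookkeeping of signs and conjugations in the norm of this complex two-form: because $\mathrm{d}\Psi_2\wedge\mathrm{d}\overline{\Psi}_2$ is anti-Hermitian, the relative sign between the $\Re(P^2\overline{S}^2)$ and $|P|^2|S|^2$ terms — equivalently, whether the horizon-detecting combination reduces to $\Re(P\overline{S})$ or $\Im(P\overline{S})$ — is fixed jointly by the sign in $\delta\Psi_2=\mp\,\delta^{*}\Psi_2$ (i.e. by the precise $\tau$ versus $\pi$ relation) and by the chosen normalization of $\lVert\cdot\rVert^2$. Pinning these two conventions down consistently is the genuinely delicate step. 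Once they are fixed, the vanishing of $W$ at the stationary horizons (Corollary \ref{anixneftisorizontagegonotwn}) follows immediately: the spin coefficient $\mu=\varrho$ in (\ref{symframespincoef1}) carries an overall factor $\sqrt{Q}$, so that $P^2$ and $|P|^2$, and therefore $W$ itself, are proportional to the horizon polynomial $Q$ of Eqn.(\ref{radiieventpolyQ}).
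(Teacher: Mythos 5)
Your proposal follows essentially the same route as the paper's (very terse) proof: decompose $\nabla_\mu\Psi_2$ on the null tetrad via Eqn.(\ref{covderivation}), use the Bianchi identities (\ref{BianchiEin})--(\ref{BianchiVier}) with the NP scalar $\Lambda$ constant to replace all four directional derivatives by the algebraic blocks $3\mu\Psi_2+2\mu\Phi_{11}$ and $-3\pi\Psi_2+2\pi\Phi_{11}$ (using $\mu=\varrho$ and $\tau=\pi$), and then assemble the wedge-product norm. The only point you leave open --- the overall normalization and the relative sign between the $\Re$ and modulus terms, fixed by the convention adopted for $\lVert\cdot\rVert^2$ of the anti-Hermitian two-form --- is likewise left implicit in the paper's one-sentence proof, and your identification of this as the single delicate step is accurate.
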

\begin{proof}
Using the expression for the covariant derivative, eqn.(\ref{covderivation}), the Bianchi identities, eqns.(\ref{BianchiEin})-(\ref{BianchiVier}) and our computation for the spin coefficients, eqns.(\ref{symframespincoef1})-(\ref{symframespin2}) we obtain eqn.(\ref{pageshoominvt}).
\end{proof}
\begin{corollary}\label{anixneftisorizontagegonotwn}
From Eqn.(\ref{symframespincoef1}), it is evident that $\mu=\varrho$ vanishes on the stationary horizons. This follows from the fact that the real roots of the radial polynomial $Q$ (eqn. \ref{radiieventpolyQ}), yield coordinate singularities which correspond to the up to four horizons of the spacetime. As a result the invariant $W$ must vanish there as well.
\end{corollary}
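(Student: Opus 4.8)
The plan is to show that the Page--Shoom invariant $W$ of Theorem \ref{cohompageshoom} carries an overall factor that vanishes precisely at the roots of the horizon polynomial $Q$. The decisive structural observation is that in expression (\ref{pageshoominvt}) the spin coefficient $\mu$ enters only through the combination $3\mu\Psi_2+2\mu\Phi_{11}=\mu\,(3\Psi_2+2\Phi_{11})$. Hence the first summand of $W$ contains the factor $(3\mu\Psi_2+2\mu\Phi_{11})^2=\mu^2(3\Psi_2+2\Phi_{11})^2$, and the second summand contains $\lVert 3\mu\Psi_2+2\mu\Phi_{11}\rVert^2=\lvert\mu\rvert^2\,\lVert 3\Psi_2+2\Phi_{11}\rVert^2$. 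Both additive pieces of $W$ are therefore proportional to $\mu^2$ or $\lvert\mu\rvert^2$, so it is enough to prove that $\mu$ itself vanishes on the stationary horizons.

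First I would read off from eqn.(\ref{symframespincoef1}) that $\mu=\varrho$ carries an explicit overall factor of $\sqrt{Q}$, since every term on its right-hand side is multiplied by $\sqrt{Q}$; thus $\mu=\varrho=\sqrt{Q}\,g(r,\theta)$ for a function $g$ that is regular away from the ring singularity $\rho=0$. Because the stationary horizons of the accelerating Kerr--Newman--(anti-)de Sitter black hole are by definition the real roots of the radial polynomial $Q$ in eqn.(\ref{radiieventpolyQ}), we have $Q=0$ there and hence $\mu=\varrho=0$ on each horizon. Setting $\mu=0$ in (\ref{pageshoominvt}) then annihilates the first summand through its factor $\mu^2$ and the second through $\lvert\mu\rvert^2$, giving $W\overset{\circ}{=}0$ on the union of horizons, which is the claim of Corollary \ref{anixneftisorizontagegonotwn}.

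The one delicate point--and the main obstacle--is to confirm that the factors multiplying $\mu$ remain finite at $Q=0$, so that the product is genuinely zero rather than an indeterminate $0\cdot\infty$. For this I would verify that $\Psi_2$ (eqn.(\ref{hweylpsi2})) and $\Phi_{11}$ (eqn.(\ref{riccivathm11})) depend on $r,\theta$ only through $\rho$ and $\Omega$ and carry no inverse power of $Q$, and that the conjugate factor built from $\pi=\tau$ in (\ref{symframespin2}) is likewise $Q$-regular. Since each horizon sits at a generic radius where $\rho\neq0$ and where $\Omega>0$ by our restriction to the physical region, all of these quantities stay bounded there, so the $\sqrt{Q}$ prefactor in $\mu$ forces $W=0$ exactly at the horizon radii. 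This is consistent with the general Page--Shoom statement that $\lVert W\rVert^2\overset{\circ}{=}0$ on Killing horizons of stationary spacetimes of local cohomogeneity two.
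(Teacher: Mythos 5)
Your proposal is correct and follows essentially the same route as the paper: the spin coefficient $\mu=\varrho$ in eqn.(\ref{symframespincoef1}) carries an overall factor of $\sqrt{Q}$ and therefore vanishes at the real roots of $Q$ (the stationary horizons), and since both summands of $W$ in eqn.(\ref{pageshoominvt}) are proportional to $\mu^2$ or $\lvert\mu\rvert^2$, the invariant $W$ vanishes there as well. Your additional check that $\Psi_2$, $\Phi_{11}$ and $\pi$ remain finite at $Q=0$ (so the product is genuinely zero and not an indeterminate form) is a small but worthwhile refinement that the paper leaves implicit.
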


\section{Discussion and conclusions}

In this work we have derived new explicit algebraic expressions  for the Karlhede and Abdelqader-Lake differential curvature invariants for two of the most general black hole solutions. Namely, i)  for the Kerr-Newman-(anti-)de Sitter black hole metric ii) for accelerating Kerr-Newman black hole in (anti-)de Sitter spacetime.
Despite the complexity of the computations involved using the tensorial method of calculation, our final expressions are reasonably compact and easy to use in applications.   We showed explicitly that some of the computed invariants vanish at the horizon and ergosurfaces radii of the type of the black holes we investigated.
In particular, the differential invariant $Q_2$ vanishes at the horizons radii of the accelerating rotating and charged black holes with non-zero cosmological constant.
This result adds further impetus on the program of using scalar curvature invariants for the identification and detection of black hole horizons.
Moreover, we  proved that $Q_2$ for the Kerr-de Sitter black hole vanishes at the stationary horizons and is non-zero everywhere else, using Descarte's rule of signs and Bolzano's theorem.
We have also confirmed our results obtained via the tensorial method, with the aid of the NP formalism in which the differential curvature invariants are expressed in terms of covariant derivatives of the Weyl scalar $\Psi_2$. In particular, using the Bianchi identities in NP formalism, eqns.(\ref{BianchiEin})-(\ref{BianchiVier}) and a specific null-tetrad  eqns.(\ref{symment1})-(\ref{symmetnt2}), we derived an explicit expression for the Page-Shoom invariant $W$, eqn.(\ref{pageshoominvt}), Theorem \ref{cohompageshoom}, for an accelerating Kerr-Newman black hole in (anti-)de Sitter spacetime. We then proved that $W$ vanishes at the stationary horizons, Corollary \ref{anixneftisorizontagegonotwn}. The reason is that the spin coefficient $\mu$ we computed in eqn.(\ref{symframespincoef1}) vanishes at the stationary horizons.

As we proved in Theorem \ref{difanalQ1} and Corollary \ref{outerergosurfaceQ1det},  $Q_1$ is a suitable invariant to use for detecting the outer ergosurface of the Kerr black hole in the presence of the cosmological constant $\Lambda$.

Armed with our exact explicit algebraic expressions we analysed in detail the norms $I_5$ and $I_6$ associated with the gradients of the two non-differential Weyl invariants (the first two Weyl invariants in the ZM scheme) of the accelerating and non-accelerating Kerr-Newman black holes in (anti-)de Sitter spacetime. We showed that whereas both locally single out the horizons, their global behaviour is even more interesting. Both reflect the background angular momentum and electric charge as the volume of space allowing a timelike gradient decreases with increasing angular momentum and charge \footnote{They reflect to a lesser extent the cosmological constant and black hole's acceleration.} becoming zero for highly spinning and highly charged black holes. In the latter case these black holes do not admit $\mathcal{T}$ regions.

There are many important ramifications within both methods worth of further exploration  for general Riemannian and pseudo-Riemannian metrics. A particularly interesting aspect is the relation of cohomogeneity of a Riemannian manifold with the regular level sets of scalar Weyl invariants.
Indeed in \cite{ConsoleOlmos} it was proven that: The cohomogeneity of a Riemannian manifold $M$ (with respect to the full isometry group) coincides with the codimension of the foliation by regular level sets of the scalar Weyl invariants \footnote{Recalling that homogeneity is the dimension of the regular orbit in $M$ of the full isometry group of the metric $g$, an equivalent wording of Theorem 1 in \cite{ConsoleOlmos} is: The homogeneity of a Riemannian manifold $(M,g)$ is equal to dimension of a generic level set of the Weyl invariants.}. We note that the work in \cite{ConsoleOlmos} generalised earlier work by Singer in which the author characterised homogeneous spaces locally via the Riemann tensor $R^{\kappa}_{\;\;\lambda\mu\nu}$ and its covariant derivatives \cite{singer}.
From the Weyl theory of invariants \cite{hweyl}, scalar Weyl (or polynomial curvature) invariants are obtained from the covariant derivatives of the Riemann tensor by tensor products and complete contractions.
Using these, a direct bundle-theoretic method for defining and extending local isometries out of curvature data was developed in \cite{sergiocarlos}.
Therefore it will be very interesting to apply such bundle-theoretic methods for general pseudo-Riemannian manifolds, and in particular for the important case of accelerating,  rotating and charged black holes with $\Lambda\not =0$  studied in this paper in order, among other issues, to obtain a deeper understanding of the vanishing  of the invariant $Q_2$ at the horizons of the accelerating Kerr-Newman black hole  in (anti-)de Sitter spacetime. Such investigations are beyond the scope of this paper and it will be a subject of a future publication.  Such studies of the Pleba\'{n}ski-Demia\'{n}ski class of solutions of the Einstein-Maxwell system of differential equations  will also include  possible NUT  and magnetic charges.

An interesting application of our results will be to investigate binary black hole mergers using scalar curvature invariants. A recent study investigated a quasi-circular orbit of two merging, equal mass and non-spinning BHs \cite{JeremyAlanErik}.

Another fundamental research avenue of our results, would be to investigate gravitational lensing, black hole shadow and superradiance  effects for accelerating, rotating and charged black holes with $\Lambda\not=0$ \cite{GVKraniotisAccelBH}.
\section*{Acknowledgements} I thank Dr. D. Kaltsas and Phil Valder for useful discussions. I also thank N. Tritaki for inspiring discussion on black holes and the arrow of time.

\appendix
\section{R and T regions}\label{callosRT}

In this appendix we shall define the notions of $\mathcal{R}$ and $\mathcal{T}$ regions that appear in the analysis of $I_5,I_6$.
In \cite{kayllL} and following  \cite{katok} gradient flows for a particular invariant $\mathcal{I}$ were investigated:
\begin{equation}
k^{\alpha}=-g^{\alpha\beta}\frac{\partial \mathcal{I}}{\partial x^{\beta}}.
\end{equation}
From the theory of Lie derivatives \cite{kentayro}, for any scalar $\mathcal{I}$, the Lie derivative associated with a vector field $\xi$ is given by
\begin{equation}
\pounds_{\xi}\mathcal{I}=\xi_{\alpha}\nabla^{\alpha}\mathcal{I}=\xi_{\alpha}k^{\alpha}
\end{equation}
In certain circumstances, the scalar $\xi_{\alpha}k^{\alpha}$ can be directly related to the invariant $\mathcal{I}$ itself. For instance, if the manifold admits a homothetic motion $\pounds_{\xi}g_{\alpha\beta}=2\phi g_{\alpha\beta}$, where $\phi$ is a constant, it is known that for polynomial invariants \cite{pelavaslake}
\begin{equation}
\pounds_{\xi}\mathcal{I}=\kappa\phi\mathcal{I},
\end{equation}
where $\kappa$ is an integer characteristic of $\mathcal{I}$. In the case $\xi$ is a Killing vector (or trivial homothetic vector field), $\phi=0$ \cite{Mcintosh}. We then have:
\begin{equation}
\xi_{\alpha}k^{\alpha}=0.
\label{orthogonaltokilling}
\end{equation}
This means, that polynomial gradient flows are orthogonal to Killing flows, should they exist \footnote{In the special case, that $k$ itself satisfies $\pounds_{k}g_{\alpha\beta}=2\phi g_{\alpha\beta}$, it follows that $\nabla_{\mu}k_{\lambda}$ is symmetric in $\mu$ and $\lambda$: $\nabla_{\mu}k_{\lambda}=\phi g_{\mu\lambda}$ that is, the $k^{\mu}$ is a concurrent vector field \cite{kentayro}. Then the associated streamlines are geodesics with $\nabla_{\mu}(\frac{1}{2}k_{\lambda}k^{\lambda})=\phi k_{\mu}$. }.
This property was used by Lake to define $\mathcal{R}$ and $\mathcal{T}$ regions for gradient flows.   Since a stationary spacetime admits a timelike Killing congruence every nonzero 4-vector orthogonal to a timelike 4-vector must be spacelike. It follows from eqn.(\ref{orthogonaltokilling}) that any gradient flow is necessarily spacelike in a stationary region. Consequently, in \cite{kayllL} an $\mathcal{R}$ region is defined as a region with a positive norm of the gradrient vector field $k_{\alpha}$. Thus, $k_{\alpha}k^{\alpha}>0$ throughout an $\mathcal{R}$ region. A region in which the gradient flow is timelike and the norm $k_{\alpha}k^{\alpha}<0$ defines a $\mathcal{T}$ region \cite{kayllL}. Boundary regions are then naturally defined by $k_{\alpha}k^{\alpha}=0$.

\section{The norm of the covariant derivative of the Ricci tensor for accelerating Kerr-Newman black holes in (anti-)de Sitter spacetime}\label{Curbastro}

In this Appendix we calculate \textit{for the first time} an explicit algebraic expression for the curvature invariant:
\begin{equation}
R_{\alpha\beta;\mu}R^{\alpha\beta;\mu},
\label{RiccicalculCovDer}
\end{equation}
for accelerating Kerr-Newman black holes in (anti-)de Sitter spacetime.

\begin{theorem}\label{RiccisynalparagogosN}
We computed in closed form the curvature invariant constructed from the covariant derivative of the Ricci tensor in Eqn.(\ref{RiccicalculCovDer}), for accelerating Kerr-Newman black hole in (anti-)de Sitter spacetime with the result:
\begin{align}
&\nabla_{\mu}R_{\alpha\beta}\nabla^{\mu}R^{\alpha\beta}=-\frac{80 \left(\alpha  r \cos \! \left(\theta \right)-1\right)^{8} q^{4} }{\left(r^{2}+a^{2} \cos \! \left(\theta \right)^{2}\right)^{7}}\Biggl(a^{2} \Biggl[\frac{4 \left(\left(\alpha^{2}+\frac{\Lambda}{3}\right) a^{2}-\frac{\alpha^{2} q^{2}}{4}\right) \alpha^{2} r^{4}}{5}-2 a^{2} \alpha^{4} m \,r^{3}\nonumber \\
&+a^{2} \alpha^{2} \left(\left(\alpha^{2}+\frac{\Lambda}{3}\right) a^{2}+\alpha^{2} q^{2}-1\right) r^{2}+2 a^{2} \alpha^{2} m r +\frac{\Lambda  a^{4}}{3}\Biggr] \cos \! \left(\theta \right)^{6}-2 a^{2} \alpha  \Biggl(-\frac{\alpha^{2} m \,r^{4}}{5}+\Biggl[\left(-\frac{6 \alpha^{2}}{5}-\frac{2 \Lambda}{5}\right) a^{2}\nonumber \\
&-\frac{6 \alpha^{2} q^{2}}{5}\Biggr] r^{3}+a^{2} m \Biggr) \cos \! \left(\theta \right)^{5}+\Biggl(\left(\frac{4 \alpha^{2} \left(\alpha^{2}+\frac{\Lambda}{3}\right) a^{2}}{5}+\alpha^{4} q^{2}\right) r^{6}+\frac{2 a^{2} \alpha^{4} m \,r^{5}}{5}-\frac{26 a^{2} \alpha^{2} m \,r^{3}}{5}-\frac{\Lambda  a^{4} r^{2}}{15}\nonumber \\
&-a^{4} \left(\left(\alpha^{2}+\frac{\Lambda}{3}\right) a^{2}+\alpha^{2} q^{2}-1\right)\Biggr) \cos \! \left(\theta \right)^{4}+2 \Biggl\{-\alpha^{2} m \,r^{6}+\frac{6 \left(\alpha^{2}+\frac{\Lambda}{3}\right) a^{2} r^{5}}{5}-\frac{13 a^{2} \alpha^{2} m \,r^{4}}{5}+\frac{13 a^{2} m \,r^{2}}{5}\nonumber \\
&+\left(-\frac{6}{5} a^{4}-\frac{6}{5} a^{2} q^{2}\right) r +a^{4} m \Biggr\} \alpha  \cos \! \left(\theta \right)^{3}+\Biggl(-\alpha^{2} \left(\left(\alpha^{2}+\frac{\Lambda}{3}\right) a^{2}+\alpha^{2} q^{2}-1\right) r^{6}-\frac{\Lambda  a^{2} r^{4}}{15}+\frac{26 a^{2} \alpha^{2} m \,r^{3}}{5}\nonumber \\
&-\frac{2 a^{2} m r}{5}+\frac{a^{2} q^{2}}{5}-\frac{4 a^{4}}{5}\Biggr) \cos \! \left(\theta \right)^{2}-\frac{2 \alpha  r^{2} \left(-5 \alpha^{2} m \,r^{4}+a^{2} m +6 a^{2} r \right) \cos \left(\theta \right)}{5}+\Biggl[\frac{\Lambda  r^{4}}{3}-2 \alpha^{2} m \,r^{3}\nonumber \\
&+\left(\left(\alpha^{2}+\frac{\Lambda}{3}\right) a^{2}+\alpha^{2} q^{2}-1\right) r^{2}+2 m r -\frac{4 a^{2}}{5}-q^{2}\Biggr] r^{2}\Biggr).
\label{CovRicciInvaccelKNdS}
\end{align}
\end{theorem}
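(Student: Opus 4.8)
The plan is to exploit the fact that the metric (\ref{epitaxmelaniopi}) is an exact solution of the Einstein--Maxwell system with cosmological constant, so that the Ricci tensor has an especially rigid structure. With the standard conventions $G_{\alpha\beta}+\Lambda g_{\alpha\beta}=8\pi T_{\alpha\beta}$ and a trace-free electromagnetic stress tensor, taking the trace gives $R=4\Lambda$, whence $R_{\alpha\beta}=\Lambda g_{\alpha\beta}+S_{\alpha\beta}$, where $S_{\alpha\beta}\equiv 8\pi T_{\alpha\beta}$ is trace-free and built entirely from the electromagnetic field (hence proportional to $q^2$). The \emph{crucial} observation is that the cosmological piece is covariantly constant: since $\Lambda$ is a constant and $\nabla_\mu g_{\alpha\beta}=0$, we have $\nabla_\mu R_{\alpha\beta}=\nabla_\mu S_{\alpha\beta}$. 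Therefore the target invariant collapses to $\nabla_\mu R_{\alpha\beta}\nabla^\mu R^{\alpha\beta}=\nabla_\mu S_{\alpha\beta}\nabla^\mu S^{\alpha\beta}$, a purely electromagnetic quantity that is manifestly quadratic in $S$ and hence proportional to $q^4$. This already accounts for the overall $q^4$ prefactor and the conformal weight $(\alpha r\cos\theta-1)^8=\Omega^8$ in (\ref{CovRicciInvaccelKNdS}), the latter doubling the $\Omega^4$ weight carried by $\Phi_{11}$ in (\ref{riccivathm11}).

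First I would reduce the computation to the trace-free tensor $S_{\alpha\beta}$ and then carry it out by either of two routes. In the tensorial route one reads off $g_{\alpha\beta}$ from (\ref{epitaxmelaniopi}), forms the Christoffel symbols, obtains $S_{\alpha\beta}=R_{\alpha\beta}-\Lambda g_{\alpha\beta}$, covariantly differentiates, raises the free index with $g^{\mu\nu}$, and contracts; every intermediate object is organized in terms of the elementary blocks $\Omega=1-\alpha r\cos\theta$, $\rho^2=r^2+a^2\cos^2\theta$, $P$ and $Q$. In the Newman--Penrose route, which parallels the treatment of $I_3$, $I_4$ and the Page--Shoom invariant $W$ in Section \ref{RogerPenN}, one notes that the only nonvanishing trace-free Ricci--NP scalar is $\Phi_{11}$ of (\ref{riccivathm11}); reconstructing $S_{\alpha\beta}$ from $\Phi_{11}$ and the null tetrad (\ref{symment1})--(\ref{symmetnt2}), and expanding $\nabla_\mu$ through the frame decomposition (\ref{covderivation}) together with the spin coefficients $\mu=\varrho$ and $\tau=\pi$ of (\ref{symframespincoef1})--(\ref{symframespin2}), reduces $\nabla_\mu S_{\alpha\beta}$ to the four directional derivatives $D\Phi_{11},\Delta\Phi_{11},\delta\Phi_{11},\delta^{*}\Phi_{11}$ plus spin-coefficient terms, after which the contraction is purely algebraic.

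I would then validate the resulting expression against its specializations. Setting $q=0$ must return zero, consistent with the vacuum case $R_{\alpha\beta}=\Lambda g_{\alpha\beta}$ being covariantly constant; setting $\alpha=0$ (so $\Omega\equiv 1$) should collapse to the non-accelerating Kerr--Newman--(anti-)de Sitter expression; setting $a=0$ should reduce to a Reissner--Nordstr\"om--(anti-)de Sitter form; and taking $\alpha=\Lambda=0$ should recover the ordinary Kerr--Newman result. These limits, together with the required symmetry under $\theta\mapsto\pi-\theta$ accompanied by the appropriate sign flips in the odd powers of $\cos\theta$, provide strong independent checks on the bookkeeping.

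The main obstacle is not conceptual but computational: even after the reduction to $S_{\alpha\beta}$, the covariant differentiation and the final contraction generate very large intermediate expressions, and the genuine labor lies in collecting them into the compact degree-six polynomial in $\cos\theta$ displayed in (\ref{CovRicciInvaccelKNdS}). The Newman--Penrose route is the more economical of the two because it trades the full tensor for the single scalar $\Phi_{11}$, but one must still evaluate four directional derivatives of a complex-valued quantity and keep careful track of the complex spin coefficients; in practice this simplification is best executed with computer algebra, as the authors do with Maple.
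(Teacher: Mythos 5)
Your proposal is correct and follows essentially the same route as the paper: a direct symbolic evaluation of $\nabla_{\mu}R_{\alpha\beta}\nabla^{\mu}R^{\alpha\beta}$ for the metric (\ref{epitaxmelaniopi}) via computer algebra, validated by the degenerate limits $\alpha=0$, $a=0$, $\Lambda=0$ exactly as in the paper's Corollaries \ref{LouisW1}--\ref{Lwitten2} and the comparison with \cite{gass}. Your preliminary observation that $\nabla_{\mu}R_{\alpha\beta}=\nabla_{\mu}S_{\alpha\beta}$ with $S_{\alpha\beta}$ the trace-free electromagnetic part is a nice a priori explanation of the overall $q^{4}$ factor, but it does not alter the computation, since the $\Lambda g_{\alpha\beta}$ piece drops out of the covariant derivative automatically.
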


\begin{corollary}
For vanishing cosmological constant, $\Lambda=0$, the norm of the covariant derivative of the Ricci tensor for an accelerating Kerr-Newman black hole is given below:
\begin{align}
&\nabla_{\mu}R_{\alpha\beta}\nabla^{\mu}R^{\alpha\beta}=-\frac{80 \left(\alpha  r \cos \! \left(\theta \right)-1\right)^{8} q^{4} }{\left(r^{2}+a^{2} \cos \! \left(\theta \right)^{2}\right)^{7}}\Biggl(a^{2} \Biggl(\frac{4 \left(a^{2} \alpha^{2}-\frac{1}{4} \alpha^{2} q^{2}\right) \alpha^{2} r^{4}}{5}-2 a^{2} \alpha^{4} m \,r^{3}+a^{2} \alpha^{2} \Biggl[a^{2} \alpha^{2}+\alpha^{2} q^{2}\nonumber \\
&-1\Biggr]r^{2}+2 a^{2} \alpha^{2} m r \Biggr) \cos \! \left(\theta \right)^{6}-2 a^{2} \alpha  \left(-\frac{\alpha^{2} m \,r^{4}}{5}+\left(-\frac{6}{5} a^{2} \alpha^{2}-\frac{6}{5} \alpha^{2} q^{2}\right) r^{3}+a^{2} m \right) \cos \! \left(\theta \right)^{5}\nonumber \\
&+\left(\left(\frac{4}{5} a^{2} \alpha^{4}+\alpha^{4} q^{2}\right) r^{6}+\frac{2 a^{2} \alpha^{4} m \,r^{5}}{5}-\frac{26 a^{2} \alpha^{2} m \,r^{3}}{5}-a^{4} \left(a^{2} \alpha^{2}+\alpha^{2} q^{2}-1\right)\right) \cos \! \left(\theta \right)^{4}+2 \Biggl[-\alpha^{2} m \,r^{6}\nonumber \\
&+\frac{6 a^{2} \alpha^{2} r^{5}}{5}-\frac{13 a^{2} \alpha^{2} m \,r^{4}}{5}+\frac{13 a^{2} m \,r^{2}}{5}+\left(-\frac{6}{5} a^{4}-\frac{6}{5} a^{2} q^{2}\right) r +a^{4} m \Biggr] \alpha  \cos \! \left(\theta \right)^{3}+\Biggl\{-\alpha^{2} \left(a^{2} \alpha^{2}+\alpha^{2} q^{2}-1\right) r^{6}\nonumber \\
&+\frac{26 a^{2} \alpha^{2} m \,r^{3}}{5}-\frac{2 a^{2} m r}{5}+\frac{a^{2} q^{2}}{5}-\frac{4 a^{4}}{5}\Biggr\} \cos \! \left(\theta \right)^{2}-\frac{2 \alpha  r^{2} \left(-5 \alpha^{2} m \,r^{4}+a^{2} m +6 a^{2} r \right) \cos \left(\theta \right)}{5}\nonumber \\
&+\left(-2 \alpha^{2} m \,r^{3}+\left(a^{2} \alpha^{2}+\alpha^{2} q^{2}-1\right) r^{2}+2 m r -\frac{4 a^{2}}{5}-q^{2}\right) r^{2}\Biggr).
\end{align}

\end{corollary}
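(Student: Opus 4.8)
The plan is to obtain this corollary as the $\Lambda\to 0$ specialisation of Theorem \ref{RiccisynalparagogosN}. The starting point is the closed-form expression (\ref{CovRicciInvaccelKNdS}), which was derived as an explicit analytic function of the parameters $(m,q,a,\alpha,\Lambda)$ and of the coordinates $(r,\theta)$. Since the metric (\ref{epitaxmelaniopi}) depends on $\Lambda$ only through the structure functions $P$ and $Q$ — the conformal factor $\Omega=1-\alpha r\cos\theta$ and $\rho^2=r^2+a^2\cos^2\theta$ being $\Lambda$-independent — and these depend polynomially, hence smoothly, on $\Lambda$, the limit $\Lambda\to 0$ is regular and reduces the metric to that of the accelerating Kerr-Newman black hole with vanishing cosmological constant. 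Consequently every curvature invariant built from the metric, in particular $\nabla_\mu R_{\alpha\beta}\nabla^\mu R^{\alpha\beta}$, depends analytically on $\Lambda$, and its value at $\Lambda=0$ is obtained simply by substituting $\Lambda=0$ into (\ref{CovRicciInvaccelKNdS}).

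A conceptual remark explaining the \emph{form} of the reduction is worth recording first. For the Einstein-Maxwell system with cosmological constant the trace-reversed field equations give $R=4\Lambda$ and $R_{\alpha\beta}=\Lambda g_{\alpha\beta}+\Theta_{\alpha\beta}$, where $\Theta_{\alpha\beta}$ denotes the trace-free electromagnetic contribution to the Ricci tensor. Using metric compatibility $\nabla_\mu g_{\alpha\beta}=0$ one then has $\nabla_\mu R_{\alpha\beta}=\nabla_\mu\Theta_{\alpha\beta}$, so that the \emph{explicit} $\Lambda$-term in $R_{\alpha\beta}$ drops out of the gradient entirely; the residual $\Lambda$-dependence in (\ref{CovRicciInvaccelKNdS}) enters solely through the metric, namely through the Christoffel symbols, the index raising, and the $\Lambda$-dressing of $\Theta_{\alpha\beta}$ carried by $Q$ and $P$. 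This is precisely why setting $\Lambda=0$ does not collapse the whole invariant but merely strips the terms carrying an explicit factor of $\Lambda$.

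The remaining work is purely algebraic bookkeeping. Proceeding coefficient by coefficient in powers of $\cos\theta$ in (\ref{CovRicciInvaccelKNdS}): the overall prefactor $-80(\alpha r\cos\theta-1)^8 q^4/(r^2+a^2\cos^2\theta)^7$ is manifestly $\Lambda$-independent and is carried over unchanged; in each bracketed coefficient of $\cos^i\theta$ I would replace $\left(\alpha^2+\tfrac{\Lambda}{3}\right)\mapsto\alpha^2$ and delete every additive monomial carrying an explicit $\Lambda$, for example the terminal $\tfrac{\Lambda a^4}{3}$ in the $\cos^6\theta$ coefficient, the $-\tfrac{2\Lambda}{5}a^2 r^3$ contribution in the $\cos^5\theta$ coefficient, the $-\tfrac{\Lambda a^4 r^2}{15}$ term in the $\cos^4\theta$ coefficient, and the leading $\tfrac{\Lambda r^4}{3}$ in the $\cos^0\theta$ coefficient. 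Collecting the surviving monomials reproduces termwise the expression asserted in the corollary.

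The main obstacle is not conceptual but clerical: one must verify that no accidental cancellation among the surviving $\alpha$- and $q$-dependent monomials produces a further simplification beyond the stated form, that is, that the listed expression is genuinely the minimal $\Lambda=0$ reduction. As an independent safeguard I would cross-check the substitution by recomputing $\nabla_\mu R_{\alpha\beta}\nabla^\mu R^{\alpha\beta}$ directly from the metric (\ref{epitaxmelaniopi}) with $\Lambda$ set to zero \emph{ab initio} in the same symbolic pipeline used for Theorem \ref{RiccisynalparagogosN}, and confirm agreement with the term-by-term limit.
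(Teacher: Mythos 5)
Your proposal is correct and takes essentially the same route as the paper: the corollary is obtained by direct substitution of $\Lambda=0$ into the closed-form expression (\ref{CovRicciInvaccelKNdS}) of Theorem \ref{RiccisynalparagogosN}, which is polynomial in $\Lambda$, and your coefficient-by-coefficient reduction (replacing $\alpha^{2}+\tfrac{\Lambda}{3}\mapsto\alpha^{2}$ and deleting the explicit $\Lambda$-monomials in the $\cos^{6}\theta$, $\cos^{5}\theta$, $\cos^{4}\theta$, $\cos^{2}\theta$ and $\cos^{0}\theta$ coefficients) reproduces the stated formula exactly. Your added observations — that metric compatibility makes $\nabla_{\mu}R_{\alpha\beta}$ insensitive to the explicit $\Lambda g_{\alpha\beta}$ piece of the Ricci tensor, and the suggested independent symbolic recomputation with $\Lambda=0$ set ab initio — are sound safeguards beyond what the paper records, but do not change the method.
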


\begin{corollary}
For zero acceleration, $\alpha=0$, Eqn.(\ref{CovRicciInvaccelKNdS}), reduces to the following expression for the  differential Ricci curvature invariant for the KN(a)dS black hole:
\begin{align}
&\nabla_{\mu}R_{\alpha\beta}\nabla^{\mu}R^{\alpha\beta}=-\frac{80 q^{4}}{\left(r^{2}+a^{2} \cos \! \left(\theta \right)^{2}\right)^{7}} \Biggl(\frac{a^{6} \Lambda  \cos \left(\theta \right)^{6}}{3}+\left(-\frac{\Lambda  a^{4} r^{2}}{15}-a^{4} \left(\frac{a^{2} \Lambda}{3}-1\right)\right) \cos \! \left(\theta \right)^{4}\nonumber \\
&+\left(-\frac{1}{15} \Lambda  a^{2} r^{4}-\frac{2}{5} a^{2} m r +\frac{1}{5} a^{2} q^{2}-\frac{4}{5} a^{4}\right) \cos \! \left(\theta \right)^{2}+\left(\frac{\Lambda  r^{4}}{3}+\left(\frac{a^{2} \Lambda}{3}-1\right) r^{2}+2 m r -\frac{4 a^{2}}{5}-q^{2}\right) r^{2}\Biggr).
\end{align}
\end{corollary}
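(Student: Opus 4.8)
The plan is to obtain this corollary as the specialisation $\alpha=0$ of the master formula (\ref{CovRicciInvaccelKNdS}) established in Theorem \ref{RiccisynalparagogosN}, so that no fresh curvature computation is strictly required. First I would set $\alpha=0$ in the overall conformal prefactor: since $\Omega=1-\alpha r\cos\theta$, the factor $(\alpha r\cos\theta-1)^8$ collapses to $(-1)^8=1$, leaving the clean prefactor $-80q^4/(r^2+a^2\cos^2\theta)^7$ quoted in the corollary. This already fixes the denominator structure and the overall $q^4$ dependence.

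Next I would exploit a parity observation to dispose of half the terms at once. Inspecting the bracket in (\ref{CovRicciInvaccelKNdS}), every coefficient of an odd power of $\cos\theta$ (the $\cos^5\theta,\cos^3\theta,\cos\theta$ terms) carries an explicit overall factor of $\alpha$, whereas each even-power coefficient splits into an $\alpha$-independent piece plus terms each proportional to a positive power of $\alpha$. Hence, under $\alpha\to 0$, all odd-$\cos\theta$ terms vanish identically and each even-$\cos\theta$ coefficient retains only its $\alpha^0$ part. Collecting the survivors term by term — the $\cos^6\theta$ coefficient reducing to $\Lambda a^6/3$, the $\cos^4\theta$ coefficient to $-\Lambda a^4 r^2/15-a^4(a^2\Lambda/3-1)$, the $\cos^2\theta$ coefficient to $-\Lambda a^2 r^4/15-2a^2mr/5+a^2q^2/5-4a^4/5$, and the constant term to the bracketed radial quartic times $r^2$ — reproduces precisely the stated expression.

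As an independent check I would recompute the invariant directly rather than by limit. When $\alpha=0$ the accelerating metric (\ref{epitaxmelaniopi}) has $\Omega=1$, with $P\to\Delta_\theta$ and $Q\to\Delta_r^{KN}$, so it coincides with the Kerr-Newman-(anti-)de Sitter metric (\ref{KNADSelement}) up to the constant rescaling of the Killing coordinates $t,\phi$ absorbed into $\Xi$; since those are diffeomorphisms, the scalar $\nabla_{\mu}R_{\alpha\beta}\nabla^{\mu}R^{\alpha\beta}$ is unchanged as a function of $(r,\theta)$ and can be evaluated ab initio on (\ref{KNADSelement}) by forming the (trace-free, electromagnetic) Ricci tensor, taking its covariant derivative, raising and contracting. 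The main obstacle — and the reason this deserves a separate corollary rather than a one-line remark — is the bookkeeping: one must confirm that no $\alpha$-dependence hides inside $P$, $Q$ or $\rho^2$ that would survive the naive substitution, and that the $\alpha\to0$ chart genuinely matches the Boyer-Lindquist chart so the two computations are comparable. Agreement between the term-by-term collapse of (\ref{CovRicciInvaccelKNdS}) and the direct KN(a)dS evaluation both validates the master formula and certifies that the overall $q^4$ factor correctly signals that the invariant is sourced purely by the electromagnetic field, consistent with its identical vanishing in the vacuum Kerr-(anti-)de Sitter limit $q=0$, where $R_{\alpha\beta}=\Lambda g_{\alpha\beta}$ forces $\nabla_{\mu}R_{\alpha\beta}=0$.
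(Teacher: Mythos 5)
Your proposal is correct and follows essentially the same route as the paper: the corollary is obtained by direct substitution of $\alpha=0$ into Eqn.~(\ref{CovRicciInvaccelKNdS}), under which the prefactor $(\alpha r\cos\theta-1)^8$ becomes $1$, every odd power of $\cos\theta$ drops out because its coefficient carries an explicit factor of $\alpha$, and the even-power coefficients retain exactly the $\alpha$-independent pieces you list. Your supplementary consistency checks (the diffeomorphism matching the $\alpha=0$ chart to the Boyer--Lindquist form of (\ref{KNADSelement}) and the vanishing of the invariant at $q=0$ from $R_{\alpha\beta}=\Lambda g_{\alpha\beta}$) are sound but go beyond what the paper's implicit one-line argument requires.
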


\begin{corollary}
The curvature invariant $R_{\alpha\beta;\mu}R^{\alpha\beta;\mu}$  for the Reissner-Nordstr\"{o}m-(anti-)de Sitter black hole, is obtained by setting $a=\alpha=0$ in Eqn.(\ref{CovRicciInvaccelKNdS}) with the result:
\begin{align}
&\nabla_{\mu}R_{\alpha\beta}\nabla^{\mu}R^{\alpha\beta}=-\frac{80 q^{4} \left(\frac{1}{3} \Lambda  r^{4}-r^{2}+2 m r -q^{2}\right)}{r^{12}}.
\end{align}
\end{corollary}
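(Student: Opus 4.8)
The plan is to obtain the stated expression as a direct specialisation of the master formula (\ref{CovRicciInvaccelKNdS}) established in Theorem \ref{RiccisynalparagogosN}, rather than recomputing from scratch. First I would set the rotation parameter $a=0$ and the acceleration parameter $\alpha=0$ simultaneously in (\ref{CovRicciInvaccelKNdS}) and track which pieces survive. The overall prefactor collapses immediately: the conformal factor $\Omega=1-\alpha r\cos\theta\to 1$ gives $(\alpha r\cos\theta-1)^8\to 1$, and $\rho^2=r^2+a^2\cos^2\theta\to r^2$ gives $(r^2+a^2\cos^2\theta)^7\to r^{14}$, so the prefactor reduces to $-80\,q^4/r^{14}$. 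Note that the invariant remains proportional to $q^4$, which is the expected behaviour: the Ricci tensor is sourced solely by the electromagnetic field (plus the covariantly constant $\Lambda g_{\mu\nu}$ piece), so $\nabla_\mu R_{\alpha\beta}\nabla^\mu R^{\alpha\beta}$ must vanish identically when $q=0$.

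Next I would examine the long bracketed polynomial term by term. Each summand multiplying $\cos^k\theta$ for $k=1,\dots,6$ carries at least one explicit factor of $a^2$ or of $\alpha$, and hence each of these vanishes identically under the substitution $a=\alpha=0$. The only surviving contribution is the $\cos^0\theta$ term, namely the expression multiplied by $r^2$ at the very end of the bracket, which upon setting $a=\alpha=0$ reduces to $\left(\frac{\Lambda r^4}{3}-r^2+2mr-q^2\right)r^2$. Combining this with the reduced prefactor and cancelling the $r^2$ against two powers in the denominator yields $-\dfrac{80\,q^4\left(\frac{1}{3}\Lambda r^4-r^2+2mr-q^2\right)}{r^{12}}$, which is exactly the claimed result.

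As an independent cross-check — and this is the only step demanding genuine care rather than bookkeeping — I would recompute the invariant \emph{ab initio} for the Reissner-Nordstr\"{o}m-(anti-)de Sitter line element obtained from (\ref{KNADSelement}) at $a=0$, forming $R_{\mu\nu}$ in the static spherically symmetric frame, taking its covariant derivative, and contracting. The essential observation is that $\frac{1}{3}\Lambda r^4-r^2+2mr-q^2=-r^2 f(r)$, where $f(r)=1-\frac{2m}{r}+\frac{q^2}{r^2}-\frac{\Lambda r^2}{3}$ is the usual metric function, so the invariant equals $\frac{80\,q^4 f(r)}{r^{10}}$ and therefore vanishes precisely on the horizons $f(r)=0$, consistent with the horizon-detection theme of the paper. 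The main obstacle along this direct route is keeping the covariant-derivative contraction organised, since $R_{\mu\nu}$ is not covariantly constant once $q\neq 0$; but via the specialisation route the corollary is immediate and essentially obstacle-free.
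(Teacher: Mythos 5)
Your specialisation argument is exactly what the paper does (and all it does): every term in the bracketed polynomial of Eqn.~(\ref{CovRicciInvaccelKNdS}) except the $\cos^0\theta$ piece carries an explicit factor of $a^2$ or $\alpha$, so only $\bigl(\tfrac{\Lambda r^4}{3}-r^2+2mr-q^2\bigr)r^2$ survives, and combining with the prefactor $-80q^4/r^{14}$ gives the stated result. Your identification of the surviving polynomial with $-r^2 f(r)$, where $f(r)=1-\tfrac{2m}{r}+\tfrac{q^2}{r^2}-\tfrac{\Lambda r^2}{3}$, is a correct and useful sanity check consistent with the horizon-vanishing behaviour noted elsewhere in the paper.
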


\begin{corollary}\label{LouisW1}
Setting $\alpha=\Lambda=0$ in Eqn.(\ref{CovRicciInvaccelKNdS}), we obtain the norm of the covariant derivative of the Ricci tensor for a Kerr-Newman black hole:
\begin{align}
&\nabla_{\mu}R_{\alpha\beta}\nabla^{\mu}R^{\alpha\beta}=\frac{80 \left(-\cos \! \left(\theta \right)^{4} a^{4}+\left(\frac{2}{5} a^{2} m r +\frac{4}{5} a^{4}-\frac{1}{5} a^{2} q^{2}\right) \cos \! \left(\theta \right)^{2}-\left(-r^{2}+2 m r -\frac{4}{5} a^{2}-q^{2}\right) r^{2}\right) q^{4}}{\left(r^{2}+a^{2} \cos \! \left(\theta \right)^{2}\right)^{7}}.
\end{align}

\end{corollary}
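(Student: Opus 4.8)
The plan is to obtain Corollary \ref{LouisW1} as a direct specialization of the master formula Eqn.(\ref{CovRicciInvaccelKNdS}) of Theorem \ref{RiccisynalparagogosN}, evaluating it at $\alpha=0$ and $\Lambda=0$. Since the right-hand side of Eqn.(\ref{CovRicciInvaccelKNdS}) is a polynomial in the parameters $\alpha$ and $\Lambda$ (multiplied by the common rational prefactor $-80(\alpha r\cos\theta-1)^8 q^4/(r^2+a^2\cos^2\theta)^7$), no limiting process is required: I would simply substitute the two vanishing parameters and simplify. It is convenient to carry out the reduction in two stages, first $\alpha=0$ and then $\Lambda=0$, because the intermediate result at $\alpha=0$ is already recorded in the excerpt as the Kerr-Newman-(anti-)de Sitter corollary, which I may invoke directly.

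First I would handle the overall prefactor. Setting $\alpha=0$ turns the conformal factor $(\alpha r\cos\theta-1)^8$ into $(-1)^8=1$, so the prefactor collapses to $-80 q^4/(r^2+a^2\cos^2\theta)^7$. Next I would examine the large parenthesis term by term. The three odd-power contributions, i.e. the coefficients of $\cos\theta$, $\cos^3\theta$ and $\cos^5\theta$, each carry an explicit overall factor of $\alpha$, hence they vanish identically at $\alpha=0$. Among the even-power contributions (the coefficients of $\cos^6\theta$, $\cos^4\theta$, $\cos^2\theta$, and the $\cos\theta$-independent block), every monomial containing a positive power of $\alpha$ is annihilated, leaving only the $\alpha$-independent pieces, which are exactly those displayed in the $\alpha=0$ corollary. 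This reproduces the Kerr-Newman-(anti-)de Sitter expression.

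Starting from that intermediate expression, I would then set $\Lambda=0$. This removes the $\tfrac{a^6\Lambda}{3}\cos^6\theta$ term outright, and within the coefficients of $\cos^4\theta$ and $\cos^2\theta$ it deletes the terms proportional to $\Lambda$; in the $\cos^4\theta$ coefficient the surviving piece is $-a^4(\tfrac{a^2\Lambda}{3}-1)\to a^4$, while in the block multiplying $r^2$ the factor $(\tfrac{a^2\Lambda}{3}-1)\to -1$. Collecting the survivors and factoring an overall $-1$ out of the parenthesis (which turns the leading $-80$ into $+80$) yields precisely the stated form, with leading term $-a^4\cos^4\theta$ and $r^2$-block $-(-r^2+2mr-\tfrac{4}{5}a^2-q^2)r^2$.

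There is essentially no analytic obstacle here: the only genuine task is bookkeeping, namely verifying that every odd-in-$\cos\theta$ block is proportional to $\alpha$ and that the sign flips introduced by pulling out $-1$ are tracked consistently across all four even-power coefficients. As an independent confirmation I would, if desired, recompute $\nabla_{\mu}R_{\alpha\beta}\nabla^{\mu}R^{\alpha\beta}$ from scratch for the Kerr-Newman metric, using $R=0$ and the algebraically special (electromagnetic stress-energy) form of its Ricci tensor; agreement of that direct computation with the specialized formula would validate the reduction.
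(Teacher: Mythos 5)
Your proposal is correct and is essentially the paper's own route: the corollary is obtained by straightforward substitution of $\alpha=0$ and $\Lambda=0$ into Eqn.(\ref{CovRicciInvaccelKNdS}), noting that the odd powers of $\cos\theta$ carry explicit factors of $\alpha$ and that the surviving even-power coefficients reduce exactly to the stated expression after the overall sign is absorbed. Your two-stage reduction through the intermediate $\alpha=0$ corollary and your suggested cross-check against a direct Kerr--Newman computation mirror what the paper does (it cites agreement with the literature in the subsequent remark), so no further comment is needed.
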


\begin{corollary}\label{Lwitten2}
Setting $\alpha=a=\Lambda=0$  in Eqn.(\ref{CovRicciInvaccelKNdS}), we obtain the local curvature invariant built from the covariant derivative of the Ricci tensor for a Reissner-Nordstr\"{o}m black hole:
\begin{align}
&\nabla_{\mu}R_{\alpha\beta}\nabla^{\mu}R^{\alpha\beta}=-\frac{80 q^{4} \left(2 m r -q^{2}-r^{2}\right)}{r^{12}}.
\end{align}
\end{corollary}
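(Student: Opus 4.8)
The plan is to obtain this expression as a direct specialization of Theorem \ref{RiccisynalparagogosN}, that is, by setting the acceleration $\alpha$, the rotation parameter $a$, and the cosmological constant $\Lambda$ simultaneously to zero in the master formula Eqn.(\ref{CovRicciInvaccelKNdS}). Since the Reissner-Nordström geometry is precisely the $\alpha=a=\Lambda=0$ member of the accelerating Kerr-Newman-(anti-)de Sitter family, no new tensorial computation is required: the work consists entirely in tracking which terms of the long bracket survive the limit.

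First I would treat the overall prefactor. Setting $\alpha=0$ collapses $(\alpha r\cos\theta-1)^8$ to unity, while setting $a=0$ reduces the denominator $(r^2+a^2\cos^2\theta)^7$ to $r^{14}$. Hence the prefactor becomes simply $-80 q^4/r^{14}$. Next I would examine the bracketed polynomial term by term. The key structural observation is that every coefficient of $\cos^k\theta$ with $k\geq 1$ contains at least one factor drawn from $\{a^2,\alpha,\Lambda\}$: the $\cos^5\theta,\cos^3\theta$ and $\cos\theta$ coefficients carry an overall factor $\alpha$, whereas the $\cos^6\theta,\cos^4\theta$ and $\cos^2\theta$ coefficients, after one discards the $\alpha$- and $\Lambda$-proportional pieces, are proportional to $a^2$ or $a^4$. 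Consequently all $\theta$-dependent contributions vanish in the triple limit, and only the $\theta$-independent final line survives.

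In that final line the substitution $\alpha=a=\Lambda=0$ annihilates the $\tfrac{\Lambda r^4}{3}$ term, the $-2\alpha^2 m r^3$ term, the $-\tfrac{4a^2}{5}$ term, and the $a^2,\alpha^2$ pieces inside the coefficient of $r^2$, leaving the reduced coefficient $\bigl(-1\bigr)r^2$. The bracket thus collapses to $\bigl[-r^2+2mr-q^2\bigr]r^2=(2mr-q^2-r^2)r^2$, and multiplying by the reduced prefactor $-80q^4/r^{14}$ yields $-80 q^4(2mr-q^2-r^2)/r^{12}$, exactly as claimed.

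Because the corollary is a pure specialization there is essentially no analytic obstacle; the only delicate point is bookkeeping, namely confirming that no term lacking all three of $a^2,\alpha,\Lambda$ is hidden among the $\theta$-dependent coefficients. The piece to watch is the $\alpha^4 q^2 r^6$ contribution appearing in the $\cos^4\theta$ coefficient, which carries neither $a$ nor $\Lambda$ but is nonetheless removed by $\alpha=0$. As an independent safeguard I would also recompute $\nabla_{\mu}R_{\alpha\beta}\nabla^{\mu}R^{\alpha\beta}$ directly from the Reissner-Nordström line element with $f(r)=1-2m/r+q^2/r^2$, exploiting that its Ricci tensor is trace-free and sourced by the radial Maxwell field; this provides a self-contained verification in the simplest case and guards against any sign or normalization error propagated from the Maple-assisted derivation of Eqn.(\ref{CovRicciInvaccelKNdS}).
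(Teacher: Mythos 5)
Your proposal is correct and follows exactly the route the paper intends: the corollary is a direct specialization of Theorem \ref{RiccisynalparagogosN}, and your term-by-term bookkeeping (every $\cos^{k}\theta$ coefficient with $k\geq 1$ carries a factor of $a^{2}$, $\alpha$, or $\Lambda$, so only the final line survives, yielding $(2mr-q^{2}-r^{2})r^{2}$ against the reduced prefactor $-80q^{4}/r^{14}$) reproduces the stated result. The additional cross-check against a direct Reissner--Nordstr\"{o}m computation is prudent but not a different method; the paper itself performs the analogous consistency check by comparing with the known expressions cited in Remark following Corollary \ref{Lwitten2}.
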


\begin{remark}
Our results in Corollaries \ref{LouisW1},\ref{Lwitten2} agree with the corresponding expressions obtained in \cite{gass} for the Kerr-Newman and Reissner-Nordstr\"{o}m black holes.
\end{remark}

 \end{document}